\algnewcommand\algorithmicforeach{\textbf{foreach}}
\algnewcommand\algorithmicforeachdo{\textbf{do}}
\algnewcommand\algorithmicendforeach{\textbf{end\ foreach}}
\newcommand\nph{$\mathcal{NP}$-hard}
\newcommand\npc{$\mathcal{NP}$-complete}
\newcommand\phn[1]{\underset{#1}{\bullet}}
\newcommand\ph{\bullet}
\newcommand{\bigo}[1]{\mathcal{O}(#1)}
\newcommand{\ostar}[1]{\mathcal{O}^*(#1)}
\newcommand{\dpa}{\textit{Dynamic Programming}}
\newcommand{\pf}{\textit{Pareto\ Front}}
\newcommand{\pp}{\textit{Pareto\ Permutations}}
\newcommand{\optperm}{Opt}
\newcommand{\minperm}{\minvec}
\newcommand{\minvec}{\mathcal{M}in}
\newcommand{\multiline}[1]{\begin{tabular}{@{}c@{}}#1\end{tabular}}
\newcommand{\pbtt}{$1||\sum{T_i}$}
\newcommand{\pbft}{$F3||C_{max}$}
\newcommand{\pbfmax}{$F3\|f_{max}$}
\newcommand{\pbsumfi}{$F3\|\sum f_i$}
\newcommand{\pbrisumc}{$1|r_i|\sum C_i$}
\newcommand{\pbfsumc}{$F2||\sum C_i$}
\newcommand{\pbdtilde}{$1|\tilde{d_i}|\sum w_iC_i$}
\newcommand{\pbss}{\textsc{Subset Sum}}
\newcommand{\pbsac}{\textsc{Knapsack}}
\newcommand{\pbxsat}{\textsc{Exact Satifiability}}
\newcommand{\pbhp}{\textsc{Hamiltonian Path}}
\newcommand{\pbbp}{\textsc{Bin Packing}}
\newcommand{\techbb}{\textit{Branch \& Bound}}
\newcommand{\techbc}{\textit{Branch \& Cut}}
\newcommand{\techbm}{\textit{Branch \& Merge}}
\newcommand{\techbr}{\textit{Branch \& Reduce}}
\newcommand{\techbp}{\textit{Branch \& Price}}
\newcommand{\techmemo}{\textit{Memorization}}
\newcommand{\techbmemo}{\textit{Branch \& Memorize}}
\newcommand{\techss}{\textit{Sort \& Search}}
\newcommand{\techdc}{\textit{Divide \& Conquer}}
\newcommand{\techie}{\textit{Inclusion \& Exclusion}}
\newcommand{\techdp}{\textit{Dynamic Programming}}
\newcommand{\techls}{\textit{Local Search}}
\newcommand{\techffc}{\textit{Fast Subset Convolution}}
\newcommand{\fb}{\textit{forward branching}}
\newcommand{\bb}{\textit{backward branching}}
\newcommand{\db}{\textit{decomposition branching}}
\newcommand{\df}{\textit{depth first}}
\newcommand{\best}{\textit{best first}}
\newcommand{\wf}{\textit{breadth first}}
\newcommand{\an}{\textit{active nodes}}
\newcommand{\en}{\textit{explored nodes}}
\newcommand{\pbmemo}{\textit{passive node memorization}}
\newcommand{\abmemo}{\textit{predictive node memorization}}
\newcommand{\fmemo}{\textit{solution memorization}}
\newcommand{\subsubsubsection}[1]{\vspace{0.5cm}\noindent\textbf{#1}\\}
\newtheorem{theorem}{Theorem}
\newtheorem{coro}{Corollary}
\newtheorem{definition}{Definition}
\newtheorem{lemma}{Lemma}
\newtheorem{proper}{Property}
\newtheorem{propos}{Proposition}
\newtheorem{para}{Paradox}
\newtheorem{dec}{Decomposition}
\begin{document}

\setmarginsrb{25mm}{0mm}{10mm}{0mm}{0mm}{0mm}{0mm}{0mm}

\begin{tabular}{ p{3cm} p{9cm} p{3cm}}
		\begin{minipage}{3cm}
			\includegraphics[width=3cm]{./img/Logo_univ-tours} 
		\end{minipage}
	&
		\begin{minipage}{9cm}
			\begin{center}
				\includegraphics[width=3cm]{./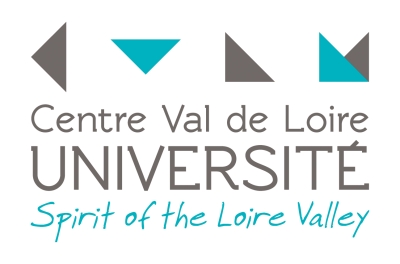} 
			\end{center}
		\end{minipage}
	&
		\begin{minipage}{3cm}
			\includegraphics[width=3cm]{./img/Logo_univ-orleans} 
		\end{minipage}
\end{tabular}

\vspace{0.5cm}

\begin{minipage}{16cm}
	\begin{center}
		\textbf{\LARGE{UNIVERSIT\'E FRAN\c{C}OIS~RABELAIS DE TOURS}}
	\end{center}
\end{minipage}
	
\vspace{1cm}
	
\begin{minipage}{16cm}
	\begin{center}
		\'ECOLE DOCTORALE MIPTIS \\ \vspace{0.2cm}
		$\ $ Laboratoire d'Informatique (EA 6300) 
	\end{center}
\end{minipage}
	
\vspace{1cm}
	
\begin{minipage}{16cm}
	\begin{center}
		\LARGE \textbf{TH\`ESE} \normalsize pr\'esent\'ee par : \\ \vspace{0.2cm}
		\large \textbf{Lei SHANG}\\ \vspace{0.2cm}
		soutenue le : 30/11/2017
	\end{center}
\end{minipage}
	
\vspace{1cm}
	
\begin{minipage}{16cm}
	\begin{center}
		pour obtenir le grade de : Docteur de l'Université François-Rabelais de Tours \\ \vspace{0.2cm}
		Discipline/ Spécialité : INFORMATIQUE
	\end{center}
\end{minipage}
	
\vspace{0.8cm}

\fbox{
	\begin{minipage}{15cm}
		\vspace{0.39cm}
		\begin{center}
				\Large \textbf{Exact Algorithms With Worst-case Guarantee For Scheduling: From Theory to Practice}\\ \vspace{0.2cm}
		\end{center}
		\vspace{0.01cm}
	\end{minipage}
}

\vspace{1cm}
		 
\textsc{\large Th\`ese dirig\'ee par : } $\ $ \vspace{0.2cm}\\ 
\begin{tabular}{l p{0.3cm} p{11cm}}
	\textsc{T'KINDT} Vincent & $\ $ &  Professeur des Universités, Universit\'e Fran\c{c}ois Rabelais de Tours\\
\end{tabular}

\vspace{0.2cm}
 
\textsc{RAPPORTEURS :}  $\ $ \vspace{0.1cm}\\
\begin{tabular}{l p{0.3cm} p{11cm}}
\textsc{ARTIGUES} Christian & $\ $ &  Directeur de Recherche, LAAS-CNRS, Toulouse\\
\textsc{PASCHOS} Vangelis & $\ $ &  Professeur des Universités, Universit\'e Paris-Dauphine\\
\end{tabular}

\vspace{0.2cm}

\textsc{JURY : }  $\ $ \vspace{0.1cm} \\
\begin{tabular}{l p{0.3cm} p{10.2cm}}
\textsc{ARTIGUES} Christian & $\ $ &  Directeur de Recherche, LAAS-CNRS, Toulouse\\
\textsc{DELLA CROCE} Federico & $\ $ &  Professeur, Politechnico di Torino, Italie\\
\textsc{LENTE} Christophe & $\ $ &  Maître de Conférences, HDR, Universit\'e Fran\c{c}ois Rabelais de Tours\\
\textsc{LIEDLOFF} Mathieu & $\ $ &  Maître de Conférences, HDR, Universit\'e d'Orléans\\
\textsc{PASCHOS} Vangelis & $\ $ &  Professeur des Universités, Universit\'e Paris-Dauphine\\
\textsc{PINSON} Eric & $\ $ &  Professeur des Universités, Institut de Mathématiques Appliquées (UCO)\\
\textsc{T'KINDT} Vincent & $\ $ &  Professeur des Universités, Universit\'e Fran\c{c}ois Rabelais de Tours\\
\end{tabular}
\setlength{\voffset}{0pt}                    
\thispagestyle{empty}
\setlength{\voffset}{0pt} 				
\setlength{\topmargin}{0pt}				
\setlength{\headheight}{30pt}			
\setlength{\headsep}{30pt}				
\setlength{\textheight}{620pt}		
\setlength{\footskip}{30pt}				
\setlength{\hoffset}{0pt} 				
\setlength{\oddsidemargin}{20pt}	
\setlength{\evensidemargin}{0pt}	
\setlength{\textwidth}{420pt}			
\setlength{\marginparsep}{10pt}		
\setlength{\marginparwidth}{40pt}	
\parskip=3pt											
\pagestyle{fancy}

\newcolumntype{V}{>{\centering\arraybackslash}m{2.3cm}}
\newcolumntype{A}{>{\centering\arraybackslash}m{2.0cm}}
\newcolumntype{B}{>{\centering\arraybackslash}m{1.9cm}}
\newcolumntype{C}{>{\centering\arraybackslash}m{1.5cm}}
\newcolumntype{D}{>{\centering\arraybackslash}m{1.8cm}}
\newcolumntype{E}{>{\arraybackslash}m{6.0cm}}
\newcolumntype{F}{>{\arraybackslash}m{7.2cm}}
\newcolumntype{T}{>{\arraybackslash}m{8cm}}
\newcolumntype{G}{>{\centering\arraybackslash}m{2.6cm}}
\newcolumntype{K}{>{\centering\arraybackslash}m{3.0cm}}
\newcolumntype{L}{>{\arraybackslash}m{4.3cm}}
\newcolumntype{M}{>{\arraybackslash}m{4.6cm}}
\newcolumntype{N}{>{\arraybackslash}m{4.0cm}}
\newcolumntype{O}{>{\arraybackslash}m{3.5cm}}
\newcolumntype{Y}{>{\arraybackslash}m{2.6cm}}

\DeclareGraphicsExtensions{.pdf,.jpeg,.png}
\setcounter{minitocdepth}{1} 
\setlength{\mtcindent}{24pt} 
\setcounter{secnumdepth}{3}
\renewcommand{\mtcfont}{\small\rm} 
\renewcommand{\mtcSfont}{\small\bf} 
\newenvironment{vcenterpage}
\newenvironment{abstract}{\rightskip1in\itshape}{}

\chapter*{Acknowledgment}
\markright{\MakeUppercase{Acknowledgment}}
I really enjoy the writing of this ``chapter'', which provides me an opportunity to review my work and my life during the PhD study. 
The first person that appears in my mind, to whom I feel really thankful, is my advisor Vincent T'Kindt. It has been an excellent experience to work with him. He always has many ideas and it was not always easy for me to follow his thoughts during meetings. His ideas, which do not always work (of course!), always inspire. He has always been available when I need guidance. Moreover, he trusts me and allows me to have much freedom during the PhD, without which I would not enjoyed these three years so much. I am not qualified to evaluate other aspects but he is at least an excellent supervisor. Merci Vincent.

I also want to thank all my co-authors, namely Christophe Lenté, Mathieu Liedloff, Federico Della Croce and Michele Garraffa, without whom all my work would not have been possible. Special thanks go to Michele, with whom I worked everyday during several months, struggling to search for a beautiful structure in a tree. We often spent several hours in a meeting room without being aware. It is a very nice souvenir.

Now let me get out of the professional context. My wife Ting sacrificed her career in Beijing in order to come and stay with me during my PhD. We met each other in 2004 and she is half of my life, if not all. I cannot imagine what could I do without her support. 
My parents have had strong insistence on the importance of my education. 
I would not be here without their encouragement and support. They are always proud of me and I am also proud of being their son and proud of the fact that they are proud of me. 
I also want to thank my grandmother, great Buddhist philosopher of the family, who has been leading me since my young age and has helped me to establish my view of life, of world and of values. She taught me to seek happiness not from outside but from inside. By well maintaining his inner peace, one can always stay happy in whatever environment. 
Though I am not (yet) wise enough to be well enlightened, I've already benefited a lot from her words. 
I am so lucky to born in such family environment!

Now let's talk about Polytech! The beginning of my  adventure in France started at 2011. I thank the Mundus program which ``imported'' me from China. I still remember the day of interview in Beijing, with Jean-Louis Bouquard and Marjolaine Martin. Later plus Audrey Perez-Prada, we've had many nice moments together. The teachers of Mundus department helped me to learn French and the teachers of DI helped me to become a good engineer on Computer Science, which enabled the possibility of my PhD study.

The PhD study lasts only three years but is a long way, sometimes really exciting but sometime pretty despairing. I am so glad to have many PhD friends around me, with whom I had many interesting discussions, many happy breaks and many rounds of baby foot. I am often curious to know where will we be 10 years later. I hope we can always keep contact even though the world changes fast.

I thought of making a name list of people who helped me a lot, but honestly that will be too long and incomplete. Therefore, I stop here, being aware that I've never been alone. 

Thank you all.

\chapter*{R\'esum\'e}
\markright{\MakeUppercase{R\'esum\'e}}
Cette thèse synthétise les travaux de recherches réalisés pendant les études doctorales de l'auteur. L'objectif de ces travaux est de proposer des algorithmes exacts qui ont une meilleure complexité, temporelle ou spatiale, dans le pire des cas pour des problèmes d'ordonnancement qui sont $\mathcal{NP}$-difficiles. En plus, on s'intéresse aussi à évaluer leurs performances en pratique.

Trois contributions principales sont rapportées. La première concerne un algorithme du type \techdp{} qui résout le  problème \pbft{}  en $\ostar{3^n}$ en temps et en espace. L'algorithme est généralisé facilement à d'autres problèmes du type Flowshop, y compris les problèmes $F2||f$ et $F3||f$, et aux problèmes d'ordonnancement à une seule machine tels que les problèmes $1|r_i|f$, avec $f\in\{f_{max}, f_i\}$. 

La seconde contribution porte sur l'élaboration d'une méthode arborescente appelée \techbm{} pour  résoudre le problème  \pbtt{} en $\ostar{(2+\epsilon)^n}$ en temps avec $\epsilon>0$ arbitrairement petit et en espace polynomial. Le travail se base sur l'observation que de nombreux sous-problèmes identiques apparaissent répétitivement pendant la résolution du problème global. A partir de ça, une opération appelée \textit{merge} est proposée, qui fusionne les sous-problèmes (les noeuds dans l'arbre de recherche) identiques autant que possible. Cette méthode doit pouvoir être généralisée à d'autres problèmes.

Le but de la troisième contribution est d'améliorer les performances en pratique des algorithmes exacts procédant par parcours d'un arbre de recherche. D'abord nous avons aperçu qu'une meilleure façon d'implémenter l'idée de  \techbm{} est d'utiliser une technique appelée \techmemo. Avec la découverte d'un nouveau \textit{paradoxe} et la mis en place d'une stratégie de nettoyage de mémoire,  notre algorithme a résolu les instances qui ont 300 tâches de plus par rapport à l'algorithme de référence pour le problème \pbtt. 
Avec ce succès, nous avons testé \techmemo{} sur trois autres problèmes d'ordonnancement notés \pbrisumc, \pbdtilde{} et \pbfsumc, précédemment traités par \cite{tkindt2004revisiting}. Les résultats finaux des quatre problèmes ont montré la puissance de  \techmemo{} appliquée aux problèmes d'ordonnancement. Nous nommons ce paradigme \techbmemo{} afin de  promouvoir la considération systématique de l'intégration de \techmemo{} dans les algorithmes de branchement comme \techbb, en tant qu'un composant essentiel. 
La méthode peut aussi être généralisée pour résoudre d'autres problèmes qui ne sont pas forcément des problèmes d'ordonnancement.

\paragraph{}


\chapter*{Abstract}
\markright{\MakeUppercase{Abstract}}
This thesis summarizes the author's PhD research works on the design of exact algorithms that provide a worst-case (time or space)
guarantee for \nph{} scheduling problems. Both theoretical and practical aspects are considered with three main results reported. 

The first one is about a \techdp{} algorithm which solves the \pbft{} problem in $\ostar{3^n}$ time and space. 
The algorithm is easily generalized to other flowshop problems including $F2||f$ and $F3||f$, and single machine scheduling problems like $1|r_i|f$, with $f\in\{f_{max}, f_i\}$. 

The second contribution is about a search tree method called \techbm{} which solves the \pbtt{} problem with the time complexity converging to $\ostar{2^n}$ and in polynomial space. The work is based on the observation that many identical subproblems appear during the solution of the input problem. An operation called \textit{merge} is then derived, which merges all identical nodes to one whenever possible and hence yields a better complexity.

Our third contribution
 aims to improve the practical efficiency of exact search tree algorithms solving scheduling problems. First we realized that a better way to implement the idea of  \techbm{}  is to use a technique called \techmemo. By the finding of a new  \textit{algorithmic paradox} and the implementation of a memory cleaning strategy, the method succeeded to solve instances with 300 more jobs with respect to the state-of-the-art algorithm for the \pbtt{} problem. Then the treatment is extended to another three problems \pbrisumc, \pbdtilde{} and \pbfsumc{} previously addressed by \cite{tkindt2004revisiting}. The results of the four problems all together show the power of the \techmemo{} paradigm when applied on sequencing problems. We name it \techbmemo{} to promote a systematic consideration of \techmemo{} as an essential building block in branching algorithms like \techbb. The method can surely also be used to solve other problems, which are not necessarily scheduling problems.

\paragraph{}

\tableofcontents
\listoftables
\listoffigures

\chapter*{Introduction}\label{ch0}
\markright{\MakeUppercase{Introduction}}

This thesis summarizes some research works on the design of exact algorithms that provide a worst-case (time or space)
guarantee for \nph{} scheduling problems. By \textit{exact}, we mean algorithms that search for optimal solutions of a given optimization problem. By \textit{worst-case guarantee}, we aim to minimize the time or space complexity of algorithms in worst-case scenarios. Finally, by \textit{scheduling}, we restrict our target problems to scheduling problems, especially sequencing/permutation problems.

Even though it is always difficult to clearly separate research domains due to the existence of intersections, we regard this work as a connection of two  fields: \textit{Scheduling Theory} and \textit{Exact Exponential Algorithms} (EEA for short). 

``Scheduling is a decision-making process that is used on a regular basis in many manufacturing and services industries. It deals with the allocation of resources to tasks over given time periods and its goal is to optimize one or more objectives.''\citep{pinedo2008scheduling}. Scheduling theory can be traced back to early 20's, before it really started to form a standalone field since the 60's. Now it is a well established area within operational research and it plays an essential role in industries. 

On the other side, we regard EEA as a field which groups research efforts on designing exact algorithms with worst-case complexity guarantee for solving \nph{} problems. The research is not limited to a specific type of problems (like scheduling), but guided by the ambition to explore the intrinsic hardness of \nph{} problems. Traced back to 1956, this ambition was expressed in the famous letter of G\"{o}del to Von Neumann as the following question: ``...It would be interesting to know, ...how strongly in general the number of steps in finite combinatorial problems can be reduced with respect to simple exhaustive search.''\citep{sipser1992history}. 
Such research can provide an idea on the relative hardness of problems: all \nph{} problems do not seem to have the same hardness. 
In fact, since the early stage of computer science, some of these problems appeared to be solvable with a lower exponential complexity than others belonging to the same complexity class. 
For instance, the MIS (\textit{Maximum Independent Set}) problem, which asks to find from a given undirected graph, a largest subset of non-adjacent vertices, can be solved in $\bigo{2^n}$ time by enumerating all subsets. A series of improvements have been proposed in the literature, which allow to solve the problems in $\bigo{1.26^n}$ \citep{tarjan1977finding}, $\bigo{1.2346^n}$ \citep{jian19862}, $\bigo{1.2278^n}$ (or $\bigo{1.2109^n}$ with exponential space) \citep{robson1986algorithms}, $\bigo{1.2202^n}$ \citep{fomin2006measure}, $\bigo{1.2132^n}$ \citep{kneis2009fine}, $\bigo{1.2114^n}$ \citep{bourgeois2010bottom} and most recently $\bigo{1.1996^n}$ \citep{xiao2017exact} time.
In contrast, the famous TSP (\textit{Traveling Salesman Problem}) requires $\bigo{n!}$ time to be solved in a brute-force way, when the size $n$ is chosen as  the number of cities. But it can be solved in $\bigo{n2^n}$ time by \techdp{} \citep{bellman1962dynamic}, which is a considerable improvement. However, since then, during more than 50 years, no faster algorithm has been proposed. The question is, can other problems be solved faster than enumeration, if yes, to what extent? If no, are we in front of the unbreakable wall of the intrinsic hardness of the problem? 
For a survey on the most effective techniques in designing EEAs, readers are kindly referred to the paper of  \cite{woeginger2003exact} and to the book by  \cite{fomin2010exact}. 

The initiation of this research is motivated by the following observations. First, EEA has been well developed during recent decades. 
For a number of \nph{} problems, new algorithms have been proposed consecutively, each time improving little by little the worst-case complexity of the considered problem. As a result, the community now possesses a powerful ``toolbox'' containing a number of problem solving techniques that can provide much guidance on solving a given problem and analyzing their worst-case complexity. 
In contrast, \textit{Scheduling} as a major field within Operation Research, has not benefited much from the development of EEA. In fact, many scheduling problems have been proved as \nph, and then most considerations are oriented to the practical efficiency instead of the theoretical worst-case guarantee of algorithms. For example, we can find a large number of state-of-the-art algorithms that are \techbb{} algorithms with many problem-dependent properties incorporated, tested on random generated instances to prove their computational efficiency but without their worst-case complexity discussed. This may partially because of the lack of techniques and computation power at that time, which made EEAs inefficient in practice. Today, with the active development on EEA, it becomes natural, even essential to revisit \nph{} scheduling problems on searching a worst-case guarantee for scheduling problems. Some work have been done by \cite{lente2013extension}, who introduced the so-called class of multiple constraint problems and showed that all problems fitting into that class could be tackled by means of the \techss{} technique. 
Further, they showed that several known scheduling problems are part of that class. 
However, all these problems require assignment decisions only and none of them require the solution of a pure sequencing problem.

\section*{Positioning of the Thesis}
The thesis is positioned on the connection of \textit{\nph{} Scheduling Problems} and \textit{Exact Exponential Algorithms}, aiming at proposing EEAs with better time/space complexity for solving \nph{} Scheduling Problems. We put special focus on the following points.
\begin{itemize}
    \item The tackled scheduling problems are sequencing/permutation problems, that is, a solution is represented by an ordering of operations. 
    \item Besides of the search of a better theoretical complexity, we are also interested in the practical efficiency of the proposed algorithms.
    \item The proposed algorithms are expected to be generalizable.
\end{itemize}

\section*{Outline}
As already presented, the results on EEAs for \nph{} scheduling problems are pretty limited. In this thesis we add several results on this subject. Several classic scheduling (sequencing) problems are treated, notably the three machine flowshop problem (denoted by \pbft) and the single machine total tardiness problem (denoted by  \pbtt). On the treated problems, our algorithms yield the currently best complexity guarantee; on the techniques used, our techniques, which could be generalized to other problems, provide new hints on the design of exact algorithms solving scheduling problems.

Chapter \ref{ch1} presents the fundamentals of EEA and scheduling theory. The historical evolution of the scheduling theory is reviewed, with the main development in each decade summarized, based on the interesting review of \cite{potts2009fifty}. Some  well known techniques for constructing EEAs are presented including some results on scheduling problems. 
Basic concepts that will be used later are introduced. We also add some notes on Parameterized Algorithms, even though it is not considered during the thesis.

Chapter \ref{ch2} presents our first contribution: a \techdp{} algorithm which solves the \pbft{} problem to minimize the makespan, with a running time guarantee. The algorithm can be easily generalized to some other flowshop scheduling problems including the $F2||f$ and $F3||f$ problems, with $f\in\{f_{max},\sum f_i\}$, and also some single machine scheduling problems with release date ($1|r_i|f$). The framework is named as \textit{Pareto Dynamic Programming}. We also discuss the practical efficiency of the algorithm and the complexity lower bound of the \pbft{} problem based on ETH (\textit{Exponential Time Hypothesis}).

Chapter \ref{ch3} reports our second contribution: a new method  which solves the \pbtt{} problem with a worst-case complexity improving existing ones. By analyzing the search tree of a basic \techbr{} algorithm, constructed based on some well known properties, we observed that many nodes (i.e. subproblems) are identical. By avoiding repetitive solution of identical nodes, a new method called \techbm{} is conceived. Identical nodes are \textit{merged} to one. By carefully designing the structure of merging, we are able to analyze and prove the complexity of the resulting algorithm. 
The method runs in polynomial space and can be generalized to other problems that verify certain properties.

Chapter \ref{ch4} extends the work of Chapter \ref{ch3} by considering the practical efficiency of algorithms. During the study, another powerful technique called \techmemo{} drew our interest. 
We discuss the link between \techbm{} and \techmemo{}. We also found some new computational properties of the \pbtt{} problem, which finally allows to solve instances with 300 more jobs than the state-of-the-art algorithm. Then the same treatment is performed on another three problems \pbrisumc, \pbfsumc{} and \pbdtilde. The computational results are provided, compared and analyzed.

Finally, in the conclusion chapter, we discuss the inner link between the involved techniques and provide our perspectives.

\chapter{Fundamentals of Scheduling and Exact Exponential Algorithms}\label{ch1}
\section{Scheduling Theory}

\begin{quote}
``Scheduling is a decision-making process that is used on a regular basis in many manufacturing and services industries. It deals with the allocation of resources to tasks over given time periods and its goal is to optimize one or more  objectives.''\hfil---\cite{pinedo2008scheduling}
\end{quote}

Scheduling plays an essential role in industries. It is undoubted that in any industrial process, some kinds of scheduling decisions must be made to well organize the process of different tasks. {The decision can be an assignment of resources to tasks, an ordering of tasks to be processed or the exact time intervals of tasks during which these tasks are processed, etc.} 
In this sense, scheduling covers a really large variety of problems. Some of these problems could be uncritical or just enough simple (the size of the problem is so small that the optimal solution is obvious), while others are extremely difficult and need much research effort to break through. {As an example, consider a problem denoted as \pbrisumc: a set of tasks need to be processed sequentially on one machine, each job has a release date and a processing time, the objective is to find an ordering of tasks which minimizes the sum of completion time of each job. The problem description being simple, this problem is strongly \nph.}

Today, scheduling has become a major field within operational research and it also benefits contributions from other angles such as Mathematics, Computer Science and Economics. When and how did all this start? What are the applications of scheduling? How are different scheduling problems classified?   What are the common techniques to deal with them? In this chapter we try to make a brief review on scheduling theory and clarify these points.

\subsection{History}\label{sec:his}
Arguably, scheduling as a research field can be traced back to the early twenties when Gantt, as a management consultant, discusses the organization of work to maximize profit in his book ``Work, wages and profit''\citep{gantt1916work}. He proposed to give the foreman each day a list of jobs to be done and to well coordinate the jobs so that there is no (or least) interference between jobs.~\footnote{The scenario that Gantt considered is now called \textit{jobshop}}
However, it is about forty years later, when a collection of millstone papers, among which the seminal paper by \cite{johnson1954optimal}, are published that scheduling started to be considered as an independent research area within operational research. Since then, it has been attracting more and more research attention.

\cite{potts2009fifty} reviewed the milestones of scheduling over fifty years. In an elegant way, the paper provides the reader a snapshot for each decade. Starting from mid fifties, the first decade witnessed the research contributions on combinatorial analysis of scheduling problems. Various problem dependant properties are incorporated into algorithms to restrict the search space. The most typical property would be the priority or precedence relation between jobs which generally states something like ``at least one optimal schedule exist such that job A precedes job B''.  The most common technique to achieve that is some deduction based on \textit{Pairwise Interchange}. It is interesting to notice that most well-known algorithms proposed during this period are polynomial-time algorithms, even though concepts on computational complexity were not yet clear at that time. The edited collection of \cite{muth1963industrial} and the research monograph by \cite{conway1967theory} helped on the expansion of the scheduling community.

Works on combinatorial analysis continued during the second decade but the resulting properties are more complex, for example a precedence relation may only be valid under certain conditions. Not being directly useful to solve the problem, those properties induce dominance conditions and serve particularly to build  \techbb{} algorithms. Problems that are typically treated by \techbb{} during this period include flowshop scheduling and single machine total weighted tardiness problem. A new book by \cite{baker1974introduction} stepped onto the stage and replaced the one of \cite{conway1967theory} as the main scheduling textbook. 

Two topics were highlighted during the third decade: computational complexity and classification scheme. Since the seminal paper of  \cite{edmonds1965paths}, polynomial algorithms have been generally considered as {``good''} algorithms and therefore became the search target of schedulers. However, it seems impossible to design this kind of algorithm for certain problems, due to their ``inherent hardness''. This feeling was justified by \cite{cook1971complexity}  which stated the existence of ``hard'' problems for which polynomial algorithms are unlikely to exist. \cite{karp1972reducibility} followed up by formulating this discovery in a more formal way and by adding the first list of \npc{} problems. Further, the monograph of \cite{garey1979computers} demonstrated various techniques for proving $\mathcal{NP}$-hardness and discussed the complexity issue of 300 problems from 12 areas including scheduling. The book remains a classical text on the topic. About the classification scheme, the most important work was done by \cite{graham1979optimization}, in which the well-known three-field notation $\alpha|\beta|\gamma$ is proposed. Field $\alpha$  describes the machine environment (single machine, parallel machine, etc); field $\beta$ specifies problem-dependent parameters (release time, due dates, etc) and field $\gamma$ is the objective functions to minimize. More details on this can be found in section \ref{sec:class}.

During the fourth decade, the complexity status of many classic scheduling problems are known. Besides of a more frequent appearance of \textit{Column Generation}, it was still lack of weapons to attack $\mathcal{NP}$-hardness in an exact way. More efforts were put on approximation and heuristic algorithms. Apart from specific algorithms depending on problem properties, various approaches were well discussed. Local Search, a method to improve solution quality by searching the \textit{neighborhood} space, which can be traced back to late 50's \citep{croes1958method}, starts to attract the attention of schedulers \citep{nicholson1967sequential}. The early publications on Simulated Annealing \citep{kirkpatrick1983optimization,cerny1985thermodynamical} made Local Search a charming research area, after which a number of new developments start to pop up. Without being exhaustive, Tabu Search is introduced by Glover \citep{glover1986future,glover1989tabu,glover1990tabu}. At the same period, Genetic Algorithms \citep{goldberg1989genetic} also started to gain attention. Ant Colony Optimization was introduced in early 90's by  \cite{dorigo1991positive, dorigo1992optimization,dorigo1996ant}, and a survey on this metaheuristic is provided by \cite{dorigo2005ant}. A recent survey on main metaheuristics is provided by \cite{boussaid2013survey}.

The development of scheduling  afterwards (the 00's) was veritably diverse. Classic scheduling models were largely extended by adding new constraints coming from realistic industrial environments. Enhanced models include, for instance, \textit{Online Scheduling} \citep{pruhs2004} in which  jobs to schedule arrive over time; \textit{Scheduling With Batching} \citep{potts1992integrating,potts2000scheduling} in which a set of jobs can be processed as a batch;  
\textit{Scheduling With Machine Availability Constraints} \citep{lee1996machine,sanlaville1998machine,schmidt2000scheduling,lee2004machine}, in which the machines may not be available during some intervals of time due to the need of reparation or maintenance. 
\textit{Multicriteria Scheduling} \citep{t2006multicriteria}, \textit{Robust Scheduling} and
\textit{Scheduling with Variable Processing Time}, etc, are also attracting research efforts.

Today (August 2017), searching the term ``scheduling problem'' in Google Scholar for publications published since 2010 gives more than 530000 results, which affirms that \textit{Scheduling} is now a well established domain. More and more new results and solving tools are being added to this field, while the challenges in front of us remain numerous.

\subsection{Applications}\label{sec:ordoapp}
It is not exaggerated to say that scheduling is involved in all industrial activities and it has a critical role in the decision making process. Indeed, in any well established industry, it is much natural to have decisions to make, on the allocation of resources, schedule of activities, etc. 
According to the application areas, we may encounter scheduling problems from manufacturing production, project management, computer systems, transportation and distribution and those from service industries, without being exhaustive.


Some application examples are introduced by \cite{pinedo2008scheduling}, such as \textit{Airport Gate Assignment Problem}, can be considered as in service industry, in which planes must be assigned to boarding gates according to their arrival and departure time to minimize delays suject to a number of known constraints (e.x. some gates are only accessible for small planes) and also uncertainties (weather, technical problems); \textit{CPU Task Scheduling Problems}, which are encountered during the evolution of computers and is about scheduling task execution on CPU to maximize system performance and user experience.  Other applications include Nurse Rostering Problems in hospitals, University Timetabling, Vehicule Routing Problems, Aircraft Scheduling Problems and Crew Scheduling Problems, Bus and Train Driver Scheduling, Sports Scheduling Problems, etc \citep{leung2004handbook}. Also, \cite{harj2014scope} have done an excellent review on the industrial aspect of scheduling: they present briefly the approaches that are adopted by industries, discuss their strength and weakness and also the gap between academic research models and real industrial problem environments.

\subsection{Classification and Notation}\label{sec:class}
Scheduling problems can be classified according to different characteristics. For instances a given scheduling problem can be:
\begin{itemize}
    \item \textit{deterministic} or \textit{stochastic}: a scheduling problem is deterministic if all the data (processing times, release dates, etc) of the problems are well defined and it is stochastic in contrary if some of these characteristic values are not known precisely but are known to follow some probability laws.
    \item \textit{static} or \textit{dynamic}: a scheduling problem is static if all the data are known and will not change during the solution, while it is dynamic if the data may change during the solution and the schedule has to be adjusted in real time. The data changes could be, for instances, new jobs arrive during the solution or the actual processing time of jobs are different than what was expected, etc.
    \item \textit{unitary} or \textit{repetitive}: a scheduling problem is repetitive if its operations appear to be cyclical and it is unitary if each operation corresponds to one unique product. 
\end{itemize}

A more common way to classify  scheduling problems is based on the machine environment and job characteristics. This is well adopted  especially in the context of machine and processor scheduling problems. In the following, we present this classification scheme together with the notation, knowing that we focus on machine scheduling problems in the scope of this thesis, hence specific classifications or notations are not provided on some other scheduling areas like Resource-Constrained Project Scheduling Problem (RCPSP), even though the notation presented below can be extended to this problem \citep{brucker1999resource}.

As already mentioned in the previous section, the most important work on the classification, especially the notation, of scheduling problems was done by \cite{graham1979optimization} in which a three-field notation ($\alpha|\beta|\gamma$) was proposed. This notation has become a well-known language of schedulers, called ``Scheduleese'' by E.L. Lawler according to \cite{lenstra1998mystical}. The adoption of this notation makes the formulation, representation and comparison of scheduling problems very clear and concise.

The detail of this notation can be found directly in the original paper of Graham et al. or in some classic scheduling textbooks like the one of \cite{brucker2007scheduling}. However, for the completeness of the content of thesis, we present this classification scheme through the sections below for reference purpose.

The generic sequencing problem can be viewed as to process $n$ jobs $J_i$, $i=1,...,n$ over $m$ machines $M_j$, $j=1,...,m$. A \textit{schedule} determines for each job, the time intervals and machines for its execution. We normally suppose that each machine can only execute one job at a time and one job can only be processed by one machine at a time. For instance, the \pbdtilde{} problem asks to schedule jobs on one machine in order to minimize the sum of weighted completion time of jobs, subject to the deadlines of jobs. Note that the solution is often expressed as an ordering of jobs, because jobs are often supposed to start as early as possible in most situations, hence the time intervals of jobs are determined automatically. This is the case for all problems treated in this thesis.

\subsubsection{Job Data}
A given job $J_i$ may consist of multiple operations which are noted as $O_{ij}$ (the $j$-th operation of $J_i$). $\mu_{ij}\subset \{M_1,...,M_m\}$ defines the set of candidate machines that can process $O_{ij}$. The most common data   associated to $J_i$ are the following:
\begin{itemize}
    \item $p_{ij}$, the processing time of $O_{ij}$. We simply note $p_i$ if $J_i$ implies only one operation,
    \item $r_i$, the release date of $J_i$, before which $J_i$ cannot be processed,
    \item $d_i$, the due date of $J_i$ job. $J_i$ is then expected to be expected before this date, otherwise some penalty is induced depending on the problem,
    \item $w_i$, a weight associated with $J_i$. This is may be used to define the contribution to the cost function of $J_i$,
    \item $f_i(t)$, the cost function which measure the cost induced by $J_i$ when it is completed at time $t$.
\end{itemize}
Other data may be present according to the problem under consideration.

\subsubsection{Field $\alpha$ - Machine Environment}
The field $\alpha$  may contain two parameters : $\alpha=\alpha_1\alpha_2$, with $\alpha_1$ defining the machine environment and $\alpha_2$ indicating the  number of machines. $\alpha_2$ can be an integer value or $m$ which means an arbitrary fixed number or empty which means an arbitrary number. We have $\alpha_1\in\{1, P,Q,R, G,X,O,J,F\}$ with the following meaning:

\begin{itemize}
    \item if $\alpha_1 = 1$, then all jobs must be processed on one single machine, hence $\alpha_2=\emptyset$,
    \item if $\alpha_1 \in \{P,Q,R\}$,  then each job has a single operation 
            \begin{itemize}
                \item if $\alpha_1 =P$, then we have identical parallel machines, i.e. the processing time of $J_i$ on $M_j$ is $p_{ij}=p_i$, $\forall j=1..m$,
                \item if $\alpha_1 =Q$, then we have uniform parallel machines, i.e. $p_{ij}=p_i/s_j$ with $s_j$ the processing speed of $M_j$, $\forall j=1..m$,
                \item if $\alpha_1 =R$, then we have unrelated parallel machines, i.e. $p_{ij}=p_i/s_{ij}$ with $s_{ij}$ the speed of $M_j$ for processing $J_i$, $\forall j=1..m$,
            \end{itemize}
    \item if $\alpha_1 \in \{G,X,O,J,F\}$,  then each job $J_i$ is made up of multiple operations $O_{i1},...,O_{in_i}$, each operation being processed by a single dedicated machine. There may also have  precedence relations between operations of the same job. This model in general is called General Shop by setting $\alpha_1=G$
             \begin{itemize}
                \item if $\alpha_1 =J$, then we have a jobshop problem, in which we have precedence relations of the form of $O_{i1}\rightarrow O_{i2}\rightarrow...\rightarrow O_{in_i}$. Also, we often assume that consecutive operations are not processed on the same machine.
                \item if $\alpha_1 =F$, then we have a flowshop problem, which is a special jobshop in which $n_i=m$ and $\mu_{ij}={M_j}$, $\forall j=1..m$.
                \item if $\alpha_1 =O$, then we have an openshop problem, which is a flowshop without precedence relations between operations of the same job,
                \item if $\alpha_1 =X$, then we have a mixed shop problem which combines jobshop and openshop.
            \end{itemize}
\end{itemize}

\subsubsection{Field $\beta$ - Job Characteristic}
The field $\beta$ defines job characteristics or extra problem parameters. Most common values are listed below, and note that multiple values can appear together.
\begin{itemize}
    \item Preemption. If $pmtn$ appears in $\beta$, preemption is allowed, i.e. a job being processed may be interrupted and resumed later,
    \item Precedence. If $prec$ appears, there are precedence relations between jobs that can be represented as a directed acyclic graph $G=(V,A)$ with $V$ the job set and $A$ the arc set whose direction indicate the precedence relation. Replace $prec$ by other keywords to provide more precise indication : $intree(outtree)$ if $G$ is a tree and the maximum outdegree(indegree) of vertices less or equal to one; $sp-graph$ if $G$ is series parallel.
    \item Release date. If $r_i$ appears, there is a release date for each job $J_i$, otherwise $r_i=0$ for all $i$,
    \item Restriction on processing times. For example $p_i=p$ indicates that jobs have equal processing times.
    \item Due date. If $d_i$ appears, a due date is set for each job $J_i$, such that $J_i$ is expected to finish at that time, otherwise some cost may arise depending on the problem.
    \item Deadline. If $\tilde{d_i}$ appears, a deadline is set for each job $J_i$ so that $J_i$ must be finished before it.
    \item Batching. In some problems jobs can be batched for processing. A setup time is needed to prepare a batch, and the processing time of the batch can be the maximum (or sum of) processing time of jobs in the batch, indicated by $p-batch\ (s-batch)$.
\end{itemize}

\subsubsection{Field $\gamma$ - Optimality Criteria}
The criteria to optimize usually depends on the completion time of jobs. We denote the completion time of $J_i$ by $C_i$ and associate a cost function  $f_i(C_i)$ to each job. Two bottleneck objectives functions can be met:
$$f_{max} = \max_{i}f_i(C_i)$$
and sum objectives
$$\sum f=\sum_i{f_i(C_i)}.$$
Various job cost functions are considered in the literature, such as:
\begin{itemize}
    \item Earliness, denoted by $E_i=max(0, d_i-C_i)$,
    \item Tardiness, denoted by $T_i=max(0, C_i-d_i)$,
    \item Lateness, denoted by $L_i= C_i-d_i$,
    \item Unit penalty, denoted by $U_i$. With $U_i=0$ if $C_i<d_i$ otherwise $U_i=1$,
    \item Lateness, denoted by $L_i= C_i-d_i$,
    \item Absolute deviation, denoted by $D_i=|C_i-d_i|$,
    \item Squared deviation, denoted by $S_i=(C_i-d_i)^2$.
\end{itemize}

Job weights $w_i$ are often considered in $f_i$. Therefore, some widely considered objective functions are $C_{max}$, $L_{max}$, $\sum T_i$, $\sum w_iT_i$, $\sum U_i$, $\sum w_iU_i$, $\sum D_i$, $\sum w_iD_i$, $\sum S_i$, $\sum w_iS_i$, $\sum E_i$, $\sum w_iE_i$. Note that more than one functions can appear in this field in case of \textit{Multicriteria Scheduling Problems}, in which case the notation is accordingly adapted \citep{t2006multicriteria}.

\subsubsection{Example of notation}
As an example, the \pbft{} problem asks to schedule jobs in a Flowshop environment with three machines to minimize the global completion time. The \pbtt{} problem asks to schedule jobs on one machine to minimize the sum of tardiness of jobs, with respect to the due date of each job.


\section{Exact Exponential Algorithms (EEA)}
When considered as an independent research area, EEA is at the intersection of several well established domains such as Combinatorial Optimization, Operations Research, Computer Science, and Complexity Theory, in which we are faced with a \textit{hard} decision or optimization problem for which we search for an exact algorithm. In the context of Complexity Theory, we consider a problem as \textit{easy} if it can be solved in polynomial time, the algorithms we search for here are hence super-polynomial and most of the time exponential in the input size. If we note the complexity of these algorithms as $\ostar{c^n}$ with $c$ a constant, then the general objective in EEA field is to propose algorithms with $c$ as small as possible. Note that we use extensively the complexity notation $\ostar{\cdot}$ to suppress polynomial multiplicative factors for simplification. That is, $f(x)=\ostar{g(x)}\Leftrightarrow f(x)=\bigo{p(x)g(x)}$, with $p(x)$ a polynomial of $x$. Recall that, $f(x)=\bigo{g(x)}$ if and only if there exists $ c>0,\ n_0>0$, such that $\forall x>n_0$, $0\leq f(x)\leq cg(x)$. Other conventional complexity notations like $\Omega(\cdot)$ and $o(\cdot)$ are also adopted. $f(x)=\Omega(g(x))$ if and only if $g(x)=\bigo{f(x)}$. $f(x)=o(g(x))$ if and only if  $ \forall c>0,\ \exists n_0>0$, such that $\forall x>n_0$, $0\leq f(x)< cg(x)$. More details can be found in the classic book of \cite{cormen2009intro}, chapter 3.1.

Why would these exponential algorithms be interesting for us ? What are the algorithm design techniques used in this domain? We discuss these points through the following subsections.

\subsection{Motivations}
The motivation for the search of \textit{exact} algorithms does not need justification: given an optimization problem, we are supposed to find an optimal solution. The question is why \textit{exponential} ? Note that except explicit statement, in the context of this thesis the complexity of an algorithm refers to the worst-case complexity (time complexity by default). This is a common practice \citep{cormen2009intro} since the worst-case complexity provides an upper bound on the running time of the algorithm, and hence ensures that the performance will not be worse. On the other hand, worst-case occurs pretty often in practice for some problems such as searching an entry that does not exist in a database.

\cite{edmonds1965paths} argued on the meaning of a \textit{good} algorithm and he claimed that a \textit{good} algorithm should have its performance algebraically related to the input size of the problem, i.e. the complexity of the algorithm should be polynomial on the input size. At the same period, the complexity class $\mathcal{P}$ started to be considered \citep{cobham1965intrinsic}. Since then, it is well accepted by the community that the search of polynomial time algorithm is the primary objective when solving any problem. Unfortunately, under the well believed $\mathcal{P} \neq \mathcal{NP}$ conjecture, polynomial algorithms do not exist for some \textit{hard} problems, which are often classified as \npc{} or \nph{} problems. EEA are proposed for these problems.

Therefore, first of all we hope it clear that we are searching for Fast Exact Algorithms, however the intrinsic difficulty of target problems implies the exponentiality of resulting algorithms. That is, if we want to find the optimal solution of a \textit{hard} optimization problem with a worst-case guarantee, we are necessarily on the road of searching an EEA, with no other choices. This is the first motivation.

Consider polynomial algorithms as \textit{fast} and exponential algorithms as \textit{slow} is fully justified from an asymptotic view, where the input size is necessarily large. However, it is not the case in practice where the input size is often bounded. It is easy to see that $n^3 > 1.0941^n\cdot n$ for $n\leq 100$. It is therefore worthy to consider exponential algorithms and is particularly interesting to derive \textit{moderately} exponential algorithms. As a canonical example, the MIS (Maximum Independent Set) problem which asks to find a largest subset of pairwise non-adjacent vertices from a given undirected graph, has been benefiting from a series of algorithmic improvements. To the authors' knowledge, the currently fastest algorithm runs in $\ostar{1.1996^n}$ \citep{xiao2017exact}, which can be thought as \textit{moderately exponential}. Therefore, the performance of an exponential algorithm in practice is not necessarily bad. This is the second motivation. 

As stated before, the high complexity of algorithms comes from the intrinsic difficulty of problems. On the contrary, the research on faster EEA helps understand better the origin of that difficulty and the possibility to tackle it. Another important observation is that all \nph{} problems are not equally hard. Some problems have been shown to have algorithms running much faster than the brute-force (e.g. MIS), while for others (e.g. Satisfiability) the enumeration of all possible solutions stays the best choice. Therefore an enhanced research on EEA for these problems will also lead to a finer complexity classification of them. 
These theoretical interests are the third motivation.

Finally, for a problem that can be solved in $\bigo{c^n}$ with $c$ a constant, improving the complexity to $\bigo{c-\epsilon}^n$ allows to enlarge the size of solvable instances within a given amount of time by a multiplicative factor, while running the $\bigo{c^n}$ algorithm on a faster computer will only increase this size by an addictive factor. Therefore research efforts are worth to be paid to minimize ``$c$''.

\subsection{Common Algorithmic Frameworks}\label{sec:ch1:algo}
The ambition to conquer \nph{} problems by designing fast exact algorithms have been existing for a long time. However, it is during recent years that EEA has been attracting more and more research intelligence than ever. Fomin and Kratsch in their monograph \textit{Exact Exponential Algorithms} \citep{fomin2010exact} attributes the source of this prosperity to the survey of  \cite{woeginger2003exact}, in which several techniques are summarized and several open problems are proposed, intriguing a number of researchers to enter this area. The monograph of \cite{fomin2010exact} is a standard textbook on this topic. Most common algorithmic frameworks are well discussed in the book, making it a handy manual when tackling hard problems. In this section, we review some of the most important algorithmic frameworks that are to be considered when facing with an \nph{} problem and we kindly refer the reader to the book \citep{fomin2010exact} for more details.

\subsubsection{Branching}\label{sss:branching}
Branching is perhaps the most natural solving approach when dealing with an optimization problem, because it is based on the simple enumeration idea. All combinatorial problems can be solved in finite time by enumerating the whole solution space. A candidate solution can be represented as a set of variables and the optimal solution has its variables' values inducing the optimal objective function value. A naive branching algorithm, at each \textit{branching}, simply chooses a variable and fix its value - we say that we branch on this variable - to create a new \textit{branch}, which corresponds to a subproblem with one less variable to consider. 
The algorithm continues in this way and get a candidate solution when all variables are fixed. And the algorithm stops when all possible values of all variables have been tested. The optimal variable assignment is returned, representing the optimal solution of the problem. Based on this initial structure, the algorithm can be enhanced in different ways to avoid enumerating the whole solution space, by making use of problem dependent or independent properties.

As an example, consider the MIS (Maximum Independant Set) problem. Given an undirected graph $G=(V,E)$, it asks to find $S\subseteq V$ with the maximum  cardinality, such that $\forall u,v \in S$, $(u,v)\notin E$. A trivial brute-force algorithm or a naive branching algorithm would just try all $2^{|V|}$ subsets of vertices, verify the constraint of non-adjacency and return the maximum subset found. However, a simple utilization of the non-adjacency property allows to drastically reduce the search space. Consider branching on a vertex $v$ for two possibilities (hence implying a boolean variable): either $v$ belongs to a maximum independent set $S$ or not. Two observations follow:
\begin{enumerate}
    \item $v\in S$: according to the non-adjacency property, all neighbors of $v$, denoted by $N[v]$, must be excluded from $S$.
    \item $v\notin S$: then at least two neighbors of $v$ are inside $S$. Otherwise by forcing $v$ into $S$ and remove $N[v]$ from $S$, we have a not-smaller independent set.
\end{enumerate}

The above observations can be embeded into the branching procedure: whenever we make decision on one vertex $v$, we are also making decisions or partial decisions on some other vertices. This boosts the branching algorithm since it is much faster than the pure enumeration on all vertices. Similar but more complex observations on this problem can further enhance the branching.

The above branching algorithm is referred to as \techbr{} by \cite{fomin2010exact}. Note that several different names are used in the literature such as search tree algorithm, backtracking algorithm, Davis-Putnam type algorithm, etc. 
Branching algorithms can be usually represented over a search tree, in which each node represents a subproblem to be solved, which is explored by the algorithm. Various problem dependent or independent techniques can be applied to \textit{prune} the tree, that is to reduce the search space and hence accelerate the solution. 
Depending on the pruning scheme, we face various names of branching algorithms like \techbb{}, which is commonly used to solve \nph{} problems. Considering a minimization problem, at each node, the algorithm evaluates a LB (Lower Bound) on the objective function value that we can expect if branch on this node. If the LB is greater than the global UB (Upper Bound) given by a candidate solution, then the current node should be cut since it will not provide a better result. Note that even though \techbb{} is widely used in practice, it is not commonly considered in the context of EEA. This is due to the difficulty to analyze tightly its worst-case complexity, which is usually determined by the chosen branching scheme. In fact, the practical average efficiency of \techbb{} algorithms largely depends on the quality of the LB computed at each node. On the theoretical side, proving that this pruning scheme removes an exponential number of nodes, in the worst-case scenario, is a hard task. Other techniques that can be categorized as branching algorithms include \techbc{}, \techbp{} etc. Integer Programming and Constraint Programming can also be seen as a problem solving approach using internal branching algorithm. Again, we insist on the fact that in the context of EEA, we are only interested in branching algorithms that can provide a worst-case running time guarantee better than that of brute-force algorithms.

\subsubsubsection{Complexity Analysis}

The complexity analysis of branching algorithms is worthy to be introduced here. Since branching algorithms are usually represented as a tree and each single branching operation is usually done in polynomial time, the time complexity of branching algorithms depends on the total number of nodes or the number of leave nodes in the tree. Let $T(n)$ be the maximum number of leaves in a branching tree of input size $n$. A recurrence relation can be established between a node and all its subnodes if the branching rule is systematic. Let us show this on the above presented MIS problem, without considering the observation 2. Let $deg(v)=|N[v]|-1$ be the degree of $v$ in $G$ and $deg(G)=\max_{v\in V}deg(v)$. At each branching, we need to decide whether a vertex $v$ belongs to a maximum independent set $S$ or not. If yes, a subnode is created which corresponds to a problem of size $(n-|N[v]|)$, with $N[v]$ denoting the set of neighbours of $v$ (including $v$), according to observation 1; otherwise if $v$ is excluded from $S$, a subnode is created corresponding to a subproblem with $v$ deleted from the graph, therefore of size $(n-1)$. The following relation is then established:
$$ T(n) \leq T(n-|N[v]|) + T(n-1)$$

If we decide to choose $v$ as the vertex with maximum degree, then $deg(v)\geq 3$ because if $deg(v)\leq 2$ we can solve it in polynomial time without need to branch \citep{fomin2010exact}. Hence, the recurrence becomes

$$ T(n) \leq T(n-4) + T(n-1)$$

This kind of linear recurrences are known to have the solution of the form $T(n)=c^n$, with $c$ the largest root of the equation

$$ x^n = x^{n-4} + x^{n-1}$$.

More generally, we may associate to this recurrence a \textit{branching vector}, which is defined as the vector of the size reduced in each branching case, hence here $\langle4,1\rangle$. Given a branching algorithm, we analyze its induced branching vector in worst-case, then solved the related equation of the above form to obtain the complexity. The equation can be solved easily by some mathematical solvers, (the one used here is WolframAlpha\footnote{\url{https://www.wolframalpha.com/}}) and the returned result is $T(n)=\bigo{1.3803^n}$. Notice that this complexity is in fact an upper bound on the worst-case time complexity of the branching algorithm. Recurrence solving is well discussed in textbooks on discrete mathematics  \citep{graham1994concrete, rosen2012discrete}.

\subsubsubsection{Measure \& Conquer}
The above complexity analysis supposes that every vertex has the same weight (which is 1) in the problem size measurement. This explains the branching vector $\langle4,1\rangle$. Nevertheless, the analysis is not necessarily tight in the sense that the vertex $v$ may have $deg(v)$ much greater than 3, but this cannot be considered in the recurrence function. More concretely, branching on a vertex $v$ with $deg(v)=3$ versus branching on vertex $u$ with $deg(u)=100$ makes a huge difference on the running of the algorithm, though both vertices are considered having the same weight (which is 1) in the previous analysis. Also, as stated above, a vertex $v$ with $deg(v)\leq 2$ do not need to be branched, hence, it has a little impact on the running time, but they are also holding the same weight (which is 1) as others.

With all these observations in mind, a tighter complexity analysis method is proposed as \textit{Measure \& Conquer}. The key is to choose the problem measure in a clever way. Three conditions need to be satisfied:

\begin{enumerate}
    \item The measure of a subproblem obtained by a branching must be smaller than the measure of the father problem.
    \item The measure of an instance is non-negative.
    \item The measure of the input should be upper bounded by some functions of natural parameters of the input. The ``natural parameter'' here means the classic measure choices such as the number of vertices in a graph. This consideration allows to compare the complexity result obtained by Measure \& Conquer to classical results.
\end{enumerate}

Consider the following algorithm :
\begin{enumerate}
    \item If $\exists v\in V$, $deg(v)\leq 1$, count $v$ into the maximum independent set $S$ 
    \item If $deg(G)=2$, solve the problem in polynomial time.
    \item If $deg(G)\geq 3$, branch on $v$ with $deg(v)=deg(G)$.
\end{enumerate}
 
 By the classical analysis, the branching vector is $\langle 1,4\rangle$, and the result is $T(n)=\bigo{1.3803^n}$ as presented above. Now consider a new weight attribution:
 
 \begin{enumerate}
    \item  $\forall v$ with $deg(v)\leq 1$, $weight(v)=0$, since these vertices do not need to be branched and they are treated in polynomial time.
    \item $\forall v$ with $deg(v)=2$, $weight(v)=0.5$
    \item $\forall v$ with $deg(v)\geq 3$, $weight(v)=1$
\end{enumerate}

 Now we analyze the situation when branching on $v$. As a starting point, consider $deg(v)=deg(G)=3$, if $v$ is selected into $S$, then the total weight of instance decreases by $1$ for the removal of $v$. The neighbors of $v$, i.e. $N(v)$ should also be removed and their initial degree is at least 0.5, therefore the total weight reduction is $1+0.5*3=2.5$. If $v$ is not selected in $S$ and removed from $V$, then the degree of $u$, $u\in N(v)$, passes from 1 to 0.5 or from 0.5 to 0, therefore the total weight reduction is $1+0.5*3=2.5$. The branching vector becomes $\langle 2.5,2.5\rangle$, inducing $T(n)=\ostar{1.3196^n}$. What about the case where $deg(G)\geq 4$? 
 Selecting or discarding $v$ implies the removal of $v$ which decreases the weight by $1$. The weight decrease on the neighbours of $v$ depends on their initial weight. Therefore let $x=|\{u: u\in N(v) \bigwedge d(u)=2\}|$, $y=|\{u: u\in N(v) \bigwedge d(u)=3\}|$ and $z=|\{u: u\in N(v) \bigwedge d(u)\geq 4\}|$. Now the following relations hold:

 $$
 \begin{cases}x+y+z=d(v) \\
 dis = 1+0.5x+0.5y \\
 sel = 1+0.5x+y+z
 \end{cases}
 $$
 
 According to Lemma 2.3 in the book of \cite{fomin2010exact}, the branching factor is smaller when it is more balanced, i.e. $\langle k,k\rangle\leq \langle i,j\rangle$ if $i+j=2k$. Hence 
 $\langle dis,sel\rangle \leq \langle 1,dis+sel-1\rangle=\langle 1,1+x+1.5y+z\rangle\leq\langle 1,1+d(v)\rangle\leq\langle 1,5\rangle$.
 
 
 That is, the branching vector is at least $\langle 1,5 \rangle$, inducing $T(n)=\ostar{1.3248^n}$. Consequently, without modifying a single part of the algorithm, we have proved that the complexity is bounded by $\ostar{1.3248^n}$ instead of $\ostar{1.3803^n}$.

For more details on Measure \& Conquer, we refer the reader to the works of  \cite{fomin2005measure, fomin2006measure,fomin2009measure}.

\subsubsection{Dynamic Programming}
Since its early appearance in the 50's (see for instance \cite{bellman1952theory,Bellman1957}) with the term coined by Richard Bellman \citep{dreyfus2002richard}, \textit{Dynamic Programming} (DP) has become rapidly one of the most important algorithmic tool for solving complex optimization problems; especially for problems whose solution can be expressed as a sequence of decisions. The elementary principle behind DP is often referred to as \textit{Principle of optimality} which can be briefly stated as follows:
\begin{quote}
\textit{The partial solutions of an optimal solution are themselves optimal for the corresponding subproblems.}
\end{quote}

Though it seems not complex to be understood today, this principle directly induces DP as a powerful methodology. For instance, until now the DP is still the algorithm having the best time complexity ($\ostar{2^n}$) for the  \textit{Travelling Salesman Problem} (TSP) since the 60's. The corresponding recursive equation, also called Bellman equation, of the DP for the TSP is of the following form \citep{bellman1962dynamic,held1962adynamic}:
\begin{equation}\label{eq:tsp}
{Opt}(S,i)=\min\{Opt(S-\{i\},j)+d(j,i): j\in S-\{i\} \}
\end{equation}

where $S=\{2,\ldots,n\}$, $Opt(S,i)$ denotes the length of the shortest path from city $1$ to $i$ passing all cities in $S$, with $Opt(\{i\},i)$ initiated to $d(1,i)$, the distance between cities $1$ and $i$. The same idea can be easily extended to general sequencing problems with regular non-decreasing cost functions, as stated by \cite{held1962adynamic,woeginger2003exact}. In fact, when seeking a worst-case guarantee, Dynamic Programming seems the most suitable choice for \nph{} scheduling problems and it yields the best worst-case complexity for a lot of them, even though it has the drawback of having an exponential space complexity. The reader will see later that our first contribution is also about a Dynamic Programming algorithm. 

\subsubsection{Sort \& Search}
The brute-force approach on \nph{} optimization problems often requires an enormous time because of the complete traverse of the whole solution space. Faster algorithms are able to eliminate part of the solution space during the search without missing the/an optimal solution. 

The idea of \techss{} is to rearrange the solution space by some preprocessing on the input data so that the search for an optimal solution is better guided. More concretely, it first sorts partial candidate solutions into some order, then later when constructing a complete global solution by several partial solutions, the construction can be done quickly by means of efficient search. 

We illustrate this technique on the \pbss{} problem. In this problem we are given a set of positive integers $a_1,a_2,...,a_n$ and $A$, and the aim is to find a a subset $I\subseteq \{a_1,...,a_n\}$ such that $\sum_{i\in I}i = A$.

The trivial enumeration algorithm traverses all subsets of $\{a_1,...,a_n\}$ and tests for each one whether its sum gives $A$. The induced worst-case time complexity is therefore $\ostar{2^n}$. Now consider the following treatment.
\begin{enumerate}
    \item Partition integers $\{a_1,...,a_n\}$ into two equal-sized sets $S_1$ and $S_2$ in an arbitrary way. Assume $n$ is even without loss of generality.
    \item Generate all subsets of $S_1$ and $S_2$ respectively. Let us note $SS_1=\{I|I\subseteq S_1\}$ and $SS_2=\{I|I\subseteq S_2\}$. Hence, $|SS_1|=|SS_2|=2^{n/2}$. Note that if $\exists I\subseteq\{a_1,...,a_n\}$ such that $\sum_{i\in I}i = A$, then $\exists I_1\subseteq SS_1, I_2\subseteq SS_2$ such that $I_1\cup I_2=I$.
    \item Sort elements of $SS_2$ according to their sum value. (Sort Phase)
    \item Enumerate elements of $SS_1$, for each $I_1\subseteq SS_1$, search in $SS_2$ an element $I_2$ such that $sum(I_2)=A-sum(I_1)$. (Search Phase)
\end{enumerate}

The correctness of the algorithm is evident. What is its complexity?
The best complexity of comparative sorting is $\bigo{N\log N}$ for $N$ elements, therefore the sorting here requires $\bigo{2^{n/2}\log 2^{n/2}}=\bigo{n 2^{n/2}}$. The searching on the sorted set $SS_2$ for each $I_1\subseteq SS_1$ requires $\log{2^{n/2}}$. Therefore the global time complexity is $\bigo{n 2^{n/2}} + 2^{n/2}\log{2^{n/2}}=\bigo{n 2^{n/2}}$ which is faster than $\bigo{2^n}$.

Note that the space complexity is $\bigo{2^{n/2}}$, induced by the storage of $SS_1$ and $SS_2$.

Similarly, \techss{} can be applied on \pbsac{} and \pbxsat{} \citep{fomin2010exact,horowitz1974computing, schroeppel1981a}. 
\cite{lente2013extension} proposed an extension of this method to a general class of problems called \textit{Multiple Constraint Problems} and use the method to derive faster exponential-time algorithms for several machine scheduling problems.

\techss{} accelerates the solution by making use of preprocessed information (sorting) at the price of more space requirement. The relation between time and space complexity of algorithms is further discussed in section \ref{sss:dc} and section \ref{sss:memo}.

\subsubsection{Divide \& Conquer}\label{sss:dc}
\techdc{} is one of the most common algorithm design paradigm. It can be applied on problems whose global optimal solution can be constructed from several optimal subsolutions of subproblems. At this point, it reminds us about \techdp{} and as what will be shown later, these two techniques do have strong inner connections.

In \techdc{}, a problem is solved recursively with three steps applied on each level of recursion \citep{cormen2009intro}:
\begin{enumerate}
    \item \textbf{Divide} the problem into a number of subproblems which are instances of the same problem but with smaller size.
    \item \textbf{Conquer} each subproblem in recursive way and when the size of subproblem is small enough, corresponding to the elementary case, solve it in a straightforward way without recursion.
    \item \textbf{Combine} the obtained solution of each subproblems to form the optimal solution of the initial problem.
\end{enumerate}

A very natural \techdc{} example as given by \cite{cormen2009intro} is the \textit{Merge Sort} algorithm. Given an unsorted array of numbers, merge sort first divide the array in the middle, sort each half array in a recursive way then combine the two sorted sequence into a global sorted one. The elementary case arises when the array to sort contains one element only.

The \textit{Divide} operation for sorting is pretty straightforward, nonetheless it is not always the case depending on the problem. Here we present an interesting application of \techdc{} on \pbhp{} problem, initially proposed by \cite{gurevich1987expected}. Given an undirected graph $G=(V,E)$, and two vertices $s,t\in V$, $|V|=n$, the problem asks whether there exists a path $P$ from $s$ to $t$ of length $(n-1)$, which pass by each vertex exactly once. 

Here is the \techdc{} thinking: if a Hamiltonian path $P=(s,v_1,v_2,...,v_{n-2}, t)$ exists, it can be seen as the concatenation of two half-paths $P_1=(s,v_1,...,v_{(n-2)/2})$ and $P_2=(v_{(n-2)/2},...,v_{n-2}, t)$. Furthermore, $P_1$ ($P_2$) is a Hamiltonian path of a subproblem where the graph is induced by the vertices $V_1$ ($V_2$) from the path. This naturally suggests a \techdc{} approach. However, unlike the sorting problem, here we cannot divide $V$ arbitrarily into $V_1$ and $V_2$, since not all divisions can lead to an correct answer. Therefore, we need to try all possible divisions.

Let $HP(V\smallsetminus\{s,t\},s,t)=true$ if there exists a hamitonian path from $s$ to $t$ passing by all $v\in V\smallsetminus\{s,t\}$. We first choose the middle vertex $v_{(n-2)/2}$ then try all possible partitions of $V_1$ and $V_2$, finally combine $P_1$ and $P_2$. The following recurrence relation holds:
$$
HP(V\smallsetminus\{s,t\},s,t)= \underset{\substack{
V_1\subset\{V\smallsetminus\{t\}\bigcup\{s\}\},
V_2\subset\{V\smallsetminus\{s\}\bigcup\{t\}\},\\
|V_1|=|V_2|=(|V|-3)/2+1,\\
V_1\bigcap V_2 = \{v_{mid}\}}
}
{\bigvee}
\left(HP(V_1,s,v_{mid}) \bigwedge HP(V_2,v_{mid},t)\right)
$$

The time complexity induced by the above recurrence is
\begin{align*}
T(n) & \leq (n-2)2^{n-3}2T((n-3)/2) \\
     & \leq n2^{n}2T(n/2) \\
     & = 2^{n(1+1/2+1/4+...)}n^{\log n} T(1)\\
     & = \ostar{4^nn^{\log n}}
\end{align*}
Since the algorithm is recursive, it required only polynomial space. We can apply \techdc{} on other problems as shown by \cite{bjorklund2008exact,bodlaender2006exact}.

\subsubsubsection{Time, space and Dynamic Programming}\label{sss:spacefortime}
The \pbhp{} problem can be solved by \techdp{} in $\ostar{2^n}$ time and space. Nevertheless, it can be unified with \techdc{} on this problem and other sequencing problems with a similar structure. Indeed, reconsider the above \techdc{} algorithm, instead of selecting the intermediate vertex for the middle position and trying all 2-partitions of the remaining vertices, we can choose the position of the intermediate vertex arbitrarily. When the intermediate vertex is on the last position, we got a Dynamic Programming algorithm.

This suggests us to consider the compromise between time and space requirement of our algorithm. Actually it is proved that for every $i\in\{0,...,\log_2n\}$, the problem TSP on $n$ cities is solvable in time $\ostar{2^{n(2-1/2^i)}n^i}$ and space $\ostar{2^{n/2^i}}$ \citep{fomin2010exact}. The idea is to first apply \techdc{} until reaching subproblems of limited size, which is chosen depending on $i$, then call Dynamic Programming to solve it.

We suggest to always consider these two techniques as a bundle whenever solving a given problem, so as to achieve a better time-space compromise.

\subsubsection{Memorization}\label{sss:memo}
We have already seen the possibility to achieve a time-space compromise by combining \techdp{} and \techdc{} (section \ref{sss:spacefortime}). Another technique allowing to achieve this is \techmemo{}. As its names suggests, memory space is used to store useful information which could accelerate the solution. More concretely, it is applied upon a base algorithm, most usually a branching algorithm, during the execution of which, identical subproblems may be encountered many times. With \techmemo{} enabled, each time when a subproblem is solved, the corresponding subsolution is ``memorized'', that is, stored into a hashtable, a database or any arbitrary data structure allowing a fast query. When the same subproblem is encountered again, instead of solving it again, it can be just retrieved from the ``memory''. 

As an example, consider the branching algorithm for the MIS problem introduced in section \ref{sss:branching}. For a given graph of $n$ vertices the algorithm generates $\bigo{1.3803^n}$ subproblems. For any size $1\leq p \leq n$, a subproblem of size $p$ can be solved in $\bigo{1.3803^p}$ time, hence the number of subproblems  of size $p$ solved during the solution of an instance of size $n$, denoted by $N_p(n)$, is at most $\frac{T(n)}{T(p)}=\bigo{1.3803^{n-p}}$. 
With \techmemo{}, each subproblem is solved only once, therefore $N_p(n)\leq{n \choose p}$. This leads to
$$N_p(n)=\bigo{\min(1.3803^{n-p}, {n \choose p}) }$$
with $N_p(n)$ reaches its maximum when the two inner terms are balanced. The equation
$$ 1.3803^{n-p} =  {n \choose p}$$
yields $p=0.0865n$. Therefore the number of subproblems of size $0.0865n$ is bounded by $1.3803^{n-p} =1.3424^n$ and the total number of subproblems of all sizes is bounded by $n1.3424^n=\ostar{1.3424^n}$ which is the running time of the algorithm with memorization. The space complexity is also $\ostar{1.3424^n}$ since the solution of all subproblems are memorized when these problems are solved at the first time. More details on this analysis can be found in the work of \cite{fomin2010exact,robson1986algorithms}.

Note that even though some algorithms also use memory space to store information for faster solving, such as \techss{}, \techmemo{} refers particularly to the use of memory for directly storing solutions of subproblems.

\subsubsection{Inclusion \& Exclusion}
The above mentioned techniques share a common property: they do better than the brute-force search by avoiding exploring the whole solution space, making use of some problem dependent/independent properties. There exists some algebraic techniques that tackle  problems from other angles. One representative technique is called \techie. For a given \npc{} problem, instead of searching directly for a certificate, it counts the number of certificates and if this number is greater than zero, the corresponding decision problem gets a YES answer. Counterintuitively, counting the number of certificates is sometimes easier than verifying their existence.

Karp seems the first person illustrating the use of \techie{} on solving hard problems.  He stated in his paper \citep{karp1982dynamic} that many problems of sequencing, packing and assignment can be approached by \techie{}. As an example, he showed how to apply it to solve \pbhp{}, a generic scheduling problem with release date and deadline and \pbbp. We summarize here the solution of \pbhp{} by \techie.

Let $G=(V,E)$ be a digraph with $V=\{s,t\}\bigcup\{1,...,n\}$. \pbhp{} asks to determine whether there is a Hamitonian path of length $(n+1)$ from $s$ to $t$, passing each vertex $\{1,...,n\}$ exactly once. Let $X$ be the number of the hamitonian paths. For $S\subseteq\{1,...,n\}$, let $N(S)$ be the number of paths of length $(n+1)$ from $s$ to $t$ that containing no vertex from $S$. Then, according to the principle of \techie{}, the following relation holds.

$$
X=\sum_{S\subseteq \{1,...,n\}}{(-1)^{|S|}N(S)}
$$

For fixed $S$, $N(S)$ can be computed by adjacency matrix multiplication \cite{fomin2010exact}. Let $A$ be the adjacency matrix of the induced graph containing vertices in $V\smallsetminus S$, the entry at row $i$ and column $j$ of the matrix $A^{n+1}$ corresponds to the number of paths of length $(n+1)$ from $i$ to $j$. Note that a vertex can appear repetitively in the path. Since matrix multiplication can be performed in polynomial time, $N(S)$ can therefore be computed in polynomial time. There are $2^n$ different values of $S$, thus the time complexity of the \techie{} algorithm is $\ostar{2^n}$. Moreover, the space requirement is polynomial. Note that $N(S)$ can also be computed via \techdp{} \citep{karp1982dynamic}.

\subsubsection{Exponential Time Hypothesis (ETH)}

Unlike the above presented techniques used to design fast exact exponential algorithms, \textit{Exponential Time Hypothesis} (ETH) is a powerful tool attacking problems from another direction: it is used to prove complexity lower bounds. 
The conjecture $\mathcal{P} \neq \mathcal{NP}$ implies that \nph{} problems cannot be solved in polynomial time. However, no more precision is given on the complexity of \nph{} problems except that they are super-polynomial. People have realized that this simple binary classification of decision problems to \textit{easy} or \textit{hard} is perhaps too coarse-grained. This encourages to analyze the complexity of problems in a finer way. On this direction,  \textit{Exponential Time Hypothesis}, proposed by \cite{Impagliazzo2001}, states that 3SAT cannot be solved in sub-exponential time ($\bigo{2^{o(n)}}$). Based on this assumption, complexity lower bounds have been deduced for many problems such as $k$-Colorability, $k$-Set Cover, \textsc{Independent Set}, \textsc{Clique}, and \textsc{Vertex Cover}. The existence of sub-exponential algorithms for any of them implies the same for the others. 
Moreover, a \textit{Strong Exponential Time Hypothesis} (SETH) has also been proposed, which assumes that $k$-CNF-SAT requires $\bigo{2^n}$ time in the worst-case when $k$ grows to infinity. This allows to derive even tighter bounds such as shown by \cite{cygan2016problems}: for every $\epsilon<1$ an $\bigo{2^{\epsilon n}}$ algorithm for \textsc{Hitting Set}, \textsc{Set Splitting} or \textsc{NAE-SAT} would violate SETH. 

Therefore, when it seems difficult to derive EEAs with a better complexity, it may because that the complexity lower bound of the problem is reached. It is hence interesting to prove that by using ETH.

\subsubsection{Other techniques}
Besides \techie{}, other interesting algebraic techniques exist, among which there is \techffc{}. We kindly refer the reader to the work of \cite{fomin2010exact, bjorklund2007fourier} for further details. The use of algebraic methods to decrease space complexity is discussed by \cite{lokshtanov2010saving}.

Some other techniques that are discussed in the book of \cite{fomin2010exact} include algorithms based on \textit{Treewidth}, \techls{}, etc. We kindly refer the reader to this book for further details. 

\subsection{Notes on Parameterized Algorithms}
\textit{Parameterized Algorithms} are algorithms whose complexity depends not only on the input size of instances but also on some extra parameters. This is a fairly new field of complexity theory, which is strongly related to \textit{Exact Exponential Algorithms}. The latter can be seen as parameterized algorithms which uses the input size as parameter. 
Therefore, it is necessary to add some notes on Parameterized Algorithms, even though no specific research efforts are paid on this during the PhD study.

Parameterized complexity theory is created as a framework which allows to measure the complexity of algorithms more finely. Conventional complexity theory measures the complexity only in terms of the input size, hence ignores the structural information about the input instances. Consequently, the resulting complexity may make the problem appear harder than it really is. With Parameterized Algorithms, the complexity depends both on the input size and another well chosen parameter, which is expected to be ``small''. 

Analogously to the classic P-NP complexity framework, a complexity hierarchy has been established to classify problems according to their hardness. The ``easiest'' problems in this hierarchy are \textit{fixed-parameter tractable} (FPT) problems, which are problems that can be solved by algorithms having a complexity of $f(k)p(n)$, where $n$ is the input size, $k$ is the chosen parameter and $p(\cdot)$ is a polynomial function. 
Since the complexity depends only polynomially on the input size, the running time is hence expected to be small when $k$ is well chosen as a parameter with small value in most instances. 

In fact, algorithms have always been analyzed in terms of different parameters, so parameterized algorithms do exist before this well defined framework is established. Nevertheless, the framework is also very important in the other direction, which is to prove the \textit{intractability} of problems. In the above mentioned parameterized complexity  hierarchy, FPT problems are easiest ones, analogously to class {P} in the P-NP system, and class W[P] is an analogue of NP. But unlike the P-NP system, many natural problems are neither in FPT nor W[P]-complete. To classify these problems, a hierarchy of classes within W[P] is established, called W-hierarchy. It is usually presented as 
$$FPT \subseteq  W[1] \subseteq  W[2] ...\subseteq W[P].$$

Parameterized complexity theory was developed by Downey and Fellows in early 90's. Their monograph \citep{downey1999} provided a picture on the foundation of this theory and it has been updated in 2013 \citep{downey2013fundamentals}. Another referential book in the area is the one of \cite{flum2006parameterized}. More literature references can be found through these books.

\section{Existing EEA for Scheduling}
Concerning \nph{} scheduling problems, some of the techniques presented in section \ref{sec:ch1:algo} have been used to construct exact algorithms with worst-case guarantee.

The most widely applicable technique on scheduling problems is \techdp{}, which is applied very naturally on sequencing problems \citep{held1962adynamic} since its early occurrence. Its classical application is on TSP which can also be considered as a scheduling/sequencing problem. \cite{held1962adynamic} discussed the application of \techdp{} on general sequencing problems with arbitrary cost function. In the remarkable survey of  \cite{woeginger2003exact}, a single machine scheduling problem under precedence constraints minimizing total completion time is discussed. For all problems discussed in the above two references, the induced time complexity is $\ostar{2^n}$ which is a common result whenever the \techdp{} is applied in the conventional way, that is, across all subsets. For the latter problem, an algorithm running in $\bigo{(2-\epsilon)^n}$ is reported by \cite{cygan2014}, based on the fact that the \techdp{} does not need to traverse all subsets: an amount of subsets are infeasible due to the precedence constraints. 
When applied on more complex problems, the structure and the complexity of the resulting \techdp{} depends largely on the problem's properties. The job shop scheduling problem can be solved by \techdp{} in $\bigo{p_{max}^{2n}(m+1)^n}$ \citep{gromicho2012}, which is exponentially faster than the brute-force which runs in $\bigo{(n!)^m}$. \cite{jansen2013} investigates the lower bounds of complexities of some scheduling and packing problems, based on ETH (\textit{Exponential Time Hypothesis}, see \cite{Impagliazzo2001,fomin2010exact}). They showed that problems including \pbss, \pbsac, \pbbp, $P2||C_{max}$ and $P2||w_jC_j$  have a lower bound of  $2^{o(n)}\cdot \|I\|^{\bigo{1}}$, with $\|I\|$ the input length. They also developed a \techdp{} framework that is able to solve a number of scheduling and packing problems in $2^{O(n)}\cdot \|I\|$ time. 
Flowshop problems can also be treated by a generalized \techdp{}, as will be discussed in Chapter \ref{ch2}. The resulting time complexity is $\ostar{c^n}$ with $c>2$ a constant. Noteworthy, for all problems that can be treated by \techdp{} in the conventional way, \techdc{} can usually also be applied which in consequence increases the time complexity but requires less space.

\techdp{} being broadly applicable on scheduling problems, it is often considered as not practical due to its exponential space complexity. As reported in Section \ref{sec:his}, the most commonly considered algorithms for solving scheduling problems in practice are \textit{Branching} algorithms, especially \techbb{}. By well choosing the branching scheme and bounding procedures, \techbb{} can often be efficient on solving randomly generated instances. However, when considering the worst case complexity, it is extremely difficult to analyze a \techbb{} algorithm because the effectiveness of the lower bounding procedure depends largely on the actual instance. In other words, it seems very hard to prove the non-existence of a worst-case instance on which all lower bounding procedures are unuseful hence the \techbb{} reduces to a simple enumerative branching algorithm.

The application of \techss{} on scheduling problems is well explored by \cite{lente2013extension}. They propose an extension of this method to a general class of problems called \textit{Multiple Constraint Problems} and use the method to derive faster exponential-time algorithms for several machine scheduling problems including $P||Cmax$, $P|d_i|w_iU_i$ and $F2||C_{max}^k$ with a worst-case complexity $\ostar{m^{\frac{n}{2}}}$ where $m$ is the number of machines. It is important to notice that the scheduling problems that are tackled by \techss{} always imply solving an assignment problem. 
However, whether \techss{} can be applied on pure sequencing problems remains unclear.

\techie{} can also be applied to solve some scheduling problems, as initially reported by \cite{karp1982dynamic}. 

The best known worst-case complexity results are synthesized in Table \ref{tab:synthesis}. 

\begin{table}[!ht]
    \centering
    \begin{tabular}{|c|c|c|c|}\hline
    Problem & Enumeration & EEA & Reference \\\hline
$1|dec|f_{max}$ & $\ostar{n!}$ & $\ostar{2^n}$ & \cite{fomin2010exact}  \\\hline
$1|dec|\sum f_{i}$ & $\ostar{n!}$ & $\ostar{2^n}$ & \cite{fomin2010exact} \\\hline
$1|prec|\sum C_{i}$ & $\ostar{n!}$ & $\ostar{(2-10^{-10})^n}$ & \cite{cygan2014}  \\\hline
$1|prec|\sum w_iC_{i}$ & $\ostar{n!}$ & $\ostar{2^n}$ & \cite{woeginger2003exact}  \\\hline
$1|d_i|\sum w_iU_i$ & $\ostar{n!}$ & $\ostar{\sqrt{2}^n}$ & \cite{lente2013extension,lente2014exponential}  \\\hline
$1|d_i|\sum T_i$ & $\ostar{n!}$ & $\ostar{2^n}$ & \cite{woeginger2003exact}  \\\hline
$1|r_i,prec|\sum w_iC_i$ & $\ostar{n!}$ & $\ostar{3^n}$ & \cite{woeginger2003exact}  \\\hline
$IntSched$ & $\ostar{2^{n\log m}}$ & \multiline{$\ostar{1.2132^{nm}}$ \\ $\ostar{2^n}$ \\ $\ostar{2^{(m+1)\log_2n}}$}  & \cite{lente2014exponential}  \\\hline
$P|dec|f_{max}$ & $\ostar{m^nn!}$ & $\ostar{3^n}$ & \cite{lente2014exponential}  \\\hline
$P|dec|\sum f_i$ & $\ostar{m^nn!}$ & $\ostar{3^n}$ & \cite{lente2014exponential}  \\\hline
$P||C_{max}$ & $\ostar{m^n}$ & $\ostar{ {\sqrt{m}}^n(\frac{n}{2})^{m+1}}$ & \cite{lente2013extension}  \\\hline
$P2||C_{max}$ & $\ostar{2^n}$ & $\ostar{{\sqrt{2}}^n}$ & \cite{lente2013extension,lente2014exponential}  \\\hline
$P3||C_{max}$ & $\ostar{3^n}$ & $\sqrt{3}^n$  & \cite{lente2013extension}  \\\hline 
$P4||C_{max}$ & $\ostar{4^n}$ & $\ostar{(1+\sqrt{2})^n}$ & \cite{lente2014exponential}  \\\hline
$P|d_i|\sum w_iU_i$ & $\ostar{(m+1)^n}$ & $\ostar{{\sqrt{m+1}}^n(\frac{n}{2})^{m+2}}$ & \cite{lente2013extension}  \\\hline
$P2|d_i|\sum w_iU_i$ & $\ostar{3^n}$ & $\sqrt{3}^n$  & \cite{lente2013extension}  \\\hline 
$F2||C^k_{max}$ & $\ostar{2^n}$ & $\ostar{{\sqrt{2}}^n}$ & \cite{lente2013extension,lente2014exponential}  \\\hline
$J||C_{max}$ & $\ostar{n!}$ & $\ostar{P^{2n}_{max}(m+1)^n}$ & \cite{gromicho2012}  \\\hline
    \end{tabular}
    \caption{Synthesis of the best known worst-case complexities}
    \label{tab:synthesis}
\end{table}

Recent years, more and more  \textit{Parameterized Complexity} results on scheduling problems are appearing, even though among them the majority of results are negative. Without being exhaustive, \cite{bodlaender1995w} showed that $k$-processor scheduling with precedence constraints is W[2]-hard. \cite{fellows2003parametric} state that scheduling unit-length jobs with deadlines and precedence constraints is W[1]-hard when parameterized by the number of tardy jobs. The results of \cite{jansen2013bin} imply that makespan minimization on $k$ parallel machines with jobs of polynomially bounded processing times is W[1]-hard. On the positive side, \cite{mnich2015scheduling} propose FPT algorithms for some classic scheduling problems including makespan minimization,  scheduling with job-dependent cost functions and with job rejection, by identifying several crucial parameters such as the number of distinct processing times, the number of distinct weights and the maximum number of rejected jobs. They also remark that some scheduling problems can be addressed by choosing as parameter the ``number of numbers'', introduced by \cite{fellows2012parameterizing}. 
Very recently, \cite{van2017} reported new parameterized complexity results of scheduling problems with release time and deadlines \citep{van2017parameterized}, precedence constraints \citep{van2016precedence} and sequence-dependent batch setup times or routing \citep{van2016completing}.  
An up-to-date list of publications on the intersection of \textit{Parameterized Complexity} and \textit{Scheduling} can be found in the community wiki: \url{http://fpt.wikidot.com/operations-research}.

\chapter{Dynamic Programming for Flowshop Scheduling Problems}\label{ch2}

\section{Introduction}
In this chapter, we present our work on a \techdp{} algorithm that solves the \pbft{} problem. 
\techdp{} is a useful tool to tackle permutation problems and it can be applied pretty naturally to solve \textit{decomposable} permutation problems \citep{tkindt2004revisiting}, in which the optimal solution of the problem can be composed by optimal solutions of subproblems. The corresponding \techdp{} scheme is sometimes referred to as \textit{Dynamic Programming across the subsets} (see \cite{woeginger2003exact}). 
Unfortunately, the \pbft{} problem is not \textit{decomposable} since an optimal solution may need to be based on a non-optimal partial solution of a subproblem. This is why we are interested on the study of the application of \techdp{} to these problems. 
Our algorithm is easily generalized to some other flowshop problems, among which we provide dedicated complexity analysis on  \pbfmax{} and \pbsumfi{} problems. After that, we also formulate a generalized framework as \textit{Pareto Dynamic Programming}, which can be applied to other problems (e.g. $1|r_i|f$, $F2||f$, etc) verifying some specific dominance conditions over the \textit{Pareto Front}. This is notably the case of $1|r_i|f$ and $F2||f$ problems.

\section{The \pbft{} problem}
\label{sec:f3}
The problem tackled in this section and denoted by \pbft, is described as follows. Consider $n$ jobs to be scheduled, each of which must be processed first on machine one, then on machine two and finally on machine three in this order. Each machine can only have one job being processed at a time. 
We note $O_{ij}$, $i\in\{1,...,n\}$, $j\in\{1,2,3\}$, the operation of job $i$ on machine $j$, with $p_{ij}$ a non-negative integer representing the corresponding processing time.
  The objective is to find an optimal job sequence that minimizes the makespan, which is defined as the completion time of the last job on the last machine in a schedule. Even though there exists a polynomial time algorithm when the number of machines is two, the problem with $m\ge 3$  machine has been proved to be \nph{} \citep{garey1976}. When $m=2$ or $m=3$, there always exists a permutation schedule which is optimal \citep{brucker2007scheduling}. A solution is a  permutation schedule if jobs are executed in the same order on all machines.  
Extensive researches have been done on the design of heuristic and metaheuristic algorithms (see \cite{ruiz2005,framinan2004review}), 
approximation algorithms (see \cite{SMUTNICKI199866,Sviridenko2004}) and also exact methods (see \cite{ladhari2005computational} and earlier results by  \cite{ignall1965, lomnicki1965branch,brown1966some,mcmahon1967flow, ashour1970, lageweg1978general,potts1980adaptive,carlier1996,cheng1997branch}). For a global view on different approaches on this problem and its numerious variants, we refer the reader to the review of \cite{reza2005flowshop}. 

{We focus on the search for an optimal permutation schedule of the problem, with a provable worst-case running time. We provide a theoretical study which aims at establishing upper bounds on the worst-case complexity. 
Clearly, the problem can be solved trivially by enumerating all possible job permutations, which yields a $\ostar{n!}$ worst-case time algorithm. 
The current fastest exact algorithms in practice is the \textit{Branch \& Bound} algorithm provided by \cite{ladhari2005computational}, which solves randomly generated instances with up to 2000 jobs. 
However, when considering worst-case scenarios, this branching algorithm can hardly be proved to be faster than the brute-force. An algorithmic framework which solves a number of scheduling problems in time $2^{\bigo{n}}\times \|I\|^{\bigo{1}}$, with $\|I\|$ being the input length, is proposed by \cite{jansen2013}. 
A dynamic programming algorithm by \cite{gromicho2012} also exists, solving the job-shop scheduling problem in $\bigo{(p_{max})^{2n}(m+1)^n}$. 

The aim of this work is to design a moderately exponential algorithm, that is, an algorithm running in $\ostar{c^n}$ time with the constant $c$ as small as possible, improving the currently best complexity result of $2^{\bigo{n}}\times \|I\|^{\bigo{1}}$ \citep{jansen2013}.} In the next section we introduce a dynamic programming algorithm to solve the \pbft{} problem, together with its worst-case complexity analysis.

\section{A Dynamic Programming for the \pbft{} problem}
\label{sec:dp}

Given a job permutation $\pi$, let $C^M_j(\pi)$ be the completion time of $\pi$ on machine $j$, $j\in\{1,2,3\}$. We omit $\pi$ whenever it does not create ambiguity. We also define a binary operator ``$.$'' which concatenates two permutations of jobs.

The proposed \dpa{} algorithm is based on Theorem~\ref{th0}.

\begin{theorem}
\label{th0}
Let $S$ be a set of $n$ jobs to be scheduled, and $S'\subset S$ a subset of $S$. Given $\pi$ and $\pi'$ as two permutations of jobs from $S'$, $\sigma$ as a permutation of jobs from $S\smallsetminus S'$, then we have:
\[
If \ 
\begin{cases}
C^M_2(\pi)\leq C^M_2(\pi') \\
C^M_{3}(\pi)\leq C^M_{3}(\pi') 
\end{cases}
 then \ 
\begin{cases}
C^M_2(\pi . \sigma)\leq C^M_2(\pi' . \sigma) \\
C^M_{3}(\pi . \sigma)\leq C^M_{3}(\pi' . \sigma) 
\end{cases}
\]
that is, the partial permutation $\pi$ dominates $\pi'$. Every schedule starting by $\pi'$ is dominated by a schedule starting by $\pi$.
\end{theorem}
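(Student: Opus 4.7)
The plan is a straightforward induction on the length of the continuation $\sigma$, leveraging the standard forward recurrence for completion times in a flowshop together with one simple but crucial observation about machine one.

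The key observation I would state up-front is that on machine one the completion time depends only on the set of jobs already scheduled, not on their order: $C^M_1(\pi) = \sum_{i \in S'} p_{i1} = C^M_1(\pi')$. This is what lets the hypothesis, which mentions only machines 2 and 3, propagate when we extend the schedule. Note also the recurrences for any partial schedule $\rho$ extended by a single job $j$:
\begin{align*}
C^M_1(\rho . j) &= C^M_1(\rho) + p_{j1}, \\
C^M_2(\rho . j) &= \max\bigl(C^M_2(\rho),\, C^M_1(\rho.j)\bigr) + p_{j2}, \\
C^M_3(\rho . j) &= \max\bigl(C^M_3(\rho),\, C^M_2(\rho.j)\bigr) + p_{j3}.
\end{align*}

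Then I would perform the induction on $k = |\sigma|$. The base case $k=0$ is the hypothesis itself. For the inductive step, write $\sigma = \tau.j$ with $|\tau| = k$ and assume by induction that $C^M_2(\pi.\tau) \leq C^M_2(\pi'.\tau)$ and $C^M_3(\pi.\tau) \leq C^M_3(\pi'.\tau)$; since $\pi.\tau$ and $\pi'.\tau$ use the same set of jobs $S' \cup (\text{jobs of }\tau)$, we also get $C^M_1(\pi.\tau) = C^M_1(\pi'.\tau)$, hence $C^M_1(\pi.\tau.j) = C^M_1(\pi'.\tau.j)$. Plugging into the recurrence for machine 2 gives $C^M_2(\pi.\tau.j) \leq C^M_2(\pi'.\tau.j)$ by monotonicity of $\max$, and then the same argument on machine 3, now using both the machine-2 inequality just obtained and the inductive machine-3 inequality, yields $C^M_3(\pi.\tau.j) \leq C^M_3(\pi'.\tau.j)$.

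There is no real obstacle; the only subtlety worth flagging is that the theorem deliberately omits a hypothesis on $C^M_1$ because it is forced by the equal-set condition, and this is precisely what makes the induction close without needing $C^M_1(\pi) \leq C^M_1(\pi')$ as an extra premise. Once the per-job step is established, the conclusion for arbitrary $\sigma$ follows immediately, and the final sentence of the statement (that every schedule starting with $\pi'$ is dominated by one starting with $\pi$) is just the specialisation $|\sigma| = n - |S'|$ combined with the fact that the makespan equals $C^M_3$ of the complete permutation.
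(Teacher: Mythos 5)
Your proof is correct and follows essentially the same approach as the paper, which only sketches the argument by noting that $C^M_1$ is order-independent and that earlier completion on machines 2 and 3 propagates when $\sigma$ is appended. Your explicit induction on $|\sigma|$ via the forward recurrences is a rigorous formalization of exactly that sketch.
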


\begin{proof}
The proof of Theorem \ref{th0} is straightforward and is illustrated in Figure \ref{fig:condom}. First, note that $C^M_1(\pi)= C^M_1(\pi')$ always holds since $\pi$ and $\pi'$ contain the same jobs and there is no idle time between jobs on machine 1. On the second and third machines, $\pi$ completes earlier (or equal) than $\pi'$ so $\sigma$ is always preferred to be put after $\pi$ instead of $\pi'$ in order to minimize the makespan $C^M_{max}$. As shown in Figure \ref{fig:condom}, the schedule containing $\pi$ is at least as good as the schedule containing  $\pi'$.
\end{proof}

\begin{figure}[!ht]
  \centering
    \includegraphics[width=.6\columnwidth]{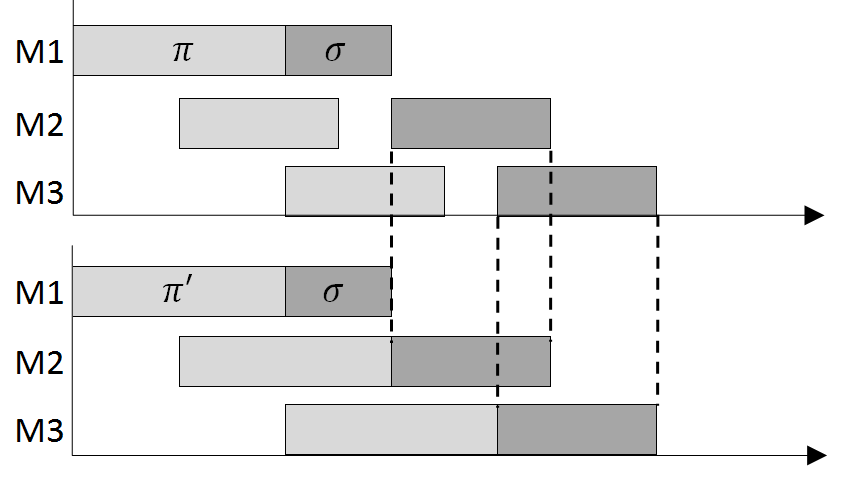}
    \caption{Illustration of Theorem \ref{th0}}
  \label{fig:condom}
\end{figure}

The idea of \dpa{} is the following. When trying to construct an optimal solution from partial solutions, only the partial solutions that are not dominated need to be considered. This can be seen by switching to the criteria space $\langle C^M_2, C^M_3\rangle $ for representation.
For a fixed jobset, we plot all its partial permutations as points in the criteria space $\langle C^M_2,C^M_3\rangle $. Then the non-dominated permutations form the \pf{} of points (Figure~\ref{fig:pf}).

\begin{figure}[!ht]
  \centering
    \includegraphics[scale=0.6]{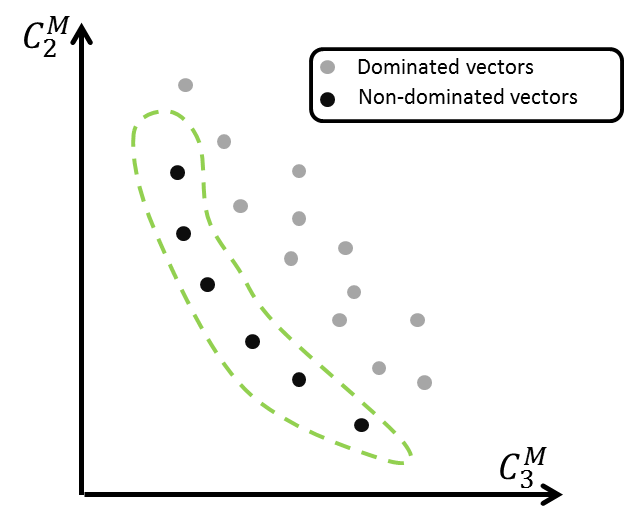}
  \caption{Partial solutions and \pf{} of a given jobset in criteria space}
  \begin{flushleft}
  \end{flushleft}
  \label{fig:pf}
\end{figure}

Now we formulate the algorithm, starting with some definitions.
\begin{definition}
Given a set of permutations of a jobset $S$, \pp{} define the subset of permutations whose associated criteria vector $\langle C^M_2,C^M_{3}\rangle $ is not dominated by the criteria vector of any other permutation from this set. 
In case that several dominating permutations have the same criteria vector, only one of them is retained, arbitrarily. 
Let $MinPerm$ be a function which takes a set of job permutations as input and returns its \pp{} set. $OptPerm(S)$ is the \pp{} of all jobs from the jobset $S$ and $OptPerm_l(S)$, $l\in \{1,...,|OptPerm(S)|\}$ is the $l$-th permutation (the numeration is arbitrary).
\end{definition}

\begin{definition}~\label{def:cp}
For a given job schedule in which all operations are executed as early as possible, its \textit{Critical Path} (see for instance \cite{brucker2007scheduling,kelley1959critical})\footnote{Instead of defining \textit{Critical Path} in the context of digraph, here we adapt the definition to make it more intuitive} $CP^j$, $j\in\{1,2,3\}$, is defined as the sequence of operations on the first $j$ machines, that are executed consecutively without any idle time and the total processing time of these operations determines $C^M_j$. More specifically, for any two consecutive operations in $CP^j$, denoted by $O_{i'j'}\to O_{i''j''}$, $i',i''\in\{1,...,n\}$, $j',j''\in\{1,...,j\}$, we must have $j'=j''$ or otherwise $i'=i''$ and $j''=j'+1$. In the latter case, job $i'$ is defined as a \textit{critical job}. 
For remaining jobs, let $CP^j_q$, $q\in\{1,...,j\}$, be the set of jobs whose operations on machine $q$ appear in $CP^j$, excluding \textit{critical jobs}. 
\end{definition}

As an example, in Figure~\ref{fig:cp_ex}, $CP^2$ consists of the sequence $(O_{11},O_{21},O_{31},O_{32},O_{42})$, with job $3$ as the critical job. 
$CP^2_1=\{1,2\}$ and $CP^2_2=\{4\}$.
It should be noticed that for a given $j$, $CP^j$ may not be unique, in that case we choose it arbitrarily.

\begin{figure}[!ht]
  \centering
    \includegraphics[width=.6\columnwidth]{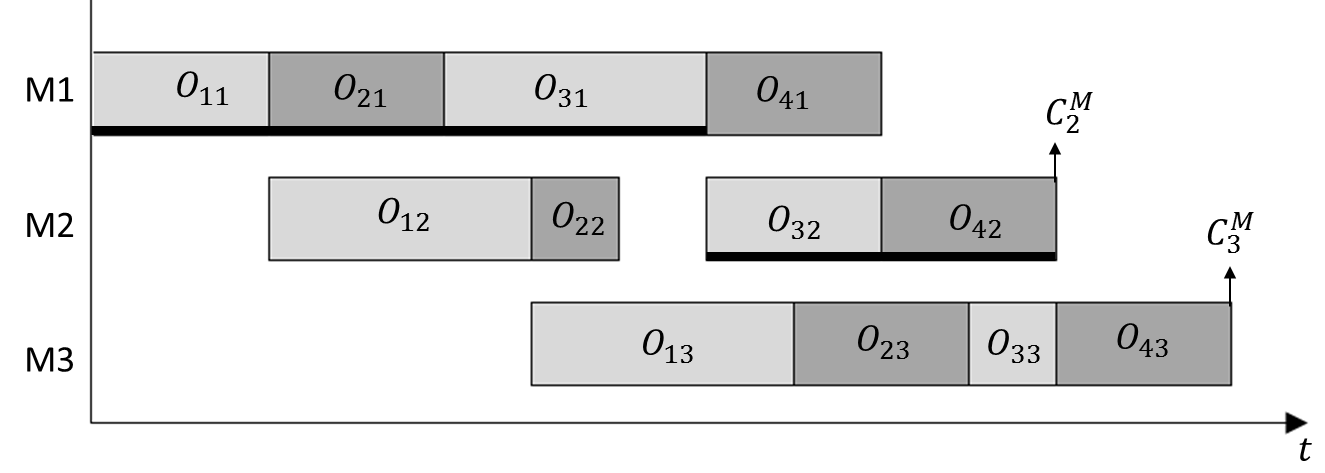}
    \caption{Example of \textit{critical paths}}
  \label{fig:cp_ex}
\end{figure}

\begin{lemma}
\label{le1}
For a jobset $S'$ with $|S'|=t$, the number of its \pp{} $|OptPerm(S')|$ is at most $\ostar{2^t}$.
\end{lemma}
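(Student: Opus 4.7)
The plan is to bound $|OptPerm(S')|$ by the number of distinct values of $C^M_2$ that permutations of $S'$ can attain, and to estimate that number by a combinatorial count driven by the critical-path structure of Definition~\ref{def:cp}.

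First, I would observe that the map $\pi \mapsto C^M_2(\pi)$ is injective on $OptPerm(S')$. By the convention in the definition of \pp{}, only one representative per criteria vector is retained, so two distinct Pareto permutations have distinct vectors $\langle C^M_2, C^M_3\rangle$; and if they shared the same $C^M_2$ value, the one with the smaller $C^M_3$ would dominate the other, contradicting the fact that both lie on the Pareto front. Hence $|OptPerm(S')|$ is at most the number of distinct values $C^M_2(\pi)$ can take when $\pi$ ranges over permutations of $S'$.

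Next, I would characterize $C^M_2(\pi)$ using the critical path on the first two machines. Applying Definition~\ref{def:cp} with $j=2$ gives some critical job $j^\star \in S'$ at the M1-to-M2 transition. Letting $B \subseteq S'$ denote the set of jobs appearing up to and including $j^\star$ in $\pi$, a direct computation yields the identity
\[
C^M_2(\pi) \;=\; \sum_{i \in B} p_{i,1} \;+\; \sum_{i \in (S'\setminus B)\cup\{j^\star\}} p_{i,2},
\]
so $C^M_2(\pi)$ is entirely determined by the pair $(B, j^\star)$ with $\emptyset \neq B \subseteq S'$ and $j^\star \in B$.

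The number of such pairs equals $\sum_{k=1}^{t} k\binom{t}{k} = t\cdot 2^{t-1}$, which is $\ostar{2^t}$. Combined with the first step this yields $|OptPerm(S')| \leq t\cdot 2^{t-1} = \ostar{2^t}$, as desired. The only obstacle I foresee is technical: a permutation may admit several critical paths and hence several valid choices of $j^\star$, but any one of them yields the same value of $C^M_2$, so the non-uniqueness is harmless. Everything else reduces to a direct combinatorial count, so I do not expect any difficulty beyond setting up the critical-path formula cleanly.
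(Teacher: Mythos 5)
Your proof is correct and follows essentially the same route as the paper's: bound $|OptPerm(S')|$ by the number of distinct $C^M_2$ values, then count these via the critical path on the first two machines by choosing the critical job and the partition of the remaining jobs, yielding $t\cdot 2^{t-1}=\ostar{2^t}$. Your explicit justification of why distinct Pareto permutations must have distinct $C^M_2$ values and your closed-form expression for $C^M_2(\pi)$ in terms of the pair $(B,j^\star)$ merely spell out details the paper leaves implicit.
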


\begin{proof} 
Each permutation in $OptPerm(S')$ has a corresponding criteria vector $\langle C^M_2,C^M_3\rangle $. The number of \pp{} corresponds to the number of non-dominated vectors, which is bounded by the number of possible values of $C^M_2$. 
The next observation is that the $C^M_2$ value of a given permutation is related to the critical path of job operations on the first two machines (Definition~\ref{def:cp}). Hence the problem reduces to count the number of possible critical paths. 
To do this, we first decide the critical job, for which we have $t$ choices. Each of the remaining jobs has to belong to either $CP^2_1$ or $CP^2_2$. This yields $2^{t-1}$ choices. Hence $|OptPerm(S')| \leq {t2^{t-1}} = \ostar{2^t}$.
\end{proof}

We now state in Lemma~\ref{le2} another upper bound on $|OptPerm(S')|$.
\begin{lemma}
\label{le2}
If there exists a constant $M$ such that $\forall i,j$, $p_{ij}\leq M$, then for a given jobset $S'$ of $t$ jobs, the number of non-dominated criteria vectors $\langle C^M_2,C^M_{3}\rangle $, in \pp{}, is at most ${(t+1)M}$ = $\ostar{M}$.
\end{lemma}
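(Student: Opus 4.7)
The plan is to reduce the problem to counting the number of distinct attainable values of $C^M_2$, and then to bound this number via the critical-path structure introduced in the proof of Lemma~\ref{le1}.

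First, I would observe that two distinct non-dominated criteria vectors on the Pareto front in the 2D space $\langle C^M_2, C^M_3 \rangle$ must have distinct $C^M_2$ coordinates. Indeed, if two non-dominated vectors shared the same $C^M_2$, then the one with the smaller $C^M_3$ would dominate the other (or they would be equal, but Definition~1 of $MinPerm$ retains only one representative per criteria vector). Hence $|OptPerm(S')|$ is at most the number of distinct integer values that $C^M_2$ can take over permutations of $S'$.

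Second, I would invoke the critical-path decomposition already used in the proof of Lemma~\ref{le1}. For any permutation $\pi$ of $S'$, the value $C^M_2(\pi)$ equals the sum of processing times along $CP^2$:
$$C^M_2(\pi) \;=\; \sum_{i \in CP^2_1 \cup \{i^*\}} p_{i1} \;+\; \sum_{i \in CP^2_2 \cup \{i^*\}} p_{i2},$$
where $i^*$ is the critical job and $CP^2_1, CP^2_2$ partition $S' \setminus \{i^*\}$. This expression is a sum of exactly $(|CP^2_1|+1) + (|CP^2_2|+1) = (t-1)+2 = t+1$ processing-time terms.

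Third, since each $p_{ij} \leq M$ and $p_{ij}$ is a non-negative integer, we immediately get $0 \leq C^M_2(\pi) \leq (t+1)M$. Therefore $C^M_2$ takes at most $(t+1)M + 1$ distinct integer values, giving $|OptPerm(S')| \leq (t+1)M = \ostar{M}$.

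There is essentially no technical obstacle here: once the critical-path representation from Lemma~\ref{le1} is in hand, the argument is a one-line counting step. The only subtlety worth stating explicitly is the first step, namely that points on a 2D Pareto front have pairwise distinct coordinates in either axis, which together with the uniqueness clause of $MinPerm$ justifies counting only the possible $C^M_2$ values rather than the number of permutations.
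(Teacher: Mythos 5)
Your proof is correct and follows essentially the same route as the paper: reduce to counting the distinct values of $C^M_2$ (a step the paper already establishes in the proof of Lemma~\ref{le1}) and then bound $C^M_2$ by $(t+1)M$ using integrality of the processing times; the paper simply asserts the $(t+1)M$ bound where you justify it via the $(t+1)$-operation critical path. The only difference is that you spell out the details the paper leaves implicit, which is fine.
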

\begin{proof}
Lemma \ref{le2} is based on the consideration that the maximum value of $C^M_2$ will not exceed $(t+1)M$. Since $C^M_2$ is an integer, the number of possible values of $C^M_2$ is also bounded by $(t+1)M$.
\end{proof}

The recurrence function of our dynamic programming algorithm is stated in Theorem~\ref{th1}.
\begin{theorem}
\label{th1}
$OptPerm(S)$ can be computed by \dpa{} as follows:

\begin{align*}
OptPerm(S)&= MinPerm\big\lbrace OptPerm_l(S\smallsetminus \{k\}).\{k\} \\
&:k\in S, l\in \{1,...,|OptPerm(S\smallsetminus \{k\})|\} \big\rbrace
\end{align*}
\end{theorem}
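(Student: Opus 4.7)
The plan is to establish the recurrence by a standard ``exchange'' argument based on Theorem~\ref{th0}. I would prove two inclusions: every permutation produced by the right-hand side is a valid candidate (soundness), and every Pareto-optimal permutation of $S$ arises this way (completeness). Since $MinPerm$ then extracts the non-dominated candidates, the two sets coincide.

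For soundness, fix any $k \in S$ and any $\pi' = OptPerm_l(S \smallsetminus \{k\})$. The concatenation $\pi'.\{k\}$ is a valid permutation of $S$, hence a candidate with a well-defined criteria vector $\langle C^M_2, C^M_3\rangle$. Applying $MinPerm$ to the collection of all such candidates can only discard dominated vectors, so whatever is kept is non-dominated within the candidate set.

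For completeness, let $\pi^\star$ be any permutation belonging to $OptPerm(S)$. Write $\pi^\star = \pi'.\{k\}$ where $k$ is the last job of $\pi^\star$ and $\pi'$ is a permutation of $S \smallsetminus \{k\}$. I claim $\pi'$ (or a permutation of $S \smallsetminus \{k\}$ sharing its criteria vector) lies in $OptPerm(S \smallsetminus \{k\})$. Indeed, suppose instead that $\pi'$ is strictly dominated on $\langle C^M_2, C^M_3\rangle$ by some $\pi'' \in OptPerm(S \smallsetminus \{k\})$. Apply Theorem~\ref{th0} with $S' = S \smallsetminus \{k\}$ and $\sigma = \{k\}$: we obtain $C^M_2(\pi''.\{k\}) \le C^M_2(\pi'.\{k\})$ and $C^M_3(\pi''.\{k\}) \le C^M_3(\pi'.\{k\})$, with at least one strict inequality, contradicting $\pi^\star \in OptPerm(S)$. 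If $\pi'$ is only weakly dominated (same criteria vector as some $\pi''$), then by the tie-breaking convention in the definition of $MinPerm$, the representative $\pi''$ retained in $OptPerm(S \smallsetminus \{k\})$ yields $\pi''.\{k\}$ with the same criteria vector as $\pi^\star$, so the recurrence still produces a permutation equivalent to $\pi^\star$ for the purpose of $MinPerm$.

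Combining both directions, the candidate set generated on the right-hand side contains every criteria vector appearing in $OptPerm(S)$ and contains no vector that would be strictly better; applying $MinPerm$ therefore returns exactly $OptPerm(S)$. The main obstacle is the bookkeeping around permutations with identical criteria vectors: Theorem~\ref{th0} only ensures weak propagation of dominance, so I need to carefully invoke the ``arbitrary representative'' clause from the definition of $OptPerm$ to conclude that $OptPerm(S \smallsetminus \{k\})$ always contains a suitable ancestor of $\pi^\star$, even when ties arise. Apart from that subtlety, the argument reduces to a direct application of the already-proven dominance theorem.
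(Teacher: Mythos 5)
Your proposal takes the same route as the paper: the paper's own justification of Theorem~\ref{th1} is a one-paragraph informal appeal to Theorem~\ref{th0} (append each $k$ to each member of $OptPerm(S\smallsetminus\{k\})$, then filter with $MinPerm$), and your soundness/completeness decomposition is the natural rigorous version of exactly that argument, including the attention to the arbitrary-representative clause that the paper does not discuss at all.

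One step as written would fail, though you partly anticipate it in your closing paragraph. In the completeness direction you argue that if $\pi''$ strictly dominates $\pi'$ on $S\smallsetminus\{k\}$, then $\pi''.\{k\}$ dominates $\pi'.\{k\}$ ``with at least one strict inequality''. Theorem~\ref{th0} only yields the two weak inequalities; a strict advantage of $\pi''$ on $C^M_2$ or $C^M_3$ can be absorbed once job $k$ is appended (for instance when the operations of job $k$ on machines $2$ and $3$ start only after both prefixes have completed there), so $\pi''.\{k\}$ and $\pi'.\{k\}$ may end up with identical criteria vectors and no contradiction with $\pi^\star\in OptPerm(S)$ arises. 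The conclusion you actually need --- and which your tie-handling already supplies --- is weaker and suffices: for every non-dominated vector $v$ of $S$ realized by some $\pi^\star=\pi'.\{k\}$, the representative $\pi''$ retained in $OptPerm(S\smallsetminus\{k\})$ for the class dominating $\pi'$ produces a candidate $\pi''.\{k\}$ whose vector is componentwise at most $v$, hence equal to $v$ by non-domination of $v$. Restated this way, every non-dominated vector of $S$ is realized by the right-hand side, no strictly better vector can appear there, and $MinPerm$ returns a valid $OptPerm(S)$. With that single repair your proof is correct and coincides with the paper's intended argument.
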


Theorem \ref{th1} is the recursive formula used by the \dpa{} algorithm. It is directly based on Theorem \ref{th0}. When constructing an optimal solution starting with partial permutations, only \pp{} are kept. Consider that we already have $OptPerm(S\smallsetminus \{k\})$, $\forall k\in S$. In order to construct $OptPerm(S)$, we append job $k$ after each permutation from $OptPerm(S\smallsetminus \{k\})$ to get a new set of permutations for jobset $S$. Over all of these newly constructed permutations, we apply the operation $MinPerm$ to remove dominated ones and return $OptPerm(S)$. A more intuitive representation is given in Figure~\ref{fig:recur} where $|S|=t$.

\begin{figure}[!ht]
  \centering
    \includegraphics[width=.7\columnwidth]{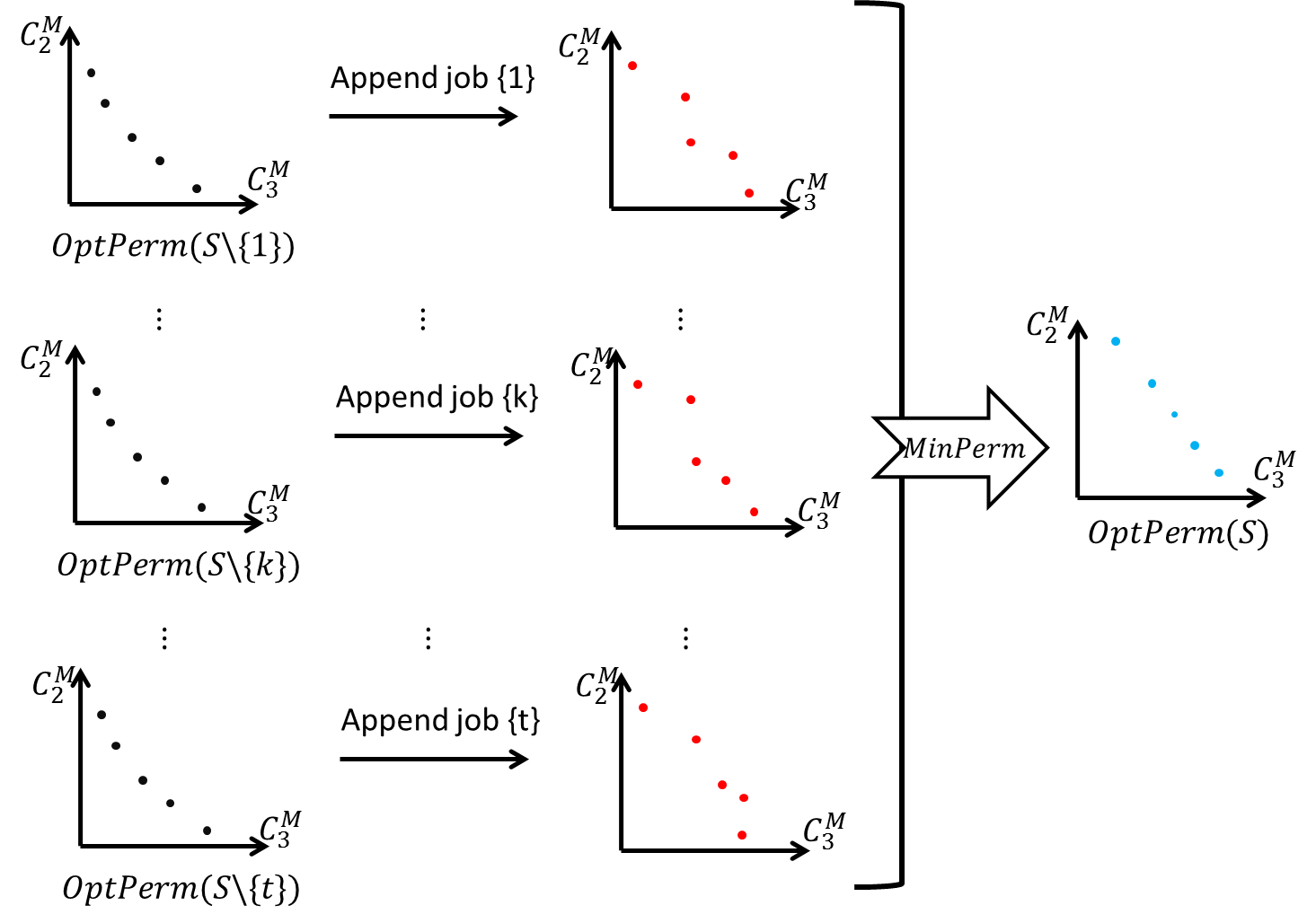}
    \caption{Recursive procedure of the \dpa{} algorithm.}
  \label{fig:recur}
\end{figure}

\subsection{Complexity Analysis}
{Consider implementing the algorithm in a recursive way according to Theorem~\ref{th1}. We keep in memory already computed $OptPerm(S)$, for all $S$, for fast access later. As initialization, when $S=\{j\}$, we set $OptPerm(S)=\{j\}$. From that point, every $OptPerm(S)$ for $|S|>1$ can be computed by $MinPerm$ from $OptPerm(S\smallsetminus\{k\})$, $\forall k\in S$.} 
The function $MinPerm$ uses an existing algorithm \citep{kung1975finding} for finding, {given a set of $N$ vectors $\langle C^M_2,C^M_{3}\rangle $}, non-dominated vectors, with a complexity of $\bigo{N\log N}$. 
{According to the proof of Lemma \ref{le1}, $N$ is bounded by ${{t(t-1)2^{t-2}}=\bigo{t^22^t}}$ where the extra factor $t$ is the number of possible values of $k$. Therefore, computing $OptPerm(S)$ from solved subproblems yields a complexity of $${\bigo{t^22^t\log(t^22^t)}=\bigo{t^32^t}=\ostar{2^t}}$$} 

The algorithm traverses across all problems defined by all subsets of jobs. Thus, the overall time complexity for calculating $OptPerm(S)$ is 
\begin{align*}
{\sum_{t=1}^n {n \choose t}\bigo{t^32^t}}& = n^3\sum_{t=1}^n {n \choose t}\bigo{2^t}1^{n-t} \\
&=\bigo{n^3(2+1)^n} \tag*{(Binomial theorem)}\\
&=  \ostar{3^n} 
\end{align*}

Alternatively, based on Lemma \ref{le2}, the time complexity can also be expressed as

$$ \sum_{t=1}^n {n \choose t}\ostar{M}=\ostar{M2^n} $$

The minimum value of $C^M_{max}$ can be retrieved from $OptPerm(S)$ with an additive cost of $\ostar{2^n}$ (or $\ostar{M}$) time, which does not change the established complexity. The space complexity is also $\ostar{3^n}$ (or $\ostar{M2^n}$) considering the storage of all necessary \pp{}. Therefore, we proved that at the cost of exponential memory usage the \pbft{} problem can be solved in $\ostar{3^n}$ time. This improves the complexity result  $2^{\bigo{n}}\times \|I\|^{\bigo{1}}$ of \cite{jansen2013}.

\subsection{Computational results}
The practical efficiency of exact exponential algorithms is also an interesting subject that we want to explore in this thesis. These algorithms may not seem efficient in practice in general, because specific treatment is performed dedicated to obtain a theoretic guarantee on worst-case instances. However, despite this possible behavior of the proposed DP algorithm, we wanted to evaluate it in practice on randomly generated instances.
Unfortunately, 
{despite the superiority of our algorithm in worst-case scenarios with respect to existing exact algorithms, preliminary experimentation \citep{morin2014} shows that the algorithm only managed to solve random instances with up to 20 jobs due to its space complexity. The results are reported in Table \ref{tab:dp}. The processing time of operations are generated uniformly from the range $[1,100]$. For each problem size, 20 instances are generated. Small instances with 5 and 10 jobs are solved instantly. Instances with 15 and 20 jobs are also solved  within an acceptable time, but the exponential growth of solution time over the size of instances can already be observed. On instances with 25 jobs, the program encountered ``out of memory'' problems on certain instances, knowing that the tests have been performed on a computer having 4G RAM and a 1.70GHz CPU. 

\begin{table}
\centering
    \begin{tabular}{|c|c|c|c|c|}\hline
N				&5 		&10 	&	15 &	20   \\ \hline
Min time (s)	&0		&0		&0.8		&42.4      \\ \hline
Max time (s)	&0		&0		&2.8		&82.5      \\ \hline
Mean time (s)	&0		&0		&1.2		&54.1      \\ \hline
    \end{tabular}
    \caption{Preliminary results of dynamic programming}\label{tab:dp}
\end{table}

This is far less than that can be solved by the \textit{Branch \& Bound} algorithm of \cite{ladhari2005computational}. However, it is of theoretical interest to have an algorithm with a proved running time upper bound faster than the brute force approach.}

\section{Generalizations to other flowshop problems}
\label{sec:3_new}
The \techdp{} algorithm can be generalized to other three-machine flowshop problems in a similar manner. In this section we present such generalizations on the \pbfmax{} and the \pbsumfi{} problems, with a dedicated complexity analysis provided.

\subsection{The \pbfmax{} problem}
The \dpa{} algorithm can be applied to a more general problem referred to as \pbfmax. This problem is similar to the \pbft{} problem but with a more general objective function $f_{max}$ to be minimized. For job $i\in\{1,\ldots,n\}$, let $C_i$ be the completion time of job $i$ on machine $3$ and let $f_i(C_i)$ be the cost associated to job $i$. We consider the $f_i$'s as non-decreasing functions. The objective is defined as 

$$ f_{max} = max\{f_i(C_i)|i=1,\ldots, n\} $$

If $f_i = f_j~\forall i\neq j$, then the \dpa{} algorithm can be applied without adaptation, and therefore with the same $\ostar{3^n}$ worst-case complexity. The \pbft{} problem can be considered as a particular case of this setting where $f(C_i)=C_i$.

If the definition of the function $f_i$ depends on $i$, for instance $f_i=w_iC_i$ with $w_i$ the weight of job $i$, then the dominance status of a given permutation is not only dependent on $\langle C^M_2, C^M_3\rangle $, but also on the corresponding $f_{max}$ value of the permutation. In other words, given two different permutations $\pi$ and $\pi'$ for a same subset of jobs, the condition $C^M_2(\pi)<C^M_2(\pi')$ and $C^M_3(\pi)<C^M_3(\pi')$, is not sufficient to discard $\pi'$ since it is possible that $f_{max}(\pi') < f_{max}(\pi)$ which means that $\pi'$ may lead to an optimal solution. Therefore, we add $f_{max}$ as a third criterion into the criteria vector. The \dpa{} acts in a similar way as the one for \pbft{} but on a three-dimension criteria space $\langle C^M_2, C^M_3, f_{max}\rangle $. 

\begin{theorem}
\label{th2}
Let $S$ be a set of $n$ jobs to be scheduled, and $S'\subset S$ a subset of $S$. Given $\pi$ and $\pi'$ as two permutations of jobs from $S'$, $\sigma$ as a permutation of jobs from $S\smallsetminus S'$, then we have:
\[
If
\begin{cases}
C^M_2(\pi)\leq C^M_2(\pi') \\
C^M_{3}(\pi)\leq C^M_{3}(\pi') \\
f_{max}(\pi)\leq f_{max}(\pi')
\end{cases}
 then
\begin{cases}
C^M_2(\pi . \sigma)\leq C^M_2(\pi' . \sigma) \\
C^M_{3}(\pi . \sigma)\leq C^M_{3}(\pi' . \sigma) \\
f_{max}(\pi. \sigma)\leq f_{max}(\pi'. \sigma)
\end{cases}
\]
that is, the partial permutation $\pi$ dominates $\pi'$. Every schedule starting by $\pi'$ is dominated by a schedule starting by $\pi$.
\end{theorem}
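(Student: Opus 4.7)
The plan is to reduce this to Theorem~\ref{th0} together with a monotonicity argument for the cost functions. The first two conclusions, on $C^M_2(\pi.\sigma)$ and $C^M_3(\pi.\sigma)$, follow \emph{verbatim} from Theorem~\ref{th0}, since the hypotheses of that theorem are contained in the hypotheses here. So the real work is the third inequality, which concerns $f_{max}$.

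To handle $f_{max}(\pi.\sigma)$ versus $f_{max}(\pi'.\sigma)$, I would split the max over jobs of $S$ into two parts: the jobs belonging to $S'$ (scheduled inside the prefix $\pi$ or $\pi'$) and the jobs belonging to $S\smallsetminus S'$ (scheduled inside the suffix $\sigma$). For jobs $i\in S'$, the machine-$3$ completion time $C_i$ in $\pi.\sigma$ is identical to the one in $\pi$, because appending a suffix cannot alter the internal completion times of a flowshop prefix when operations are executed as early as possible. Hence $\max_{i\in S'} f_i(C_i(\pi.\sigma)) = f_{max}(\pi)$, and similarly for $\pi'$, so this part is controlled directly by the hypothesis $f_{max}(\pi)\leq f_{max}(\pi')$.

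For jobs $i\in S\smallsetminus S'$, I would establish by a short induction on the position of $i$ inside $\sigma$ that the machine-$3$ completion time of $i$ in $\pi.\sigma$ is no greater than in $\pi'.\sigma$. The induction uses the standard flowshop recurrence $C_j(\text{op}) = \max\{C_{j-1}(\text{op}),\,C_j(\text{prev op})\} + p$, which is monotone in its two arguments; combined with $C^M_1(\pi)=C^M_1(\pi')$ (no idle time on machine~1) and the hypotheses $C^M_2(\pi)\leq C^M_2(\pi')$, $C^M_3(\pi)\leq C^M_3(\pi')$, the inequality on completion times propagates along $\sigma$. Then the assumption that each $f_i$ is non-decreasing yields $f_i(C_i(\pi.\sigma))\leq f_i(C_i(\pi'.\sigma))$ for each such $i$, and taking the maximum preserves the inequality.

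Combining both parts, $f_{max}(\pi.\sigma)$ is the max of two quantities, each of which is bounded by the corresponding quantity for $\pi'.\sigma$; the conclusion follows. The main obstacle, though minor, is making the propagation argument along $\sigma$ precise — one needs to state the monotonicity of the flowshop recurrence cleanly and apply it machine-by-machine so that no indirect dependency (e.g.\ via machine~$1$ or~$2$ completion times of jobs in $\sigma$) is overlooked. Once this monotonicity lemma is in place, the rest is a direct reuse of Theorem~\ref{th0} plus the monotonicity of the $f_i$'s.
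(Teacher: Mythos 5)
Your argument is correct and is exactly the ``straightforward'' reasoning the paper invokes without writing out: the paper's proof of Theorem~\ref{th2} consists of a single sentence declaring it straightforward, relying implicitly on the same reduction to Theorem~\ref{th0} plus monotonicity of the non-decreasing $f_i$'s. Your decomposition of the max over $S'$ versus $S\smallsetminus S'$, the observation that prefix completion times are unchanged by appending $\sigma$, and the machine-by-machine propagation of the completion-time inequalities along $\sigma$ supply precisely the details the paper omits, so nothing further is needed.
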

\begin{proof}
The proof of Theorem \ref{th2} is straightforward.
\end{proof}

The same running time analysis can be performed as before, where the critical part is to count the number of possible non-dominated vectors. This number is bounded by the number of different $\langle C^M_2,C^M_3\rangle $ values.
 Similarly to what has been done in Lemma~\ref{le1} to establish an upper bound on the number of values of $C^M_2$, we can notice that there also exists a critical path (but through three machines, see Figure \ref{fig:cp_ex_m3}) which leads to each value of $C^M_3$ and therefore the number of possible values of $C^M_3$ is upper bounded by $\ostar{3^n}$.
If we consider the number of possible $\langle C^M_2,C^M_3\rangle $ vectors as the number of combinations of $C^M_2$ values and $C^M_3$ values, then we have $2^n\times 3^n=6^n$ non-dominated criteria vectors $\langle C^M_2,C^M_3\rangle $. However it can be proved that this upper bound is far from being tight, by exploiting links between $CP^2$ and $CP^3$.

\begin{figure}[!ht]
  \centering
    \includegraphics[width=.6\columnwidth]{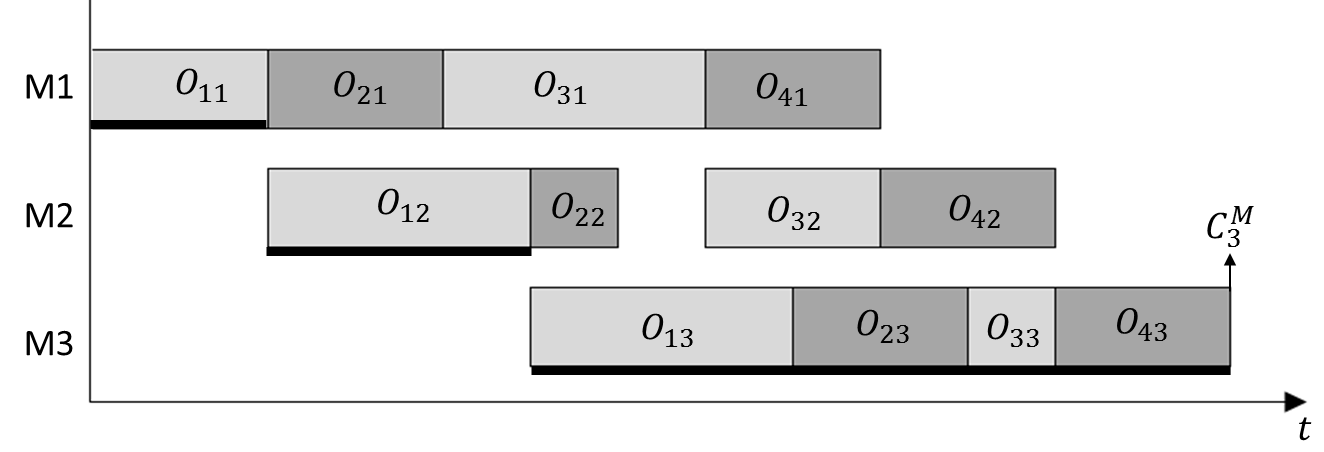}
    \caption{Example of a critical path through 3 machines}
  \label{fig:cp_ex_m3}
\end{figure}

\begin{lemma}~\label{le_cp1}
On machine 1, it is sufficient to consider critical paths such that $CP^3_1 \subseteq CP^2_1$.
\end{lemma}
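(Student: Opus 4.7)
Fix a permutation schedule $\sigma$. In the flowshop's disjunctive graph any critical path from the source to the sink (of weight $C^M_3$) is composed of three horizontal segments: a prefix on $M_1$ from position $1$ up to some position $p_a$, a segment on $M_2$ from position $p_a$ to some position $p_c$, and finally a segment on $M_3$ from position $p_c$ to $n$. The ``transition'' jobs $a=\sigma(p_a)$ (from $M_1$ to $M_2$) and $c=\sigma(p_c)$ (from $M_2$ to $M_3$) are exactly the critical jobs of $CP^3$. Similarly $CP^2$ is determined by a single transition job $b=\sigma(p_b)$ from $M_1$ to $M_2$. Because on $M_1$ there is no idle time, $CP^3_1=\{\sigma(1),\dots,\sigma(p_a-1)\}$ and $CP^2_1=\{\sigma(1),\dots,\sigma(p_b-1)\}$, so the desired inclusion $CP^3_1\subseteq CP^2_1$ is equivalent to $p_a\leq p_b$.

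If $p_a\leq p_b$ the inclusion already holds, so the interesting case is $p_a>p_b$. The plan is then to show that the ``alternative'' path that transitions at $p_b$ instead of $p_a$, namely
\[
P\;=\;\big(\text{$M_1$ up to }p_b\big)\ \to\ \big(\text{$M_2$ from }p_b\text{ to }p_c\big)\ \to\ \big(\text{$M_3$ from }p_c\text{ to }n\big),
\]
is itself a critical path for $C^M_3$. If so, $P$ is a legal choice for $CP^3$ and its $M_1$-prefix is exactly $CP^2_1\cup\{b\}$, giving $CP^3_1=CP^2_1$.

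The upper bound $\text{weight}(P)\leq C^M_3$ is automatic since any source-to-sink path is dominated by the longest one. The key step is the matching lower bound $\text{weight}(P)\geq C^M_3$. After cancelling common terms this reduces to
\[
\sum_{k=p_b}^{p_a-1} p_{\sigma(k),2}\ \geq\ \sum_{k=p_b+1}^{p_a} p_{\sigma(k),1},
\]
i.e.\ that the $M_2$ work on positions $p_b,\dots,p_a-1$ dominates the $M_1$ work on positions $p_b+1,\dots,p_a$. This is precisely the content of the fact that $CP^2$ remains critical on $M_2$ all the way from position $p_b$ to the end: for every $k>p_b$ the critical-path property $T_2(k-1)\geq T_1(k)$ must hold (otherwise the longest path to $T_2(n)$ would not go through the transition at $p_b$), and expanding along $CP^2$ yields exactly the required inequality at $k=p_a$.

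The main obstacle, and the delicate point to write carefully, is this last step: establishing rigorously that the criticality of $CP^2$ from position $p_b$ onwards on $M_2$ forces $T_2(k-1)\geq T_1(k)$ for all $k\in\{p_b+1,\dots,n\}$, and then applying it at $k=p_a$ (which is legitimate since $p_a\leq p_c\leq n$). Once this is done, one also needs to observe that $p_b<p_a\leq p_c$ guarantees $p_b\leq p_c$, so $P$ is a well-formed path (the $M_2$ segment is non-empty and the $M_2\to M_3$ transition at $p_c$ is still consistent). Combining the two inequalities yields $\text{weight}(P)=C^M_3$, $P$ is critical, and $CP^3_1\subseteq CP^2_1$ can be enforced.
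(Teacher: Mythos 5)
Your proof is correct and follows essentially the same route as the paper: observe that $CP^2_1$ and $CP^3_1$ are nested prefixes of the $M_1$ sequence, and in the case $CP^2_1\subsetneq CP^3_1$ reroute $CP^3$ so that its $M_1\to M_2$ transition coincides with that of $CP^2$, arguing the rerouted path is still critical. The only difference is that you make explicit the quantitative step $\sum_{k=p_b}^{p_a-1}p_{\sigma(k),2}\geq\sum_{k=p_b+1}^{p_a}p_{\sigma(k),1}$ (derived from the criticality of $CP^2$ on $M_2$ beyond position $p_b$), which the paper asserts only informally via a figure.
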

\begin{proof}
Firstly it is obvious that on machine 1, either $CP^3_1 \subseteq CP^2_1$ or $CP^2_1 \subseteq CP^3_1$, since either they are empty sets or they must contain the first several operations on machine 1. Consider the situation where we have $CP^2_1 \subsetneq CP^3_1$, {$CP^3_2\subsetneq CP^2_2$ holds since $CP^2_2$ contains all jobs of $N\smallsetminus CP^2_1$ except the critical job. Now by setting $CP^3_1=CP^2_1$ and adjust $CP^3_2$ accordingly, $CP^3$ remains a critical path and $CP^3_1\subseteq CP^2_1$. 
An example is illustrated in Figure~\ref{fig:excp1}}.
\end{proof}

\begin{figure}[!ht]
  \centering
  \begin{tikzpicture}
        \node[anchor=north east] at (0,0) {\includegraphics[width=.6\columnwidth]{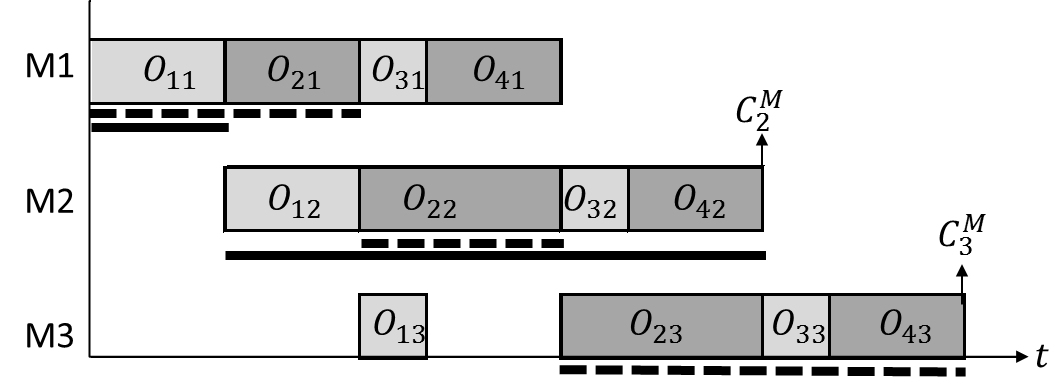}};
        \node[anchor=north east] (cp2) at (0,0) {$CP^2$};
        \node[below=0mm of cp2.south] (cp3) {$CP^3$};
        \draw[line width=2pt] (cp2.west) -- +(-0.7,0);
        \draw[line width=2pt, densely dashed] (cp3.west) -- +(-0.7,0);
  \end{tikzpicture}
  \\(a) $CP^2_1 \subsetneq CP^3_1$\\
  \begin{tikzpicture}
        \node[anchor=north east] at (0,0) {\includegraphics[width=.6\columnwidth]{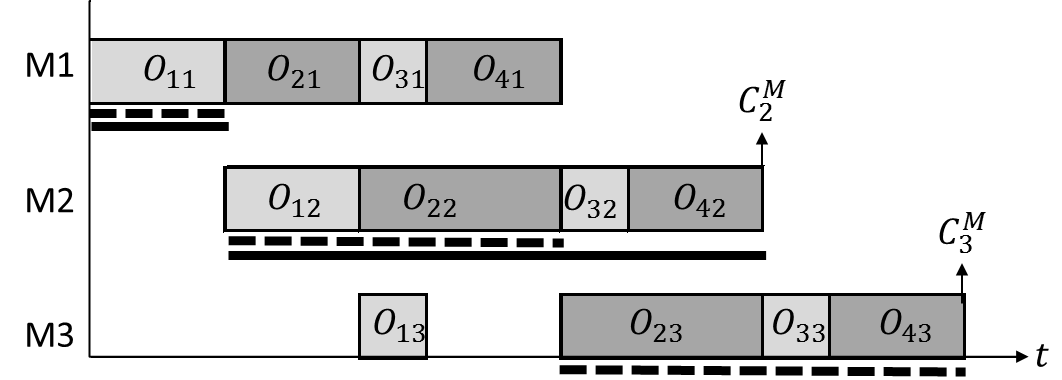}};
        \node[anchor=north east] (cp2) at (0,0) {$CP^2$};
        \node[below=0mm of cp2.south] (cp3) {$CP^3$};
        \draw[line width=2pt] (cp2.west) -- +(-0.7,0);
        \draw[line width=2pt, densely dashed] (cp3.west) -- +(-0.7,0);
  \end{tikzpicture}
  \\(b) Change the choice of $CP^3$
 \caption[An example for Lemma~\ref{le_cp1}]{An example for Lemma~\ref{le_cp1}. $CP^2$ (resp. $CP^3$) is indicated by a solid (resp. dashed) line.}
  \label{fig:excp1}
\end{figure}

\begin{lemma}~\label{le_cp2}
On machine 2, only two cases need to be considered: 
\begin{enumerate}
\item $CP^3_1 = CP^2_1$ and $CP^3_2 \subseteq CP^2_2$.
\item $CP^3_1 \subsetneq CP^2_1$ and $CP^3_2 \cap CP^2_2 = \emptyset$.
\end{enumerate}
\end{lemma}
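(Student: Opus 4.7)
My plan is to exploit the dichotomy inherited from Lemma~\ref{le_cp1}: either $CP^3_1=CP^2_1$ or $CP^3_1\subsetneq CP^2_1$. Write $c$ for the critical job of $CP^2$ and $a,b$ for the critical jobs of $CP^3$ that transition from machine~$1$ to machine~$2$ and from machine~$2$ to machine~$3$ respectively (possibly $a=b$). Two facts will be used throughout: (i) in a permutation schedule the jobs appear in the same order on every machine, and (ii) by the no-idle-time property of a critical path, $c$ starts on machine~$2$ exactly when it finishes on machine~$1$.

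For the first branch I argue directly. The job that immediately follows $CP^2_1$ on machine~$1$ is $c$; if $CP^3_1=CP^2_1$, the same slot in $CP^3$ is occupied by $a$, hence $a=c$. Both critical paths then enter machine~$2$ at $c$ and run along the unique maximal no-idle-time block of machine~$2$ starting at $c$. Since $CP^2$ has length $C^M_2$, this block is exactly $\{c\}\cup CP^2_2$ (it must terminate at the job completing at $C^M_2$), and the $CP^3$ sub-walk going from $a=c$ through $CP^3_2$ to $b$ must lie inside it; in particular $CP^3_2\subseteq CP^2_2$.

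For the second branch my plan is to show that whenever $CP^3_2\cap CP^2_2\neq\emptyset$ one can exchange $CP^3$ for an equally long critical path $CP^{3\prime}$ with $CP^{3\prime}_1=CP^2_1$, reducing to the first branch; the residual scenario $CP^3_2\cap CP^2_2=\emptyset$ is exactly the case stated by the lemma. Pick any $j^\star\in CP^3_2\cap CP^2_2$. By the permutation-schedule assumption, the jobs $a,c,j^\star$ occur in that order on machine~$2$; since the stretch $a,\ldots,b$ of $CP^3$ is contiguous on machine~$2$ and reaches $j^\star$, it must pass through $c$, and necessarily $c\neq b$ (otherwise $j^\star$ would precede $c$ on machine~$2$, a contradiction). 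Define $CP^{3\prime}$ as the walk: machine-$1$ prefix $CP^2_1,c$; transition at $c$ to machine~$2$; the machine-$2$ sub-stretch of the schedule from $c$ to $b$; transition at $b$ to machine~$3$; and the machine-$3$ suffix of $CP^3$. No-idle-time is inherited piece by piece: machine~$1$ by construction, the transition at $c$ because $c$ is critical for $CP^2$, the machine-$2$ stretch because it is a sub-segment of the $a$-to-$b$ block of $CP^3$, and the machine-$3$ tail from $CP^3$.

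The delicate point I anticipate is the length check. Since $c$ starts on machine~$2$ at its machine-$1$ completion time and the machine-$2$ segment from $c$ to $b$ is contiguous in the actual schedule, the machine-$2$ completion time of $b$ along $CP^{3\prime}$ matches its schedule value, which is also the time $CP^3$ reaches $b$ on machine~$2$. Appending the common machine-$3$ tail yields total length $C^M_3$, so $CP^{3\prime}$ is a valid critical path with $CP^{3\prime}_1=CP^2_1$, and the first branch applies to the pair $(CP^{3\prime},CP^2)$.
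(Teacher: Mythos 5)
Your proof is correct and follows essentially the same route as the paper: case 1 by identifying the machine-$1$-to-$2$ critical jobs ($a=c$) and case 2 by rearranging $CP^3$ so that it shares its prefix with $CP^2$, thereby reducing to case 1. The only cosmetic difference is that you splice the two paths at the critical job $c$ of $CP^2$ whereas the paper splices at the common machine-2 operation $O_{o2}$ of a job $o\in CP^3_2\cap CP^2_2$; your version also spells out the no-idle-time and length verifications that the paper leaves implicit.
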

\begin{proof}
Both cases are easy to verify with a similar consideration as in Lemma~\ref{le_cp1}. The first case is obvious, since $CP^3_1 = CP^2_1$ and $CP^2_2$ contains all jobs of $N\smallsetminus CP^2_1$ except the critical job, which is also the critical job of $CP^3$, so $CP^3_2 \subseteq CP^2_2$ holds. 
In the second case $CP^3_1 \subsetneq CP^2_1$, if $CP^3_2 \cap CP^2_2 \neq \emptyset$, let job $o\in CP^3_2 \cap CP^2_2$. This means that both critical paths $CP^2$ and $CP^3$ contain the operation $O_{o2}$, and the operation sequences preceding $O_{o2}$ are different in $CP^2$ and $CP^3$. Then similarly to Lemma~\ref{le_cp1}, $CP^2$ or $CP^3$ can then be rearranged by adopting the same operation sequences preceding $O_{o2}$ to reduce to the first case of Lemma~\ref{le_cp2}. See Figure~\ref{fig:excp2} and Figure~\ref{fig:excp3} for an example. Note that we are only interested in the values of $C^M_2$ and $C^M_3$, the actual critical paths are not our concern. 
\end{proof}
\begin{figure}[!ht]
  \centering
    \begin{tikzpicture}
        \node[anchor=north east] at (0,0) {\includegraphics[width=.6\columnwidth]{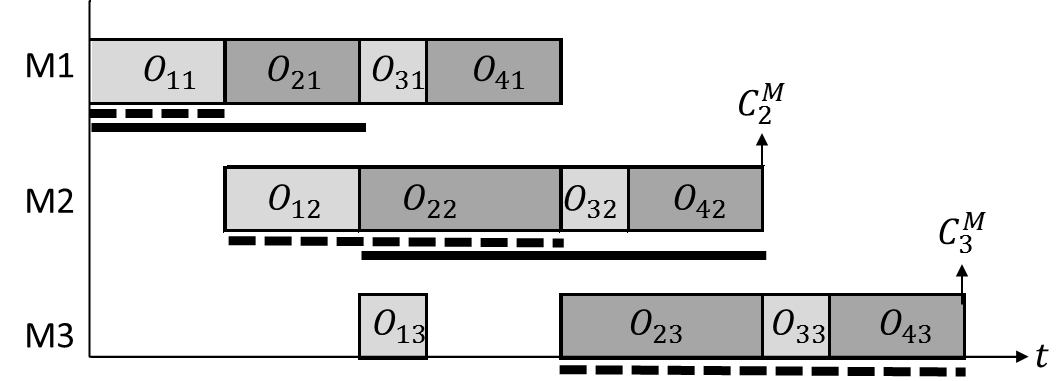}};
        \node[anchor=north east] (cp2) at (0,0) {$CP^2$};
        \node[below=0mm of cp2.south] (cp3) {$CP^3$};
        \draw[line width=2pt] (cp2.west) -- +(-0.7,0);
        \draw[line width=2pt, densely dashed] (cp3.west) -- +(-0.7,0);
  \end{tikzpicture}
    \\(a) $CP^3_1 \subsetneq CP^2_1$ and $CP^3_2 \cap CP^2_2 \neq \emptyset$
    
    \begin{tikzpicture}
        \node[anchor=north east] at (0,0) {\includegraphics[width=.6\columnwidth]{img/cp23_2.png}};
        \node[anchor=north east] (cp2) at (0,0) {$CP^2$};
        \node[below=0mm of cp2.south] (cp3) {$CP^3$};
        \draw[line width=2pt] (cp2.west) -- +(-0.7,0);
        \draw[line width=2pt, densely dashed] (cp3.west) -- +(-0.7,0);
  \end{tikzpicture}
    \\(b) Change the choice of $CP^2$
    \caption[An example for Lemma~\ref{le_cp2}]{An example for Lemma~\ref{le_cp2}. $CP^2$ (resp. $CP^3$) is indicated by a solid (resp. dashed) line.}
  \label{fig:excp2}
\end{figure}

\begin{lemma}
For a jobset $S'$ with $|S'|=t$, the number of possible pairs $\langle C^M_2,C^M_3\rangle $ is bounded by $\ostar{4^t}$.
\end{lemma}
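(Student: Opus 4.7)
The plan is to reduce the counting of distinct $\langle C^M_2,C^M_3\rangle$ pairs to the counting of admissible pairs of critical path structures $(CP^2,CP^3)$. Every realizable pair $\langle C^M_2,C^M_3\rangle$ arises from some permutation together with a choice of critical paths; by Lemmas \ref{le_cp1} and \ref{le_cp2} we may restrict these paths to lie in one of the two constrained configurations. Since the pair $\langle C^M_2,C^M_3\rangle$ depends only on which jobs end up in each of the blocks $CP^2_1, \{j_2\}, CP^2_2$ and $CP^3_1, \{j_3^{12}\}, CP^3_2, \{j_3^{23}\}, CP^3_3$, it suffices to count such partition structures. The naive product of the machine-$2$ bound $\ostar{2^t}$ from Lemma \ref{le1} and the analogous machine-$3$ bound $\ostar{3^t}$ gives $\ostar{6^t}$, so the point is to exploit the joint structural constraints to cut this down.

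I would encode every non-critical job by a label $(a,b)$ with $a\in\{CP^2_1,CP^2_2\}$ and $b\in\{CP^3_1,CP^3_2,CP^3_3\}$, together with a choice of up to three critical jobs among the $t$ jobs. Lemma \ref{le_cp1} immediately forbids $(CP^2_2,CP^3_1)$. Splitting along Lemma \ref{le_cp2}, in Case 1 ($CP^3_1=CP^2_1$ and $CP^3_2\subseteq CP^2_2$) the critical job of $CP^2$ coincides with the first critical job of $CP^3$ and exactly three labels survive, namely $(CP^2_1,CP^3_1)$, $(CP^2_2,CP^3_2)$, $(CP^2_2,CP^3_3)$; this yields at most $O(t^2)\cdot 3^{t-2}$ configurations. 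In Case 2 ($CP^3_1\subsetneq CP^2_1$ and $CP^3_2\cap CP^2_2=\emptyset$) there are up to three distinct critical jobs and the allowed labels reduce to the four possibilities $(CP^2_1,CP^3_1)$, $(CP^2_1,CP^3_2)$, $(CP^2_1,CP^3_3)$, $(CP^2_2,CP^3_3)$, yielding at most $O(t^3)\cdot 4^{t-3}$ configurations. Summing the two and absorbing the polynomial factors into the $\ostar{\cdot}$ notation yields the announced bound $\ostar{4^t}$.

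The hard part will be a careful verification that exactly the claimed labels survive in each case: one must argue that $(CP^2_2,CP^3_2)$ is ruled out in Case 2 precisely by $CP^3_2\cap CP^2_2=\emptyset$, while $(CP^2_1,CP^3_2)$ and $(CP^2_1,CP^3_3)$ are ruled out in Case 1 because $CP^3_1=CP^2_1$ forces every job of $CP^2_1$ into $CP^3_1$. A secondary technicality is the handling of the degenerate subcases where two, or all three, critical jobs coincide; these only contribute lower-order $\mathrm{poly}(t)$ factors that are absorbed by $\ostar{\cdot}$. Once these points are settled, the counting is purely mechanical and the factor of $4$ comes directly from the four surviving labels of Case 2, which dominates Case 1.
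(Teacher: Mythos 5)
Your proof is correct and follows essentially the same route as the paper: both rely on Lemmas \ref{le_cp1} and \ref{le_cp2} to restrict the joint $(CP^2,CP^3)$ configurations and then count job-to-block assignments, with your per-job label sets of sizes $3$ and $4$ being exactly the closed form of the paper's binomial sums $\sum_i\binom{t}{i}2^{t-i}=3^t$ (Case 1) and $\sum_i\binom{t}{i}3^i=4^t$ (Case 2). The only cosmetic difference is that you assign each non-critical job a joint label directly instead of first fixing $CP^2_1$ and then distributing its jobs over $CP^3$, and your $O(t^3)$ accounting for possibly three distinct critical jobs is, if anything, slightly more careful than the paper's $O(t^2)$; both are absorbed by $\ostar{\cdot}$.
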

\begin{proof}
In order to count this number, we first need to choose the critical jobs for $CP^2$ and $CP^3$, as similar to the proof of Lemma~\ref{le_cp1}. {This yields a factor of $t(t-1)=\bigo{t^2}$ such jobs.}

According to Lemma~\ref{le_cp2}, we then choose $i$ jobs that form $CP^2_1$ ($CP^2_2$ is then implicitly decided). Then, for $CP^3$:
\begin{itemize}
\item Consider Lemma~\ref{le_cp2} case 1: the $i$ jobs are also in $CP^3_1$, the remaining $(t-i)$ jobs can be put in $CP^3_2$ or $CP^3_3$, which yields $2^{t-i}$ choices.
\item Consider Lemma~\ref{le_cp2} case 2 (see Figure~\ref{fig:excp3}): these $i$ jobs can be put either in $CP^3_1$, $CP^3_2$ or $CP^3_3$. The remaining $t-i$ jobs can only be put in $CP^3_3$, since $CP^3_2 \cap CP^2_2 = \emptyset$. In total this yields $3^i$ choices.
\end{itemize}

{
\begin{align*}
t(t-1)\sum_{i=1}^{t}{t \choose i}\bigo{3^i+2^{t-i}}&=\bigo{t^2(4^t+3^t)}\\
&=\bigo{t^24^t}\\
&=\ostar{4^t}
\end{align*}}
\end{proof}
\begin{figure}[!ht]
  \centering
    \begin{tikzpicture}
        \node[anchor=north east] at (0,0) {\includegraphics[width=.6\columnwidth]{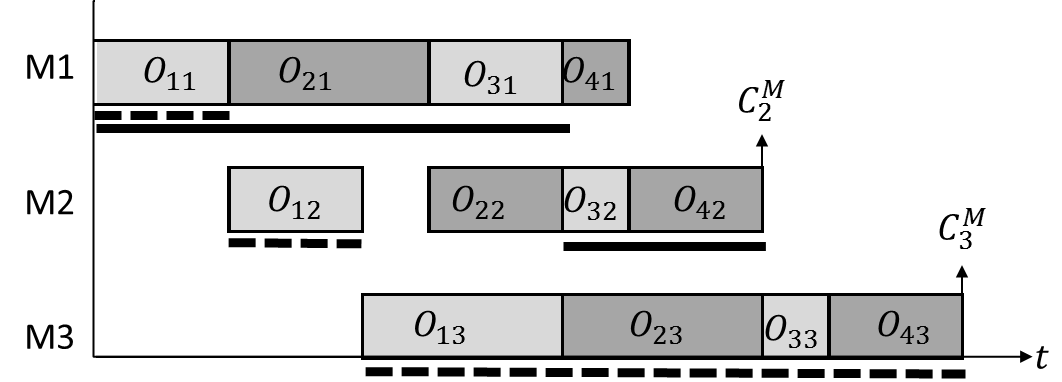}};
        \node[anchor=north east] (cp2) at (0,0) {$CP^2$};
        \node[below=0mm of cp2.south] (cp3) {$CP^3$};
        \draw[line width=2pt] (cp2.west) -- +(-0.7,0);
        \draw[line width=2pt, densely dashed] (cp3.west) -- +(-0.7,0);
  \end{tikzpicture}
    \caption[An example for Lemma~\ref{le_cp2}, case 2]{An example for Lemma~\ref{le_cp2}, case 2. $CP^2$ (resp. $CP^3$) is indicated by a solid (resp. dashed) line.}
  \label{fig:excp3}
\end{figure}

Finally, with a same analysis as for the \pbft{} problem, the complexity of our algorithm for the \pbfmax{} problem is:
$$
{\sum_{t=1}^n {n \choose t}\bigo{t\cdot t^24^t \log(t\cdot t^24^t)}=\bigo{t^44^n}=\ostar{5^n}}
$$

As in Lemma~\ref{le2}, we can also adopt a parameter $M$ such that $ p_{ij}\leq M$ into the analysis. In this case the number of possible pairs $\langle C^M_2,C^M_3\rangle $ is bounded by $(n+1)M\cdot(n+2)M=\ostar{M^2}$, and the final complexity is
$$
\sum_{t=1}^n {n \choose t}M^2=\ostar{M^22^n}
$$

 \subsection{The \pbsumfi{} problem}
Another problem that our \dpa{} can be applied to is \pbsumfi, in which the objective is to minimize the $\sum f_i(C_i)$ problem. 
When we consider $f_i$ as a non-decreasing function, this problem can be solved by \dpa{} in a similar way to the \pbfmax{} problem. 
Hence, based on a similar dominance condition between partial permutations, the criteria vector is defined as ${\langle C^M_2,C^M_3,F\rangle }$ with $F=\sum f_i(C_i)$ for a given sequence. 

\begin{theorem}
\label{th3}
Let $S$ be a set of $n$ jobs to be scheduled, and $S'\subset S$ a subset of $S$. Given $\pi$ and $\pi'$ as two permutations of jobs from $S'$, $\sigma$ as a permutation of jobs from $S\smallsetminus S'$, then we have:
\[
If
\begin{cases}
C^M_2(\pi)\leq C^M_2(\pi') \\
C^M_{3}(\pi)\leq C^M_{3}(\pi') \\
F(\pi)\leq F(\pi')
\end{cases}
 then
\begin{cases}
C^M_2(\pi . \sigma)\leq C^M_2(\pi' . \sigma) \\
C^M_{3}(\pi . \sigma)\leq C^M_{3}(\pi' . \sigma) \\
F(\pi. \sigma)\leq F(\pi'. \sigma)
\end{cases}
\]
that is, the partial permutation $\pi$ dominates $\pi'$, and there exists a permutation schedule starting by $\pi$ at least as good for the $\sum f_i$ as another starting by $\pi'$.
\end{theorem}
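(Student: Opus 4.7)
The plan is to reduce Theorem~\ref{th3} to Theorem~\ref{th0} and then handle the additional $F$-inequality by a careful decomposition of the sum. First I would observe that the two completion-time inequalities
\[
C^M_2(\pi.\sigma)\le C^M_2(\pi'.\sigma),\qquad C^M_3(\pi.\sigma)\le C^M_3(\pi'.\sigma)
\]
are exactly the statement of Theorem~\ref{th0} applied to $\pi,\pi',\sigma$. Since they use only the first two hypotheses of Theorem~\ref{th3}, they come for free. What remains is to prove $F(\pi.\sigma)\le F(\pi'.\sigma)$.

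Next I would split the sum according to whether a job belongs to $S'$ or to $S\smallsetminus S'$. Writing $C_i(\rho)$ for the completion time of job $i$ on machine~3 in schedule $\rho$, the jobs of $S'$ have exactly the completion times they had in $\pi$ (resp.\ $\pi'$) because appending $\sigma$ does not perturb what precedes it. Hence
\[
F(\pi.\sigma)=F(\pi)+\!\!\sum_{i\in S\smallsetminus S'}\!\! f_i\bigl(C_i(\pi.\sigma)\bigr),\qquad F(\pi'.\sigma)=F(\pi')+\!\!\sum_{i\in S\smallsetminus S'}\!\! f_i\bigl(C_i(\pi'.\sigma)\bigr).
\]
The first terms satisfy $F(\pi)\le F(\pi')$ by hypothesis. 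It therefore suffices to show that for every job $i\in S\smallsetminus S'$ one has $C_i(\pi.\sigma)\le C_i(\pi'.\sigma)$; combined with $f_i$ being non-decreasing this yields the termwise inequality and concludes the proof.

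To establish this pointwise domination on the jobs of $\sigma$, I would argue by induction on the position of a job within $\sigma$. Let $\sigma=(j_1,j_2,\ldots,j_k)$ and, for each $\ell$, let $\pi_\ell=\pi.(j_1,\ldots,j_\ell)$ and analogously $\pi'_\ell$. The base of the induction ($\ell=0$) is the hypothesis $C^M_2(\pi)\le C^M_2(\pi')$ and $C^M_3(\pi)\le C^M_3(\pi')$. For the inductive step, one re-applies the same mechanism as in the proof of Theorem~\ref{th0}: the completion time of $j_{\ell+1}$ on machine~2 is the maximum of the preceding $C^M_2$ and of its own release on machine~1 (which is identical in both schedules since jobs on machine~1 run without idle time and $\pi,\pi'$ contain the same jobs), and its completion on machine~3 is the maximum of the preceding $C^M_3$ and of its own $C^M_2$-completion. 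Since the max of two non-decreasing quantities is non-decreasing, both $C^M_2(\pi_{\ell+1})\le C^M_2(\pi'_{\ell+1})$ and $C^M_3(\pi_{\ell+1})\le C^M_3(\pi'_{\ell+1})$ propagate. Taking $\ell=k$ gives $C_{j_\ell}(\pi.\sigma)\le C_{j_\ell}(\pi'.\sigma)$ for every $\ell$, which is what was needed.

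The only mildly subtle point, and the one I would write out carefully, is the induction step on $C^M_3$ when the newly appended job $j_{\ell+1}$ starts on machine~3 because machine~3 is idle at its arrival: the correct formula is $C_{j_{\ell+1}}^{M_3}=\max(C^{M_3}_{\ell},C^{M_2}_{\ell+1})+p_{j_{\ell+1},3}$, so the monotonicity relies on both coordinates of the invariant simultaneously, which is precisely why the hypothesis of Theorem~\ref{th3} is stated on the pair $(C^M_2,C^M_3)$ and not on $C^M_3$ alone. No new combinatorial difficulty arises beyond Theorem~\ref{th0}; the whole proof is a short monotonicity argument piggybacking on it.
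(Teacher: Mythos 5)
Your proof is correct and is precisely the argument the paper has in mind: the paper dismisses Theorem~\ref{th3} as ``straightforward,'' relying implicitly on the same reduction to Theorem~\ref{th0} plus the decomposition of $F$ into the (unchanged) contribution of $S'$ and the termwise-dominated contribution of the jobs in $\sigma$. Your explicit induction on the positions in $\sigma$, with the observation that the invariant must carry both $C^M_2$ and $C^M_3$ simultaneously because of the $\max$ in the machine-3 recurrence, is exactly the detail the paper omits.
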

\begin{proof}
The proof of Theorem \ref{th3} is straightforward.
\end{proof}

Therefore with the same analysis as for the \pbfmax{} problem, the total complexity is also $\ostar{5^n}$ (or $\ostar{M^22^n}$) in time and space.

\section{A general framework and its applications}
From the applications of \techdp{} on the above flowshop problems, it can be observed that the algorithm is not limited to flowshop problems, but can be applied on any other problems  verifying similar dominance conditions over \pf. For this reason, we further formulate the idea as a general framework, then apply it onto several other scheduling problems.

We call \textit{conventional DP} the classic algorithmic pattern of \techdp{} when applied to sequencing problems including TSP in 1960's \citep{bellman1962dynamic, held1962adynamic}. This pattern is also referred to as \textit{Dynamic Programming across the subsets} according to \cite{woeginger2003exact}. 
In fact the generalization of the conventional  DP has been studied in the literature  \citep{carraway1988theory,mitten1974}. Briefly speaking, the idea of generalized DP (GDP) is still to construct optimal solutions by optimal solutions of subproblems.
It eliminates partial solutions that are not promising to be extended to an optimal global solution. Nevertheless, instead of using directly the global cost function to define the local dominance condition, as done in the conventional DP, GDP defines the dominance condition by problem-specific local preference relations. Preference is given to the decision(s) that can lead to a better global solution by considering all possible subsequent decisions.
GDP has been widely discussed on multi-criteria optimization problems or problems with multi-attribute objective functions (see for instance \cite{carraway1990generalized,villarreal1981interactive,evans1982multiobjective,bard1988techniques}).  
The local preference rule in GDP is often based on the relation of criteria or attribute vectors associated to each partial solution and this vector is 
often explicit for the aforementioned problems. 


Here we propose another formulation of generalized DP (named PDP, Pareto Dynamic Programming)  which shares the same high-level principle as GDP but its formulation has a special focus on single criterion optimisation problems with implicit dominance vectors and on minimizing the computational complexity of the algorithm in worst-case scenarios. The general idea is to identify a vector of criteria which affect the objective function value 
so that partial decision sequences whose criteria vector is not in the corresponding Pareto sets can be removed during the process of DP. 
Generating Pareto sets in DP has already been mentioned in the literature \citep{henig1983vector,rosenman1983pareto} and it is sometimes considered as impractical due to the size of Pareto sets stored during the process. However, in the context of the design of \textit{Exact Exponential Algorithms}  for \nph{} problems, PDP may provide a good performance guarantee in worst-case scenarios.


We first formulate PDP and discuss the complexity analysis of PDP in a general way. Then we apply it on some single machine scheduling problems with dynamic arrivals, for which the instantiation of PDP yields the best time complexity so far. 
Finally, we provide perspectives for its applications.

\subsection{Pareto Dynamic Programming}
\label{sec:formulation}
Now let us formulate PDP in a general way, considering a general optimization problem where we need to find the optimal order of decisions to make. 
Let $\vec{x}\in \mathbb{R}^p$ be a criteria vector referring to the current state of solution, i.e. the consequence of decisions that have already been made. The state is initiated as a all-zero vector $\vec{x_0}$ when no decision has been made. 
Let $\vec{y}\in\mathbb{R}^q$ be the vector representing the next decision to make. Making the next decision can be represented by a binary operator of concatenation $\odot$ defined as follows:
$$\vec{x}\odot\vec{y}=\vec{x'},\ \vec{x'}\in \mathbb{R}^p$$
In addition, for $A\subset\mathbb{R}^p, \vec{y}\in\mathbb{R}^q$,
$$A\odot\vec{y}=\{\vec{x}\odot\vec{y} | \vec{x}\in A  \}$$
That is, $A\odot\vec{y}$ is the set of solution states obtained by concatenating the decision $\vec{y}$ at the end of each decision state from $A$.


For $\vec{x},\vec{x'}\in \mathbb{R}^p$, the local preference between $\vec{x}$ and $\vec{x'}$ is defined as $$\vec{x}\preccurlyeq\vec{x'}\Rightarrow(\vec{x}\odot\vec{y})\leq(\vec{x'}\odot\vec{y}), \forall \vec{y}\in \mathbb{R}^q$$
The relation $\preccurlyeq$ is read as ``is preferred with respect to'', considering a minimization problem. Ties are broken arbitrarily.

Define $S=\{\vec{y_1},\ldots,\vec{y_n}\} \subset \mathbb{R}^q $ the candidate decisions at the input for a problem of size $n$. $Opt(S)$ is the set of final solution states containing the optimal ones, PDP can then be expressed by the following equation:
\begin{align}\label{eq:dpps0}
Opt(S)=\minvec\left(\underset{\vec{y}\in S}{\cup}Opt(S\smallsetminus\vec{y})\odot\vec{y}\right)
\end{align}
with $\minvec(\cdot)$ returning all non-dominated vectors from the input ones. 

The correctness of equation \ref{eq:dpps0} is obvious since the quality of a solution state can be judged directly on its associate state vector $\vec{x}\in \mathbb{R}^p$ by definition. The final optimal solution can be retrieved easily from $Opt(S)$ by evaluating the global objective function. 

\vspace{1cm}
The definition of the criteria vector $\vec{x}\in \mathbb{R}^p$ is problem dependent and it plays a key role in order to apply PDP on a specific problem. 
For each decision permutation $\pi$, its associated criteria vector $\mathcal{V}(\pi)=\langle g_1(\pi),\ldots,g_p(\pi) \rangle$ should be defined in a way such that for any permutation $\pi'$ of the same set of decisions as $\pi$, $\pi\preccurlyeq \pi'$ iff $\mathcal{V}(\pi)\leq \mathcal{V}(\pi')$. By finding (or defining) appropriate criterion $g_i$, we transfer our focus from decision sequences to a specific criteria space. 
In this context, $\minperm$ is implemented to find the permutations whose associated criteria vector is minimum and $Opt(S)$ contains these permutations.

The problems that are treatable by the conventional DP (the one solving TSP, for instance) can be seen as a special case of PDP where $\mathcal{V}=\langle f\rangle$, i.e. the global criterion is the only one that needs to be considered for the local dominance. In this case $p=1$ so that the cardinality of $Opt(S)$ is always $1$.

\subsection{Complexity analysis}
In terms of the time complexity of PDP, it is easy to see that the main operation in the algorithm is the function $\minvec$. For $|S|=t$, the number of input of $\minvec$ in equation \eqref{eq:dpps0} is upper-bounded by $t\times|\optperm(S\smallsetminus \vec{y})|$. Therefore, it is important to bound the maximum cardinality of $\optperm(S)$ for a given decision set $S$. 
According to the authors' experience, the upper-bound analysis of $\optperm(S)$ is usually problem dependent and it is the place where we can exploit largely the properties of the problem. 
In section~\ref{sec:dpapp}, we show the application of PDP on several problems together with detailed problem-dependent complexity analysis.

Suppose for $|S|=t$ the cardinality of $\optperm(S)$ can be bounded by $B(t)$ which is a problem dependent function. The running time of function $\minperm$ depends on $p$, which is the number of criteria in $\mathcal{V}$. According to  \cite{kung1975finding}, $\minperm$ runs in $\bigo{N\log N}$ for $p=2,3$ and in $\bigo{N(\log N)^{p-2}}$ for $p\geq 4$, with $N$ the number of input of $\minperm$, i.e. $tB(t-1)$. Also considering the algorithm runs across all the subsets of different cardinality $t$, the total running time can then be computed as:

\begin{numcases}{T(n)=}
\sum_{t=1}^n{n \choose t}t\cdot B(t-1)=\sum_{t=1}^n{n \choose t}t\leq 2^n             & $p=1$ \label{eq:r1}\\
\sum_{t=1}^n {n \choose t} t\cdot B(t-1)\log(B(t-1)) & $p=2,3$ \label{eq:r2}\\
\sum_{t=1}^n {n \choose t} t\cdot B(t-1)(\log(B(t-1)))^{p-2} & $p\geq 4$
\end{numcases}

Notice that for the case of equation \eqref{eq:r1}, $B(t)=1$, which explains the frequent occurrence of the complexity $\bigo{n2^n}$ of DP algorithms for many problems.

\subsection{Illustration for the {$1|r_i|\sum f_i(C_i)$ and $1|r_i|\max (f_i(C_i))$} problems}\label{sec:dpapp}
Consider a single machine scheduling problem with dynamic arrivals. Let $\{1,\ldots,n\}$ be the jobs to schedule, and each job $i$ is defined by its processing time $p_i$ and a release date $r_i$. For a given schedule, i.e. a sequence of the $n$ jobs with every job starting as early as possible after its release date, $C_i$  is defined as the completion time of job $i$. Let $f_i$ be a non-decreasing function of $C_i$, and the problem (denoted to as $1|r_i|\sum f_i(C_i)$) asks to find an optimal schedule of jobs which minimizes $\sum f_i(C_i)$. This problem is \nph{} even when $f_i=C_i$. Note that the {conventional} DP cannot be applied directly since the objective function $\sum f_i(C_i)$ alone is not sufficient to decide the dominance between two partial solutions composed by same jobs.

To apply PDP on the $1|r_i|\sum f_i(C_i)$ problem, we define $p=2$ with $g_1(\pi)$ the completion time of a partial schedule $\pi$ and $g_2(\pi)=\sum_{i\in \pi}f_i(C_i)$ its associated cost function. This is sufficient since if $\mathcal{V}(\pi)\leq \mathcal{V}(\pi')$, whatever is the job sequence after, the global solution starting with $\pi$ always dominates $\pi'$. On the other hand, the vector $\vec{y}_i\in \mathbb{R}^q$ associated to decision $i$ (job $i$, in our case) is pretty straightforward:  $\vec{y}_i=\langle p_i,r_i\rangle$. That is, a job sequence $\pi=(2,1,3)$ can be seen as the consequence of making three decisions by doing $\vec{x_0}\odot\vec{y}_2\odot\vec{y}_1\odot\vec{y}_3$, with $\vec{x_0}\in \mathbb{R}^p$ the initial state vector containing only 0's. 
Moreover, $\mathcal{V}(\pi) \odot \vec{y}_i=\langle \max(g_1(\pi),r_i)+p_i, g_2(\pi)+f_i(\max(g_1(\pi),r_i)+p_i) \rangle$. Equation \ref{eq:dpps0} can now be applied directly.

We now provide the complexity analysis of PDP on the $1|r_i|\sum f_i(C_i)$ problem. 
The question to answer first is, for a given jobset $S$, $|S|=t$, what is the value of $B(t)$, i.e. the maximum cardinality of $\{\mathcal{V}(\pi) | (\pi,\pi'\text{ are permutations of }$S$)\ \wedge\ (\nexists \pi',\ 
\mathcal{V}(\pi')<\mathcal{V}(\pi))\}$. A simple observation is that $B(t)$ is upper-bounded by the number of possible values of $g_1(\pi)$, which is bounded by the number of completion times of $\pi$. In order to compute this, we introduce Lemma~\ref{le:r}.

\begin{lemma}\label{le:r}
Assume $r_i\geq 0$, $\forall i=1,..,t$. Let $\pi=(j_1,j_2,\ldots,j_t)$ be a permutation of $t$ jobs. Then, $\exists k\in\{1,\ldots,t\}$ such that $g_1(\pi)=r_k+\sum_{h=k}^tp_h$.
\end{lemma}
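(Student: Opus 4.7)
The plan is to track completion times position by position and identify the most recent ``idle-to-busy'' transition on the machine. Throughout, I abuse notation in the same way as the statement, writing $r_k$ and $p_h$ for the release date and processing time of the job at positions $k$ and $h$ of $\pi$ respectively.

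First, I would write down the standard recurrence produced by scheduling each job as early as possible: let $C_k$ denote the completion time of the job at position $k$ in $\pi$, with $C_0 := 0$, so that
\begin{equation*}
    C_k \;=\; \max(C_{k-1},\, r_k) + p_k \quad \text{for } k = 1,\ldots,t.
\end{equation*}
Note that $g_1(\pi) = C_t$. Since $r_1 \geq 0 = C_0$, the set $\{k \in \{1,\ldots,t\} : C_{k-1} \leq r_k\}$ is non-empty, so I can define $k^{*}$ to be its maximum element.

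Next, I would exploit the maximality of $k^{*}$ in two steps. At position $k^{*}$ itself the recurrence collapses to $C_{k^{*}} = r_{k^{*}} + p_{k^{*}}$. For every $h > k^{*}$, by definition of $k^{*}$ we have $C_{h-1} > r_h$, so the recurrence collapses on the other side: $C_h = C_{h-1} + p_h$. A trivial induction on $h$ from $k^{*}+1$ up to $t$ then yields
\begin{equation*}
    C_t \;=\; C_{k^{*}} + \sum_{h=k^{*}+1}^{t} p_h \;=\; r_{k^{*}} + \sum_{h=k^{*}}^{t} p_h,
\end{equation*}
which is exactly the claimed identity with $k = k^{*}$.

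There is no real obstacle here; the only care needed is in setting up the recurrence cleanly and in justifying that the set used to define $k^{*}$ is non-empty (which uses the hypothesis $r_i \geq 0$). A one-line sanity check can be added afterwards: $k^{*}$ is precisely the starting position of the last maximal block of jobs processed without intermediate machine idleness, which is the intuitive meaning of the identity.
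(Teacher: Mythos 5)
Your proof is correct and takes essentially the same approach as the paper's: both identify the start of the final busy period (the paper via the longest idle-free suffix of $\pi$, you via the last index $k^*$ with $C_{k^*-1}\leq r_{k^*}$) and observe that this job starts exactly at its release date, after which processing times accumulate without idle time. Your version is simply a more formal rendering, via the completion-time recurrence, of the paper's informal argument.
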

\begin{proof}
Starting from the end of $\pi$, find the longest sub-sequence of jobs which do not have idle time in-between. Let $k$ be the first job of this sub-sequence. If $k$ is the first job in $\pi$, it must start at time $r_k$ since all jobs start as early as possible. Otherwise if $k$ is not the first in $\pi$, we know it still starts at time $r_k$ since there is some idle time between job $j_{k-1}$ and $j_k$ by the construction of the sub-sequence. An exemple of $\pi$ is given in Figure \ref{fig:rk}.
\end{proof}

\begin{figure}[!ht]
  \centering
    \includegraphics[width=.6\columnwidth]{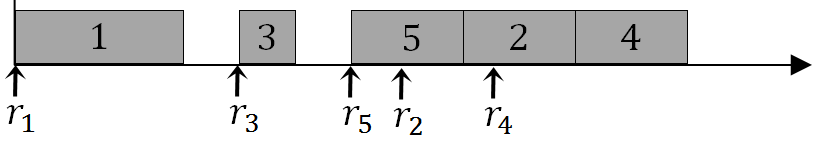}
    \caption{Example for Lemma \ref{le:r} (k=5)}
  \label{fig:rk}
\end{figure}

\begin{lemma}\label{le:rb}
$B(t)=\bigo{t2^{t-1}}=\ostar{2^{t}}$.
\end{lemma}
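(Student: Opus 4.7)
The plan is to upper-bound $B(t)$ by counting how many distinct values $g_1(\pi)$ can take over all permutations $\pi$ of a fixed jobset $S$ of size $t$, since two Pareto-non-dominated vectors must differ in their first coordinate (if two vectors share the same $g_1$, the one with the smaller $g_2$ dominates, and ties are broken arbitrarily in $\minvec$), so $B(t)$ is at most the number of attainable values of $g_1(\pi)$.

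To count those values, I would invoke Lemma~\ref{le:r}, which states that for every permutation $\pi=(j_1,\ldots,j_t)$ there exists an index $k\in\{1,\ldots,t\}$ such that
\[
g_1(\pi)=r_{j_k}+\sum_{h=k}^{t}p_{j_h}.
\]
Hence each attainable value of $g_1(\pi)$ is completely determined by two data: (i) the identity of the ``critical'' job $j_k$ (which fixes the release date $r_{j_k}$ appearing in the formula), and (ii) the subset of jobs placed at positions $k,k+1,\ldots,t$ of $\pi$ (which fixes the summation, since $j_k$ is in this suffix and only the multiset of jobs in the suffix, not their internal order, matters for the sum of processing times).

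First I would pick the critical job, giving $t$ choices. Once it is chosen, its position $k$ is not needed explicitly: all that remains is to decide, for each of the remaining $t-1$ jobs, whether it sits in the suffix starting at $k$ or strictly before it. That is $2^{t-1}$ choices. Multiplying, the number of distinct values of $g_1(\pi)$ is at most $t\cdot 2^{t-1}$, and therefore $B(t)\leq t\cdot 2^{t-1}=\ostar{2^{t}}$, as claimed.

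The only real subtlety I would need to justify carefully is step (ii): why the sum $\sum_{h=k}^{t}p_{j_h}$ depends only on the set of suffix jobs and not on their order within the suffix. This is immediate because addition is commutative and the formula from Lemma~\ref{le:r} involves only the unordered total processing time of the suffix plus the release date of its first job. With this observation in place the counting argument is straightforward; no further problem-specific analysis is required.
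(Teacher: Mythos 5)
Your proposal is correct and follows essentially the same route as the paper: bound $B(t)$ by the number of attainable values of $g_1(\pi)$, then use Lemma~\ref{le:r} to count these by choosing the critical job ($t$ choices) and the subset of jobs in the suffix ($2^{t-1}$ choices), giving $t\cdot 2^{t-1}=\ostar{2^t}$. The extra justifications you supply (why only $g_1$ matters for counting non-dominated vectors, and why only the unordered suffix matters for the sum) are implicit in the paper's argument and add nothing that would change the approach.
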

\begin{proof}
By Lemma\ref{le:r}, possible values of $g_1(\pi)$ can be computed as follows. First choose $k$, for which there are $t$ choices; then choose the jobs that are put after $j_k$, which yields $\bigo{2^{t-1}}$ possibilities. 
\end{proof}

\begin{theorem}\label{th:r}
Problem $1|r_i|\sum f_i(C_i)$ can be solved in $\ostar{3^n}$ time and space. 
\end{theorem}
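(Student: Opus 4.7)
The plan is to instantiate the general complexity formula of PDP with the bound on $B(t)$ just established in Lemma~\ref{le:rb}, and then invoke the binomial theorem as was done for \pbft{} in Section~\ref{sec:dp}. Concretely, since the criteria vector $\mathcal{V}(\pi)=\langle g_1(\pi),g_2(\pi)\rangle$ for this problem has dimension $p=2$, the running time of PDP falls into the regime of equation~\eqref{eq:r2}, giving
\[
T(n)=\sum_{t=1}^n\binom{n}{t}\, t\cdot B(t-1)\log\bigl(B(t-1)\bigr).
\]

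First, I would plug in $B(t-1)=\ostar{2^{t-1}}$ from Lemma~\ref{le:rb}. The logarithmic factor $\log(B(t-1))$ then becomes polynomial in $t$ (of order $t$), hence absorbed into the $\ostar{\cdot}$ notation. This reduces the sum to $\ostar{\sum_{t=1}^n \binom{n}{t} 2^{t}}$. Applying the binomial theorem $\sum_{t=0}^n \binom{n}{t} 2^t\cdot 1^{n-t}=(1+2)^n=3^n$ immediately yields $T(n)=\ostar{3^n}$, exactly as for the \pbft{} algorithm in Section~\ref{sec:dp}. I would note explicitly that the $t$ factor coming from the union over $\vec{y}\in S$ and the $t$-order polynomial from $\log(B(t-1))$ are both swallowed by $\ostar{\cdot}$.

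For the space complexity, I would observe that the algorithm stores, for every subset $S'\subseteq\{1,\ldots,n\}$, the Pareto set $\optperm(S')$ whose size is $\ostar{2^{|S'|}}$ by Lemma~\ref{le:rb}. Summing over all subsets gives the same $\sum_{t=0}^n\binom{n}{t}\ostar{2^t}=\ostar{3^n}$ bound on the required memory, so the space complexity matches the time complexity.

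The main obstacle is essentially already resolved by Lemma~\ref{le:rb}: once the Pareto front size is bounded by $\ostar{2^{t-1}}$, the rest is a routine binomial computation analogous to the one carried out for the \pbft{} problem. The only subtlety worth stating clearly is that the correctness of using $\mathcal{V}=\langle g_1,g_2\rangle$ as the dominance vector must be asserted (analogously to Theorems~\ref{th0} and~\ref{th3}): if two partial permutations $\pi,\pi'$ of the same jobset satisfy $g_1(\pi)\leq g_1(\pi')$ and $g_2(\pi)\leq g_2(\pi')$, then for any continuation $\sigma$ on $S\smallsetminus S'$, both criteria remain coordinate-wise dominated after concatenation, because $g_1$ propagates monotonically through $\mathcal{V}(\pi)\odot\vec{y}_i$ and each $f_i$ is non-decreasing. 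This licenses the use of the PDP recurrence~\eqref{eq:dpps0}, and the claimed $\ostar{3^n}$ time and space bound follows.
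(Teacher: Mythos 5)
Your proof is correct and follows essentially the same route as the paper: it instantiates the PDP running-time formula of equation~\eqref{eq:r2} with the bound $B(t)=\ostar{2^t}$ from Lemma~\ref{le:rb} and concludes by the binomial theorem, exactly as in the paper's own argument. Your additional remarks on the space bound and on the validity of the two-dimensional dominance vector are consistent with the surrounding discussion in the paper and do not change the approach.
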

\begin{proof}
The result can be derived directly from equation \ref{eq:r2} and Lemma\ref{le:rb}: 
$$T(n)=\sum_{t=1}^n {n \choose t} t\cdot\ostar{2^{t}}\log(\ostar{2^{t}})= \ostar{n^2\sum_{t=1}^n {n \choose t}2^t\cdot 1^{n-t})}=\ostar{3^n}$$
\end{proof}
The reasoning and the result is the same for the $1|r_i|\max (f_i(C_i))$ problem.


\FloatBarrier

\section{Auxiliary Result: complexity lower bound of \pbft{} based on ETH}

The conjecture $\mathcal{P} \neq \mathcal{NP}$ implies that \nph{} problems cannot be solved in polynomial time. However, no more precision is given on the complexity of \nph{} problems except that they are super-polynomial. People have realized that this simple binary classification of decision problems to \textit{easy} or \textit{hard} is perhaps too coarse-grained. This encourages to analyze the complexity of problems in a finer way. On this direction,  
\cite{Impagliazzo2001} proposed the \textit{Exponential Time Hypothesis} (ETH) which states that 3SAT cannot be solved in sub-exponential time ($\bigo{2^{o(n)}}$). Based on this assumption, complexity lower bounds have been deduced for many problems such as $k$-Colorability, $k$-Set Cover, \textsc{Independent Set}, \textsc{Clique}, and \textsc{Vertex Cover}. The existence of sub-exponential algorithms for any of them implies the same for the others. 
Moreover, a \textit{Strong Exponential Time Hypothesis} (SETH) has also been proposed, which assumes that $k$-CNF-SAT requires $\bigo{2^n}$ time in the worst-case when $k$ grows to infinity. This allows to derive even tighter bounds such as shown by \cite{cygan2016problems}: for every $\epsilon<1$ an $\bigo{2^{\epsilon n}}$ algorithm for \textsc{Hitting Set}, \textsc{Set Splitting} or \textsc{NAE-SAT} would violate SETH. 

Being not able to propose an algorithm with a smaller upper bound on the worst-case time complexity than the proposed dynamic programming, we prove by ETH an asymptotic lower bound, by a reduction from 3SAT to \textsc{SubsetSum} and then to \pbft. We show that \pbft{} cannot be solved in sub-exponential time, unless ETH fails. This immediately implies that the $Fm\|C_{max}$ problem, $\forall m>3$, cannot be solved in sub-exponential time either.

When establishing this result, we were not aware that a similar result  reported by \cite{jansen2013}, stated that for $f\in\{C_{max}, \sum w_jC_j\}$, the problems $ O3||f$, $ J3||f$, $ F3||f$ and $ P2||f$ cannot be solved asymptotically faster than $(2^{O(n)}\cdot \|I\|)$ unless ETH fails. The used reduction chain  3SAT $\rightarrow$ \textsc{SubsetSum}$\rightarrow$ \textsc{Partition} $\rightarrow$ \pbft, which is different from ours.

\subsubsection{From 3SAT to \pbft{}}
As stated by \cite{Impagliazzo2001}, proving lower bounds based on ETH implies designing problem reductions that preserve sub-exponential time property. For this, it is critical that resulting problem size of the reduction does not grow more than linearly. However, the time needed by the reduction is less important: it is not necessarily polynomial but should be at most sub-exponential.

\begin{definition}\label{def1}
Consider a reduction $\propto$ from problem $P_1$ with complexity parameter $m_1$ to $P_2$ with complexity parameter $m_2$. Reduction $\propto$ is \textbf{sub-exponential time preserving} if 
\begin{enumerate}
\item $\propto$ runs in sub-exponential time, i.e. for any $\epsilon>0$, $\propto$ runs in $\bigo{2^{\epsilon n}}$ and
\item $m_2(\propto(x))= \bigo{m_1(x)}$, i.e. the size of the output problem is linearly related to the size of the input problem.
\end{enumerate}
\end{definition}

In fact, if $P_2$ can be solved in sub-exponential time, so can $P_1$, since for any given instance of $P_1$ we can construct in sub-exponential time an instance of $P_2$ that can be solved in sub-exponential time.

Many existing standard problem reductions are naturally sub-exponential preserving. It is pretty straightforward to verify this since standard reductions are in polynomial time and we only need to check whether ``$m_2(\propto(x))= \bigo{m_1(x)}$''. In the case of the \pbft{} problem, it was proved as \npc{} by \cite{garey1976} by a reduction from 3-\textsc{Partition} (3PAR). The latter  was proved to be \npc{} by a series of reductions : 3SAT $\to$ 3-\textsc{Dimentional Matchining} (3DM) $\to$ 3PAR \citep{garey1975}. The classic reference of \cite{garey1979computers} also described the reduction 3SAT $\to$ 3DM  $\to$ 4PAR $\to$ 3PAR. 

Unfortunately, the described reduction chain from 3SAT to \pbft{} does not preserve sub-exponential time, since the reduction to 3DM outputs an instance of size $\Omega(n^2)$, 
i.e. asymptotically lower bounded by $n^2$. In this section, we describe a new reduction chain from 3SAT to \pbft{} which is proved to be sub-exponential time preserving and which in result allows to prove lower bounds for \pbft{} based on ETH. This chain is $3SAT\to$\textsc{SubsetSum} $\to$\pbft.

\subsubsection{3SAT$\to$\textsc{SubsetSum}}
For this  step, the standard NP-completeness reduction can be adopted directly (see for instance \cite{cormen2009intro}). However, in order to prove that the resulting instance has its size linear in that of the input instance, we need to apply here the \textsc{Sparsification Lemma} of  \cite{Impagliazzo2001}. For the sake of clarity we start by a brief description of the reduction (R1). The desired input and output of R1 are described as follows:

\textbf{Input problem} (\textsc{3SAT}): Given a boolean formula $F$ in 3-CNF form containing $n$ variables and $m$ clauses and each clause having exactly 3 literals, the question is whether there exists an assignment of variables that makes $F=true$.

\textbf{Output problem} (\textsc{SubsetSum}): Given $S$ as a set of $n$ integers, $t$ an integer parameter, the question is whether there exists a subset $S'\subset S$, s.t. $\sum_{i\in S'} i=t$.

R1 constructs the target \textsc{SubsetSum} as follows. Each element in $S$ is represented by a $(m+n)$-digit base-$7$ number. The first $n$ digits correspond to variables and the last $m$ digits are labeled by clauses. We note $C(i)=\{c: (c$ is a clause in $F) \wedge (v_i\in c)\}$. Without loss of generality, we assume that each variable must appear in at least one clause and no clause contains a variable and its negation.

\begin{enumerate}
\item The target value $t$ is a $(m+n)$-digit base-$7$ number which has $1$ on all variable digits and $4$ on all clause digits.
\item For each variable $v_i$ in $F$, we add 2 numbers into $S$: $s_i$ and $s'_i$ with the $v_i$-digit set to $1$. $\forall c\in C(i)$, the $c$-digit of $s_i$ (resp. $s'_i$) is set to $1$ if $c$ is satisfied when $v_i=true$ (resp. $false$). The other digits are set to $0$.
\item For each clause $c_i$ in $F$, we add 2 numbers $a_i$ and $a'_i$ into $S$. $a_i$ (resp. $a'_i$) is created by setting the $c_i$-digit to $1$ (resp. $2$), and by setting the other digits  to $0$. Note that these are slack numbers which will help us to add to $4$ in the target digit.
\end{enumerate}

Clearly, in the digits of $t$, the $v_i-digit$ is $1$ if and only if when $s_i$ or $s'_i$ is chosen, which corresponds to a true or false assignment to $v_i$. The $c_i-digit$ is $4$ only when at least one variable assignment satisfies $c_i$ since otherwise we would have at most $3$ by adding up $a_i$ and $a'_i$. Also note that by encoding the numbers in base $7$, no carries can occur. The correctness of the reduction is therefore straightforward.

The resulting \textsc{SubsetSum} instance has $2(m+n)$ numbers, each has $(m+n)$ digits, the whole construction can be done in polynomial time (even when considering $m=n^3$). The size of the resulting instance is $2(m+n)$, but according to Definition \ref{def1}, we should have $\bigo{n}$. This is the place where the \textsc{Sparsification Lemma} applies.

\begin{lemma}\label{lemma:sl}
\textsc{Sparsification Lemma}\citep{Impagliazzo2001} \\
$\forall \epsilon>0,k\geq 2$, a k-SAT formula $F$ with $n$ variables and $m$ clauses can be transformed into a k-SAT formula $F'$ which is the disjunction of at most $2^{\epsilon n}$ sub-formulas, each with $\bigo{n}$ clauses. The transformation runs in $\bigo{poly(n)2^{\epsilon n}}$ time. Alternatively speaking, k-SAT has a $2^{o(n)}$ algorithm if and only if it has a $2^{o(m)}$ algorithm.
\end{lemma}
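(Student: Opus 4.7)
The plan is to present the formula $F$ as a disjunction of $2^{\varepsilon n}$ ``sparse'' subformulas via a branching procedure. Call a $k$-CNF formula $G$ \emph{sparse} if, for every $i\in\{1,\ldots,k-1\}$, no set of $i$ literals occurs as a subset of more than $h(i)$ clauses of $G$, for a threshold function $h:\{1,\ldots,k\}\to \mathbb{N}$ to be tuned from $\varepsilon$ and $k$. A straightforward double-counting argument then bounds the number of clauses in any sparse $k$-CNF on $n$ variables by $\mathcal{O}(n)$ (with a constant depending on $k,h$), which will supply the ``$\mathcal{O}(n)$ clauses per sub-formula'' part of the conclusion.

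Next, I would design the branching. If the current formula is already sparse, stop and output it as a leaf. Otherwise, pick the smallest $i$ and some $i$-literal set $T$ that appears as a subset of more than $h(i)$ clauses, and branch on whether $T$ is violated or satisfied:
\begin{enumerate}
\item \textbf{Violated branch}: force every literal in $T$ to be false. This shortens every clause containing $T$ into a $(k-|T|)$-clause and eliminates $|T|$ variables.
\item \textbf{Satisfied branch}: append the disjunction $T$ as a new clause of length $|T|\le k-1$ to the formula.
\end{enumerate}
Both branches together cover all truth assignments, so the OR of all leaf formulas is equivalent to $F$. Polynomial work is done at each node, so the running time is bounded by (number of leaves) $\times\;\mathrm{poly}(n)$.

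The main obstacle will be the analysis of the branching tree size. One needs a potential function $\Phi$ (a weighted combination of the number of variables and the multiplicities of heavy sub-clauses at each length $i$) whose drop $\Delta\Phi_a,\Delta\Phi_b$ in the two branches satisfies a recurrence of the form $L(\Phi)\le L(\Phi-\Delta\Phi_a)+L(\Phi-\Delta\Phi_b)$ whose solution is $L=2^{\varepsilon n}$. The thresholds $h(i)$ must be chosen large enough (depending on $\varepsilon$ and $k$) so that the ``satisfied branch'' truly absorbs many heavy sub-clauses while the ``violated branch'' genuinely releases variables; tuning these constants to obtain an arbitrarily small branching factor $\varepsilon$ is the delicate part.

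Finally, for the ``alternatively speaking'' equivalence: one direction is trivial since $m\ge n/k$ after unit-propagation, so $2^{o(n)}$ implies $2^{o(m)}$. For the converse, assume a $2^{o(m)}$ algorithm $A$ for $k$-SAT. Given an input on $n$ variables, for any sequence $\varepsilon_j\to 0$ invoke the sparsification with parameter $\varepsilon_j$ to obtain $2^{\varepsilon_j n}$ subformulas, each with $m'=\mathcal{O}(n)$ clauses. Running $A$ on each subformula costs $2^{o(m')}=2^{o(n)}$, and the total cost $2^{\varepsilon_j n}\cdot 2^{o(n)}$ is $2^{o(n)}$ by a standard diagonalization over $\varepsilon_j$, which yields the desired $2^{o(n)}$ algorithm.
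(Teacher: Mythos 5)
First, a point of reference: the paper does not actually prove this lemma --- it is quoted from \cite{Impagliazzo2001} and used as a black box in the reduction chain from 3SAT to \textsc{SubsetSum} to \pbft. So the only meaningful comparison is with the original Impagliazzo--Paturi--Zane argument, and your sketch does follow that argument's skeleton: the two branches (assert every literal of a heavy literal-set $T$ false, versus assert the disjunction $T$) are exactly the IPZ ``flower'' branching, the two branches together do cover all assignments, and already the $i=1$ case of your sparsity condition gives the $\bigo{n}$ bound on the number of clauses at a leaf. One local correction: in the satisfied branch you must also delete the clauses containing $T$ (they are subsumed by the new clause $T$); if you merely append $T$, that branch makes no progress towards sparsity and the recursion need not terminate.

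The genuine gap is the step you name and then skip: the bound of $2^{\epsilon n}$ on the number of leaves. This is not deferrable bookkeeping --- it is the entire content of the lemma. The difficulty is that the satisfied branch fixes no variables and can increase the clause count, so no naive measure (remaining variables, remaining clauses) decreases on both sides of the branch; the IPZ proof needs a weighted potential over the multiplicities of hearts of each length, with thresholds $h(i)$ chosen to grow suitably in $k-i$ as a function of $\epsilon$ and $k$, and verifying that the resulting recurrence solves to $2^{\epsilon n}$ is the bulk of their paper. Writing ``one needs a potential function whose drop satisfies a recurrence whose solution is $2^{\epsilon n}$'' restates the goal rather than establishing it. The closing equivalence between $2^{o(n)}$ and $2^{o(m)}$ algorithms is fine in outline, but the clean formulation of the nontrivial direction is simply that for every fixed $\epsilon>0$ the sparsify-then-solve algorithm runs in $2^{\epsilon n}\cdot 2^{o(n)}\cdot poly(n)$ time, so the infimum of achievable exponents is $0$; no diagonalization over a sequence $\epsilon_j$ is needed, and invoking one quietly assumes you may choose $\epsilon$ as a function of $n$, which the lemma as stated does not directly license.
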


This allows the reduction R1 to produce for each sub-formula with $n'$ variables and $m'$ clauses, a \textsc{SubsetSum} instance of size $2(m'+n')=\bigo{n'}$. Therefore R1 is \textit{sub-exponential time preserving} since if \textsc{SubsetSum} can be solved in sub-exponential time then for any instance of 3SAT and any $\epsilon>0$, we can solve it in sub-exponential time by first applying the \textsc{Sparsification Lemma}  then applying R1 and solve one by one the resulting \textsc{SubsetSum} sub-instances.

\subsubsection{\textsc{SubsetSum} $\to$\pbft}
Now we focus on the reduction (R2) from \textsc{SubsetSum} to \pbft{}. The desired input and output of R2 are described as follows:

\textbf{Input problem} (\textsc{SubsetSum}): Given $S$ as a set of $n$ integers, $t$ an integer parameter, the question is whether there exists a subset $S'\subset S$, s.t. $\sum_{i\in S'} i=t$.

\textbf{Output problem} (\pbft{}): Given $n'$ jobs to be processed on $3$ machines, each job must be processed on machine $1,2,3$ in this order without overlapped processing. The processing time of job $j$ on the $3$ machines is denoted as $p_j=\langle p_{j1},p_{j2},p_{j3} \rangle$. Each machine can only handle one job at one time. We ask whether there exists an ordering of jobs such that the whole processing can finish before time instant $T$.

We note $s$ as the sum of all integers in $S$. The reduction (R2) constructs the set of jobs $J$ as follows.
\begin{enumerate}
\item Add to $J$ job $j_{n+1}$: $\langle t,t,s-t\rangle$
\item For each element $a_i\in S$, add job $i$ : $\langle 0,a_i,0\rangle$
\item Let $T=s+t$
\end{enumerate}

Since the execution of $j_{n+1}$ requires already $s+t$ time, 
as shown in Figure~\ref{fig:insf3}, the remaining $n$ jobs must be perfectly filled into the line-shaded area without leaving any idle time in order to not excede $T$ on machine $2$. Since that area is cut into two parts of size $t$ and $(s-t)$ respectively, a perfect fill-in of the first part implies the existence of $S'\subset S$ such that $\sum S'=t$.

\begin{figure}
\centering
\includegraphics[width=0.6\textwidth]{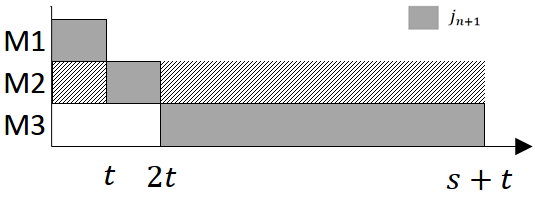}
\caption{Scheduling overview of the \pbft{} instance reduced from \textsc{SubsetSum}}\label{fig:insf3}
\end{figure}

\begin{lemma}
Assuming ETH, $\nexists$ algorithm such that for any $\epsilon>0$, it solves \pbft{} in $\bigo{2^{\epsilon n}}$ time. In other words, the \pbft{} problem cannot be solved in sub-exponential time unless ETH fails.
\end{lemma}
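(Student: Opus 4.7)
The plan is to chain together the two sub\-exponential-time-preserving reductions R1 (3SAT $\to$ \textsc{SubsetSum}) and R2 (\textsc{SubsetSum} $\to$ \pbft) constructed above, and derive a contradiction with ETH from any hypothetical $\bigo{2^{\epsilon n}}$ algorithm for \pbft. Concretely, assume for contradiction that for every $\epsilon>0$ there is an algorithm $\mathcal{A}_\epsilon$ solving \pbft{} in time $\bigo{2^{\epsilon n'}}$, where $n'$ is the number of jobs.

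First, I would argue that R2 is sub-exponential time preserving in the sense of Definition~\ref{def1}. The construction is polynomial in $n$ (the number of integers in the \textsc{SubsetSum} instance) and produces $n'=n+1=\bigo{n}$ jobs, so condition~2 of Definition~\ref{def1} is met and condition~1 is trivial. Composed with R1, which has already been shown to be sub-exponential time preserving via the \textsc{Sparsification Lemma} (Lemma~\ref{lemma:sl}), the composition 3SAT $\to$ \textsc{SubsetSum} $\to$ \pbft{} is itself sub-exponential time preserving: starting from a 3SAT formula $F$ with $n$ variables, for any fixed $\delta>0$ the \textsc{Sparsification Lemma} produces at most $2^{\delta n}$ sub-formulas each with $\bigo{n}$ clauses, each of which R1 turns into a \textsc{SubsetSum} instance of size $\bigo{n}$, which R2 in turn turns into a \pbft{} instance with $n'=\bigo{n}$ jobs.

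Now chain the timings. Given $\epsilon>0$, choose $\delta>0$ and a constant $c$ bounding the blow-up $n'\leq c n$ so that $c\epsilon+\delta$ is as small as desired. Applying the \textsc{Sparsification Lemma} takes $\bigo{\mathrm{poly}(n) 2^{\delta n}}$, each sub-formula is translated in polynomial time, and each resulting \pbft{} instance is solved by $\mathcal{A}_\epsilon$ in $\bigo{2^{\epsilon c n}}$. The original 3SAT is a YES-instance iff at least one sub-instance is, so the total running time is $\bigo{\mathrm{poly}(n) 2^{(\delta+\epsilon c)n}}$. Since $\epsilon$ and $\delta$ are arbitrary, this yields a $2^{o(n)}$ algorithm for 3SAT, contradicting ETH.

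The main obstacle, and the only place where care is really needed, is guaranteeing that the \emph{size parameter transfers linearly all the way through}: without the \textsc{Sparsification Lemma}, R1 produces a \textsc{SubsetSum} instance whose size depends on $m$, not $n$, which would block the argument. Once Lemma~\ref{lemma:sl} is invoked to bound $m=\bigo{n}$ on each sub-formula, and once R2 is verified to be linear in the number of items (which is immediate from its construction, since it creates exactly one ``anchor'' job $j_{n+1}$ plus one job per input integer), the remainder is a routine composition argument in the spirit of standard ETH-based lower bound proofs, such as those collected by \cite{fomin2010exact}.
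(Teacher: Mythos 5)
Your proposal is correct and follows essentially the same route as the paper's own proof: verify that R2 is sub-exponential time preserving (linear job count, polynomial construction time), combine it with R1 (already made size-linear via the Sparsification Lemma), and conclude that a sub-exponential algorithm for \pbft{} would violate ETH. You merely spell out the composition of the timings and the choice of $\delta$ more explicitly than the paper does, which is a harmless (indeed welcome) elaboration; your count of $n+1$ jobs for R2's output is consistent with the construction even though the paper's proof states $n+2$, and either way the linear bound required by Definition~\ref{def1} holds.
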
 
\begin{proof}
It is easy to check that R2 runs in linear time and the resulting \pbft{} instance has size $n+2$ which is nearly the same as the input instance. As a consequence, R2 is sub-exponential time preserving. Combining with R1, the lemma is proved.
\end{proof}

\section{Chapter Summary}
\label{sec:4}
In this work we focus on the design of exact algorithms for several sequencing problems considering worst-case scenarios. The first problem that is tackled is the \pbft{} problem for which we propose a \dpa{} algorithm, improving the best-known time complexity in the literature {from $2^{O(n)}\times \|I\|^{\bigo{1}}$ (which can be proved to be $\ostar{c^n}$, $c>3$, when applied to this problem) to $\ostar{3^n}$}. The generalizations of the algorithm on the \pbfmax{} and \pbsumfi{} problems are then presented with worst-case time and space complexities in $\ostar{5^n}$ or $\ostar{M^22^n}$. 
Since the idea is based on the consideration of dominance conditions for non-decreasing cost functions and on the consideration of \pf{} of corresponding criteria vectors, it can be easily generalized to other problems that have similar structures. We formulate the generalized algorithm as a framework named \textit{Pareto Dynamic Programming}, and then apply it to the $1|r_i|\sum f_i(C_i)$ and $1|r_i|\max (f_i(C_i))$ problems. More generally, for all problems that possess a dominance condition based on multiple criteria, our \dpa{} is likely to be applicable. Table \ref{tab:summary} summaries the complexity results of PDP when applied to several problems. Recall that $B(n)$ is the upper bound of the number of optimal solutions returned after solving an instance of size $n$.



\begin{table}[!ht]
    \centering
    \begin{tabular}{|c|c|c|c|}
        \hline \makecell{Problem $(\alpha|\beta|f)$  \\$\beta\neq pmtn$ \\ $f\in\{f_{max},\sum f_i\}$}  & Criteria Vector & B(n) & Complexity \\ \hline
        $1||f$ &  $\langle f\rangle$ & $1$ & $\ostar{2^n}$ \\ \hline
        $1|r_i|f$ &  $\langle C_{max},f\rangle$ & $\ostar{2^n}$ & $\ostar{3^n}$ \\ \hline
        $F2\|f$ &  $\langle C_{max},f\rangle$ & $\ostar{2^n}$& $\ostar{3^n}$ \\ \hline
        $F3\|f$ &  $\langle C^M_2,C_{max},f\rangle$ & $\ostar{4^n}$& $\ostar{5^n}$ \\ \hline
    \end{tabular}
    \caption{Application of PDP on sequencing problems}
    \label{tab:summary}
\end{table}

The main work in this chapter has been performed together with Christophe Lenté, Mathieu Liedloff and Vincent T'Kindt. The results on the \pbft, \pbfmax{} and \pbsumfi{} problems have been published in \textit{Journal of Scheduling} \citep{shang2017jos}, \textit{MISTA} 2015 conference \citep{shang2015f3cmax} and \textit{ROADEF} 2016 conference \citep{shang2016roadef}.

\chapter{\techbm{} on the Single Machine Total Tardiness Problem}\label{ch3}
 
\section{Introduction}\label{sec:bm:intro}
In this chapter, we report new results on exact exponential algorithms solving a pure sequencing problem, the single machine total tardiness problem, denoted by \pbtt. In this problem, a job set $N=\{1,2, \dots, n\}$ of~$n$ jobs
must be scheduled on a single machine.
For each job~$i$, a processing time~$p_i$
and a due date~$d_i$ are given.
The problem asks for arranging the job set
into a sequence $\sigma$ so as to minimize
$T(N,\sigma)=\sum_{i=1}^n T_i = \sum_{i=1}^n \max\{C_i-d_i,0\}$,
where $C_i$ is the completion time of job $j$ in sequence $\sigma$.
The~\pbtt{} problem is NP-hard
in the ordinary sense~\citep{du1990minimizing}. It
has been extensively studied in the literature
and many exact procedures 
\citep{lawler1977pseudopolynomial,potts1982decomposition,della1998new,szwarc2001algorithmic} 
have been proposed. The current state-of-the-art exact method of~\cite{szwarc2001algorithmic} dates back to $2001$
and solves to optimality instances with up to~$500$ jobs.
All these procedures are search tree approaches, but dynamic programming algorithms were also considered.
On the one hand, \cite{lawler1977pseudopolynomial} proposed a pseudo-polynomial dynamic programming
algorithm running with complexity~$\bigo{n^4\sum p_i}$. 
On the other hand, the standard technique of doing dynamic programming across the subsets (see, for instance,~\cite{fomin2010exact}) applies and runs with complexity $\bigo{n^2 2^n}$ both in time and space.
Latest theoretical developments for the problem, including both exact and heuristic approaches can be found in the survey of ~\cite{koulamas2010single}.

We keep the use of the $\ostar{\cdot}$ notation for measuring worst-case complexities. $T(n)$ is still defined as the time required in the worst-case to exactly solve a  problem of size $n$. 
To the best of our knowledge, no available exact algorithm for this problem
running in $\ostar{c^n}$ ($c$ being a constant) time and polynomial space has been reported in the literature. Actually,  the complexity of the state-of-the-art algorithm of \cite{szwarc2001algorithmic} was not discussed by the authors, but our analysis shows that it runs in $\ostar{2.4143^n}$ time and polynomial space (see section \ref{sec:comp_bb2001}). 
One could also possibly apply a divide-and-conquer approach 
as described by \cite{gurevich1987expected} and \cite{bodlaender2012exact}. The idea is to determine the half of  jobs that should be processed first by enumerating all the 2-partitions of the input jobset. Then, for each 2-partition, solve the two corresponding subproblems recursively and finally combine the obtained solutions of subproblems to get the optimal solution. 
This would lead to an $\ostar{4^n}$ complexity in time 
requiring polynomial space. 
The aim of this work is to present an improved exact algorithm exploiting known decomposition properties of the problem. Different versions of the proposed approach are described in section \ref{sec:br}. A final version making use of a new technique called \techbm{}
that avoids the solution of several equivalent subproblems in the branching tree is presented in section \ref{sec:bm}. Finally we present some additional results in section \ref{sec:bmadditional}. 

First, let use recall some known properties on the $1||\sum T_j$ problem.


Given the job set $N=\{1,2,\dots,n\}$, let
$(1,2,\dots,n)$ be a~LPT (Longest Processing Time first) sequence, where $i < j$
whenever $p_i > p_j$ (or $p_i = p_j$ and $d_i \leq d_j$). Let also
$([1],[2],\dots,[n])$ be an~EDD (Earliest Due Date first) sequence, where $i < j$
whenever $d_{[i]} < d_{[j]}$ (or $d_{[i]} = d_{[j]}$ and $p_{[i]} \leq p_{[j]}$).
As the cost function is a regular performance measure, we know that
in an optimal solution, the jobs are processed with no interruption
starting from time zero.
Let~$B_i$ and~$A_i$ be the sets of jobs that precede and follow job~$i$ in an optimal sequence. 
Consequently, in this optimal sequence being constructed, 
$C_i = \sum_{\ell \in B_i}p_\ell + p_i$.
Similarly, if job $j$ is assigned to position $k$, we denote by $C_i(k)$ the corresponding completion time of $i$
and by $B_i(k)$ and $A_i(k)$ the sets of predecessors and successors of $i$, respectively.

The main known theoretical properties are the following.

\begin{proper}\label{Emmons69}~{\citep{emmons1969one}}
Consider two jobs~$i$ and~$j$ with $p_i < p_j$. Then, $i$ precedes $j$ in an optimal schedule if $d_i
\leq \max\{d_j,C_j\}$. Else $j$ precedes $i$ in an optimal schedule if $d_i + p_i > C_j$.
\end{proper}
\begin{proper}\label{dec1}~{\citep{lawler1977pseudopolynomial}}
Let job~$1$ in LPT order correspond to job~$[k]$ in EDD order.
Then, job~$1$ can be set only in positions $h\geq k$ and
the jobs preceding and following job~$1$ are uniquely determined as
$B_1(h) = \{[1],[2],\dots,[k-1],[k+1],\dots,[!ht]\}$ and $A_1(h) =
\{[h+1],\dots,[n]\}$. 
\end{proper}


\begin{proper}\label{Elim}~{\citep{lawler1977pseudopolynomial,potts1982decomposition,szwarc1993single}}
Consider $C_1(h)$, for $h \geq k$. 
Job~$1$ 
cannot be set in  positions $h \geq k$ if:
\begin{description}
\item[(a)] $C_1(h) \geq d_{[h+1]}$, $\forall h<n$;
\item[(b)] $C_1(h) < d_{[r]}+p_{[r]}$, for some
  $r=k+1,\dots,h$.  
\end{description}
\end{proper}

\begin{proper}\label{Elim3}~{\citep{szwarc1996decomposition}}
For any pair of adjacent positions $(i,i+1)$ that can be assigned to job $1$, at least one of them is eliminated by Property \ref{Elim}.
\end{proper}

In terms of complexity analysis, we recall (see, for instance,~\cite{eppstein2001improved}) 
that, if it is possible to bound above $T(n)$ by a recurrence expression 
of the type~$T(n) \leq \sum_{i=1}^h T(n-r_i) +  \bigo{p(n)}$, then we have~$\sum_{i=1}^h T(n-r_i) +  \bigo{p(n)} = \ostar{\alpha(r_1,\ldots,r_h)^n}$ 
where~$\alpha(r_1,\ldots,r_h)$
is the largest root of the function~$f(x) = 1 - \sum_{i=1}^h x^{-r_i}$. This observation will be useful later on to analyze worst-case complexities. 


\section{A \techbr{} approach}\label{sec:br}\label{sec:prelim}

We first focus on two direct \techbr{} algorithms based on the structural properties stated in the previous section. \techbr{} algorithms, as introduced in section \ref{sss:branching}, are search tree based algorithms which are elaborated upon two main components: a branching strategy and a reduction rule. The later relies on a mathematical condition which enables to take optimal decisions in the worst-case scenario whenever some other decisions have been taken by the branching. For instance, it can be a rule which states that even in the worst-case, if we branch on a job $j$ there always exists another job $i$ which can be scheduled without branching on it. 

\subsection{A first \techbr{} algorithm}
A basic \techbr{} algorithm TTBR1 (Total Tardiness \techbr{} version 1) can be designed by exploiting Property \ref{dec1}, which enables to decompose the problem into two smaller subproblems when the position of the longest job $\ell$ is given (see Figure \ref{fig:dec}). 

\begin{figure}
\centering
\includegraphics[width=0.6\columnwidth]{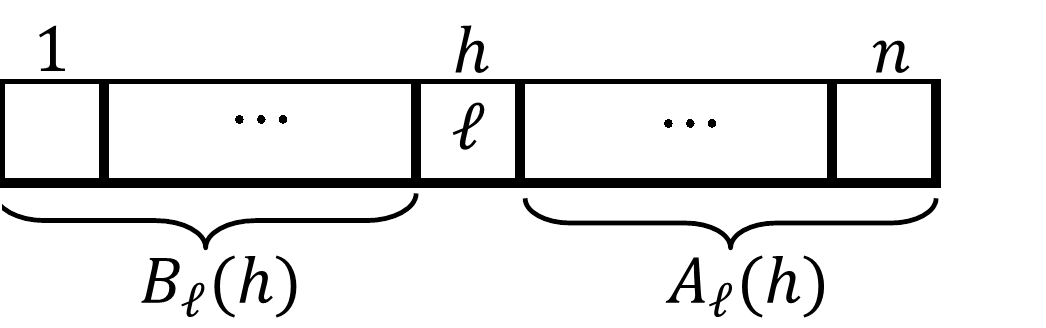}
\caption{The situation when the longest job $\ell$ is put in position $h$}\label{fig:dec}
\end{figure}

The basic idea is to iteratively branch by assigning job $\ell$ to every eligible branching position and correspondingly decomposing the problem. Each time job $\ell$ is assigned to a certain position $i$, two different subproblems are generated, corresponding to schedule the jobs before $\ell$ (inducing subproblem ${B_\ell(i)}$) or after $\ell$ (inducing subproblem ${A_\ell(i)}$), respectively. 
The algorithm operates by applying to any given job set $S$ starting at time $t$
function $TTBR1(S,t)$ that computes the corresponding optimal solution.
With this notation, the original problem is indicated by ${N}=\{1,...,n\}$
and the optimal solution is reached when function $TTBR1(N,0)$ is computed.

The algorithm proceeds by solving the subproblems along the branching tree according to a depth-first strategy and runs until all the leaves of the search tree have been reached. Finally, it provides the best solution found as an output.
Algorithm~\ref{algo:ttbr1} summarizes the structure of this approach, while Proposition~\ref{propos:ttbr1} states its worst-case complexity. Figure \ref{fig:br} illustrates the branching operation of the algorithm.

\begin{algorithm}
\caption{Total Tardiness \techbr{} version 1 (TTBR1)}\label{algo:ttbr1}
\begin{algorithmic}[1]
\Require ${N}=\{1,...,n\}$ is the problem to be solved 
\Function{TTBR1}{${S,t}$}
\State $seqOpt \gets$ the EDD sequence of jobs $S$
\State $\ell \gets$ the longest job in $S$
\For{any eligible position $i$ (Property \ref{dec1})}
    \State Branch by assigning job $\ell$ in position $i$ 
    \State $seqLeft \gets$ TTBR1(${B_\ell(i)},t$)
    \State $seqRight \gets$ TTBR1(${A_\ell(i)},t+\sum_{k \in B_\ell(i)}p_k + p_\ell$)
    \State $seqCurrent \gets$ concatenation of $seqLeft$, $\ell$ and $seqRight$
    \State $seqOpt \gets$ best solution between $seqOpt$ and $seqCurrent$
\EndFor
\State  \Return $seqOpt$
\EndFunction
\end{algorithmic}
\end{algorithm}

\begin{figure}
\centering
\includegraphics[width=0.9\columnwidth]{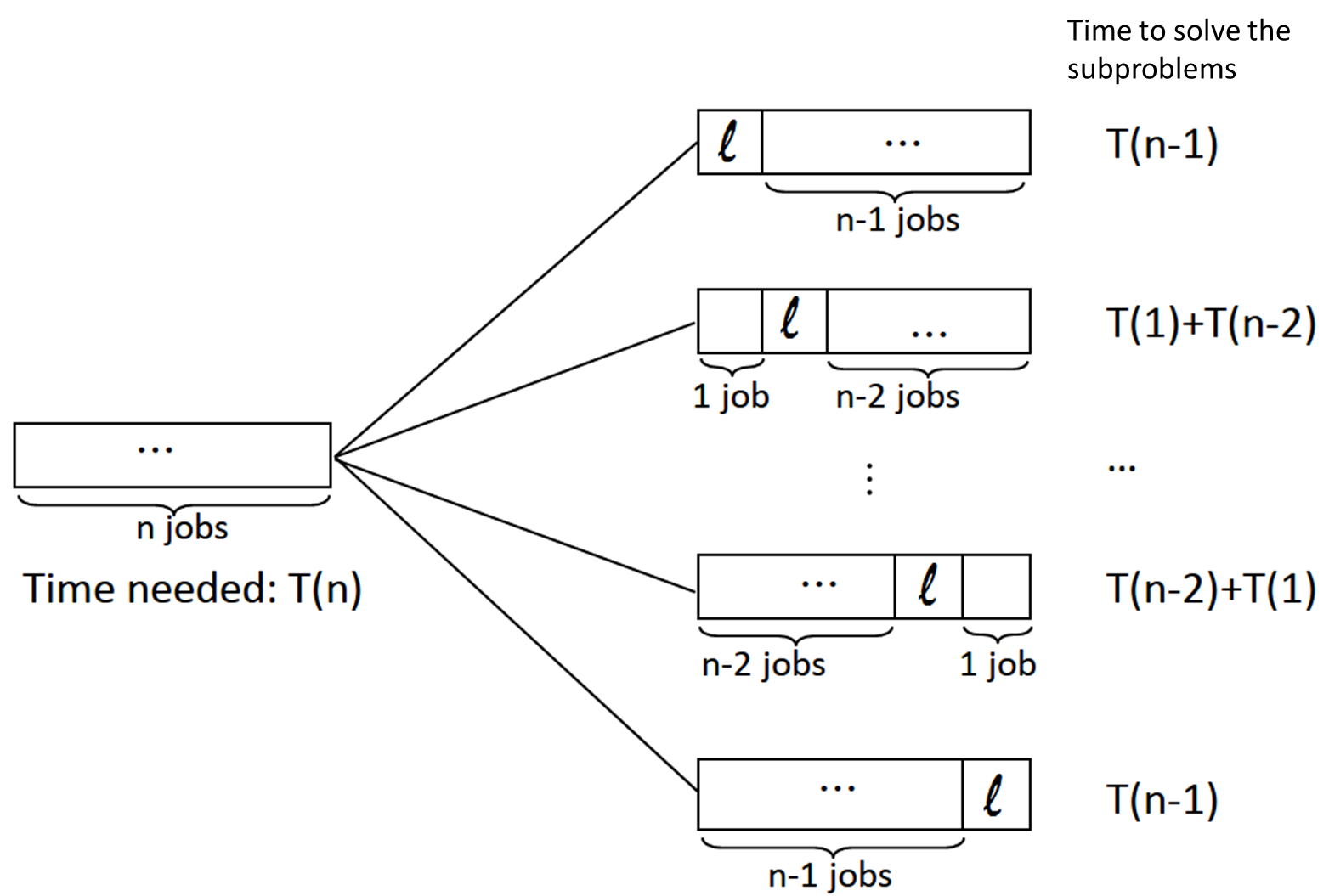}
\caption{Branching scheme of TTBR1}\label{fig:br}
\end{figure}

\begin{propos}\label{propos:ttbr1}
Algorithm TTBR1 runs in $\ostar{3^n}$ time and polynomial space in the worst case.
\end{propos}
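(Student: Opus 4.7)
The approach is to write down the recurrence induced by the branching scheme and then solve it by direct induction, absorbing polynomial factors into the $\ostar{\cdot}$ notation.

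\noindent\textbf{Step 1: Setting up the recurrence.} Consider a call $TTBR1(S,t)$ on a jobset of size $|S| = n$. The preprocessing (computing the EDD order, identifying the longest job $\ell$, computing its EDD position $k$, and screening positions via Property \ref{Elim}) runs in polynomial time. For each eligible position $h \in \{k, k+1, \dots, n\}$ the algorithm recurses on $B_\ell(h)$ and $A_\ell(h)$, whose sizes are $h-1$ and $n-h$ respectively. By Property \ref{dec1} these sets partition $S \smallsetminus \{\ell\}$, so their sizes sum to $n-1$. Hence the worst-case running time satisfies
\begin{equation*}
T(n) \;\leq\; \sum_{h=k}^{n}\bigl[T(h-1) + T(n-h)\bigr] + \bigo{\mathrm{poly}(n)} \;\leq\; 2\sum_{i=0}^{n-1} T(i) + \bigo{\mathrm{poly}(n)},
\end{equation*}
where the last inequality simply enlarges the range of $h$ to $\{1,\dots,n\}$ and pairs each position with its complementary size.

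\noindent\textbf{Step 2: Solving the recurrence.} Because this is not a standard ``branching vector'' recurrence with a fixed number of recursive calls, the argument from Section \ref{sec:ch1:algo} does not apply verbatim; instead the claim is proved by induction on $n$. I would show that there exists a polynomial $q$ and a constant $c_0$ such that $T(n) \leq c_0\, q(n)\, 3^n$ for all $n$. The inductive step reads
\begin{equation*}
T(n) \;\leq\; 2\sum_{i=0}^{n-1} c_0\, q(i)\, 3^i + \bigo{\mathrm{poly}(n)} \;\leq\; 2 c_0\, q(n) \cdot \tfrac{3^n-1}{2} + \bigo{\mathrm{poly}(n)} \;=\; c_0\, q(n)(3^n - 1) + \bigo{\mathrm{poly}(n)},
\end{equation*}
which is bounded by $c_0\, q(n)\, 3^n$ once the polynomial overhead is absorbed (enlarging $q$ or $c_0$ if necessary to handle small $n$). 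In $\ostar{\cdot}$ notation this gives $T(n) = \ostar{3^n}$.

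\noindent\textbf{Step 3: Space complexity.} Since TTBR1 explores the branching tree in a depth-first manner, at any point during execution only one root-to-leaf path is active on the stack. The depth of the recursion is at most $n$ (each recursive call strictly reduces the size of the jobset), and every stack frame stores only polynomially many objects: the EDD order of the current subset, the current best sequence, the position index $h$ and partial completion times. Hence the total space usage is $\bigo{\mathrm{poly}(n)}$.

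\noindent\textbf{Main obstacle.} The only genuine subtlety is that the recurrence has $\Theta(n)$ recursive calls rather than a constant number, so one cannot simply invoke the characteristic-root formula; the induction must be done by hand and the polynomial overhead tracked carefully. Once that is settled, the geometric-sum collapse $2\sum_{i=0}^{n-1} 3^i = 3^n - 1$ is what makes the base $3$ appear. A possible refinement, which I would mention but not pursue for this proposition, is that Property \ref{Elim3} halves the number of eligible positions and therefore could be used to tighten the constant hidden in $\ostar{\cdot}$, although not the base $3$.
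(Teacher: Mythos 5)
Your proof is correct and rests on the same decomposition as the paper's: branching job $\ell$ into position $h$ yields subproblems of sizes $h-1$ and $n-h$, giving $T(n) \leq 2\sum_{i=1}^{n-1}T(i) + \bigo{p(n)}$ in the worst case where all positions are eligible. The only divergence is in how the recurrence is solved. The paper writes the same recurrence for $T(n-1)$, subtracts, and collapses everything to $T(n) = 3T(n-1) + \bigo{p(n)}$, which then falls under the standard characteristic-root machinery; you instead prove $T(n) \leq c_0\,q(n)\,3^n$ by direct induction via the geometric sum $2\sum_{i=0}^{n-1}3^i = 3^n-1$. Both are valid and the base $3$ emerges from the same arithmetic fact; your route is more self-contained (it does not need the telescoping trick or the recurrence-solving lemma quoted in Section~\ref{sec:prelim}), at the cost of having to track the polynomial overhead explicitly, which you do correctly. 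Two small inaccuracies in asides, neither affecting the argument: TTBR1 does not screen positions via Property~\ref{Elim} (that is TTBR2), and your closing remark that Property~\ref{Elim3} would not change the base is wrong --- incorporating it is exactly what lowers the base to $1+\sqrt{2}$ in Proposition~\ref{propos:ttbr2}.
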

\begin{proof}
According to Property \ref{dec1}, the worst-case scenario arises when the ordering LPT is the same as EDD, in which case all positions are eligible for branching on the longest job. 
Whenever the longest job $1$ is assigned to the first and the last position of the sequence, two subproblems of size $(n-1)$ are generated. 
For each $2 \leq i \leq n-1$, two subproblems with size $(i-1)$ and $(n-i)$ are generated.
Hence, the total number of generated subproblems is $(2n-2)$ and the time cost related to computing the best solution of size $n$ starting from these subproblems is $\bigo{p(n)}$.
This induces the following recurrence for the running time $T(n)$ required by TTBR1:
\begin{equation}
\label{eq:tn}
T(n) = 2T(n-1) + 2T(n-2) +... + 2T(2) + 2T(1) + \bigo{p(n)}
\end{equation}
By replacing $n$ with $(n-1)$, the following expression is derived: 
\begin{equation}
\label{eq:tn1}
T(n-1) = 2T(n-2) +... + 2T(2) + 2T(1) + \bigo{p(n-1)}
\end{equation}
Expression~\ref{eq:tn1} can be used to simplify the right hand side of expression~\ref{eq:tn} leading to:
\begin{equation}
\label{eq:tn1b}
T(n) = 3 T(n-1) + \bigo{p(n)}.
\end{equation}
By solving the corresponding equation $x^n={3}{x^{n-1}}$ and take its root as the exponential base, this recurrence yields the result $T(n) = \ostar{3^n}$. 
The space requirement is polynomial since the search tree is explored according to a depth-first strategy. 
\end{proof}
\bigskip

\subsection{A second \techbr{} algorithm}
An improved version of the algorithm TTBR1 is defined by taking into account Property~\ref{Elim} and Property~\ref{Elim3}, which state that for each pair of adjacent positions $(i,i+1)$, at least one of them can be discarded.
The worst case occurs when the largest possible subproblems are kept when branching, since otherwise the complexity can be easily proved as smaller.
This corresponds to solving problems with size $(n-1),(n-3),(n-5), \ldots$, that arise by branching in positions $i$ and $(n-i+1)$ with $i$ odd.
The resulting algorithm is referred to as TTBR2 (Total Tardiness Branch and Reduce version 2). Its structure is similar to the one of TTBR1 depicted in Algorithm~\ref{algo:ttbr1}, but lines 5-9 are executed only when $\ell$ can be set in position $i$ according to Property~\ref{Elim}. The complexity of the algorithm is discussed in Proposition~\ref{propos:ttbr2}.

\begin{propos}\label{propos:ttbr2}
Algorithm TTBR2 runs in $\ostar{(1+\sqrt{2})^n}  = \ostar{2.4143^n}$ time and polynomial space in the worst case.
\end{propos}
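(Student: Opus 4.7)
The plan is to extend the recurrence-based analysis used for TTBR1 (Proposition~\ref{propos:ttbr1}), using the additional pruning enforced by Property~\ref{Elim3}. As in the TTBR1 analysis, the worst case arises when the LPT and EDD orders coincide, so that every position is eligible by Property~\ref{dec1}; Property~\ref{Elim3} then eliminates at least one of every two adjacent positions, leaving at most $\lceil n/2 \rceil$ surviving branching positions, which form an independent set in the path on $\{1,\ldots,n\}$. Branching at any such position $i$ produces two subproblems of sizes $(i-1)$ and $(n-i)$.

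In the worst case I would take the surviving positions to be the odd indices $1,3,5,\ldots$, since the symmetry of the pair $\{(i-1),(n-i)\}$ and the monotonicity of $T$ imply that this choice dominates any other independent set of the same size. Plugging these positions into the same symmetric counting as in the proof of Proposition~\ref{propos:ttbr1}, and using monotonicity of $T$ to absorb every ``small-side'' contribution $T(i-1)$ into its paired ``large-side'' $T(n-i)$, one obtains
\[
T(n) \le 2T(n-1) + 2T(n-3) + 2T(n-5) + \cdots + \bigo{p(n)},
\]
which mirrors the TTBR1 recurrence but with only every second term surviving, reflecting the fact that Property~\ref{Elim3} removes about half of the branchings. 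The same simplification trick as in the TTBR1 proof then applies: writing the analogous inequality at size $n-2$,
\[
T(n-2) \le 2T(n-3) + 2T(n-5) + \cdots + \bigo{p(n-2)},
\]
and substituting this tail sum into the first expression, yields the compact inequality $T(n) \le 2T(n-1) + T(n-2) + \bigo{p(n)}$. Solving the corresponding characteristic equation $x^2 = 2x + 1$ gives positive root $x = 1 + \sqrt{2}$, so $T(n) = \ostar{(1+\sqrt{2})^n} = \ostar{2.4143^n}$. Polynomial space is immediate since TTBR2, like TTBR1, explores the search tree depth-first and keeps only a linear amount of state on the recursion stack.

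The step I expect to require most care is the derivation of the ``every-other-term'' inequality above. A literal enumeration of the subproblems produced by branching at the odd positions gives $\sum_{k\ge 0}[T(2k)+T(n-1-2k)]$, in which every size from $0$ to $n-1$ appears and the parity of $n$ matters; obtaining the clean form $2T(n-1)+2T(n-3)+\cdots$ requires combining monotonicity of $T$ with the symmetric pairing of mirror-positions $i$ and $n-i+1$ (which, for odd $i$ and odd $n$, share the same sub-problem sizes). An alternative that sidesteps this rewriting is to prove the final bound directly by strong induction on $n$: assuming $T(m) \le c\,(1+\sqrt{2})^m$ for all $m<n$, one treats the two parity cases of the exact recurrence separately and verifies that both are implied by the characteristic inequality $x^2 \ge 2x + 1$ at $x = 1 + \sqrt{2}$.
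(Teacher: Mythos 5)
Your proposal is correct and follows essentially the same route as the paper: the worst case is taken to be $LPT=EDD$ with the surviving branching positions being every other index, giving the recurrence $T(n)\le 2T(n-1)+2T(n-3)+\cdots+\bigo{p(n)}$, which telescopes via the shifted inequality at $n-2$ to $T(n)\le 2T(n-1)+T(n-2)+\bigo{p(n)}$ with characteristic root $1+\sqrt{2}$. Your extra care about the ``every-other-term'' form is unnecessary but harmless: for $n$ odd and odd surviving positions, the mirror pairing of positions $i$ and $n-i+1$ makes the sum $\sum_{i\ \mathrm{odd}}\left[T(i-1)+T(n-i)\right]$ exactly equal to $2\left[T(n-1)+T(n-3)+\cdots+T(2)\right]$ up to the trivial $T(0)$ terms, with no appeal to monotonicity required.
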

\begin{proof}
The proof is close to that of Proposition~\ref{propos:ttbr1}. We refer 
to problems where $n$ is odd, but the analysis for
$n$ even is substantially the same. The algorithm induces a recursion of the type:
\begin{equation}
\label{eq:p2tn1}
T(n) = 2T(n-1) + 2T(n-3) +... + 2T(4) + 2T(2) + \bigo{p(n)}
\end{equation}
as the worst case occurs when we keep the branches that induce the largest possible subproblems. 
Analogously to Proposition~\ref{propos:ttbr1}, we replace $n$ by $(n-2)$ in the previous recurrence and we obtain: 

\begin{equation}
\label{eq:p2tn2}
T(n-2) =  2T(n-3) + 2T(n-5) +... + 2T(4) + 2T(2) + \bigo{p(n-2)}.
\end{equation}

Again, we plug the latter expression  into the former one and obtain the recurrence:

\begin{equation}
\label{eq:tn2b}
T(n) = 2T(n-1) + T(n-2)  + \bigo{p(n)}.
\end{equation}
By solving the corresponding equation $x^n=2{x^{n-1}}+x^{n-2}$ and take its largest positive root as the exponential base, the recurrence  induces  
$T(n)=\ostar{(1+\sqrt{2})^n} = \ostar{2.4143^n}$.  The space complexity remains polynomial as for TTBR1. 
\end{proof}
\bigskip

\section{A \techbm{} Algorithm}\label{sec:bm}

In this section, we describe how to get an algorithm running with complexity arbitrarily close to $\ostar{2^n}$ in time and polynomial space by integrating a node-merging procedure into TTBR1. The idea of such a procedure comes from the observation that in the search tree created by the branching scheme (Property \ref{dec1}), a lot of identical subproblems are explored for nothing. 

We recall that in TTBR1 the branching scheme is defined by assigning the longest unscheduled job to each available position and accordingly divide the problem into two subproblems. 
To facilitate the description of the algorithm, we focus on the scenario where the LPT sequence $(1,...,n)$ coincides with the EDD sequence $([1],...,[n])$, for convenience we write $LPT=EDD$. 

We provide the algorithmic details of the node-merging procedure on this scenario to facilitate its understanding. The resulting branch-and-merge algorithm has its time complexity tend to $\ostar{2^n}$. We prove by Lemma~\ref{lemma:BMWC} that the case where $LPT=EDD$ is the worst-case scenario, hence, it follows that the problem \pbtt{} can be solved in time complexity tending to $\ostar{2^n}$. We leave to the reader the generalization of the node-merging procedure to the general case.

Figure~\ref{fig:scheme0} shows how an input problem $\{1,...,n\}$ is decomposed by the branching scheme of TTBR1. Each node is labelled by the corresponding subproblem $P_j$ ($P$ denotes the input problem).
Notice that from now on $P_{j_1,j_2,...,j_k}, 1\leq k\leq n$, denotes the problem (corresponding to a node in the search tree) induced by the
branching scheme of TTBR1 when the largest processing time job $1$ is in position $j_1$,
the second largest processing time job $2$ is in position $j_2$ and so on till
the k-th largest processing time job $k$ being placed in position $j_k$. {Notice that $P_{j_1,j_2,...,j_k}$ results from a series of branching operations according to Property \ref{dec1}, and may contain more than one subproblems to be solved. Consider the problem $P_3$ illustrated in Figure \ref{fig:scheme0}: both jobsets $\{2,3\}$ and $\{4,..,n\}$ are to be scheduled and can be considered as two independents subproblems. Nevertheless, the notation $P_{j_1,j_2,...,j_k}$ can largely simplify the presentation of our algorithm and the fact that a node may contain more than one subproblems should not cause ambiguity to our belief. Surely, this is considered when analyzing the complexity of the algorithms.}

\begin{figure}[!ht]
\includegraphics[width=\columnwidth]{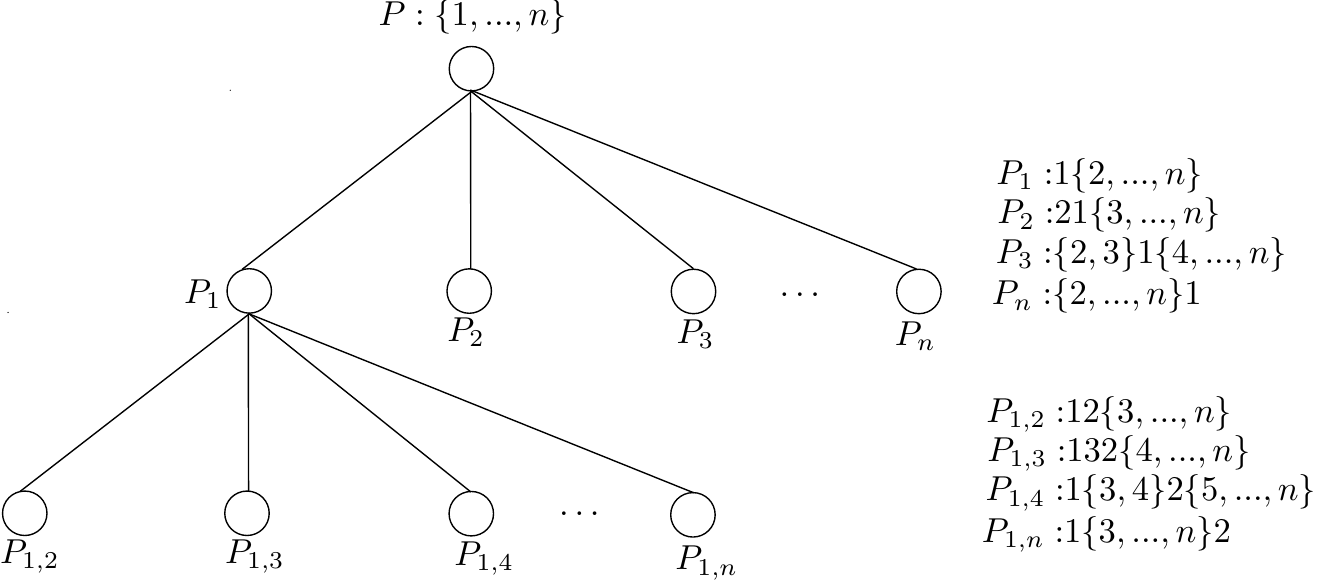}
\caption{The branching scheme of TTBR1 at the root node}
\label{fig:scheme0}
\end{figure}

To roughly illustrate the guiding idea of the merging technique introduced in this section, consider Figure \ref{fig:scheme0}. 
Noteworthy, nodes $P_2$ and $P_{1,2}$ are identical except for the initial 
subsequence ($21$ vs $12$).  This fact implies, in this particular case, that the problem of scheduling job set $\{3,...,n\}$ at time $p_1+p_2$ is solved twice. This kind of redundancy can however be eliminated by merging node $P_2$ with node $P_{1,2}$ and creating a single node in which the best sequence among $21$ and $12$ is scheduled at the beginning and the job set $\{3,...,n\}$, starting at time $p_1+p_2$, remains to be branched on. Furthermore, the best subsequence (starting at time $t=0$) between $21$ and $12$
can be computed in constant time. 
Hence, the node created after the merging operation involves a constant time preprocessing step plus the search for the optimal solution of job set $\{3,...,n\}$ to be processed starting at time $p_1+p_2$. 
We remark that, in the branching scheme of TTBR1, for any constant $k \geq 3$, the branches corresponding to $P_i$ and $P_{n-i+1}$, with $i = 2, ..., k$, are decomposed into two problems where one subproblem has size $(n-i)$ and the other problem has size $(i-1) \leq k$.
Correspondingly, the merging technique presented on problems $P_2$ and $P_{1,2}$ can be generalized to all branches inducing subproblems of size less than $k$. Notice that, by means of algorithm TTBR2, any problem of size less than $k$ requires at most  $\ostar{2.4143^k}$ time to be solved (which is constant time when $k$ is fixed). This is the central idea of the merging procedure: for a given value of $k$ as an input of the algorithm, merge nodes which involve the same subproblems of size less than $k$. As $k$ is a constant, the merging can be done in constant time.

In the remainder of the paper, for any constant $k\leq \frac{n}{2}$, we denote by left-side branches
the search tree branches corresponding to problems $P_1,...,P_k$ and by right-side branches the ones corresponding to problems $P_{n-k+1},...,P_{n}$.

In the following subsections, we show how the node-merging procedure can be systematically performed to improve the time complexity of TTBR1. Basically, two different recurrent structures hold respectively for left-side and right-side branches and allow to generate less subproblems at each level of the search tree. 
 The node-merging mechanism is described by means of two distinct procedures, called {\ttfamily LEFT\_MERGE} (applied to left-side branches) and {\ttfamily RIGHT\_MERGE} (applied to right-side branches), which are discussed in sections~\ref{sec:leftsection} and~\ref{sec:rightsection}, 
respectively. The final \techbm{} algorithm is described in section~\ref{sec:thealgorithm} and embeds both procedures into the structure of TTBR1.

\subsection{Merging left-side branches}
\label{sec:leftsection}

We  first illustrate the merging operations on the root node. The following proposition highlights two properties of problems $P_{j}$ and $P_{1,j}$ with $2 \leq j \leq k$.

\begin{lemma}\label{propos:LMLink}
For a pair of problems $P_{j}$ and $P_{1,j}$ with $2 \leq j \leq k$, the following conditions hold:
\begin{enumerate}
\item The solution of problems $P_{j}$ and $P_{1,j}$ involves the solution of a common subproblem which consists in scheduling job set $\{j+1,..., n\}$ starting at time $t = \sum_{i = 1}^jp_i$.
\item Both in $P_{j}$ and $P_{1,j}$, at most $k$ jobs have to be scheduled before job set $\{j+1,...,n\}$.
\end{enumerate}
\end{lemma}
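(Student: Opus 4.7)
The plan is to unfold each of the two problems via the decomposition Property \ref{dec1} and then directly compare the job sets and starting times that remain to be scheduled. Throughout I work under the worst-case regime $\text{LPT}=\text{EDD}$ announced at the beginning of Section \ref{sec:bm}, so the EDD-indexing coincides with the LPT-indexing and the predecessor/successor sets $B_1(h),A_1(h)$ of Property \ref{dec1} reduce to simple prefixes and suffixes of $\{1,\dots,n\}$.

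First I would treat $P_j$. A single application of Property \ref{dec1} to the root, with the longest job $1$ placed in position $j$, gives $B_1(j)=\{2,\dots,j\}$ to be sequenced in positions $1,\dots,j-1$, job $1$ in position $j$, and $A_1(j)=\{j+1,\dots,n\}$ to be sequenced starting at time $p_1+\sum_{i=2}^{j}p_i=\sum_{i=1}^{j}p_i$. Next I would treat $P_{1,j}$ by applying Property \ref{dec1} twice: once at the root to place job $1$ in position $1$, which leaves the subproblem of sequencing $\{2,\dots,n\}$ starting at time $p_1$ with job $2$ now playing the role of longest job; and a second time, inside that subproblem, to place job $2$ at overall position $j$ (equivalently position $j-1$ of the subproblem). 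This yields jobs $\{3,\dots,j\}$ in positions $2,\dots,j-1$, job $2$ in position $j$, and once again $\{j+1,\dots,n\}$ to be sequenced starting at time $p_1+p_2+\sum_{i=3}^{j}p_i=\sum_{i=1}^{j}p_i$. Comparing the two unfoldings establishes part (1): the residual subproblem is in both cases the same instance, namely scheduling $\{j+1,\dots,n\}$ with start time $\sum_{i=1}^{j}p_i$.

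Part (2) then follows from a direct count. The jobs scheduled before the residual subproblem are exactly $\{1,2,\dots,j\}$ in both $P_j$ and $P_{1,j}$; their number is therefore $j$, and since $j\le k$ by hypothesis this is at most $k$.

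The main subtlety I anticipate is justifying that Property \ref{dec1} is legitimately applicable at the second branching inside $P_{1,j}$ with the advertised decomposition, i.e.\ that job $2$ is eligible for position $j$ relative to the EDD sequence of the residual instance $\{2,\dots,n\}$. Under the assumption $\text{LPT}=\text{EDD}$ this is straightforward: deleting job $1$ preserves the $\text{LPT}=\text{EDD}$ property on $\{2,\dots,n\}$, so the value of $k$ appearing in Property \ref{dec1} applied to this residual instance equals $1$ and every position $h\ge 1$ is eligible for job $2$. I would state this observation explicitly at the beginning of the proof so that the two unfoldings above go through cleanly.
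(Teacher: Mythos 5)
Your proof is correct and follows essentially the same route as the paper's: both identify the structures $\{2,\dots,j\}1\{j+1,\dots,n\}$ and $1\{3,\dots,j\}2\{j+1,\dots,n\}$ and read off the common residual subproblem and the count of at most $k$ prefixed jobs. You merely make explicit the two applications of Property~\ref{dec1} (and the preservation of $LPT=EDD$ on the residual instance) that the paper's proof treats as ``straightforward''.
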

\begin{proof}
As problems $P_{j}$ and $P_{1,j}$  are respectively defined by  $\{2,...,j\}1\{j+1,...,n\}$ and $1\{3,...,j\}2\{j+1,...,n\}$, the first part of the property is straightforward.\\
The second part can be simply established by counting the number of jobs to be scheduled before job set $\{j+1,..., n\}$ when $j$ is maximal, {\it i.e.} when $j=k$. In this case, job set $\{k+1,...,n\}$ has $(n-k)$ jobs which implies that $k$ jobs remain to be scheduled before that job set. 
\end{proof}

Each pair of problems indicated in Proposition \ref{propos:LMLink} can be merged as long as they share the same subproblem to be solved. More precisely, $(k-1)$ problems $P_{j}$
(with $2\leq j \leq k $) can be merged with the corresponding problems $P_{1,j}$.

\begin{figure}[!ht]
\begingroup
\thickmuskip=0mu
\medmuskip=0mu
\centering
        \begin{subfigure}[!ht]{\textwidth}
                \centering
                
                \includegraphics[width=\columnwidth]{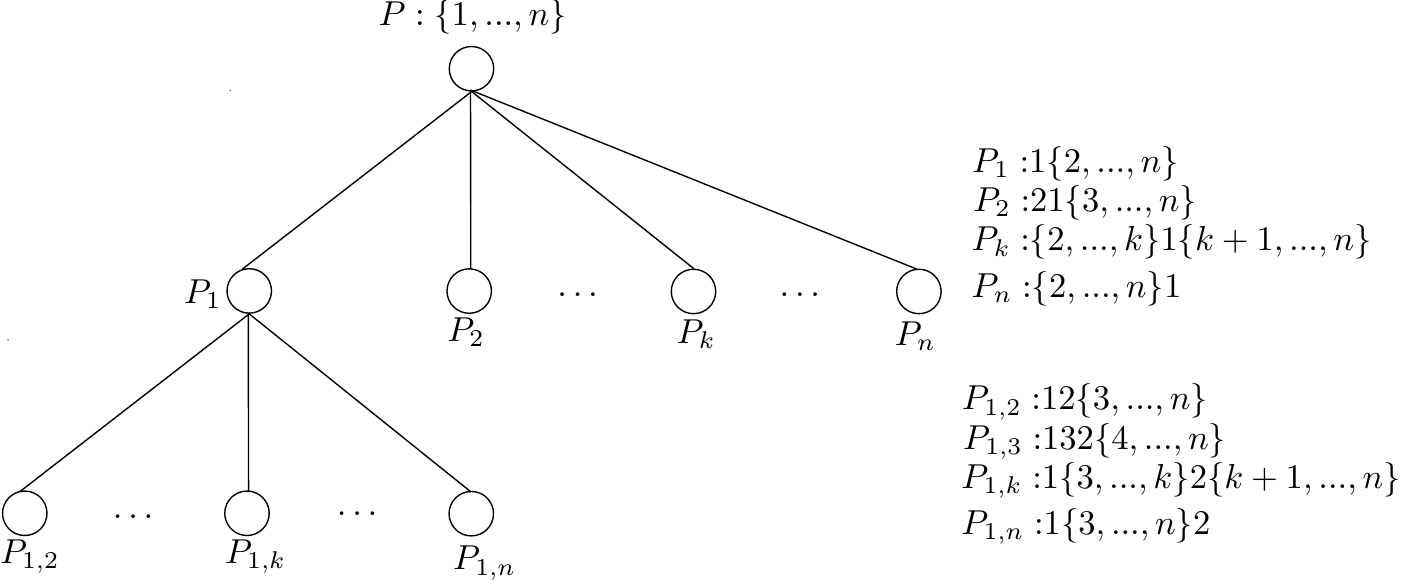}
                \caption{Left-side branches of $P$ before performing the merging operations}
                \label{fig:gull}
        \end{subfigure}%
        
\vspace{0.5cm}
        \begin{subfigure}[!ht]{\textwidth}
                \centering
                \includegraphics[width=\columnwidth]{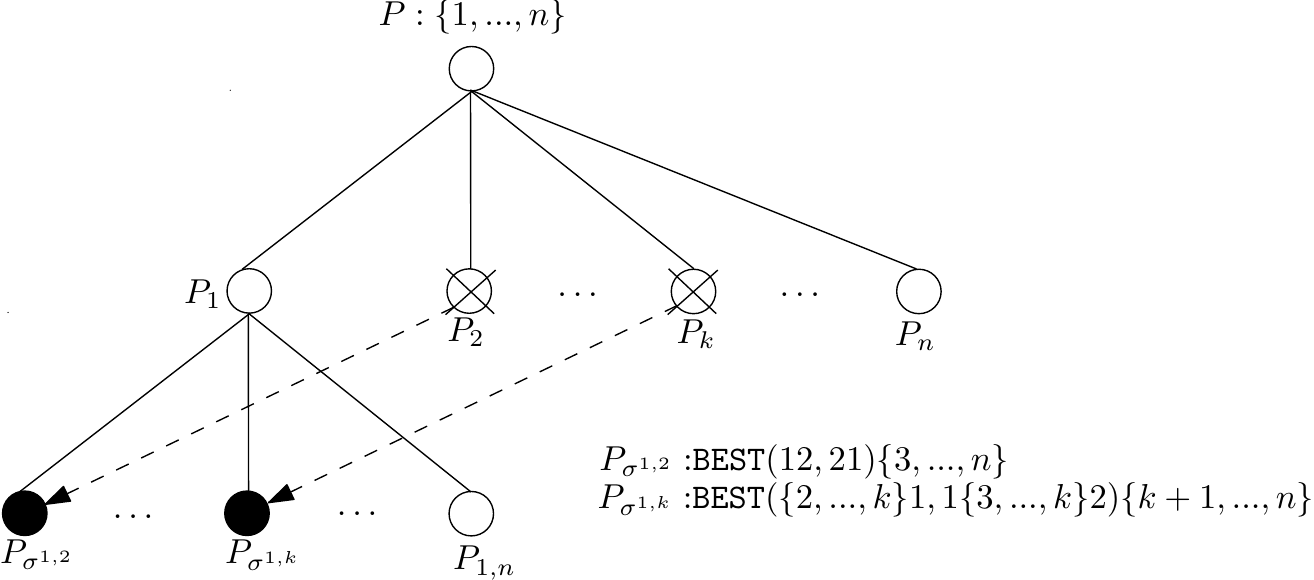}
                \caption{Left-side branches of $P$ after performing the merging operations. \\$\texttt{BEST}(\alpha,\beta)$ returns the best sequence among $\alpha$ and $\beta$.}
                \label{fig:gull2}
        \end{subfigure}%
\caption{Left-side branches merging at the root node}
\label{fig:ex_merging}
\endgroup
\end{figure}


Figure~\ref{fig:ex_merging} illustrates the merging operations performed at the root node on its left-side branches, by showing the branch tree before and after (Figure~\ref{fig:gull} and Figure~\ref{fig:gull2}) such merging operations. For any given $2 \leq j \leq k$, problems $P_j$ and $P_{1,j}$
share the same subproblem $\{j+1,...,n\}$ starting at time $t = \sum_{i =1}^j p_i$. Hence, by merging the left part of both problems which is constituted by job set $\{1,...,j\}$ having size $j\leq k$,
we can delete node $P_j$ and replace node $P_{1,j}$ in the search tree by the node $P_{\sigma^{1,j}}$ which is defined as follows (Figure \ref{fig:gull2}):  
\begin{itemize}
\item $\{j+1,...,n\}$ is the set of jobs on which it remains to branch.
\item Let $\sigma^{1,j}$ be the sequence of branching positions on which the $j$ longest jobs $1,...,j$ are branched, that leads to the best jobs permutation between $\{2,...,j\}1 $ and $ 1\{3,...,j\}2$. More concretely, the subproblem involving  $\{2,...,j\}$ from $\{2,...,j\}1 $ and the subproblem involving $\{3,...,j\}$ from $ 1\{3,...,j\}2$ must be solved (by calling TTBR2, for instance) first in order to obtain the optimal permutation of $\{2,...,j\}1 $ and $ 1\{3,...,j\}2$. 
This involves the solution of two problems of size at most $(k-1)$ (in $\ostar{2.4143^{k}}$ time by TTBR2) and the comparison of the total tardiness value of the two sequences obtained. 
\end{itemize}

In the following, we describe how to apply analogous merging operations on any node of the tree. With respect to the root node, the only additional consideration is that the children nodes of an arbitrary node may have already been affected by a previous merging. 
In Figure \ref{fig:gull2}, this is for instance the case for $P_\sigma^{1,3}$ which can be merged with $P_{\sigma^{1,2},3}$, a child node of $P_\sigma^{1,2}$ (if $k\geq 3$).

In order to define the branching scheme used with the \texttt{LEFT\_MERGE} procedure, a data structure $\mathcal{L}_{\sigma}$ is associated to a problem $P_\sigma$. It represents a list of $(k-1)$ subproblems that result from a previous merging and are now the first $(k-1)$ children nodes of $P_\sigma$. 
When $P_\sigma$ is created by branching, $\mathcal{L}_{\sigma} = \emptyset$. When a merging operation sets the first $(k-1)$ children nodes of $P_\sigma$ to $P_{\sigma^1},...,P_{\sigma^{k-1}}$, we set $\mathcal{L}_{\sigma} = \{P_{\sigma^1},...,P_{\sigma^{k-1}}\}$. This list if used to memorize the fact that the first $(k-1)$ child nodes of $P_\sigma$ will not have to be built by a branching operation. As a conclusion, the following branching scheme for an arbitrary node of the tree holds.

\begin{definition}
\label{def:leftscheme}
The branching scheme for an arbitrary node $P_\sigma$ is defined as follows:
\begin{itemize}
\item If $\mathcal{L}_{\sigma} = \emptyset$, use the branching scheme of TTBR1;
\item If $\mathcal{L}_{\sigma} \neq \emptyset$, extract problems from $\mathcal{L}_{\sigma}$ as the first $(k-1)$ branches, then branch on the longest job in $P_\sigma$ in the available positions from the $k$-th to the last according to Property~\ref{dec1}.
\end{itemize}
This branching scheme, whenever necessary, will be referred to as \textbf{improved branching}. Note that this will be extended later when ``merging right-side nodes'' is introduced.
\end{definition}

Before describing how merging operations can be applied on an arbitrary node $P_\sigma$, we highlight its structural properties by means of Proposition~\ref{propos:psigmaprop}.

\begin{propos}\label{propos:psigmaprop}
Let $P_{\sigma}$ be a problem to branch on, and $\sigma$ be the permutation of positions assigned to jobs $1,\ldots,|\sigma|$, with $\sigma$ empty if no positions are assigned. The following properties hold:
\begin{enumerate}
\item $j^*=|\sigma|+1$ is the job to branch on,
\item \par{$j^*$ can occupy in the branching process, positions $\{\ell_{b},\ell_{b}+1,\ldots,\ell_{e}\}$, where
 \[\ell_{b}=
    \begin{cases}
    |\sigma|+1 & \text{if $\sigma$ is a permutation of $\{1,\ldots,|\sigma\}$ or $\sigma$ is empty}\\
    \rho_1+1  &\text{otherwise}
    \end{cases}\] with $\rho_1=\max\{i: i> 0, \text{ positions $1,\ldots,i$ are in $\sigma$}\}$ 
 and \[\ell_{e}=
    \begin{cases}
    n & \text{if $\sigma$ is a permutation of $\{1,\ldots,|\sigma|\}$ or $\sigma$ is empty}\\
    \rho_2-1  &\text{otherwise}
    \end{cases}\]} with $\rho_2=\min\{i:\text{ $i>\rho_1$, $i\in\sigma$}\}$
\end{enumerate}
\end{propos}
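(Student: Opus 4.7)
The plan is to prove both parts simultaneously by induction on $|\sigma|$, maintaining a structural invariant on the unfilled positions. The invariant I would use is: at any branching node $P_\sigma$, the unfilled positions form a sequence $B_1 < B_2 < \cdots < B_s$ of maximal intervals, where each $B_t = \{a_t, \ldots, b_t\}$ is to be filled by a consecutive block of jobs $\{c_t, c_t+1, \ldots, c_t + b_t - a_t\}$ with $c_1 = |\sigma|+1$ and $c_{t+1} = c_t + (b_t - a_t) + 1$. The invariant is preserved under a single branching by Property \ref{dec1}: within a block $\{a, \ldots, b\}$ containing jobs $\{c, \ldots, c+b-a\}$, the smallest-index (hence longest) job $c$ is placed at some position $i \in \{a, \ldots, b\}$, and the block is split into two smaller blocks $\{a, \ldots, i-1\}$ with jobs $\{c+1, \ldots, c+i-a\}$ and $\{i+1, \ldots, b\}$ with jobs $\{c+i-a+1, \ldots, c+b-a\}$, while the remaining blocks are untouched; the consecutiveness of both positions and jobs on each side is thereby preserved.

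From this invariant, Part 1 follows immediately: at $P_\sigma$ the next branching operation occurs within the current (leftmost unsolved) subproblem $B_1$, whose longest job is $c_1 = |\sigma|+1$, so $j^* = |\sigma|+1$.

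For Part 2, it remains to translate $B_1$ in terms of $\sigma$. If $\sigma$ is empty or is a permutation of $\{1, \ldots, |\sigma|\}$, then positions $\{1, \ldots, |\sigma|\}$ are exactly the filled ones and $B_1 = \{|\sigma|+1, \ldots, n\}$, giving $\ell_b = |\sigma|+1$ and $\ell_e = n$. Otherwise, with $\rho_1$ and $\rho_2$ as defined in the proposition, positions $1, \ldots, \rho_1$ are all in $\sigma$, position $\rho_1+1$ is not (by maximality of $\rho_1$), and $\rho_2$ is the smallest filled position strictly greater than $\rho_1$; hence the leftmost unfilled block is exactly $\{\rho_1+1, \ldots, \rho_2-1\}$, matching $\ell_b = \rho_1+1$ and $\ell_e = \rho_2-1$.

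The main obstacle will be stating the block invariant cleanly and verifying its preservation under the degenerate cases $i = a$ or $i = b$, where one of the two new blocks is empty and must be discarded without disturbing the indexing of the remaining blocks. Beyond this edge-case bookkeeping, the inductive step is essentially a direct reading of Property \ref{dec1}.
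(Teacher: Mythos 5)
Your proof is correct and follows essentially the same route as the paper's: both arguments rest on the observation that, under the assumption $LPT=EDD$, Property \ref{dec1} always splits a consecutive range of free positions carrying a consecutive run of job indices into two such ranges, so the longest unscheduled job $j^*=|\sigma|+1$ necessarily lies in the leftmost free range and, being both LPT-first and EDD-first there, may occupy any of its positions. The paper asserts this structure informally and illustrates it on the example $P_{1,9,2,8}$, whereas you state the block invariant explicitly and prove it by induction; the remaining bookkeeping you flag (empty sub-blocks when $i=a$ or $i=b$, and the convention $\rho_1=0$ when position $1$ is free) is routine and does not affect the argument.
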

\begin{proof}
According to the definition of the notation $P_\sigma$, $\sigma$ is a sequence of positions that are assigned to the longest $|\sigma|$ jobs. Since we always branch on the longest unscheduled job, the first part of the proposition is straightforward.
The second part aims at specifying the range of positions that job $j^*$ can occupy. Two cases are considered depending on the content of $\sigma$: 
\begin{itemize}
\item If $\sigma$ is a permutation of $\{1,\ldots,|\sigma|\}$, it means that the longest $|\sigma|$ jobs are set on the first $|\sigma|$ positions, which implies that the job $j^*$ should be branched in positions $|\sigma|+1$ to $n$
\item If $\sigma$ is not a permutation of $\{1,\ldots,|\sigma|\}$, it means that the longest $|\sigma|$ jobs are not set on consecutive positions. As a result, the current unassigned positions may be split into several ranges. As a consequence of the decomposition property and the assumption LPT=EDD, the longest job $j^*$ should necessarily be branched on the first range of free positions, that goes from $l_b$ to $l_e$. 
Let us consider as an example $P_{1,9,2,8}$ (see Figure \ref{fig:propro3}), whose structure is $13\{5,\ldots,9\}42\{10,\ldots,n\}$ and the job to branch on is $5$. In this case, we have: $\sigma=(1,9,2,8)$, $\ell_{b}=3$, $\ell_{e}=7$. It is easy to verify that $5$ can only be branched in positions $\{3,\ldots,7\}$ as a direct result of Property~\ref{dec1}. 
\end{itemize}
\end{proof}

\begin{figure}
\centering
 \includegraphics[width=0.8\columnwidth]{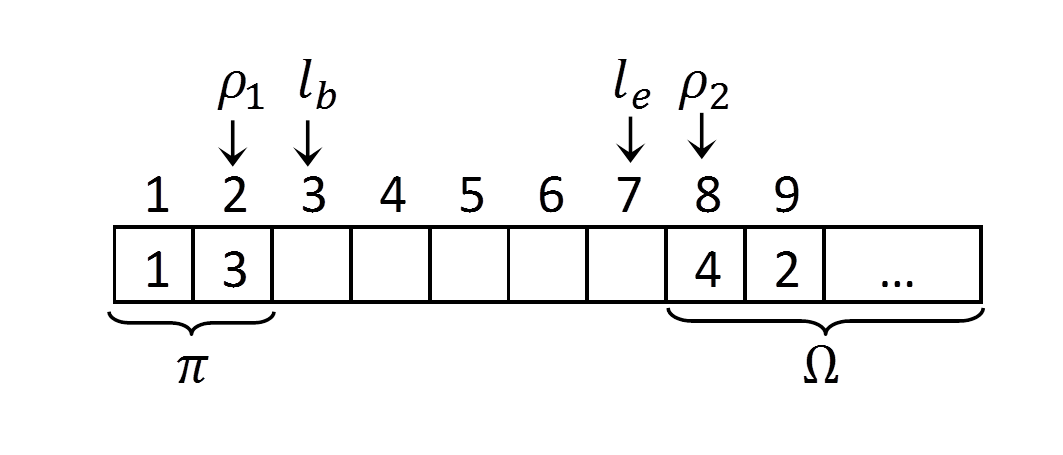}
  \caption{An example ($P_{1,9,2,8}$) for Proposition \ref{propos:psigmaprop} and Corollary \ref{coro}}\label{fig:propro3}
\end{figure}

Corollary~\ref{coro} emphasises the fact that even though a node may contain several ranges of free positions, only the first range is the current focus since we only branch on the longest job in eligible positions.
\begin{coro}\label{coro}
Problem $P_\sigma$ has the following structure:
$$ {\pi}\{j^*,\ldots,j^*+\ell_{e}-\ell_{b}\}\Omega$$ with ${\pi}$ the subsequence of jobs on the first $(\ell_{b}-1)$ positions in $\sigma$ and $\Omega$ the remaining subset of jobs to be scheduled after position $\ell_{e}$ (some of them can have been already scheduled). The merging procedure is applied on job set $\{j^*,\ldots,j^*+\ell_{e}-\ell_{b}\}$ starting at time $t_{\pi}=\sum_{i\in \Pi }p_i$ where $\Pi$ is the job set of $\pi$.
\end{coro}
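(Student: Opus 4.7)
The plan is to obtain Corollary~\ref{coro} as essentially a rewriting of Proposition~\ref{propos:psigmaprop} once the relevant index bookkeeping is made explicit. The proposition already pins down the first range of free positions on which $j^*$ can be branched; what remains is to identify the prefix $\pi$, the middle block $\{j^*,\ldots,j^*+\ell_e-\ell_b\}$ and the suffix $\Omega$, and to verify that the middle block is processed starting at the announced time $t_\pi$.

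First I will handle the prefix. By the definition of $\ell_b$ in Proposition~\ref{propos:psigmaprop}, every position in $\{1,\ldots,\ell_b-1\}$ belongs to $\sigma$ and is therefore already occupied by one of the jobs $\{1,\ldots,|\sigma|\}$. Reading these assignments in increasing position order yields the subsequence $\pi$, and summing the processing times of the jobs in its underlying set $\Pi$ produces $t_\pi=\sum_{i\in\Pi}p_i$ (no idle time is possible because the cost function is regular, so jobs are processed from time~$0$ without interruption).

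Then I will argue on the middle block. The range $\{\ell_b,\ldots,\ell_e\}$ has cardinality $\ell_e-\ell_b+1$. Under the assumption $LPT=EDD$ governing the worst-case analysis, iterating Property~\ref{dec1} on the longest still-unscheduled job forces the next $\ell_e-\ell_b+1$ longest jobs — precisely $j^*,j^*+1,\ldots,j^*+\ell_e-\ell_b$ — to be the ones eventually placed in that range by the subsequent branchings. The remaining jobs, some of which may already be fixed by entries of $\sigma$ referring to positions strictly greater than $\ell_e$, form $\Omega$ and are scheduled after position $\ell_e$. Since $\pi$ is processed from time $0$ without idle time, the middle block, and hence the merging procedure applied to it, necessarily starts at time $t_\pi$.

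The main obstacle I anticipate is purely notational rather than mathematical: the two cases in the definitions of $\ell_b$ and $\ell_e$ (when $\sigma$ is empty or is a permutation of $\{1,\ldots,|\sigma|\}$, versus the generic case) must be treated consistently, so that $\pi$ and $\Omega$ are well-defined in every instance — including the degenerate one in which $\ell_b=|\sigma|+1$ and $\pi$ coincides with the whole assigned prefix of $\sigma$. No tool beyond Proposition~\ref{propos:psigmaprop} and Property~\ref{dec1} will be needed.
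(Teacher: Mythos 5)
Your argument is correct and follows exactly the route the paper intends: the paper states this corollary without an explicit proof, treating it as immediate from Proposition~\ref{propos:psigmaprop} together with the recursive application of Property~\ref{dec1} under $LPT=EDD$, which is precisely what you spell out. Your added care with the degenerate cases of $\ell_b$ and with the no-idle-time justification of $t_\pi$ only makes explicit what the paper leaves implicit.
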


The validity of merging on a general node still holds as indicated in Proposition~\ref{propos:LMLinkLeft}, which extends the result stated in Proposition~\ref{propos:LMLink}.
\begingroup
\thickmuskip=0mu
\medmuskip=0mu
\begin{propos}\label{propos:LMLinkLeft}
Let $P_{\sigma}$ be an arbitrary problem and let $\pi,j^*,\ell_{b},\ell_e,\Omega$ be computed relatively to $P_{\sigma}$ according to Corollary~\ref{coro}. If $\mathcal{L}_{\sigma} = \emptyset$ the $j$-th child node $P_{\sigma^j}$ is $P_{\sigma ,\ell_{b}+j-1}$ for $1 \leq j \leq k$. Otherwise, the $j$-th child node $P_{\sigma^j}$ is extracted from  $\mathcal{L}_{\sigma}$ for $1 \leq j \leq k-1$, while it is created as $P_{\sigma ,\ell_{b}+k-1}$ for $j=k$.  For any pair of problems $P_{\sigma^j}$ and $P_{\sigma^1,\ell_{b}+j-1}$ (the $(j-1)$-th child node of $P_{\sigma^1}$) with $2 \leq j \leq k$, 
the following conditions hold:

\begin{enumerate}
\item Problems $P_{\sigma^j}$ and $P_{\sigma^1,\ell_{b}+j-1}$ with $2 \leq j \leq k$ have the following structure:
    \begin{itemize}
        \item $P_{\sigma^j}:$
            \[\begin{cases}
            \verb?                  ?\pi^j\{j^*+j,\ldots,j^*+\ell_{e}-\ell_{b}\}\Omega & \text{$1 \leq j \leq k-1$ and $\mathcal{L}_{\sigma} \neq \emptyset$}\\
            \pi\{j^*+1,\ldots,j^*+j-1\}j^*\{j^*+j,\ldots,j^*+\ell_{e}-\ell_{b}\}\Omega & \parbox{3cm}{\mbox{~}\\($1 \leq j \leq k-1;\mathcal{L}_{\sigma} = \emptyset$)\\ or $j=k$}\\
            \end{cases}\]
        \item $P_{\sigma^1,\ell_{b}+j-1}:\\ \pi^1\{j^*+2,\ldots,j^*+j-1\}(j^*+1)\{j^*+j,\ldots,j^*+\ell_{e}-\ell_{b}\}\Omega$
    \end{itemize}
\item By solving all the problems of size less than $k$, that consist in scheduling the job set $\{j^*+1,\ldots,j^*+j-1\}$ between $\pi$ and $j^*$ and in scheduling $\{j^*+2,\ldots,j^*+j-1\}$ between $\pi^1$ and $j^*+1$, both $P_{\sigma^j}$ and $P_{\sigma^1,\ell_{b}+j-1}$ consist in scheduling $\{j^* + j,..., j^*+\ell_{e}-\ell_{b}\}\Omega$ starting at time $t_{\pi^j} = \sum_{i \in \Pi^j} p_{i}$ where $\Pi^j$ is the job set of $\pi^j$.
\end{enumerate}
\end{propos}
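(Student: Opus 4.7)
The plan is to establish the two claims of the proposition by inspecting how the \emph{improved branching} of Definition~\ref{def:leftscheme} operates on $P_\sigma$, and then by a direct bookkeeping of which jobs occupy positions $\ell_b,\ldots,\ell_b+j-1$ in each of the two nodes to be compared.

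First I would recall from Corollary~\ref{coro} that $P_\sigma$ has the canonical form $\pi\{j^*,\ldots,j^*+\ell_e-\ell_b\}\Omega$, with $j^*=|\sigma|+1$ by Proposition~\ref{propos:psigmaprop}. Two cases then arise for the structure of $P_{\sigma^j}$. If $\mathcal{L}_\sigma=\emptyset$, the $j$-th child is created by the TTBR1 branching rule, which assigns $j^*$ to position $\ell_b+j-1$; this directly yields the second declared form $\pi\{j^*+1,\ldots,j^*+j-1\}\,j^*\,\{j^*+j,\ldots,j^*+\ell_e-\ell_b\}\Omega$, and the same holds for $j=k$. If $\mathcal{L}_\sigma\neq\emptyset$, the first $k-1$ children are exactly the merged nodes previously stored in $\mathcal{L}_\sigma$, whose canonical merged shape is $\pi^j\{j^*+j,\ldots,j^*+\ell_e-\ell_b\}\Omega$; this is the first declared form, and is established by an induction on the depth at which a \texttt{LEFT\_MERGE} invocation occurred (the base case being the root-node construction illustrated in Figure~\ref{fig:ex_merging}).

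Second I would look at $P_{\sigma^1}$. Whether it is extracted from $\mathcal{L}_\sigma$ or produced by assigning $j^*$ to position $\ell_b$, it has structure $\pi^1\{j^*+1,\ldots,j^*+\ell_e-\ell_b\}\Omega$, and because it is a freshly created node in its own right its associated list is empty. Applying the TTBR1 branching on its longest unscheduled job $j^*+1$ at position $\ell_b+j-1$ produces $P_{\sigma^1,\ell_b+j-1}$ with the advertised form $\pi^1\{j^*+2,\ldots,j^*+j-1\}(j^*+1)\{j^*+j,\ldots,j^*+\ell_e-\ell_b\}\Omega$. This settles Part~1.

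For Part~2, the essential observation is that in both $P_{\sigma^j}$ and $P_{\sigma^1,\ell_b+j-1}$ the set of jobs occupying positions up to $\ell_b+j-1$ is exactly $\Pi\cup\{j^*,\ldots,j^*+j-1\}$; only the internal ordering of these jobs differs. Hence the cumulative processing time of the prefix is identical in both cases, so the common tail $\{j^*+j,\ldots,j^*+\ell_e-\ell_b\}\Omega$ starts at the same instant $t_{\pi^j}=\sum_{i\in\Pi^j}p_i$. The residual optimisation of the prefix in each node amounts to scheduling at most $j-1\le k-1$ jobs around a single fixed anchor ($j^*$ on one side, $j^*+1$ on the other); by TTBR2 (Proposition~\ref{propos:ttbr2}) each such subproblem is solved in $\ostar{2.4143^{k}}$ time, which is constant for fixed $k$.

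The main obstacle is the case $\mathcal{L}_\sigma\neq\emptyset$: one must verify that the children stored in $\mathcal{L}_\sigma$ really do enter the current node in the canonical shape $\pi^j\{j^*+j,\ldots\}\Omega$ used above. This requires formalising the inductive invariant maintained by the \texttt{LEFT\_MERGE} procedure, namely that every time a node is replaced by a merged node its prefix is collapsed to a fixed (constant-size) permutation while the tail is left untouched. Once this invariant is in place, matching prefix job sets in the two nodes gives the equality of start times, and the identity of the remaining problem is then immediate.
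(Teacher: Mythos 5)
Your proof follows essentially the same route as the paper's: Part~1 is read off from the improved branching scheme of Definition~\ref{def:leftscheme} (with the merged-node case resting on the invariant that merging preserves the job set scheduled after $j^*$), and Part~2 reduces to observing that the prefix job set equals $\Pi\cup\{j^*,\ldots,j^*+j-1\}$ in both nodes, hence the common tail starts at the same instant $t_{\pi^j}$. One minor slip worth correcting: $\mathcal{L}_{\sigma^1}$ need not be empty just because $P_{\sigma^1}$ is newly reached (the \texttt{LEFT\_MERGE} call at the parent level may already have populated it, as lines 6--8 of Algorithm~\ref{algo:leftmerge} anticipate), but this does not affect the job-set bookkeeping on which your argument rests.
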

\begin{proof} 
The first part of the statement follows directly from Definition \ref{def:leftscheme} and simply defines the structure of the children nodes of $P_{\sigma}$.
The problem $P_{\sigma^j}$ is the result of a merging operation between a sibling node of $P_\sigma$  and the problem $P_{\sigma,\ell_{b}+j-1}$ and it could possibly coincide with $P_{\sigma,\ell_{b}+j-1}$, being the result of merging, for each $j=1,...,k-1$. 
Furthermore, $P_{\sigma^j}$ is exactly $P_{\sigma,\ell_{b}+j-1}$ for $j=k$. 
The structure of $P_{\sigma,\ell_{b}+j-1}$ is $\pi\{j^*+1,\ldots,j^*+j-1\}j^*\{j^*+j,\ldots,j^*+\ell_{e}-\ell_{b}\}\Omega$, and the merging operations preserve the job set to schedule after $j^*$. Thus, we have $\Pi^j=\Pi\cup\{j^*,...,j^*+j-1 \}$ for each $j=1,...,k-1$, and this proves the first statement.
Analogously, the structure of $P_{\sigma^1,\ell_{b}+j-1}$ is $\pi^1\{j^*+2,\ldots,j^*+j-1\}(j^*+1)\{j^*+j,\ldots,j^*+\ell_{e}-\ell_{b}\}\Omega$. Once the subproblem before $(j^*+1)$ of size less than $k$ has been solved, $P_{\sigma^1,\ell_{b}+j-1}$ consists in scheduling the job set $\{j^* + j,..., j^*+\ell_{e}-\ell_{b}\}$ at time $t_{\pi^j} = \sum_{i \in \Pi^j} p_{i}$. In fact, we have that $\Pi^j = \Pi^1 \cup \{j^*+2,\ldots,j^*+j-1\} \cup \{j^*+1\} = \Pi \cup \{j^*,\ldots,j^*+j-1\} $. Therefore, both $P_{\sigma^j}$ and $P_{\sigma^1,\ell_{b}+j-1}$ consist in scheduling $\{j^* + j,..., j^*+\ell_{e}-\ell_{b}\}\Omega$ starting at time $t_{\pi^j} = \sum_{i \in \Pi^j} p_{i}$.

\end{proof}\endgroup

\begin{figure}[!ht]
\begingroup
\thickmuskip=0mu
\medmuskip=0mu
\centering
\includegraphics[width=0.7\columnwidth]{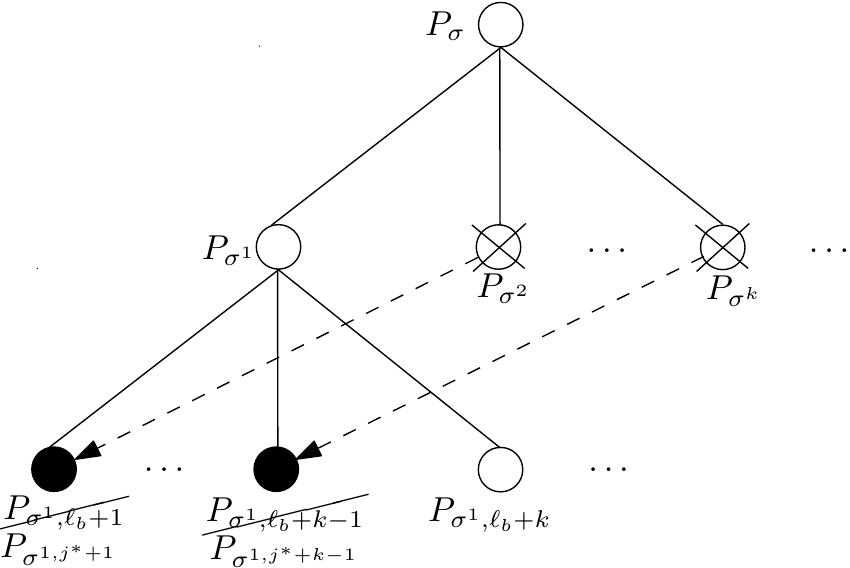}
\caption{Merging for an arbitrary left-side branch}
\label{fig:ex_merging_g}
\endgroup
\end{figure}

Analogously to the root node, each pair of problems indicated in Proposition \ref{propos:LMLinkLeft} can be merged. 
Again, $(k-1)$ problems $P_{\sigma^j}$ (with $2\leq j \leq k $) can be merged with the corresponding problems $P_{\sigma^1,\ell_{b}+ j-1}$. 
  $P_{\sigma^j}$ is deleted and $P_{\sigma^1,\ell_{b}+ j-1}$ is replaced by $P_{\sigma^{1,j^*+j-1}}$ (Figure~\ref{fig:ex_merging_g}), defined as follows:
\begin{itemize}
\item $\{j^*+ j,...,j^*+ \ell_{e}-\ell_{b}\}\Omega$ is the set of jobs on which it remains to branch on.
\item Let $\sigma^{1,j^*+j-1}$ be the sequence of positions on which the $(j^*+j-1)$ longest jobs $\{1,...,j^*+j-1\}$ are branched, that leads to the best jobs permutation between $\pi^j$ and $\pi^1\{j^*+2,\ldots,j^*+j-1\}(j^*+1)$, for $2\leq j\leq k-1$, and between $\pi\{j^*+1,\ldots,j^*+j-1\}j^*$ and $\pi^1\{j^*+2,\ldots,j^*+j-1\}(j^*+1)$, for $j=k$. This involves the solution of one or two problems of size at most $(k-1)$ (in $\ostar{2.4143^{k}}$ time by TTBR2) and the finding of the 
sequence that has the smallest total tardiness value knowing that both sequences start at time 0.
\end{itemize}

The {\ttfamily LEFT\_MERGE} procedure is presented in Algorithm \ref{algo:leftmerge}. Notice that, from a technical point of view, this algorithm takes as input one problem and produces as an output its first child node to branch on, which replaces all its $k$ left-side children nodes.

\begingroup
\thickmuskip=0mu
\medmuskip=0mu
\begin{algorithm}
\caption{{\ttfamily LEFT\_MERGE} Procedure}\label{algo:leftmerge}
\begin{algorithmic}[1]
\Require $P_{\sigma}$ an input problem of size $n$, with $\ell_{b}$, $j^*$ accordingly computed  
\Ensure $Q:$ a list of problems to branch on after merging
\Function{{\ttfamily LEFT\_MERGE}}{$P_{\sigma}$}
\State $Q \gets \emptyset$

\For{$j=1$ to ${k}$}
    \State Create $P_{\sigma^j}$ ($j$-th child of $P_\sigma$) by the improved branching with the subproblem induced by the job set $\{j^*+1,\ldots,j^*+j-1\}$ solved if $\mathcal{L}_{\sigma} =\emptyset$ or $j=k$
\EndFor
\For{$j=1$ to ${k-1}$}
    \State Create $P_{{\sigma^1}^j}$ ($j$-th child of $P_{\sigma^1}$) by the improved branching with the subproblem induced by the job set $\{j^*+2,\ldots,j^*+j-1\}$ solved if $\mathcal{L}_{\sigma^1} =\emptyset$ or $j=k$
\EndFor
\State $\mathcal{L}_{\sigma^1} \gets \emptyset$
\For{$j=1$ to ${k-1}$}
    \State $\mathcal{L}_{\sigma^1} \gets \mathcal{L}_{\sigma^1} \cup \texttt{BEST}(P_{\sigma^{j+1}},P_{{\sigma^1}^j})$
\EndFor
\State $Q \gets Q \cup P_{\sigma^1}$ 
 \State \Return $Q$
 \EndFunction
\end{algorithmic}
\end{algorithm}\endgroup


\begin{lemma} \label{lemma:leftmerge}
The {\ttfamily LEFT\_MERGE} procedure returns one node to branch on in $\bigo{n}$ time and polynomial space. The corresponding problem is of size $(n-1)$.
\end{lemma}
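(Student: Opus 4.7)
My plan is to verify the three claims of the lemma in sequence: (i) the procedure returns a single node, (ii) that node corresponds to a problem of size $(n-1)$, and (iii) the whole procedure runs in $\bigo{n}$ time and polynomial space.

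First I would observe, by direct inspection of Algorithm~\ref{algo:leftmerge}, that the returned list $Q$ contains a single element, namely $P_{\sigma^1}$ added at line~13. All the other children $P_{\sigma^2},\ldots,P_{\sigma^k}$ of $P_\sigma$ constructed in lines~3--5 have been discarded, because each $P_{\sigma^{j+1}}$ (for $j=1,\ldots,k-1$) has been merged, through the \texttt{BEST} comparison, with the corresponding $P_{{\sigma^1}^j}$ and the survivor has been stored in $\mathcal{L}_{\sigma^1}$, which is attached to $P_{\sigma^1}$. Hence, as guaranteed by Proposition~\ref{propos:LMLinkLeft}, $P_{\sigma^1}$ faithfully represents the best choice among the first $k$ branches of $P_\sigma$, so returning it alone is correct.

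Second, I would argue about the size of the returned node. In $P_\sigma$ the job to be branched on is $j^*$, and passing to $P_{\sigma^1}$ assigns $j^*$ to the position $\ell_b$. Hence exactly one job has been scheduled by this operation and the set of jobs still to be branched on in $P_{\sigma^1}$ has cardinality reduced by one, giving a subproblem of size $(n-1)$. The fact that $P_{\sigma^1}$ carries a non-empty list $\mathcal{L}_{\sigma^1}$ of $(k-1)$ precomputed children does not change its size, since the jobs stored in these precomputed subproblems are a subset of the remaining $(n-1)$ jobs; the list only records partial branching decisions for later reuse.

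Third, for the time complexity, I would rely on the fact that $k$ is a constant chosen as input. The two loops in lines~3--5 and~6--8 each execute $k = \bigo{1}$ iterations. Inside each iteration, the improved branching produces a problem descriptor, and a subproblem of size at most $(k-1)$ on an initial segment of jobs is solved (e.g.\ by TTBR2) in $\ostar{2.4143^{k}} = \bigo{1}$ time. The loop in lines~10--12 runs $(k-1) = \bigo{1}$ times, and each call to \texttt{BEST} compares two initial subsequences of length at most $k$, which again costs constant time. The only operations whose cost depends on $n$ are those needed to represent each new problem node (for instance writing down its prefix $\pi^j$, its starting time $t_{\pi^j}$, or its remaining job set), which take $\bigo{n}$ time per node. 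Since we create only a constant number of such nodes, the total time is $\bigo{n}$. For the space, since we keep at any time only a constant number of problem descriptors (each of $\bigo{n}$ size) plus the list $\mathcal{L}_{\sigma^1}$ of $(k-1)$ children, the additional space is polynomial. The main obstacle I anticipate is keeping the bookkeeping of the $\mathcal{L}_\cdot$ lists clean: one must check that consulting $\mathcal{L}_\sigma$ and $\mathcal{L}_{\sigma^1}$ to decide whether a subproblem of size $\leq k$ still has to be solved, and writing into $\mathcal{L}_{\sigma^1}$ the results of the \texttt{BEST} comparisons, can all be carried out without hidden super-linear cost, which is immediate once the lists are stored as simple arrays of length $(k-1)$.
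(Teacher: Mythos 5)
Your proposal is correct and follows essentially the same route as the paper's proof: a constant number ($\bigo{k}$) of node creations and small-subproblem solves, each node creation costing $\bigo{n}$ and each TTBR2 call costing $\bigo{1}$ since $k$ is fixed, with the returned node $P_{\sigma^1}$ having exactly one more job fixed than $P_\sigma$. One small inaccuracy: you claim each \texttt{BEST} call compares ``two initial subsequences of length at most $k$'' and hence costs constant time, but at an arbitrary node the sequences compared are the full prefixes $\pi^j$ and $\pi^1\{j^*+2,\ldots,j^*+j-1\}(j^*+1)$, whose length is governed by $|\pi|$ and can be $\Theta(n)$; the paper accordingly charges $\bigo{n}$ per \texttt{BEST} call for computing and comparing the total tardiness of two known sequences starting at the same time instant. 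Since only $(k-1)=\bigo{1}$ such calls are made, the overall $\bigo{n}$ bound is unaffected, so this does not invalidate your argument.
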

\begin{proof}
The creation of problems $P_{\sigma^1,\ell_{b}+j-1}$, $\forall j=2,\ldots,k$, can be done in $\bigo{n}$ time. The call of TTBR2 costs constant time. The \texttt{BEST} function called at line 8 consists in computing then comparing the total tardiness value of two known sequences of jobs starting at the same time instant: it runs in $\bigo{n}$ time. The overall time complexity of {\ttfamily LEFT\_MERGE} procedure is then bounded by $\bigo{n}$ 
time as $k$ is a constant. 
Finally, as only node $P_{\sigma^1}$ is returned, its size is clearly $(n-1)$ when $P_{\sigma}$ has size $n$. 
\end{proof}

In the final part of this section, we discuss the extension of the algorithm in the case where $LPT\neq EDD$. 
In this case, Property \ref{dec1} 
allows to discard subproblems associated to branching in some positions. 
Notice that if a problem $P$ can be discarded according to this property, then we say that $P$ does not exist and its associated node is empty. 

\begin{lemma}
\label{lemma:LMWC}
When left merging is incorporated into the branching scheme of TTBR, instances such that $LPT=EDD$ correspond to worst-case instances. 
\end{lemma}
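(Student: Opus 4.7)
The plan is to compare, for an arbitrary instance $I$ with $LPT \neq EDD$ against the instance $I'$ of the same size obtained by "aligning" the EDD order with the LPT order, and to show that every branching tree produced by the algorithm on $I$ is dominated (in both number of nodes and size of subproblems per node) by the corresponding tree on $I'$. Once this domination is established, the recurrence that governs $T(n)$ is attained on the $LPT = EDD$ case, so that case is the worst-case.

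First I would invoke Property~\ref{dec1}: when $LPT \neq EDD$, the longest (LPT) job $1$ corresponds to some job $[k]$ in EDD with $k > 1$, hence the set of eligible branching positions for job $1$ is $\{k,k+1,\ldots,n\}$, a strict subset of $\{1,\ldots,n\}$ which is the eligible set when $LPT = EDD$. In particular, every left-side branching position eliminated by Property~\ref{dec1} simply deletes a subtree from the search tree (equivalently: the corresponding subproblem $P_i$ does not exist and the associated node is empty). This already establishes, at the root, that $I'$ generates a superset of the children nodes of $I$ and that each generated subproblem of $I$ coincides with the corresponding subproblem of $I'$ (the partition of $\{2,\ldots,n\}$ induced by the position of job $1$ is identical).

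Next I would handle the interaction with {\ttfamily LEFT\_MERGE}. The merging procedure takes the first $k$ left-side children and replaces them by a single node $P_{\sigma^1}$ of size $(n-1)$, provided all these children exist. I would distinguish two situations for instance $I$: if the $k-1$ first eligible positions for the longest job still correspond to left-side positions in the sense of Definition~\ref{def:leftscheme}, then the merging applies and returns a node whose remaining subproblem is identical to (or a restriction of) the one produced by the merging on $I'$; otherwise, {\ttfamily LEFT\_MERGE} has strictly fewer than $k$ children to merge, which only \emph{decreases} the work done and shrinks the associated subtree. In both cases the merged node inherits the same structural recursion as in $I'$, acting on a subset of $\{2,\ldots,n\}$ of the same or smaller cardinality. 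An analogous observation holds recursively for every descendant node, because Property~\ref{dec1} is applied node by node and each restriction of eligible positions can only prune branches.

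The argument is then completed by induction on the problem size $n$. Let $T_I(n)$ denote the running time of the algorithm on instance $I$. At the root, $I$ produces a subset of the children nodes of $I'$, each of size at most the size of the corresponding child for $I'$; by the inductive hypothesis, each such child costs no more than its counterpart for $I'$, and the local work done by {\ttfamily LEFT\_MERGE} is bounded by the same $\bigo{n}$ (Lemma~\ref{lemma:leftmerge}) in both cases. Summing gives $T_I(n) \leq T_{I'}(n)$, and so the worst-case instances are precisely those with $LPT = EDD$. The main obstacle in turning this plan into a formal proof is the bookkeeping required in the second step: one must verify that the merging operation in a "partially pruned" node cannot somehow create a larger merged subproblem than the $LPT = EDD$ counterpart, i.e.\ that removing eligible positions is monotone with respect to the size of the surviving merged child $P_{\sigma^1}$.
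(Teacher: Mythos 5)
Your proposal takes a genuinely different route from the paper (a node-by-node domination/induction argument rather than a case analysis on time reductions), but it contains a gap that the paper's proof is specifically built to close. The problematic claim is that every position eliminated by Property~\ref{dec1} ``simply deletes a subtree'' so that the tree for $I$ is a pruned copy of the tree for $I'$, and that having fewer children to merge ``only decreases the work done.'' This is false in the one case that matters: when job $1$ is not first in EDD order, Property~\ref{dec1} eliminates the node $P_1$, which is precisely the \emph{anchor} of the left merge. Once $P_1$ is gone, the nodes $P_2,\ldots,P_k$ can no longer be absorbed into the children of $P_1$ and must be solved as independent subtrees, whereas in the $LPT=EDD$ tree they are merged away. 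The two search trees are therefore structurally incomparable --- one lacks $P_1$ but retains $P_2,\ldots,P_k$, the other retains $P_1$ but merges $P_2,\ldots,P_k$ --- and no subset relation between them holds. Your own closing remark gestures at a bookkeeping issue, but the real obstruction is not that a merged subproblem might be too large; it is that the comparison is inherently quantitative, not structural.

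The paper resolves this by comparing time \emph{reductions} rather than trees: in the $LPT=EDD$ case the merge saves $T(n-2)+T(n-3)+\cdots+T(n-k)$, while eliminating $P_1$ saves at least $T(n-1)$, and one must prove $T(n-2)+\cdots+T(n-k)\leq T(n-1)$. That inequality is not free: it is derived by first establishing $T(n)>2T(n-1)$, hence $T(n)=\omega(2^n)$, and then arguing by contradiction that $T(n-1)<T(n-2)+\cdots+T(1)$ would force $T(n)=o(2^n)$. Any correct proof of the lemma needs some version of this growth argument (if $T$ grew subexponentially the claimed domination would fail), so the induction you sketch cannot close without importing it. To repair your plan, you would have to replace the ``subset of nodes'' step at the root by exactly this three-case comparison of reductions, at which point you have essentially reconstructed the paper's proof.
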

\begin{proof}
We prove the result by showing that the time reduction obtained from left merging and Property \ref{dec1} in the case LPT=EDD is not greater than that of any other cases. Let us consider the improved branching scheme. 
The following exhaustive cases hold:
\begin{enumerate}
\item $1=[1]$ and $2=[2]$;
\item $1= [j]$ with $j \geq 2$;
\item $1=[1]$ and $2=[j]$ with $j\geq 3$.
\end{enumerate}

We first sketch the idea of the proof. For each of the 3 cases above, we analyse the time reduction that can be obtained on one single branching and merging and we show that the reduction corresponding to case 1, which covers the case LPT=EDD, is the smallest among all the 3 cases. In fact, for case 2 and case 3, some nodes are not created due to Property 2, and the resulting time reduction is not less than that of case 1. 


Let $T(n)$ be the time needed to solve an instance of size $n$ in general. From Lemma 2, we can deduce that $T(n)>2T(n-1)$  because for instances with $LPT=EDD$, on each branching, a node of size $(n-1)$ is returned by left merging and another node of size $(n-1)$ exists due to the last child node of $P_\sigma$. This statement is also valid in the worst case if no merging is done, due to the branching scheme. The inequality $T(n)>2T(n-1)$ induces that $T(n)=\omega(2^n)$, which will be used below to prove the lemma.

In order to be general, consider the current node as $P_{\sigma}$, as shown in Figure \ref{fig:ex_merging_g}. The time reduction of the 3 cases are denoted respectively by $TR1$, $TR2$ and $TR3$. We also note $TR_{LPT=EDD}$ the time reduction corresponding to the case $LPT=EDD$.

In case 1, no nodes are eliminated by Property 2, hence, the merging can be done as described for the case LPT=EDD (Figure \ref{fig:ex_merging_g}). Therefore, $TR1=TR_{LPT=EDD}=T(n-2)+T(n-3)+..+T(n-k)$ according to Lemma \ref{lemma:leftmerge} when \texttt{LEFT\_MERGE} is executed.

In case 2, the subproblem of $P_\sigma$ corresponding to branching the longest job on the first position, is eliminated directly by Property \ref{dec1}. Therefore, $TR2\geq T(n-1)$. 

In case 3, let $\ell_b$ be the first free position in $P_\sigma$, as defined in Proposition \ref{propos:psigmaprop}. Some child nodes of $P_{\sigma^1}$, as in Figure \ref{fig:ex_merging_g}, corresponding to branch job 2 on positions $\{\ell_b+1,..,\ell_b+j-1\}$, are eliminated due to Property \ref{dec1}. For these nodes, the time reduction that could have been achieved by merging is already ensured, while the nodes that are not eliminated, notably those corresponding to branch job 2 in positions $\{\ell_b+j,..,\ell_b+k-1\}$ can still be merged pairwise with nodes $\{P_{\sigma^{j+1}},..,P_{\sigma^k}\}$. More reduction can be gained if $j>k$. 
Therefore, $TR3\geq T(n-2)+T(n-3)+..+T(n-k)$. 

Since $TR1\leq TR3$, this brings us to compare $TR1$ and $TR2$. Suppose $TR1>TR2$, i.e., $T(n-1)<T(n-2)+T(n-3)+..+T(n-k)$, then we have $T(n-1)<T(n-2)+T(n-3)+..+T(1)$. By solving this recurrence relation we get $T(n)=o(2^n)$ which is in contradiction with the fact that $T(n)=\omega(2^n)$, as proved above. Therefore, $TR1< TR2$, i.e.,  on each recursion of the algorithm, the time reduction obtained in case 1 is not greater than any other cases. 
Since $TR_1=TR_{LPT=EDD}$, this proves that $LPT=EDD$ is the worst-case scenario, in which the {\ttfamily LEFT\_MERGE} procedure returns one node of size $n-1$ to branch on.
\end{proof}

\subsubsection{A working example for Left Merge}\label{sec:leftex}
In order to better illustrate the merging operations on left-side branches, a detailed example is provided in this section. The input data is in Table \ref{tab:lmex} and the value of $k$ is chosen as $2$.

\begin{table}[!ht]
\centering
\begin{tabular}{|c|c|c|c|c|}\hline
$i$ & 1 & 2 &3&4 \\ \hline
$p_i$& 7&4&2&1\\ \hline
$d_i$&2&4&6&8\\ \hline
\end{tabular}
\caption{A sample instance}\label{tab:lmex}
\end{table}

The solution of the instance is depicted in Figures \ref{subfig:leftex1}, \ref{subfig:leftex2},\ref{subfig:leftex3},\ref{subfig:leftex4},\ref{subfig:leftex5}. On each node, the jobs to schedule are surrounded by accolades and the indicated partial total tardiness value ($tt=\sum T_j$) is computed on the jobs that are fixed before the unscheduled jobs.
The applied algorithm is TTBR1 but with \texttt{LEFT\_MERGE} integrated. 
The algorithm runs in depth-first. 
\texttt{LEFT\_MERGE} is firstly called on the root node $P$. Lines 1-8 in Algorithm \ref{algo:leftmerge} results the situation in Figure \ref{subfig:leftex1}. This firstly (lines 3-5) involves the generation of the first $k$ child nodes of $P$ which are $P_1$ and $P_2$, with $\mathcal{L}=\{P_1\}$, and the computation of total tardiness of the fixed job sequences. For instance, the sequence $(2,1)$ in $P_2$ has a total tardiness of $9$. 
Similarly, by lines 6-8, the first $(k-1)$ child nodes of $P_1$ are generated and the total tardiness of partially fixed jobs is also computed. The sequence $(1,2)$ in $P_{1,2}$ has a total tardiness of 12. 
Then, by comparing the {total tardiness} of the fixed parts of $P_2$ and $P_{1,2}$, line 11 merges node $P_2$ to the position of $P_{1,2}$ by setting the child list of $P_1$ $\mathcal{L}_1=\{P_2\}$, since $9<11$ (see Figure \ref{subfig:leftex2}). The result of merging is $P_2$ which is renamed as $P_{\sigma^{1,2}}$ according to our notation. Line 14 returns the next node to open, which is $P_1$. This completes the first call of \texttt{LEFT\_MERGE}.

Then the algorithm continues in the same way by applying \texttt{LEFT\_MERGE} on the node $P_1$. In a similar way,  the node $P_{1,3}$ and the first child node of $P_{\sigma^{1,2}}$ are generated and their partial total tardiness values are computed. 
By comparing the partial total tardiness of these two nodes, $P_{1,3}$ is cut since $17>16$ (Figure \ref{subfig:leftex3}). This also explains why we decide to \textit{merge} two nodes instead of just cutting the dominated one. Actually, by putting the merged node in a specific position, the node can further participate in subsequent merging operations. More concretely, if $P_2$ was not moved to the position of $P_{1,2}$, the merging in Figure \ref{subfig:leftex3} would not happen.

The algorithm continues its exploration in depth-first order, each time a leaf node is reached, the algorithm updates the current best solution. Another merging is done between $P_{4,2}$ and $P_{4,1,2}$, leading to prune $P_{4,2}$ (Figure \ref{subfig:leftex4}). Then, the algorithm goes on until all leaf nodes are explored and it returns the optimal solution given by $P_{4,1,2,3}$, corresponding to the sequence of $(2,3,4,1)$ (Figure \ref{subfig:leftex5}). 

Notice that some nodes involved in merging have only one job to schedule, like $P_{4,2}$, but we still consider them as a non-solved subproblems in the exemple in order to illustrate the merging. Also notice that the \texttt{LEFT\_MERGE} procedure should only be called on subproblems of enough size (greater than $k$). All the three nodes $P$, $P_1$ and $P_4$ have more than 2 jobs to schedule.

\FloatBarrier
\begin{figure}
\centering
\begin{subfigure}[!ht]{\textwidth}\centering
\includegraphics[width=0.5\columnwidth]{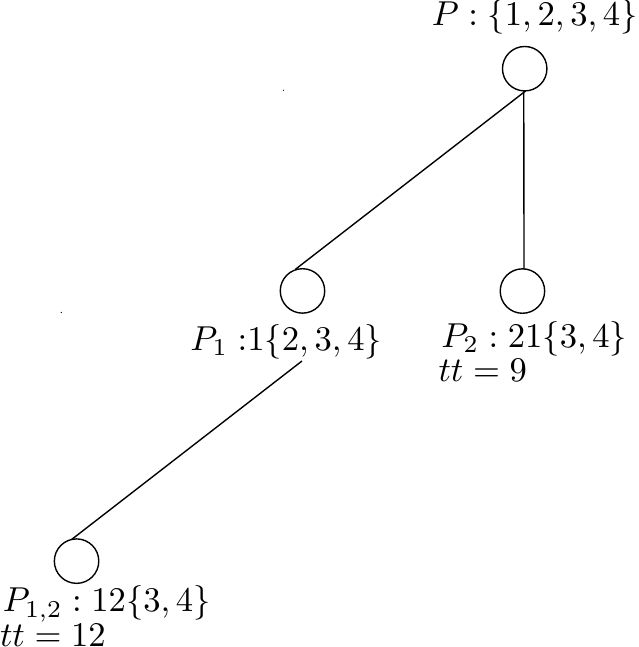}\caption{}
\label{subfig:leftex1}
\end{subfigure}
\end{figure}

\begin{figure}
\ContinuedFloat
\begin{subfigure}[!ht]{\textwidth}\centering
\includegraphics[width=0.5\columnwidth]{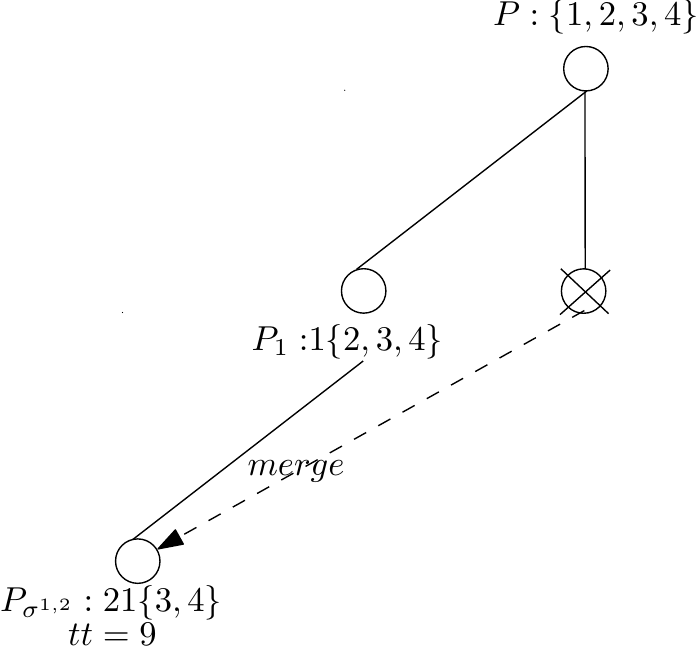}\caption{}
\label{subfig:leftex2}
\end{subfigure}
\end{figure}

\begin{figure}
\ContinuedFloat
\begin{subfigure}[!ht]{\textwidth}\centering
\includegraphics[width=0.5\columnwidth]{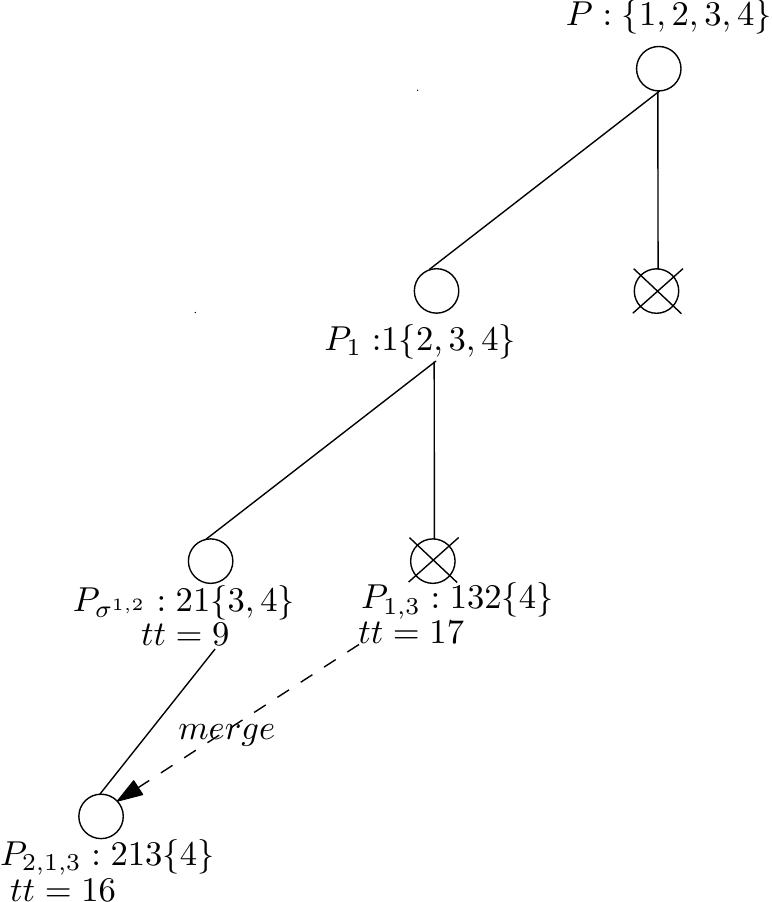}\caption{}
\label{subfig:leftex3}
\end{subfigure}
\end{figure}

\begin{figure}
\ContinuedFloat
\begin{subfigure}[!ht]{\textwidth}\centering
\includegraphics[width=\columnwidth]{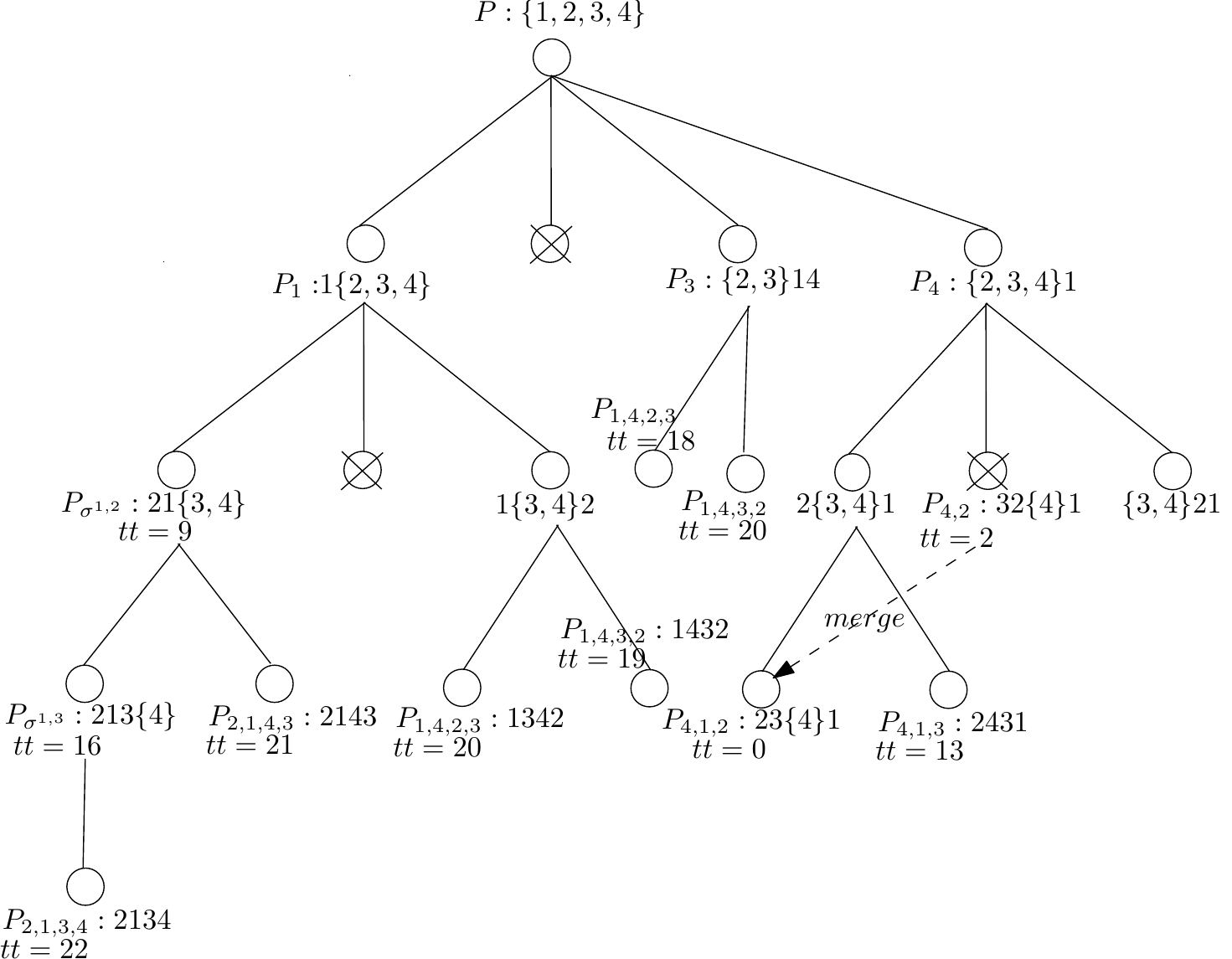}\caption{}
\label{subfig:leftex4}
\end{subfigure}
\end{figure}

\begin{figure}
\ContinuedFloat
\begin{subfigure}[!ht]{\textwidth}\centering
\includegraphics[width=\columnwidth]{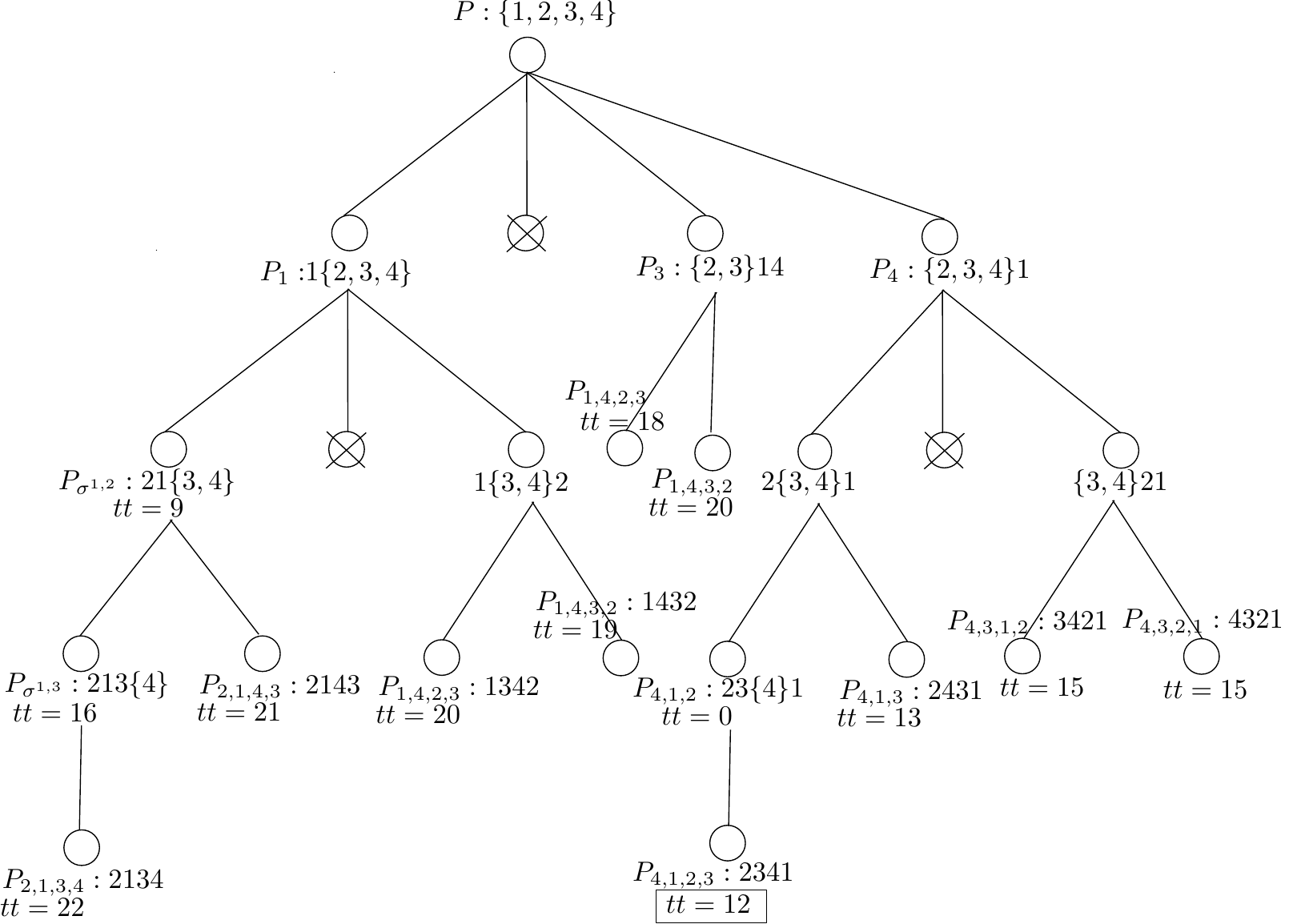}\caption{}
\label{subfig:leftex5}
\end{subfigure}
\caption{A sample instance solved by TTBR1 with left-merging integrated}
\label{fig:leftex}
\end{figure}
\FloatBarrier
\FloatBarrier
\subsection{Merging right-side branches}~\label{sec:rightsection}

Due to the branching scheme, the merging of right-side branches involves a more complicated procedure than the merging of left-side branches. In the merging of left-side branches, it is possible to merge some nodes associated to problems $P_\ell$ with children nodes of $P_1$, while for the right-side branches, it is not possible to merge some nodes $P_\ell$ with children nodes of $P_n$. To see this, consider the situation depicted in Figure \ref{fig:right_observe}. If the merging can be applied in a symmetric way as on left-size branches, then the node $P_{n-1}$ and the node $P_{n,n-1}$ should be concerned by a merging operation. However, due to Property \ref{dec1}, these two nodes do not share any common subproblem. Therefore, they can not be merged.

\begin{figure}[!ht]
\centering
\includegraphics[width=0.8\columnwidth]{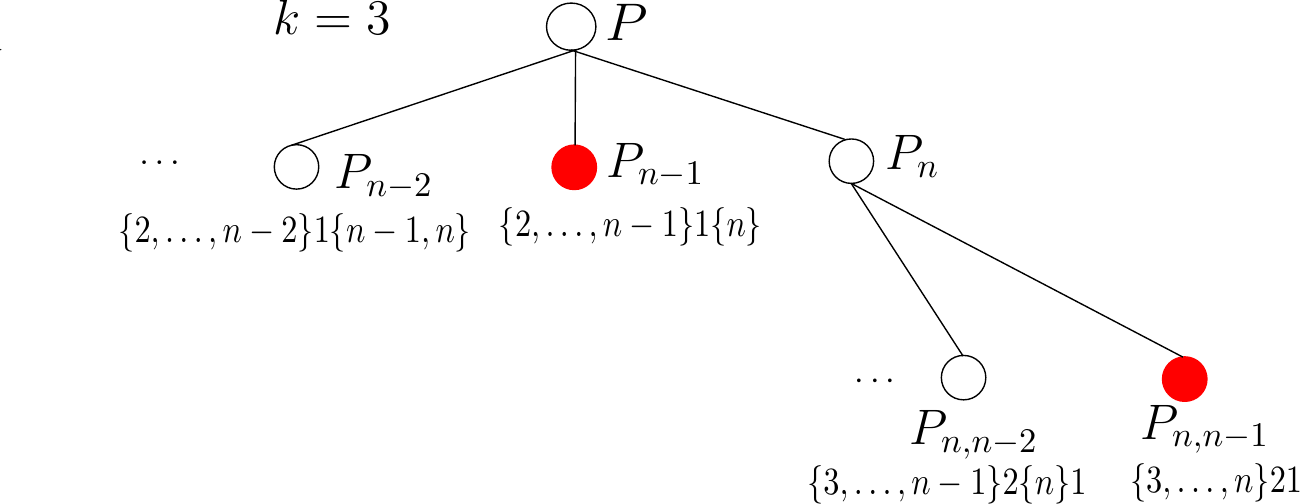}
\caption{An failed attempt on merging right-side branches}
\label{fig:right_observe}
\end{figure}

However, another structure of merging exist: the children nodes of $P_\ell$ can be merged  with the children nodes of $P_n$. Let us more formally introduce the right merging procedure and, again, let $k\leq \frac{n}{2}$ be the same constant parameter as used in the left merging. 

Figure~\ref{fig:rk3_abstract_diffshapes} shows an example on the structure of merging for the $k$ right-side branches with $k=3$. The root problem $P$ consists in scheduling job set $\{1,\ldots,n\}$. Unlike left-side merging, the right-side merging is done horizontally for each level. 
Nodes that are involved in merging are colored.
For instance, the black square nodes at level 1 can be merged together. This concerns three nodes: $P_{n-2,n-3}$, $P_{n-1,n-3}$, $P_{n,n-3}$. As indicated in the figure, all these three nodes contain an identical subproblem which consists in scheduling jobs $\{3,..,n-2\}$ from time $0$. Also notice that apart from this subproblem, the number of remaining jobs to schedule is less than $k$. Therefore, these small number of jobs are fixed first, which then enables the merging of all the three nodes. The result of the merging should take the position of the right most node, which is $P_{n,n-3}$.  
Similarly, the black circle nodes at level 1 can be merged,
the grey square nodes at level 2 can be merged and
the grey circle nodes at level 2 can be merged.  
Notice that each right-side branch of $P$ is expanded to a different depth which depends on the branching decision: for the purpose of right merging, a branch is no more expanded whenever the largest subproblem of the first child node is of size $(n-k-1)$. For example,  in Figure \ref{fig:rk3_abstract_diffshapes}, consider node $P_{n-2}$ and its first child node $P_{n-2,1}$ which defines the subproblem $2\{3,..,n-2\}1\{n-1,n\}$. The largest subproblem in $P_{n-2,1}$ conerns jobs $\{3,..,n-2\}$ which has $(n-4)=(n-k-1)$ jobs. Then, the expansion of the branch rooted by $P_{n-2,1}$ is stopped. Notice that in the implementation of the right merge, it is not mandatory to have the whole expansion of the search tree as indicated in Figure \ref{fig:rk3_abstract_diffshapes}. But, this expansion eases the presentation of the mechanism and the analysis of the final complexity.



\begin{figure}[!ht]
\centering
\includegraphics[width=\columnwidth]{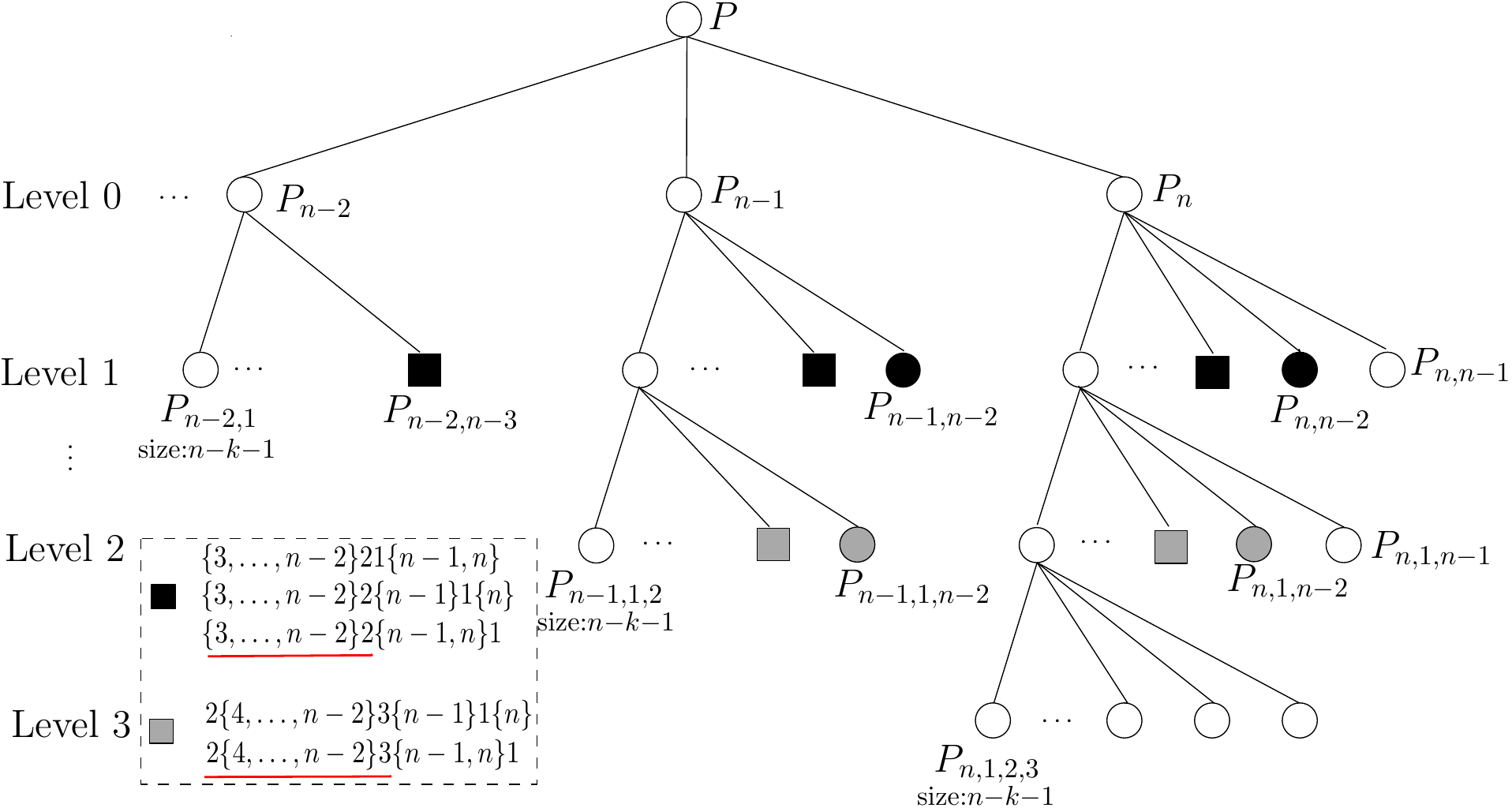}
\caption{An example of right-side branches merging for $k=3$}
\label{fig:rk3_abstract_diffshapes}
\end{figure}

More generally, Figure~\ref{fig:rm1} shows the right-side search tree and the content of the nodes involved in the merging for an arbitrary value of $k$. \\

\begin{figure}[!ht]
\begingroup
\thickmuskip=0mu
\medmuskip=0mu
\centering
\includegraphics[width=\columnwidth]{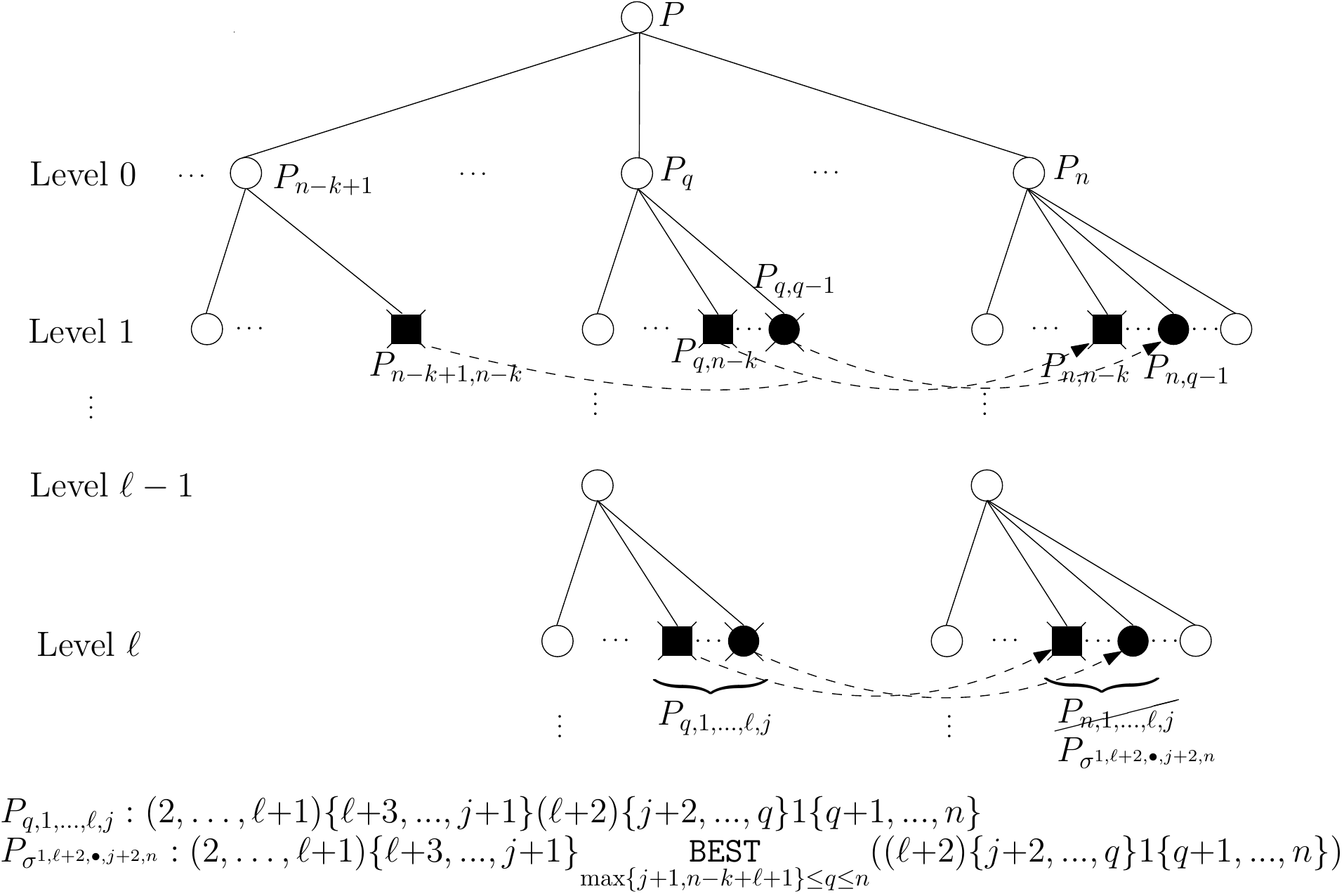}
\caption{Generic right-side merging at the root node}
\label{fig:rm1}
\endgroup
\end{figure}

The rest of this section intends to describe the merging by following the same lines as for left merging. We first extend the notation $P_\sigma$ in the sense that $\sigma$ 
may now contain placeholders. The $i$-th element of $\sigma$ is either the position assigned to job $i$ if $i$ is fixed, or $\bullet$ if job $i$ is not yet fixed. 
The $\bullet$ sign is used as placeholder, with its cardinality below indicating the number of consecutive $\bullet$. As an example, the problem \texttt{$\{2,\ldots,n-1\}1n$} can now be denoted by $P_{n-1,\phn{n-2},n}$. 
The cardinality of $\bullet$ may be omitted whenever it is not important for the presentation or it can be easily deduced as in the above example. Note that this adapted notation eases the presentation of right merge while it has no impact on the validity of the results stated in the previous section.


\begin{propos}\label{propos:rightstruct}
Let $P_{\sigma}$ be a problem to branch on.  
Let $j^*, \ell_b, \ell_e,\rho_1$ and $\rho_2$ be defined as in Proposition~\ref{propos:psigmaprop}. Extending Corollary~\ref{coro}, problem $P_\sigma$ has the following structure:
$$ {\pi}\{j^*,\ldots,j^*+\ell_{e}-\ell_{b}\}\gamma\Omega'$$ 
where $\pi$ is defined as in Corollary~\ref{coro} and ${\gamma}$ is the sequence of jobs in positions $\rho_2,\ldots,\rho_3$ with $\rho_3=\max\{i: i\geq\rho_2, positions\text{ $\rho_2,\ldots,i$ are in $\sigma$}\}$ and $\Omega'$ the remaining subset of jobs to be scheduled after position $\rho_3$ (some of them can have been already scheduled). The merging procedure is applied on job set $\{j^*,\ldots,j^*+\ell_{e}-\ell_{b}\}$ preceded by a sequence of jobs $\pi$ and followed by $\gamma \Omega'$.
\end{propos}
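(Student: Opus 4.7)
The plan is to extend Corollary~\ref{coro} by further decomposing the tail $\Omega$ appearing there into its immediate contiguous block of fixed positions $\gamma$ followed by what truly remains, $\Omega'$. The proposition is purely structural, so no combinatorial hard core is expected; the work is in careful bookkeeping with the notation $P_\sigma$.

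First I would invoke Corollary~\ref{coro} to obtain the decomposition $\pi\{j^*,\ldots,j^*+\ell_e-\ell_b\}\Omega$, where $\pi$ is the sequence of jobs occupying the initial contiguous block $1,\ldots,\ell_b-1=\rho_1$, and $\Omega$ is the set of jobs that remain to schedule after position $\ell_e=\rho_2-1$. Recall from Proposition~\ref{propos:psigmaprop} that, by definition of $\rho_2$, position $\rho_2$ is the smallest index greater than $\rho_1$ already in $\sigma$, so the position just after the free range is itself fixed.

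Second, I would define $\rho_3=\max\{i:i\ge\rho_2,\ \text{positions }\rho_2,\ldots,i\text{ are all in }\sigma\}$ as the right endpoint of the maximal run of fixed positions starting at $\rho_2$ (so $\rho_3\ge\rho_2$, and $\rho_3=n$ is allowed). Let $\gamma$ be the sequence of jobs that $\sigma$ assigns to the positions $\rho_2,\ldots,\rho_3$, read left to right. By construction, position $\rho_3+1$ (if it exists) is not in $\sigma$, so $\gamma$ is exactly the block of jobs that sits immediately to the right of the free range $\{\ell_b,\ldots,\ell_e\}$, and $\Omega'$ is the set of jobs remaining to schedule in positions $\ge\rho_3+1$.

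Third, I would verify that $\pi\{j^*,\ldots,j^*+\ell_e-\ell_b\}\gamma\Omega'$ is a consistent rewriting of the structure given by Corollary~\ref{coro}: every job of the instance appears exactly once, the relative ordering induced on fixed jobs respects $\sigma$, and $\Omega=\Gamma\cup\Omega'$ where $\Gamma$ is the job set of $\gamma$. The degenerate cases handled by Proposition~\ref{propos:psigmaprop} (namely $\sigma$ empty or $\sigma$ being a permutation of $\{1,\ldots,|\sigma|\}$) correspond to $\rho_2$ undefined; one then sets $\gamma=\Omega'=\emptyset$, recovering Corollary~\ref{coro} verbatim.

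Finally, since the right-merge procedure only rearranges jobs within the free positions $\{\ell_b,\ldots,\ell_e\}$ (on which $j^*,\ldots,j^*+\ell_e-\ell_b$ compete), and since the completion time at the end of the free block, i.e.\ the start time of $\gamma$, depends only on $\pi$ together with the multiset of jobs placed in the free block, the ``applied on'' clause is well-posed: $\gamma\Omega'$ is completely insulated from the merging decisions taken within the free range. The only subtle point — and the main (mild) obstacle — is ensuring that the definition of $\rho_3$ matches the slot into which right-merged subproblems must later be written, which follows directly from the maximality in the definition of $\rho_3$.
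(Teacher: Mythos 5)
Your proposal is correct and follows essentially the same route as the paper: the paper's proof simply refines Corollary~\ref{coro} by splitting $\Omega$ into the contiguous fixed block $\gamma$ (positions $\rho_2,\ldots,\rho_3$) and the remainder $\Omega'$, noting that $\gamma$ will play for right merging the role that $\pi$ plays for left merging. Your version just spells out the bookkeeping (the identities $\ell_b-1=\rho_1$, $\ell_e=\rho_2-1$, the degenerate cases, and $\Omega=\Gamma\cup\Omega'$) in more detail than the paper does.
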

\begin{proof}
The problem structure stated in Corollary~\ref{coro} is refined on the part of $\Omega$. $\Omega$ is split into two parts: $\gamma$ and $\Omega'$ (see Figure \ref{fig:propro5}). 
The motivation is that $\gamma$ will be involved in the right merging, just like the role of $\pi$ in left merging.
\end{proof}

\begin{figure}
\centering
 \includegraphics[width=0.7\columnwidth]{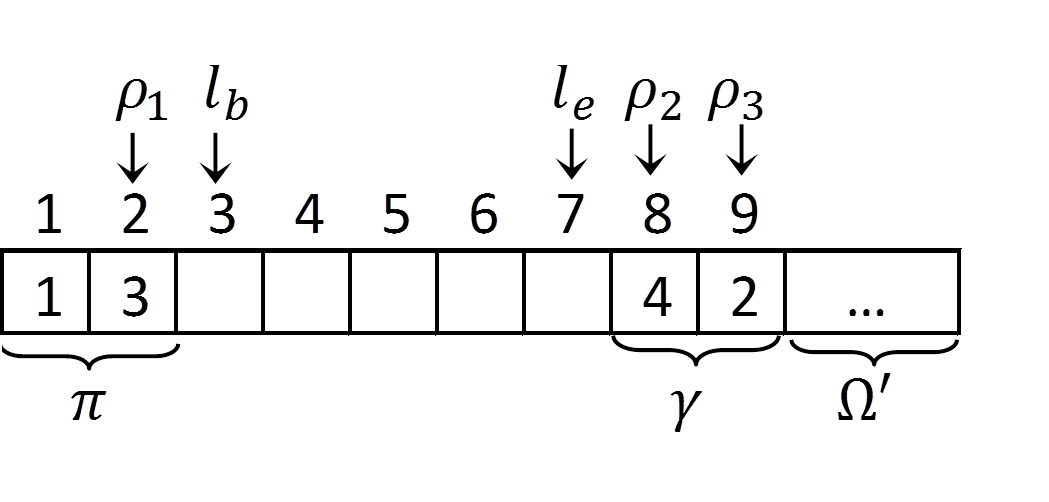}
  \caption{An example ($P_{1,9,2,8}$) for Proposition \ref{propos:rightstruct}}\label{fig:propro5}
\end{figure}

Proposition~\ref{propos:RMLink} shed lights on how to merge the right side branches originated from the root node. We first introduce Definition \ref{def:slj} which will be used to identify a set of nodes to merge.

\begin{definition}\label{def:slj}
Let $\mathcal{S}_{\ell,j}$ be a set of nodes, defined as
$$\mathcal{S}_{\ell,j} = \left\{P_\sigma:~~ \parbox{5cm}{$|\sigma|=\ell+2$,\\ 
$ \max\{j+1,n-k+\ell+1\}\leq \sigma_1\leq n$,\\ 
$\sigma_i=i-1$, $\forall i\in\{2,\ldots,\ell+1\},\\ $
$\sigma_{\ell+2}=j$}
\right\}\footnote{Placeholders do not count in the cardinality of $\sigma$}$$

with $0\leq \ell \leq k-1$, $n-k\leq j \leq n-1$, and with $\sigma_i$ referring to the position of job $i$ in $\sigma$. 
\end{definition}

{To understand the proposition intuitively, the reader can always refer to Figure \ref{fig:rk3_abstract_diffshapes} and Figure \ref{fig:rm1}. Each set $\mathcal{S}_{\ell,j}$ represents a set of nodes with the same color and shape in Figure \ref{fig:rk3_abstract_diffshapes}, that is, they can be merged together. The subscript $\ell$ in $\mathcal{S}_{\ell,j}$ indicates the level of nodes, more precisely the number of jobs branched on in $\pi$, and the subscript $j$ determines the ``shape'' of nodes at that level, more precisely the position of the first job in $\gamma$. As an example, all black squares in Figure \ref{fig:rk3_abstract_diffshapes} are in $\mathcal{S}_{0,n-3}=\{P_{n,n-3},P_{n-1,n-3},P_{n-2,n-3}\}$.}

We now introduce Proposition \ref{propos:RMLink}.
\begingroup
\thickmuskip=0mu
\medmuskip=0mu
\begin{propos}\label{propos:RMLink}
For each problem in the set $\mathcal{S}_{\ell,j}$, we have the two following properties:
\begin{enumerate}
\item The solution of problems in $\mathcal{S}_{\ell,j}$ involves the solution of a common subproblem which consists in scheduling job set $\{\ell+3,..., j+1\}$ starting at time $t_{\ell} = \sum_{i=2}^{\ell+1} p_i$.
\item For any problem in $\mathcal{S}_{\ell,j}$, at most $k+1$ jobs have to be scheduled after job set $\{\ell+3,..., j+1\}$.
\end{enumerate}
\end{propos}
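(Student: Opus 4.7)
The plan is to trace the cascade of branching decisions encoded by $\sigma$ using Property~\ref{dec1} under the assumption $LPT=EDD$, and show that the resulting problem has a uniquely determined prefix of length $\ell$, a uniquely determined ``middle'' subproblem that depends only on $\ell$ and $j$, and a ``tail'' of at most $k+1$ jobs whose identity is the only thing that changes with $\sigma_1$.

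Concretely, I would interpret the branching decisions in the chronological order they were taken. First, job~$1$ is placed at position $\sigma_1 = h$; by Property~\ref{dec1} (applied with $k=1$ since $LPT=EDD$), this forces $\{2,\ldots,h\}$ into positions $1,\ldots,h-1$ and $\{h+1,\ldots,n\}$ into positions $h+1,\ldots,n$, thereby decomposing the root problem into two independent subproblems. Second, iterating Property~\ref{dec1} on the left subproblem, the successive placements $\sigma_2=1,\sigma_3=2,\ldots,\sigma_{\ell+1}=\ell$ trivially fix the prefix $(2,3,\ldots,\ell+1)$ and reduce the left subproblem to scheduling $\{\ell+2,\ldots,h\}$ in positions $\ell+1,\ldots,h-1$, starting at time $t_\ell=\sum_{i=2}^{\ell+1}p_i$.

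Part~1 then follows by applying Property~\ref{dec1} one more time inside this reduced left subproblem, with job $\ell+2$ placed at position $\sigma_{\ell+2}=j$. Since $\ell+2$ is simultaneously the longest-processing-time and earliest-due-date job of that subproblem, the property forces $\{\ell+3,\ldots,j+1\}$ into positions $\ell+1,\ldots,j-1$ and $\{j+2,\ldots,h\}$ into positions $j+1,\ldots,h-1$. The first of these two sub-subproblems is precisely scheduling $\{\ell+3,\ldots,j+1\}$ starting at $t_\ell$, and it depends only on $\ell$ and $j$, not on $\sigma_1=h$. Since every node in $\mathcal{S}_{\ell,j}$ arises through this same cascade for some admissible $\sigma_1$, that sub-subproblem is common to all of them, proving part~1.

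For part~2, I simply count the jobs scheduled at positions $j,j+1,\ldots,n$, i.e., those sequenced after the common subproblem: job $\ell+2$ at position $j$, the (possibly empty) subset $\{j+2,\ldots,\sigma_1-1\}$, job~$1$ at $\sigma_1$, and $\{\sigma_1+1,\ldots,n\}$. Together these occupy exactly $n-j+1$ positions, so there are $n-j+1$ such jobs. The defining constraint $j\geq n-k$ immediately yields $n-j+1\leq k+1$, giving the claimed bound. The main obstacle is purely bookkeeping: correctly threading the three layers of iterated application of Property~\ref{dec1} (on job~$1$, then on jobs $2,\ldots,\ell+1$, then on job~$\ell+2$) and checking that $j\in[\ell+1,\sigma_1-1]$ at the relevant step, which is exactly what the lower bound $\sigma_1\geq \max\{j+1,n-k+\ell+1\}$ in Definition~\ref{def:slj} ensures. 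Once the structure is pinned down, both statements reduce to direct counting.
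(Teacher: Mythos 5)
Your proposal is correct and follows essentially the same route as the paper: both identify the structure $(2,\ldots,\ell+1)\{\ell+3,\ldots,j+1\}(\ell+2)\{j+2,\ldots,\sigma_1\}1\{\sigma_1+1,\ldots,n\}$ of every problem in $\mathcal{S}_{\ell,j}$ and then read off the common middle subproblem and count the tail (at most $k+1$ jobs, attained at $j=n-k$); you merely spell out the derivation of that structure via iterated application of Property~\ref{dec1}, which the paper asserts directly. The only blemish is a harmless off-by-one in labelling the block between job $\ell+2$ and job $1$ as $\{j+2,\ldots,\sigma_1-1\}$ instead of $\{j+2,\ldots,\sigma_1\}$; since your final count is by positions ($n-j+1$), the conclusion is unaffected.
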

\begin{proof}

As each problem $P_{\sigma}\in \mathcal{S}_{\ell,j}$ is defined by  $(2,\ldots,\ell+1)\{\ell+3,...,j+1\}(\ell+2)\{j+2,...,\sigma_1\}1\{\sigma_1+1,...,n\}$, the first part of the proposition is straightforward.\\
Besides, the second part can be simply established by counting the number of jobs to be scheduled after
job set $\{\ell+3,...,j+1\}$ when $j$ is minimal, {\it i.e.} when $j=n-k$. In this case, $(\ell+2)\{j+2,...,\sigma_1\}1\{\sigma_1+1,...,n\}$ contains $k+1$ jobs. 
\end{proof}
\endgroup

The above proposition highlights the fact that some nodes can be merged as soon as they share the same initial subproblem to be solved. More precisely, at most $(k-\ell-1)$ nodes associated to problems $P_{q,1..\ell,j}$, $\max\{j+1,n-k+\ell+1\}\leq q\leq (n-1)$, can be merged with the node associated to problem $P_{n,1..\ell,j}$, $\forall j=(n-k),..., (n-1)$. The node $P_{n,1..\ell,j}$ is replaced in the search tree by the node $P_{\sigma^{1,\ell+2,\bullet, j+2,n}}$ defined as follows (Figure~\ref{fig:rm1}):
\begin{itemize}
\item $\{\ell+3,...,j+1\}$ is the set of jobs on which it remains to branch.
\item Let $\sigma^{1,\ell+2,\bullet, j+2,n}$ be the sequence containing positions of jobs $\{1,\ldots,\ell+2,j+2,\ldots,n\}$ and placeholders for the other jobs, that leads to the best jobs permutation among $(\ell+2)\{j+2,...,q\}1\{q+1,...,n\}$, $\max\{j+1,n-k+\ell+1\}\leq q\leq n$. 
This involves the solution of at most $k$  
problems of size at most $k+1$ (in $\ostar{k\times 2.4143^{k+1}}$ time by TTBR2) and the determination of the best of the computed sequences knowing that all of them start at time $t$, namely the 
sum of the jobs processing times in $(2,\ldots,\ell+1)\{\ell+3,...,j+1\}$.
\end{itemize}

The merging process described above is applied at the root node, while an analogous merging can be applied at any node of the tree.  
With respect to the root node, the only additional consideration is that the right-side branches of a general node may have already been modified by previous merging operations. As an example, let us consider Figure~\ref{fig:aftermerge}. 
It shows that, due  to the merging operations performed from $P$, the right-side branches of $P_n$ may not be the subproblems induced by the branching scheme. However, it can be shown in a similar way as for left-merge, that the merging can still be applied. 

\begin{figure}[!ht]
\centering
\includegraphics[width=0.7\columnwidth]{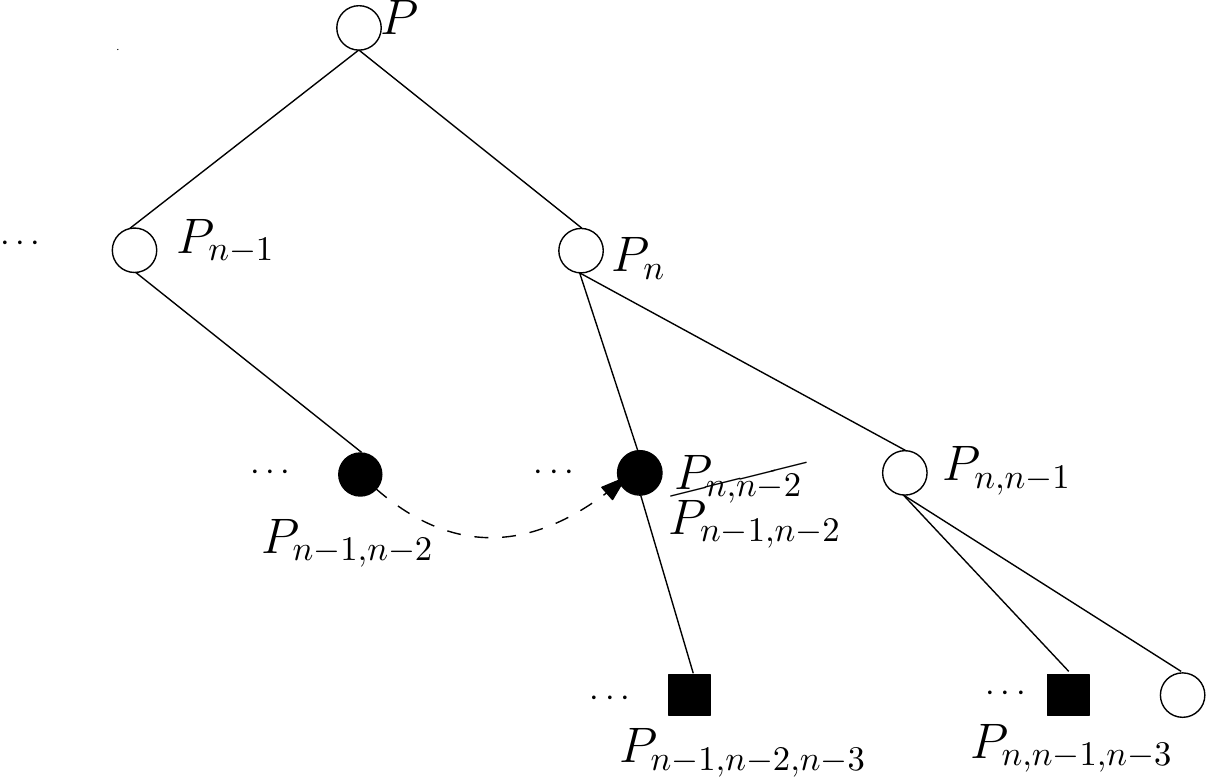}
\caption{The right branches of $P_n$ were modified by the right-merging from $P$}
\label{fig:aftermerge}
\end{figure}

In order to define the branching scheme used with the \texttt{RIGHT\_MERGE} procedure, a data structure $\mathcal{R}_{\sigma}$ is associated to a problem $P_\sigma$. It represents a list of subproblems that result from a previous merging and are now the $k$ right-side children nodes of $P_\sigma$. 
When a merging operation sets the $k$ right-side children nodes of $P_\sigma$ to $P_{\sigma^{n-k+1}},...,P_{\sigma^{n}}$, we set $\mathcal{R}_{\sigma} = \{P_{\sigma^{n-k+1}},...,P_{\sigma^{n}}\}$, otherwise we have $\mathcal{R}_{\sigma} = \emptyset$. As a conclusion, the following branching scheme for an arbitrary node of the tree is defined. It is an extension of the branching scheme introduced in Definition~\ref{def:leftscheme} for the left merging.

\begin{definition}
\label{def:lrscheme}
The branching scheme for an arbitrary node $P_\sigma$ is defined as follows:
\begin{itemize}
\item If $\mathcal{R}_{\sigma} = \emptyset$, use the branching scheme defined in Definition~\ref{def:leftscheme};
\item If $\mathcal{L}_{\sigma} = \emptyset$ and $\mathcal{R}_{\sigma} \neq \emptyset$, branch on the longest job in the available positions from the $1$st to the $(n-k)$-th, then extract problems from $\mathcal{R}_{\sigma}$ as the last $k$ branches.
\item If $\mathcal{L}_{\sigma} \neq \emptyset$ and $\mathcal{R}_{\sigma} \neq \emptyset$, extract problems from $\mathcal{L}_{\sigma}$ as the first $(k-1)$ branches, then branch on the longest job in the available positions from the $k$-th to the $n-k$-th, finally extract problems from $\mathcal{R}_{\sigma}$ as the last $k$ branches.
\end{itemize}
This branching scheme generalizes and replaces, the one introduced in Definition~\ref{def:leftscheme} and it will be referred to as \textbf{improved branching} from now on.  
\end{definition}

Proposition~\ref{propos:rg} states the validity of merging a general node, which extends the result in Proposition~\ref{propos:RMLink}.
\begingroup
\thickmuskip=0mu
\medmuskip=0mu
\begin{propos}\label{propos:rg}
Let $P_{\sigma}$ be an arbitrary problem and let $\pi,j^*,\ell_{b},\ell_e,\gamma, \Omega'$ be computed relatively to $P_{\sigma}$ according to Proposition~\ref{propos:rightstruct}. 
If $\mathcal{R}_{\sigma} = \emptyset$, the right merging on $P_\sigma$ can be easily performed by considering $P_\sigma$ as a new root problem. 
Suppose $\mathcal{R}_{\sigma} \neq \emptyset$, 
the $q$-th child node $P_{\sigma^q}$ is extracted from $\mathcal{R}_{\sigma}$, $\forall n'-k+1 \leq q \leq n'$, where $n'=\ell_e-\ell_b+1$ is the number of children nodes of $P_{\sigma}$. The structure of $P_{\sigma^q}$ is $\pi\{j^*+1,...,j^*+q-1\}\gamma^{q}\Omega'$.\\

For $0\leq \ell \leq k-1$ and $n'-k\leq j \leq n'-1$, the following conditions hold:
\begin{enumerate}
\item Problems in $\mathcal{S}_{\ell,j}^\sigma$ have the following structure:\\
$\pi(j^*+1,\ldots,j^*+\ell)\{j^*+\ell+2,...,j^*+j\}(j^*+\ell+1)\{j^*+j+1,...,j^*+q-1\}\gamma^{q}\Omega'$ with $q$ varies from $\max\{j+1,n-k+\ell+1\}$ to $n'$. 
\item The solution of all problems in $\mathcal{S}_{\ell,j}^\sigma$ involves the scheduling of a job set $\{j^*+j+1,...,j^*+q-1\}$, $\max\{j+1,n-k+\ell+1\}\leq q \leq n'$, which is of size less than $k$. Besides, for all problems in $\mathcal{S}_{\ell,j}^\sigma$ it is required to solve a common subproblem made of job set $\{j^*+\ell+2,...,j^*+j\}$ starting after $\pi(j^*+1,\ldots,j^*+\ell)$ and before $(j^*+\ell+1)\{j^*+j+1,...,j^*+q-1\}\gamma^{q}\Omega'$. 
\end{enumerate}
\end{propos}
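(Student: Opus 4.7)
The plan is to handle the two cases $\mathcal{R}_\sigma = \emptyset$ and $\mathcal{R}_\sigma \neq \emptyset$ separately, paralleling the proof of Proposition \ref{propos:RMLink} (the root case) while exploiting the structural description of $P_\sigma$ given by Proposition \ref{propos:rightstruct}.

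First I would dispatch the easy case $\mathcal{R}_\sigma = \emptyset$. In this situation the right children of $P_\sigma$ are produced by the improved branching scheme of Definition \ref{def:lrscheme} exactly as at the root. The active block of $P_\sigma$ is $\{j^*,\ldots,j^*+\ell_e-\ell_b\}$, surrounded by the prefix $\pi$ and the suffix $\gamma\Omega'$, and may be treated as an independent instance of size $n'=\ell_e-\ell_b+1$ on which Proposition \ref{propos:RMLink} applies verbatim after a relabeling of jobs and positions. The corresponding sets $\mathcal{S}_{\ell,j}^\sigma$ would then be obtained from the root-case sets by systematically prepending $\pi$ and appending $\gamma\Omega'$ to every problem.

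Next, for the harder case $\mathcal{R}_\sigma \neq \emptyset$, I would first establish the claimed structure $\pi\{j^*+1,\ldots,j^*+q-1\}\gamma^q\Omega'$ of each child $P_{\sigma^q}$ with $q \geq n'-k+1$ by induction on the sequence of merging operations performed up to $P_\sigma$. These children were installed by a previous \texttt{RIGHT\_MERGE} invocation higher in the tree, so the shared initial block $\pi\{j^*+1,\ldots,j^*+q-1\}$ follows directly from that preceding merge, while the suffix $\gamma^q\Omega'$ only records the rearrangement that prior merges made on the tail $\gamma$: since any right-merge only touches the last $k$ positions within its subtree, only this suffix can vary with $q$. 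Once this structural claim is pinned down, I would extend the branching inside each subtree rooted at $P_{\sigma^q}$: branching first on $j^*+1$, then on $j^*+2$, and so on, produces at level $\ell$ a node whose first $\ell+1$ positions are fixed to $\pi(j^*+1,\ldots,j^*+\ell)$ with $j^*+\ell+1$ next to be placed.

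To establish point 1, I would then track, inside each such subtree, the specific node obtained by assigning $j^*+\ell+1$ immediately after the block $\{j^*+\ell+2,\ldots,j^*+j\}$; this yields the announced structure, and $q$ ranges exactly from $\max\{j+1,n-k+\ell+1\}$ to $n'$ because $q$ must be large enough for the positional nesting to be legal under Property \ref{dec1} applied through the improved branching, and must correspond to a child handled by the merge. Point 2 then follows by observing that in every problem of $\mathcal{S}_{\ell,j}^\sigma$ the common block $\{j^*+\ell+2,\ldots,j^*+j\}$ starts at the same time $t = \sum_{i\in\Pi\cup\{j^*+1,\ldots,j^*+\ell\}} p_i$, since only the jobs of $\pi$ and the first $\ell$ jobs appended to it are scheduled before that block, and the auxiliary job set $\{j^*+j+1,\ldots,j^*+q-1\}$ has cardinality at most $q-j-1 \leq n'-j-1 \leq k-1$, giving the claimed size bound. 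The main obstacle I foresee is precisely the careful bookkeeping of $\gamma^q$ through nested layers of prior merging: different children $P_{\sigma^q}$ inherit different rearrangements of $\gamma$, and the argument hinges on the invariant that the starting time of the common block is independent of $q$. This invariant should hold because every job whose position is modified by any preceding right-merge lies strictly after that block, but making this precise across several nested merges is where the bulk of the technical work will live; once it is explicit, the rest is a direct transfer of Proposition \ref{propos:RMLink} to the subtree setting.
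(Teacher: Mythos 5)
Your proposal is correct and follows essentially the same route as the paper: the structure of the children nodes is read off from the improved branching scheme (Definition~\ref{def:lrscheme}), and the heart of the argument is exactly the invariant you isolate, namely that the suffix $\{j^*+j+1,\ldots,j^*+q-1\}\gamma^{q}$ consists of the same jobs for every valid $q$ (because prior right-merges only touch jobs fixed after the unscheduled block), which makes the common subproblem identical in content and starting time across all of $\mathcal{S}_{\ell,j}^\sigma$. The paper dispatches the bookkeeping you flag as the main obstacle in a single sentence with precisely that observation, so no further technical work is actually needed.
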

\begin{proof} 
The proof is similar to the one of Proposition~\ref{propos:LMLinkLeft}. 
The first part of the statement follows directly from Definition~\ref{def:lrscheme} and simply defines the structure of the children nodes of $P_{\sigma}$. 
For the second part, it is necessary to prove that $\{j^*+j+1,...,j^*+q-1\}\gamma^{q}$ consists of the same jobs for any valid value of $q$. Actually, since right-merging  only merges nodes that have common jobs fixed after the unscheduled jobs, the jobs present in $\{j^*+j+1,...,j^*+q-1\}\gamma^{q}$ and the jobs present in $\{j^*+j+1,...,j^*+q-1\}j^*\{j^*+q,...,j^*+n'-1\}\gamma$, $\max\{j+1,n-k+\ell+1\} \leq q \leq n'$, must be the same, which proves the statement.
\end{proof}

Analogously to the scenario at the root node, given the values of $\ell$ and $j$, all the problems in $\mathcal{S}_{\ell,j}^\sigma$ can be merged. More precisely, we rewrite $\sigma$ as $\alpha\phn{n'}\beta$ where $\alpha$ is the sequence of positions assigned to jobs $\{1,\ldots,j^*-1\}$, $\phn{n'}$ refers to the job set to branch on and $\beta$ contains the positions assigned to the rest of jobs. At most ${(k-\ell-1)}$ nodes associated to problems $P_{\alpha,\ell_b+q-1,\ell_b..\ell_b+\ell-1,\ell_b+j-1,\ph,\beta}$, with $\max\{j+1,n'-k+\ell+1\}\leq q\leq n'-1$, can be merged with the node associated to problem $P_{\alpha,\ell_e,\ell_b..\ell_b+\ell-1,\ell_b+j-1,\ph,\beta}$. 

Node $P_{\alpha,\ell_e,\ell_b..\ell_b+\ell-1,\ell_b+j-1,\ph,\beta}$ is replaced in the search tree by node $P_{\alpha,\sigma^{\ell,\ell_b,j},\ph,\beta}$ defined as follows:
\begin{itemize}
\item $\{j^*+\ell+2,...,j^*+j\}$ is the set of jobs on which it remains to branch. 
\item Let $\sigma^{\ell,\ell_b,j}$ be the sequence of positions among $$\{(\ell_b+q-1,\ell_b..\ell_b+\ell-1,\ell_b+j-1)\ :\ \max\{j+1,n'-k+\ell+1\}\leq q\leq n'-1\}$$ associated to the best permutation of jobs in $(j^*+\ell+1)\{j^*+j+1,...,j^*+q-1\}\gamma^{q}$, $\forall \max\{{j+1},n'-k+\ell+1\} \leq q\leq n'$. 
This involves the solution of $k$ problems of size at most $k+1$ (in $\ostar{k\times 2.4143^{k+1}}$ time by TTBR2) 
and the determination of the best of the computed sequences
knowing that all of them start at time $t$, i.e. the 
sum of the jobs processing times in  $\pi(j^*+1,\ldots,j^*+\ell)\{j^*+\ell+2,...,j^*+j\}$.

\end{itemize}\endgroup

The {\ttfamily RIGHT\_MERGE} procedure is presented in Algorithm \ref{algo:rightmerge}. Notice that, similarly to the {\ttfamily LEFT\_MERGE} procedure, this algorithm takes as input one problem $P_{\sigma}$ and provides as an output a set of nodes to branch on, which replaces all its $k$ right-side children nodes of $P_\sigma$. It is important to notice that the {\ttfamily LEFT\_MERGE} procedure is also integrated.

A procedure {\ttfamily MERGE\_RIGHT\_NODES} (Algorithm \ref{algo:mrn}) is invoked to perform the right merging for each level $\ell=0,...,k-1$ in a recursive way. The initial inputs of this procedure (line 13 in {\ttfamily RIGHT\_MERGE}) are the problem $P_{\sigma}$ and the list of its $k$ right-side children nodes, denoted by \textit{rnodes}. They are created according to the improved branching (lines 4-12 of Algorithm~\ref{algo:rightmerge}). 
Besides, the output is a list $Q$ containing the problems to branch on after merging. 
In the first call to {\ttfamily MERGE\_RIGHT\_NODES}, the left merge is applied to the first element of \textit{rnodes} (line 2), all the children nodes of nodes in \textit{rnodes} not involved in right nor left merging, are added to $Q$ (lines 3-7). This is also the case for the result of the right merging operations at the current level (lines 8-11). 
In Algorithm~\ref{algo:mrn}, the value of $r$ indicates the current size of \textit{rnodes}. It is reduced by one at each recursive call and the value $(k-r)$ identifies the current level with respect to $P_{\sigma}$. As a consequence, each right merging operation consists in finding the problem with the best total tardiness value on its fixed part, among the ones in set $\mathcal{S}^\sigma_{k-r,j}$. This is performed by the \texttt{BEST} function (line 10 of {\ttfamily MERGE\_RIGHT\_NODES}) which extends the one called in Algorithm~\ref{algo:leftmerge} by taking at most $k$ subproblems as input and returning the dominating one. 

The {\ttfamily MERGE\_RIGHT\_NODES} procedure is then called recursively 
on the list containing the first child node of the $2$nd to $r$-th node in \textit{rnodes} (lines 13-17). Note that the procedure {\ttfamily LEFT\_MERGE} is applied to every node in \textit{rnodes} except the last one. 
In fact, for any specific level, the last node in \textit{rnodes} belongs to the last branch of $P_{\sigma}$, which is $P_{\sigma,l_b+n-1,\ph,\beta}$. Since $P_{\sigma,l_b+n-1,\ph,\beta}$ is put into $Q$ at line 14 of {\ttfamily RIGHT\_MERGE}, it means that this node will be re-processed later and {\ttfamily LEFT\_MERGE} will be called on it at that moment. 
Since the recursive call of {\ttfamily MERGE\_RIGHT\_NODES} (line 18) will merge some nodes to the right-side children nodes of $P_{\alpha,\ell_b,\underset{n_r-1}{\bullet},\beta^r}$, the latter one must be added to the list $\mathcal{L}$ of $P_{\alpha,\underset{n_r}{\bullet},\beta^r}$ (line 19). In addition, since we defined $\mathcal{L}$ as a list of size either $0$ or $(k-1)$, lines 20-24 add the other $(k-2)$ nodes to $\mathcal{L}_{\alpha,\underset{n_r}{\bullet},\beta^r}$. 

It is also important to notice the fact that a node may have its $\mathcal{L}$ or $\mathcal{R}$ structures non-empty, if and only if it is the first or last child node of its parent node. A direct result is that only one node among those involved in a merging may have its $\mathcal{L}$ or $\mathcal{R}$ non-empty. In this case, these structures need to be associated to the resulting node. The reader can always refer to Figure~\ref{fig:rk3_abstract_diffshapes} for a more intuitive representation. 

\begingroup
\medmuskip=0mu
\begin{algorithm}
\caption{{\ttfamily RIGHT\_MERGE} Procedure}\label{algo:rightmerge}
\begin{algorithmic}[1]
\Require $P_\sigma=P_{\alpha,\phn{n},\beta}$ a problem of size $n$, with $\ell_{b}$, $j^*$ computed according to Proposition~\ref{propos:psigmaprop}  
\Ensure $Q:$ a list of problems to branch on after merging
\Function{{\ttfamily RIGHT\_MERGE}}{$P_{\sigma}$}
\State $Q \gets \emptyset$
\State $nodes \gets \emptyset$
\If {$\mathcal{R}_{\sigma} =\emptyset$}
    \For{$q=n-k+1$ to ${n}$}
        \State Create $P_{\alpha,\ell_{b}+q-1,\ph,\beta}$ by branching
        \State $\delta \gets$ the sequence of positions of jobs $\{j^*+q,\ldots,j^*+n-1\}$ fixed by TTBR2
        \State $nodes\gets nodes \cup P_{\alpha,\ell_{b}+q-1,\ph,\delta,\beta}$
    \EndFor
\Else 
    \State $nodes\gets \mathcal{R}_\sigma$
\EndIf
\State $Q\gets Q\cup \mathtt{MERGE\_RIGHT\_NODES}(nodes,P_\sigma)$
\State $Q\gets Q\cup nodes[k]$  \Comment{The last node will be re-processed}
\State \Return $Q$
\EndFunction
\end{algorithmic}
\end{algorithm}
\endgroup

\begin{algorithm}
\caption{{\ttfamily MERGE\_RIGHT\_NODES} Procedure}\label{algo:mrn}
\begin{algorithmic}[1]
\Require $rnodes=[P_{\alpha,\underset{n_1}{\bullet},\beta^1},\ldots,P_{\alpha,\underset{n_r}{\bullet},\beta^r}]$, ordered list of $r$ last children nodes with $\ell_b$ defined on any node in $rnodes$. $|\alpha|+1$ is the job to branch on and $n_r=n_1+r-1$.
\Ensure $Q$, a list of problems to branch on after merging
\Function{{\ttfamily MERGE\_RIGHT\_NODES}}{$rnodes, P_\sigma$}
\State $Q \gets \mathtt{LEFT\_MERGE}(P_{\alpha,\underset{n_1}{\bullet},\beta^1})$
\For{$q=1$ to $r-1$}
    \For{$j=\ell_b+ \min(k,\lfloor{n/2}\rfloor)$ to $\ell_b+n_1-1$}
        \State $ Q\gets Q\cup P_{\alpha,j,\underset{n_q-1}{\bullet},\beta^q}$
    \EndFor
\EndFor

\For{$j=\ell_b+n_1$ to $\ell_b+n_r$}
    \State Solve all the subproblems of size less than $k$ in $\mathcal{S}^\sigma_{k-r,j}$
    \State $\mathcal{R}_{\alpha,\underset{n_r}{\bullet},\beta^r} \gets \mathcal{R}_{\alpha,\underset{n_r}{\bullet},\beta^r}\cup \mathtt{BEST}( \mathcal{S}^\sigma_{k-r,j})$ 
\EndFor
\If{$r > 2$}  
    \State $newnodes\gets \emptyset$
    \For{$q=2$ to $r-1$}
        \State $newnodes\gets newnodes\cup\mathtt{LEFT\_MERGE}(P_{\alpha,\underset{n_q}{\bullet},\beta^q})$ 
    \EndFor
    \State $newnodes\gets newnodes\cup P_{\alpha,\ell_b,\underset{n_r-1}{\bullet},\beta^r}$
    \State $Q\gets Q\cup \mathtt{MERGE\_RIGHT\_NODES}(newnodes,P_\sigma)$
    \State $\mathcal{L}_{\alpha,\underset{n_r}{\bullet},\beta^r}\gets P_{\alpha,\ell_b,\underset{n_r-1}{\bullet},\beta^r}$ 
    \For{$q=2$ to ${k-1}$} 
        \State Create $P_{\alpha,\ell_b+q-1,\underset{n_r-1}{\bullet},\beta^r}$ by branching
        \State $\delta \gets$ the sequence of positions of jobs $\{|\alpha|+2,\ldots,|\alpha|+q\}$ fixed by TTBR2
        \State $\mathcal{L}_{\alpha,\underset{n_r}{\bullet},\beta^r}\gets \mathcal{L}_{\alpha,\underset{n_r}{\bullet},\beta^r}\cup  P_{\alpha,\ell_b+q-1,\delta,\underset{n_r-1}{\bullet},\beta^r}$
    \EndFor
\EndIf
\State \Return $Q$
\EndFunction
\end{algorithmic}
\end{algorithm}

\begin{lemma}
\label{lemma:rightmerge}
The {\ttfamily RIGHT\_MERGE} procedure returns a list of $\bigo{n}$ nodes 
in polynomial time and space.\\
The solution of the  problems associated to these nodes involves the solution of $1$ subproblem of size $(n-1)$, of $(k-1)$ subproblems of size $(n-k-1)$, and subproblems of size $i$ and $(n_q-(k-r)-i-1)$, $\forall r=2, ...,k; q=1,...,(r-1); i=k,...,(n-2k+r-2)$, where $n_q$ is defined in Algorithm \ref{algo:mrn} (\texttt{MERGE\_RIGHT\_NODES}).\\
\end{lemma}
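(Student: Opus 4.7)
The plan is to establish the three claims of the lemma in turn: the cardinality bound on the returned list, the polynomial time/space bound, and the precise inventory of subproblem sizes. I would proceed by a direct inspection of \texttt{RIGHT\_MERGE} and an induction on the depth $r$ of the recursive calls to \texttt{MERGE\_RIGHT\_NODES}, taking advantage of the fact that $k$ is a fixed constant throughout.

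For the node count, I would argue as follows. In \texttt{RIGHT\_MERGE}, lines 4--12 build at most $k$ initial right-side children (either by branching or by extracting from $\mathcal{R}_{\sigma}$), which take $\bigo{k}$ time because $k$ is constant and each $\delta$ computed via TTBR2 costs $\ostar{2.4143^{k+1}}=\bigo{1}$. The list $Q$ returned at the end of \texttt{RIGHT\_MERGE} is filled solely by the call to \texttt{MERGE\_RIGHT\_NODES} plus the last node $nodes[k]$. Inside \texttt{MERGE\_RIGHT\_NODES}, the first loop (lines 3--7) pushes at most $k\cdot n_1 = \bigo{n}$ nodes into $Q$, since $r\le k$; the call to \texttt{LEFT\_MERGE} on line 2 returns $\bigo{1}$ node by Lemma~\ref{lemma:leftmerge}; and the recursive call on line 18 contributes a number of nodes bounded by the same recurrence with $r$ decreased by one. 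Unrolling the recursion $r\le k$ times gives an overall bound of $\bigo{k\cdot n}=\bigo{n}$ nodes returned.

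For the running time and space, I would combine the node-count argument with the observation that each operation executed inside \texttt{MERGE\_RIGHT\_NODES} is either (i) a call to TTBR2 on an instance of size at most $k+1$, which takes constant time by Proposition~\ref{propos:ttbr2}, (ii) an invocation of \texttt{BEST} on a set $\mathcal{S}^{\sigma}_{k-r,j}$ of cardinality at most $k$, which compares constantly many already-computed tardiness values in $\bigo{n}$ time, or (iii) an update of the auxiliary lists $\mathcal{L}$ and $\mathcal{R}$, which is polynomial per entry. Multiplying the $\bigo{n}$ nodes by the polynomial per-node cost yields polynomial total time, and the space bound follows because only polynomially many nodes, each of polynomial description size, are alive at any moment.

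The most delicate step is the inventory of the sizes of the subproblems that remain to be solved by the caller of \texttt{RIGHT\_MERGE}. I would obtain it by reading off the structural description in Proposition~\ref{propos:rg} and tracking what each branching/merging step leaves open. The last node $nodes[k]$, put back into $Q$ on line 14 of \texttt{RIGHT\_MERGE}, yields the unique subproblem of size $(n-1)$: it corresponds to placing the longest job at the extremal position, and the re-processing comment indicates that \texttt{LEFT\_MERGE} has not yet been applied to it. The initial \texttt{LEFT\_MERGE} on the first child (line 2 of \texttt{MERGE\_RIGHT\_NODES}) produces, together with its own internal branching, the $(k-1)$ subproblems of size $(n-k-1)$, matching the structure of left merging applied at a node whose right portion is of size $k$. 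Finally, the inner loop at lines 3--7 of \texttt{MERGE\_RIGHT\_NODES}, executed for $r=2,\ldots,k$ on nodes $P_{\alpha,\underset{n_q}{\bullet},\beta^q}$ with $q=1,\ldots,r-1$ and positions $j$ ranging over $k,\ldots,(n-2k+r-2)$ (after translating $\ell_b+j-1$ and restricting $j$ to the values not already eliminated by the left and right merges), produces exactly the pairs of subproblems of sizes $i$ and $(n_q-(k-r)-i-1)$ claimed in the statement. I would conclude by checking that no other open subproblems are generated, which can be verified by summing the contributions above and matching the total with the $\bigo{n}$ count established before.
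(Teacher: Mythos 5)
Your overall strategy (direct inspection of \texttt{RIGHT\_MERGE}, unrolling the at most $(k-1)$ recursive calls to \texttt{MERGE\_RIGHT\_NODES}, and exploiting the constancy of $k$ to bound the per-node cost) is the same as the paper's, and your treatment of the $\bigo{n}$ node count, of the polynomial time and space bound, of the single size-$(n-1)$ subproblem coming from line 14 of \texttt{RIGHT\_MERGE}, and of the pairs of sizes $i$ and $(n_q-(k-r)-i-1)$ produced by lines 3--7 of Algorithm~\ref{algo:mrn} is sound.

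The gap is in your accounting of the $(k-1)$ subproblems of size $(n-k-1)$. You attribute all of them to ``the initial \texttt{LEFT\_MERGE} on the first child \ldots together with its own internal branching.'' But by Lemma~\ref{lemma:leftmerge} a single call to \texttt{LEFT\_MERGE} returns exactly \emph{one} node to branch on (of size one less than its input), so one call cannot supply $(k-1)$ open subproblems, and ``its own internal branching'' adds nothing further to $Q$. The correct source is the recursion itself: \texttt{MERGE\_RIGHT\_NODES} is invoked $(k-1)$ times in total, once for each $r=k,\ldots,2$, and each invocation executes line 2 exactly once, contributing one subproblem. The input to that call at recursion depth $(k-r)$ has size $(n-k-(k-r))$, so the returned subproblem has size $(n-k-(k-r)-1)$; these $(k-1)$ sizes are all bounded above by $(n-k-1)$, attained only at $r=k$, which is how the lemma's ``$(k-1)$ subproblems of size $(n-k-1)$'' must be read for the recurrence used later in Proposition~\ref{propos:merge}. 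With this corrected attribution the inventory closes as claimed; as written, your argument would fail at the point where a single \texttt{LEFT\_MERGE} call is asked to account for $(k-1)$ returned nodes.
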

\begin{proof}
The first part of the result follows directly from Algorithm~\ref{algo:rightmerge}. The only lines where nodes are added to $Q$ in \texttt{RIGHT\_MERGE} are lines 13-14. In line 14, only one problem is added to $Q$, thus it needs to be proved that the call on \texttt{MERGE\_RIGHT\_NODES} (line 13) returns $\bigo{n}$ nodes. This can be computed by analysing the lines 2-7 of Algorithm~\ref{algo:mrn}. Considering all recursive calls, the total number of nodes returned by \texttt{MERGE\_RIGHT\_NODES} is $(\sum_{i=1}^{k-1}(k-i)(n-2k-i)) + k-1$ which yields $\bigo{n}$. 
The number of all nodes considered in right merging is bounded by a linear function on $n$. Furthermore, all the operations associated to the nodes (merging, creation, etc) have a polynomial cost. As a consequence, Algorithm~\ref{algo:rightmerge} runs in polynomial time and space.

Regarding the sizes of the subproblems returned by \texttt{RIGHT\_MERGE}, the node added in line 14 of Algorithm~\ref{algo:rightmerge} contains one subproblem of size $(n-1)$, corresponding to branching the longest job on the last available position. 
Note that even though the subtree of this node has been modified by merging operations, its solution time is still bounded by the time for solving a subproblem of size $(n-1)$. 

Then, the problems added by the call to \texttt{MERGE\_RIGHT\_NODES} are added to $Q$.
In line 2 of Algorithm~\ref{algo:mrn}, the size of the problem returned by \texttt{LEFT\_MERGE} 
is reduced by one unit when compared to the input problem which is of size $(n-k-(k-r))$. Note that
$(k-r)$ is the current level with respect to the node tackled by Algorithm~\ref{algo:mrn}.
As a consequence, the size of the resulting subproblem is $(n-k-(k-r)-1)$. Note that this line is executed $(k-1)$ times, $\forall r=k,\ldots,2$, corresponding to the number of calls to \texttt{MERGE\_RIGHT\_NODES}. In line 5 of Algorithm~\ref{algo:mrn}, 
the list of nodes which are not involved in any merging operation are added to $Q$. 
This corresponds to pairs of problems of size $i$ and $(n_q-(k-r)-i-1),~\forall i=k,...,(n-k-1)$ and this proves the last part of the lemma.
\end{proof}

Lemma \ref{lemma:RMWC} considers non-worst-case scenarios.

\begin{lemma}
\label{lemma:RMWC}
When right merging is incorporated into the branching scheme of TTBR, instances such that $LPT=EDD$ correspond to worst-case instances.
\end{lemma}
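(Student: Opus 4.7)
The plan is to mirror the structure of the proof of Lemma~\ref{lemma:LMWC}, but now for the right-merging procedure. First I would re-establish that in the worst case the recurrence $T(n) > 2T(n-1)$ still holds (hence $T(n) = \omega(2^n)$): from Lemma~\ref{lemma:rightmerge}, a single call to \texttt{RIGHT\_MERGE} on an instance of size $n$ still produces at least one subproblem of size $(n-1)$ (the node added at line~14 of Algorithm~\ref{algo:rightmerge}) together with another surviving branch whose dominant subproblem is of size $(n-1)$ when all eligible positions are available. This lower bound on $T(n)$ will drive a contradiction argument later.

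Next I would enumerate exhaustive cases on the relative order of the two longest jobs in the EDD sequence, focusing on the rightmost branching positions (symmetrically to the case analysis done for left merging):
\begin{enumerate}
  \item $1 = [n]$ and $2 = [n-1]$ (this covers $LPT = EDD$);
  \item $1 = [j]$ with $j \leq n-1$;
  \item $1 = [n]$ and $2 = [j]$ with $j \leq n-2$.
\end{enumerate}
In cases 2 and 3, Property~\ref{dec1} discards one or more right-side branches of $P_\sigma$ before the merging procedure is even invoked, so \texttt{RIGHT\_MERGE} operates on a strict subset of the nodes present in case~1. Writing $TR_1, TR_2, TR_3$ for the respective time reductions achieved by incorporating right merging, the analysis of Lemma~\ref{lemma:rightmerge} gives an immediate inequality for case~2 (since the branch corresponding to position $n$ on job~$1$ vanishes, $TR_2 \geq T(n-1)$), while for case~3 the merging can still be applied to the surviving children in a way that matches, on the surviving nodes, the reduction obtained in case~1.

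The main obstacle will be arguing cleanly that $TR_1 \leq TR_2$, since the two cases involve structurally different sets of subproblems. I would handle it exactly as in Lemma~\ref{lemma:LMWC}: assume for contradiction that $TR_1 > TR_2$; this translates into an inequality of the form $T(n-1) < \sum_{i=2}^{k+1} T(n-i)$, which after telescoping forces $T(n) = o(2^n)$, contradicting $T(n) = \omega(2^n)$ established in the first step. Combining the three cases yields $TR_1 \leq TR_i$ for $i=2,3$, so $LPT = EDD$ instances minimize the total reduction achievable by right merging and hence maximize the running time. This is exactly the statement that $LPT = EDD$ is a worst-case scenario.
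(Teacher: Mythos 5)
There is a genuine gap: your case decomposition is inverted. Under $LPT=EDD$ the longest job $1$ coincides with $[1]$ in the EDD order, not with $[n]$, so your case~1 ($1=[n]$ and $2=[n-1]$) does \emph{not} cover the $LPT=EDD$ scenario — in fact $1=[n]$ is essentially the easiest situation, since Property~\ref{dec1} then forbids every position except the last one and almost the whole tree disappears. The correct split for right merging keys on whether job $1$ (resp.\ job $2$) sits early enough in EDD for all $k$ right-side branches (resp.\ their relevant children) to exist, i.e.\ $1=[j]$ with $j\leq n-k+1$ and $2=[j']$ with $j'\leq n-k$ for the worst case, versus $j'>n-k$ or $j>n-k+1$ for the other cases. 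Related to this, your claim that ``the branch corresponding to position $n$ on job $1$ vanishes, so $TR_2\geq T(n-1)$'' cannot happen: Property~\ref{dec1} only ever forbids the \emph{leftmost} positions $h<k$ when $1=[k]$; the branch placing job $1$ in position $n$ always survives. The branches that get pruned when $LPT\neq EDD$ are $P_1,\ldots,P_{j-1}$, which is why the paper's case $1=[j]$, $j>n-k+1$ wipes out entire left subtrees rather than any right-side node.

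Because of this, the contradiction machinery you import from Lemma~\ref{lemma:LMWC} (assuming $TR_1>TR_2$, deducing $T(n-1)<\sum_i T(n-i)$, contradicting $T(n)=\omega(2^n)$) ends up comparing quantities that do not correspond to the actual reductions; the worst-case reduction from right merging is $\sum_{i=1}^{k-1} i\,T(n-i-2)$, not a single $T(n-1)$ term against a telescoping sum. The paper's argument for right merging is in fact more direct and needs no asymptotic contradiction: whenever $LPT\neq EDD$, some members of a merge set $\mathcal{S}_{\ell,j}^\sigma$ are simply never created by Property~\ref{dec1} — and not creating a node saves at least as much work as creating it and merging it away — while the members that do get created can still all be merged into one. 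Hence every case yields a reduction at least equal to $TR_{LPT=EDD}$, which is the whole proof. You do gesture at this ``non-creation dominates merging'' idea for your case~3, and that is the observation to build the entire argument on once the case split is corrected.
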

\begin{proof}
The proof follows a similar reasoning to the proof of Lemma~\ref{lemma:LMWC}. We analyze the time reduction obtained from the last $k$ subtrees of the current node. 
The following cases are considered:
\begin{enumerate}
    \item $1=[j]$, $j\leq (n-k+1)$ and $2=[j']$, $j'\leq (n-k)$;
    \item $1=[j]$, $j\leq (n-k+1)$ and $2=[j']$, $j' > (n-k)$; 
    \item $1=[j]$, $j > (n-k+1)$.
\end{enumerate}

We refer the reader to Figure~\ref{fig:rm1} for a better understanding of the proof. Since the structure of right merging is the same at different levels (except level 0) of the tree, it is sufficient to consider  the time reduction from level 0 and level 1. We denote the resulting time reduction by $TR1$, $TR2$ and $TR3$ for each of the three cases, respectively. We also note $TR_{LPT=EDD}$ the time reduction corresponding to the case $LPT=EDD$.

In case 1, all black nodes at level 1 of Figure~\ref{fig:rm1} are created and merged to one. Therefore, the corresponding time reduction $TR1\geq TR_{LPT=EDD}=\sum_{i=1}^{k-1}iT(n-i-2)$.

In case 2, some black nodes at level 1 of Figure~\ref{fig:rm1} are not created due to Property \ref{dec1}, while the black nodes that remain can still be merged to the last subtree. Therefore, $TR2\geq TR1$.

In case 3, the subtrees rooted by  $\{P_1,...,P_{j-1}\}$ are not created due to Property \ref{dec1}. This is obviously a stronger reduction than only merging some nodes from inside these subtrees. In addition, for subtrees that remain except the last one, i.e., those rooted by $\{P_j,...,P_{n}\}$, time reduction is still guaranteed by, either right merging or the non-creation of some nodes due to Property \ref{dec1}, depending on the position of job $2$ in EDD ordering.

In other words, if $LPT\neq EDD$ then the number of nodes in $\mathcal{S}_{\ell,j}^\sigma$ (defined in Proposition~\ref{propos:rg}) can be less, since some nodes may not be created due to Property~\ref{dec1}. However, all the nodes inside $\mathcal{S}_{\ell,j}^\sigma$ can still be merged to one except when $\mathcal{S}_{\ell,j}^\sigma$ is empty. In either case, we can achieve at least the same reduction as the case of $LPT = EDD$. This reasoning obviously holds when extending the consideration to all levels of the tree and to all recursions. Therefore, $LPT=EDD$ is the worst-case scenario.
\end{proof}

\subsubsection{A working example for Right Merge}
In order to better illustrate the merging operation on right-side branches, an example is provided in this section. The input data ($n=6$) is given in Table \ref{tab:rmex} and the value of $k$ is chosen as $3$.

\begin{table}[!ht]
\centering
\begin{tabular}{|c|c|c|c|c|c|c|}\hline
$i$ & 1 & 2 &3&4&5&6 \\ \hline
$p_i$& 11&10&7&4&2 & 1\\ \hline
$d_i$&10&12&14&16&18&19\\ \hline
\end{tabular}
\caption{A sample instance}\label{tab:rmex}
\end{table}

The solution of the instance is depicted in Figure \ref{fig:rightex}. The indicated partial total tardiness value ($tt=\sum T_j$) is computed on the jobs that are fixed before and after the unscheduled jobs.
The applied algorithm is TTBR1 with \texttt{RIGHT\_MERGE} integrated. 
At the beginning, the root node $P$ is the current node, and \texttt{RIGHT\_MERGE} is called on it. Line 1-12 in Algorithm \ref{algo:rightmerge} creates child nodes in order to prepare the merging, as shown in \ref{subfig:rightex1}. Notice that the small subproblem involving jobs $\{5,6\}$ inside $P_4$ is solved directly to obtain the optimal sequence which is $(6,5)$. 
Line 13 calls the \texttt{MERGE\_RIGHT\_NODES} procedure with arguments $(P_4,P_5,P_6)$. 

In \texttt{MERGE\_RIGHT\_NODES} procedure (Algorithm \ref{algo:mrn}), line 2 means to apply \texttt{LEFT\_MERGE} on $P_4$. This is not performed in our exemple since $P_4$ only have 3 jobs to schedule and $k$ is supposed to verify $2\leq k \leq n/2$. Therefore, we add directly $P_{4,1}$ to $\mathcal{Q}$. 
Line 3-7 means to add all ``middle'' nodes that are not involved by merging hence should be opened later as independent subproblems. This includes $P_{4,2}$ only. Notice that $\ell_b=1$ at line 4. 
Lines 8-11 do the merging. As shown in Figure \ref{subfig:leftex2}, nodes $P_{4,3}$,  $P_{5,3}$ are merged with $P_{6,3}$ and the node $P_{5,4}$ is merged with $P_{6,4}$. The dominant nodes are $P_{6,3}$ and $P_{6,3}$, hence nodes nodes $P_{4,3}$,  $P_{5,3}$ and $P_{5,4}$ are  cut. 
After this, at line 15, \texttt{LEFT\_MERGE} is applied on $P_5$ (Figure \ref{subfig:rightex3}) then the \texttt{MERGE\_RIGHT\_NODES} procedure is called in a recursive manner on nodes $P_{5,1}$ and $P_{6,1}$. The nodes $P_{5,1,3}$ and $P_{5,1,4}$ are finally cut due to merging (Figure \ref{subfig:rightex4}).

Then the algorithm continues in a recursive way. In depth-first order, the node $P_1$ is the next to open. 



\FloatBarrier
\begin{figure}[ht!]
\centering
\begin{subfigure}[ht!]{\textwidth}\centering
\includegraphics[width=\columnwidth]{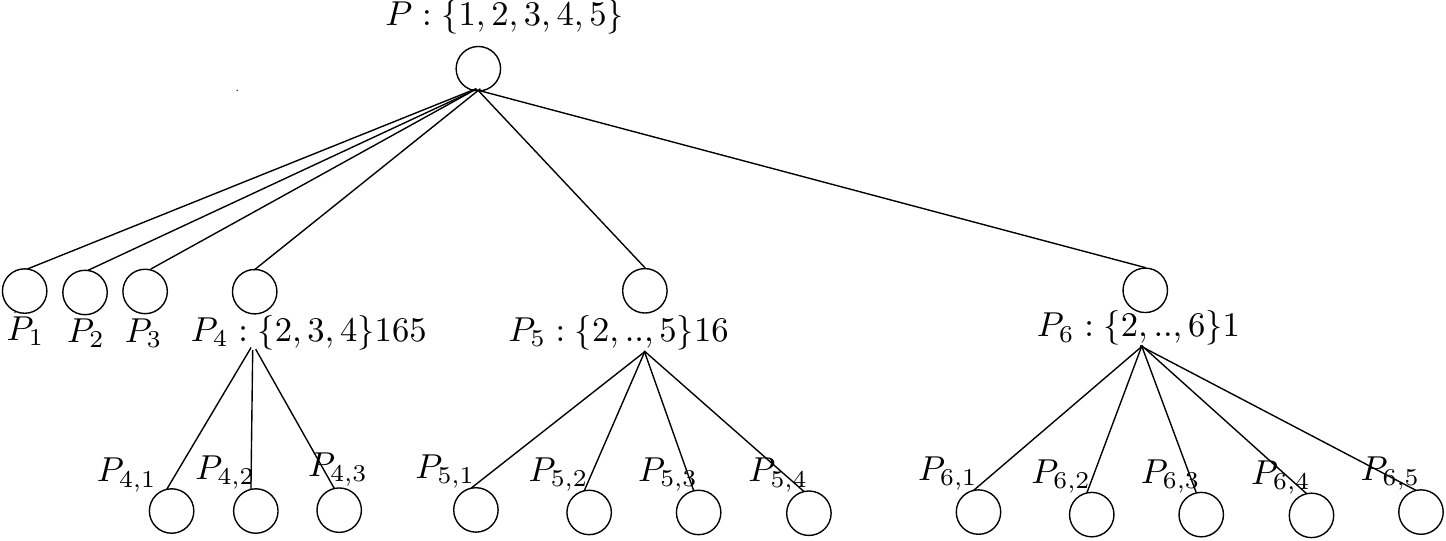}\caption{}
\label{subfig:rightex1}
\end{subfigure}
\end{figure}

\begin{figure}[ht!]
\ContinuedFloat
\begin{subfigure}[!ht]{\textwidth}\centering
\includegraphics[width=\columnwidth]{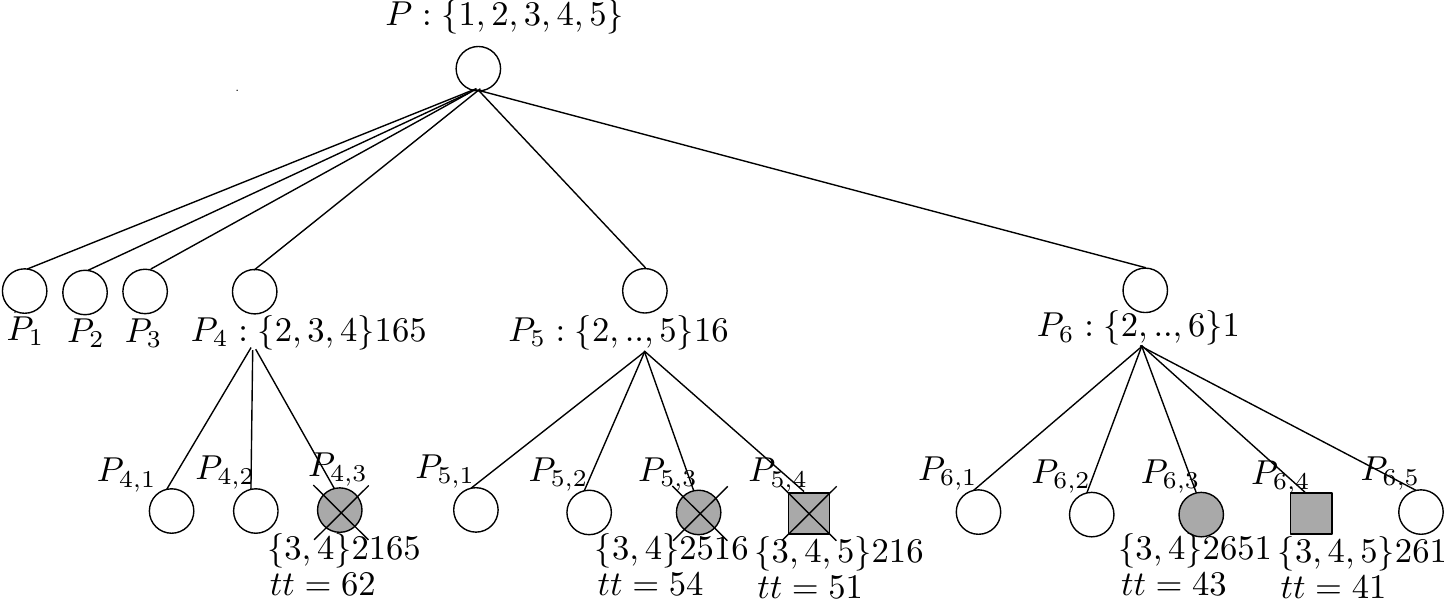}\caption{}
\label{subfig:rightex2}
\end{subfigure}
\end{figure}

\begin{figure}[ht!]
\ContinuedFloat
\begin{subfigure}[!ht]{\textwidth}\centering
\includegraphics[width=\columnwidth]{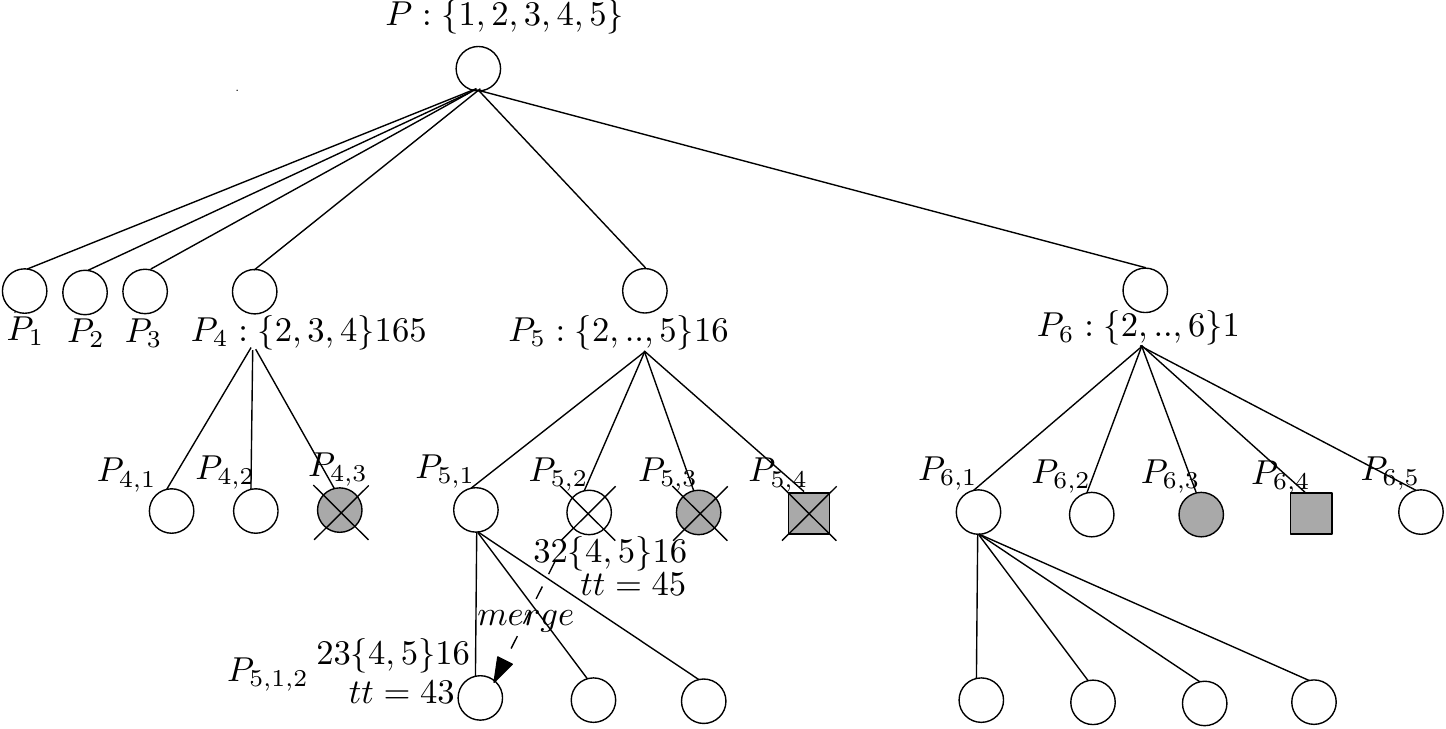}\caption{}
\label{subfig:rightex3}
\end{subfigure}
\end{figure}

\begin{figure}[ht!]
\ContinuedFloat
\begin{subfigure}[!ht]{\textwidth}\centering
\includegraphics[width=0.9\columnwidth]{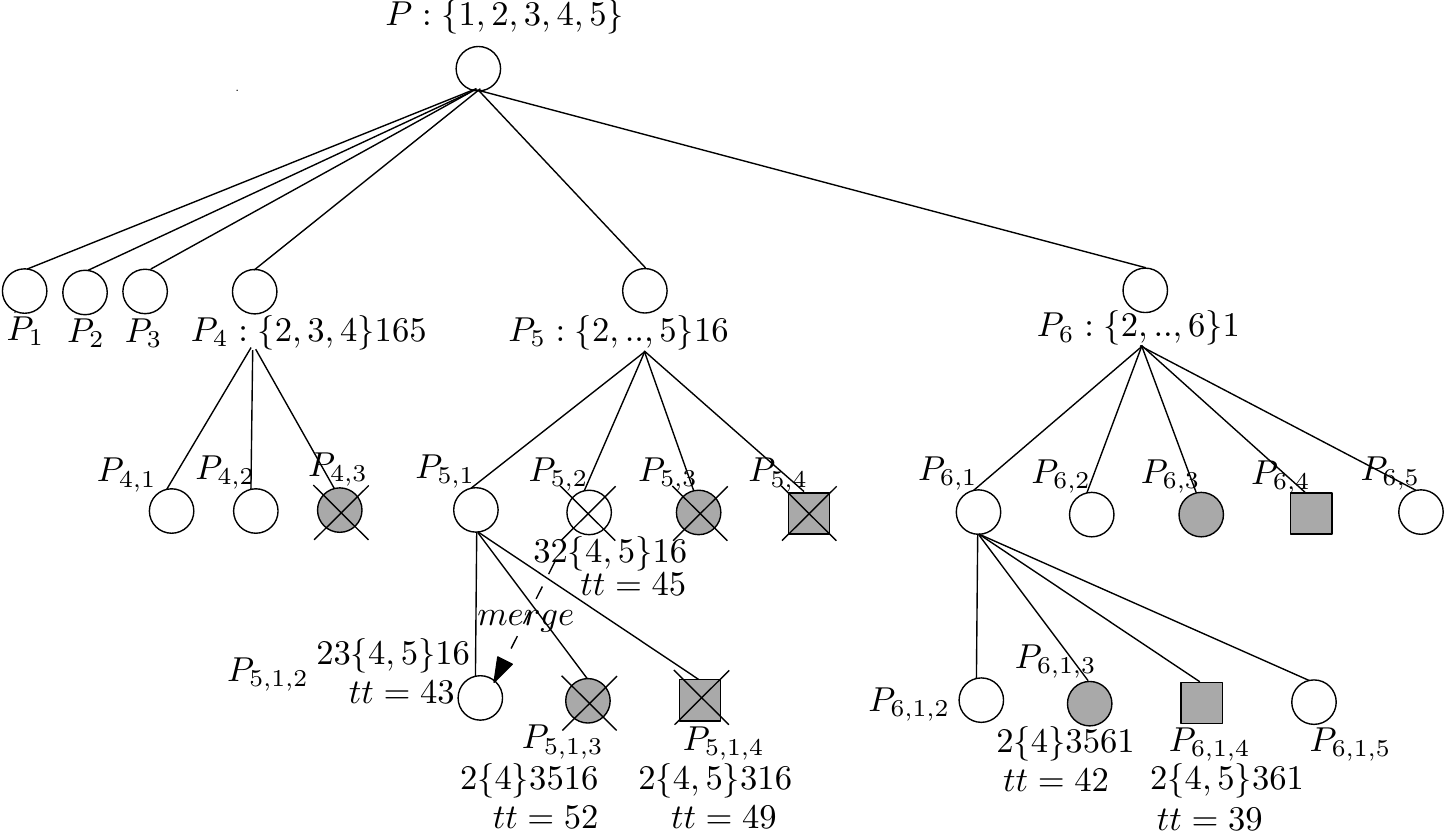}\caption{}
\label{subfig:rightex4}
\end{subfigure}
\caption{A sample instance solved by TTBR1 with right-merging integrated}
\label{fig:rightex}
\end{figure}
\FloatBarrier

\FloatBarrier
\subsection{Complete algorithm and analysis}~\label{sec:thealgorithm}
We are now ready to define the main procedure {\ttfamily TTBM} (Total Tardiness \techbm{}), stated in Algorithm~\ref{algo:ttbm} which is called on the initial input problem $P : \{1,...,n\}$. 
The algorithm has a similar recursive structure as TTBR1. However, each time a node is opened, the sub-branches required for the merging operations are generated, the subproblems of size less than $k$ are solved and the procedures {\ttfamily LEFT\_MERGE} and {\ttfamily RIGHT\_MERGE} are called. Then, the algorithm proceeds recursively by extracting the next node  from $Q$ with a depth-first strategy and terminates when $Q$ is empty. 

\begin{algorithm}
\caption{Total Tardiness Branch and Merge (TTBM)}\label{algo:ttbm}
\begin{algorithmic}[1]
\Require $P:\{1,...,n\}$: input problem of size $n$ \par $\frac{n}{2}\geq k\geq 2$: an integer constant 
\Ensure $seqOpt$: an optimal sequence of jobs
\Function{TTBM}{$P$,$k$}
\State $Q \gets P$
\State $seqOpt \gets$ the EDD sequence of jobs
\While{$Q\neq \emptyset$}
    \State $P^*\gets $ extract next problem from $Q$ (depth-first order)
    \If{the size of $P^*<2k$}
        \State Solve $P^*$ by calling TTBR2
    \EndIf
    \If{all jobs $\{1,...,n\}$ are fixed in $P^*$}
        \State $seqCurrent \gets$ the solution defined by $P^*$
        \State $seqOpt \gets$ best solution between $seqOpt$ and $seqCurrent$
    \Else
        \State $Q \gets Q\cup \mathtt{LEFT\_MERGE}(P^*)$    \Comment{Left-side nodes}
        \For{$i =k+1,...,n-k$} 
            \State Create the $i$-th child node $P_i$ by branching scheme of TTBR1 
            \State $Q \gets Q\cup P_i$   
        \EndFor
        \State $Q \gets \mathtt{RIGHT\_MERGE}(P^*)$ \Comment{Right-side nodes}
    \EndIf
\EndWhile
\State  \Return $seqOpt$
\EndFunction
\end{algorithmic}
\end{algorithm}

The complexity of the algorithm depends on the value $k$. The higher it is, the more subproblems can be merged and the better is the worst-case time complexity of the algorithm. Figure \ref{fig:cutsafter} demonstrates the reduction that can be obtained by TTBM on each recursive call. It can help complexity analysis of the algorithm. 
Proposition~\ref{propos:merge} provides details about the time complexity of the proposed algorithm. 

\begin{figure}
\centering
\includegraphics[width=0.9\columnwidth]{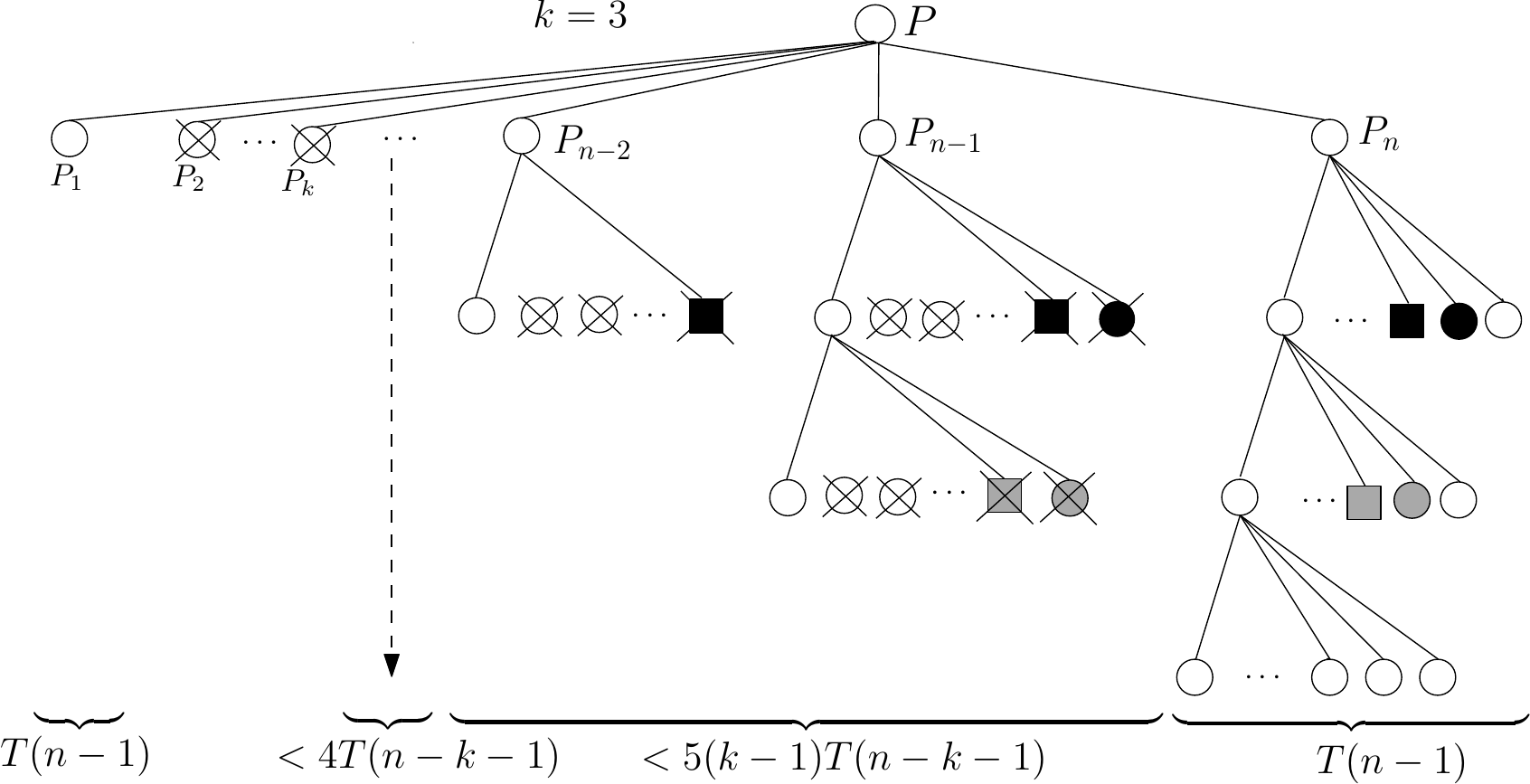}
\caption{Reduction obtained by merging}\label{fig:cutsafter}
\end{figure}

\FloatBarrier
{\begin{propos}\label{propos:merge}
Algorithm TTBM runs in $\ostar{(2+\epsilon)^n}$ time and polynomial space, where $\epsilon \to 0$ when $k \to \infty$. 
\end{propos}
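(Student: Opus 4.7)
The plan is to derive a recurrence for $T(n)$ capturing the worst-case behaviour of TTBM and then solve it asymptotically. By Lemmas \ref{lemma:LMWC} and \ref{lemma:RMWC} the worst case occurs when LPT $=$ EDD, so I would restrict attention to that configuration throughout. On an input of size $n$, one call of TTBM generates: (i) the single size-$(n-1)$ subproblem returned by \texttt{LEFT\_MERGE} (Lemma \ref{lemma:leftmerge}); (ii) the middle branches at positions $k+1,\dots,n-k$, each yielding two subproblems of sizes $(i-1)$ and $(n-i)$; and (iii) the nodes returned by \texttt{RIGHT\_MERGE}, which by Lemma \ref{lemma:rightmerge} amount to one size-$(n-1)$ subproblem, $(k-1)$ size-$(n-k-1)$ subproblems, and $\bigo{n}$ additional pairs of strictly smaller subproblems stemming from the recursive right-merge levels. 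All node-creation and bookkeeping costs, as well as the TTBR2 calls on size-$<2k$ subproblems (line 7 of Algorithm \ref{algo:ttbm}), are polynomial in $n$ since $k$ is a fixed constant.

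After carefully consolidating these contributions, I expect the worst-case running time to satisfy a recurrence of the form
\begin{align*}
T(n) \;\leq\; 2\,T(n-1) \;+\; (k-1)\,T(n-k-1) \;+\; 2\sum_{j=k}^{n-k-1} T(j) \;+\; p(n),
\end{align*}
where $p(n)$ is polynomial. Using the telescoping trick already exploited in Propositions \ref{propos:ttbr1} and \ref{propos:ttbr2}, that is, subtracting from this the same inequality evaluated at $n-1$ to eliminate the summation, I would obtain the fixed-length linear recurrence
\begin{align*}
T(n) \;\leq\; 3\,T(n-1) \;-\; 2\,T(n-2) \;+\; (k+1)\,T(n-k-1) \;-\; (k-1)\,T(n-k-2) \;+\; p(n).
\end{align*}
By the standard result recalled in Section \ref{sec:prelim}, $T(n) = \ostar{c_k^n}$ where $c_k$ is the largest positive root of the characteristic polynomial $Q_k(c) = c^{k+2} - 3c^{k+1} + 2c^k - (k+1)c + (k-1)$.

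The final step is to show $c_k \to 2$ as $k \to \infty$. A direct evaluation gives $Q_k(2) = -(k+3) < 0$, while $Q_k'(2) = 2^k - (k+1)$, which grows exponentially. A first-order expansion of $Q_k$ around $c = 2$ then yields $c_k - 2 \;\approx\; (k+3)/(2^k - k - 1)$, so for any $\epsilon > 0$ one can pick $k$ large enough to ensure $c_k \leq 2 + \epsilon$, giving $T(n) = \ostar{(2+\epsilon)^n}$. Polynomial space follows from the depth-first exploration of the search tree together with the polynomial-space guarantees of Lemmas \ref{lemma:leftmerge} and \ref{lemma:rightmerge}.

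The main obstacle I foresee is not the exponential base analysis itself, but rather the precise accounting of the ``extra'' pairs of subproblems produced by the recursive calls of \texttt{MERGE\_RIGHT\_NODES}, described by the triples $(r,q,i)$ in Lemma \ref{lemma:rightmerge}. One must verify that, even in the worst case LPT $=$ EDD, these pairs can be absorbed into the $2\sum_{j=k}^{n-k-1} T(j)$ term of the middle branches (perhaps after absorbing some $\bigo{n}$ multiplicative factor into $p(n)$), so that they do not introduce a faster-growing term that would shift the leading exponential base. This is the step that requires the most careful combinatorial bookkeeping over the structure described in Figure \ref{fig:cutsafter}.
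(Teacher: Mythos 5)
Your overall strategy is sound and reaches the right conclusion, but it diverges from the paper's proof in how the variable-length sums are eliminated, and the one step you flag as delicate is handled quite differently there. The paper keeps the triple sum $\sum_{r=2}^k\sum_{q=1}^{r-1}\sum_{i}\bigl(T(i)+T(\cdot)\bigr)$ explicitly in the recurrence and disposes of it not by telescoping but by a bootstrapping argument: it first observes $T(n)>2T(n-1)$ (both \texttt{LEFT\_MERGE} and \texttt{RIGHT\_MERGE} each contribute a size-$(n-1)$ subproblem), deduces $T(n)=\omega(2^n)$ and hence $T(n)>T(n-1)+\dots+T(1)$; this last inequality collapses every sum $\sum_{i\le m}T(i)$ into a single term bounded by $T(m+1)$, so that both the middle-branch sum and the triple sum reduce to $\bigo{k}$ copies of $T(n-k-1)$, yielding the two-term recurrence $T(n)\le 2T(n-1)+(5k-1)T(n-k-1)+\bigo{p(n)}$ with characteristic function $1-2/x-(5k-1)/x^{k+1}$, whose largest root tends to $2$. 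Your telescoping route (the same device the paper uses for TTBR1, TTBR2 and TTBM-L) is an equally legitimate way to remove the middle-branch sum, and your root analysis of $Q_k$ is correct: $Q_k(2)=-(k+3)$ and $Q_k'(2)=2^k-(k+1)$ do force $c_k\to 2$.

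The genuine gap is your treatment of the right-merge triples $(r,q,i)$. They cannot be ``absorbed into $p(n)$ after an $\bigo{n}$ multiplicative factor'': these are exponentially large terms $T(i)$, and no polynomial additive slack can swallow them. What actually saves the argument is different: for each of the $\bigo{k^2}$ pairs $(r,q)$ the inner sum runs over subproblem sizes at most $n-k-2$, so the whole triple sum is at most $\bigo{k^2}\sum_{j\le n-k-2}T(j)$ --- a constant (in $n$) number of extra copies of the same kind of sum you already carry, all sitting at lag at least $k+1$. Whether you collapse them with the paper's bound $\sum_{i\le m}T(i)<T(m+1)$ or push them through your telescoping step, they only perturb the coefficients of $x^{-(k+1)}$ and deeper in the characteristic equation by quantities polynomial in $k$; since those terms vanish as $k\to\infty$ for any root bounded away from $1$, they cannot move the limit of $c_k$ away from $2$. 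Making that accounting explicit is precisely what the paper's chain of inequalities does, and it is the piece your proposal still needs before the claimed recurrence, and hence the telescoped polynomial $Q_k$, is justified.
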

\begin{proof}
The proof is based on the analysis of the number and the size of the subproblems put in $Q$ when a single problem $P^*$ is expanded. 
As a consequence of Lemma~\ref{lemma:LMWC} and Lemma~\ref{lemma:RMWC}, TTBM induces the following recursion:
\begin{align*}
T(n) = 
    &2 T(n-1) + 2T(n-k-1) + ... + 2T(k) \\
    &+ \sum_{r=2}^k\sum_{q=1}^{r-1}\sum_{i=k}^{n_1-(k-r)-2}(T(i)+T(n_q-(k-r)-i-1))\\
    &+(k-1)T(n_1-1)  + \bigo{p(n)}
\end{align*}
First, a simple lower bound on the complexity of the algorithm can be derived by the fact that the procedures \texttt{RIGHT\_MERGE} and \texttt{LEFT\_MERGE} provide (among  others) two subproblems of size $(n-1)$, based on which the following inequality holds:
\begin{equation}
\label{eq:lbound2n}
    T(n)> 2T(n-1)
\end{equation} 
By solving the recurrence, we obtain that $T(n) = \omega(2^n)$. Here the $\omega()$ notation is adopted to express an asymptotic lower bound of the complexity. $f(x)=\omega(g(x))$ if and only if for any positive constant $c$, $\exists n'$ such that $\forall x>n'$, $f(x)> cg(x)\geq 0$. 
As a consequence, the following inequality holds:
\begin{equation}
\label{eq:lboundtn}
    T(n)> T(n-1)+\ldots+T(1)
\end{equation} 
In fact, if it does not hold, we have a contradiction on the fact $T(n) = \omega(2^n)$. 
Now, we consider the summation $\sum_{i=k}^{n_1-(k-r)-2}(T(n_q-(k-r)-i-1))$.
Since $n_q=n_1 + q - 1$, we can simply expand the summation as follows:
$$\sum_{i=k}^{n_1-(k-r)-2}(T(n_q-(k-r)-i-1)) = T(q) + ... +T(n_1 -(k-r)+q-k-2)$$.
We know that $k\geq q$, then $q-k \leq 0$ and the following inequality holds:
$$T(q) + ... +T(n_1 -(k-r)+q-k-2) \leq \sum_{i=q}^{n_1-(k-r)-2} T(i) $$.

As a consequence, we can bound above $T(n)$ as follows: 

\begin{align*}
T(n) = 
    &2 T(n-1) + 2T(n-k-1) + ... + 2T(k) \\
    &+ \sum_{r=2}^k\sum_{q=1}^{r-1}\sum_{i=k}^{n_1-(k-r)-2}(T(i)+T(n_q-(k-r)-i-1))\\
&\leq 2 T(n-1) + 2T(n-k-1) + ... + 2T(k) \\
    &\qquad+\sum_{r=2}^k\sum_{q=1}^{r-1}\sum_{i=q}^{n_1-(k-r)-2}2T(i)+(k-1)T(n_1-1) + \bigo{p(n)}\\
&\leq 2 T(n-1) + 2T(n-k-1) + ... + 2T(k) \\
    &\qquad+\sum_{r=2}^k\sum_{q=1}^{r-1}\sum_{i=1}^{n_1-(k-r)-2}2T(i)+(k-1)T(n_1-1) + \bigo{p(n)}
\end{align*}

By using Equation~\ref{eq:lboundtn}, we obtain the following:

\begin{align*}
T(n)
&\leq 2 T(n-1) + 2T(n-k-1) + ... + 2T(k) \\
    &\qquad+\sum_{r=2}^k\sum_{q=1}^{r-1}\sum_{i=1}^{n_1-(k-r)-2}2T(i)+(k-1)T(n_1-1) + \bigo{p(n)}\\
&\leq 2 T(n-1) + 2T(n-k-1) + ... + 2T(k) \\
    &\qquad+\sum_{r=2}^k\sum_{q=1}^{r-1}2T(n_1-(k-r)-1) +(k-1)T(n_1-1)+ \bigo{p(n)}\\
\end{align*}
Finally, we apply some algebraic steps and we use the equality $n_1 = n-k$ to derive the following upper limitation of $T(n)$:
\begin{align*}
T(n) &\leq 2 T(n-1) + 2T(n-k-1) + ... + 2T(k) \\
    &\qquad+\sum_{r=2}^k(r-1)2T(n_1-(k-r)-1) +(k-1)T(n_1-1)+ \bigo{p(n)}\\
    &\leq 2 T(n-1) + 2T(n-k-1) + ... + 2T(k) +2(k-1)T(n_1-1) \\
    &\qquad+\sum_{r=2}^{k-1}(r-1)2T(n_1-(k-r)-1) +(k-1)T(n_1-1)+ \bigo{p(n)}\\
    &\leq 2 T(n-1) + 2T(n-k-1) + ... + 2T(k) \\
    &\qquad+(k-1)4T(n_1-1) +(k-1)T(n_1-1)+ \bigo{p(n)}\\
&\leq 2 T(n-1) + 4 T(n-k-1) + 5 (k-1) T(n-k-1)  + \bigo{p(n)}\\ 
&= 2 T(n-1) + (5k-1) T(n-k-1)  + \bigo{p(n)}\\
\end{align*}


Note that $\bigo{p(n)}$ includes the cost for creating all nodes for each level and the cost of all the merging operations, performed in constant time.

The recursion $T(n) =  2 T(n-1) + (5k-1) T(n-k-1) + \bigo{p(n)}$ is an upper limitation on the running time of TTBM. Recall that its solution is $T(n) = \ostar{c^n}$ where $c$ is the largest root of the function:

\begin{equation}
\label{eq:fk}
f_k(x) = 1 - \frac{2}{x} - \frac{5k-1}{x^{k+1}}
\end{equation}. 

As $k$ increases, the function $f_k(x)$ converges to $1 - \frac{2}{x}$, which induces a complexity of $\ostar{2^n}$.
Table~\ref{tab:kvalues} shows the time complexity of TTBM obtained by solving Equation~\ref{eq:fk} for all the values of $k$ from $3$ to $20$.\footnote{SageMath (\url{http://www.sagemath.org}) is used for the computation.} The base of the exponential is computed by solving Equation~\ref{eq:fk} by means of a mathematical solver and rounding up the fourth digit of the solution. The table shows that the time complexity is $\ostar{2.0001^n}$ for $k \geq 20$. 
\end{proof}}

\begin{table}
\centering
\begin{tabular}{|c|c|}
\hline
$k$     & $T(n)$  \\ \hline
$3$ 	& $\ostar{2.5814^n}$  \\ 
$4$ 	& $\ostar{2.4302^n}$  \\ 
$5$ 	& $\ostar{2.3065^n}$  \\ 
$6$ 	& $\ostar{2.2129^n}$  \\ 
$7$ 	& $\ostar{2.1441^n}$  \\ 
$8$ 	& $\ostar{2.0945^n}$  \\ 
$9$	    & $\ostar{2.0600^n}$  \\ 
$10$	& $\ostar{2.0367^n}$  \\ 
$11$	& $\ostar{2.0217^n}$  \\ 
$12$	& $\ostar{2.0125^n}$  \\ 
$13$	& $\ostar{2.0070^n}$  \\ 
$14$	& $\ostar{2.0039^n}$  \\ 
$15$	& $\ostar{2.0022^n}$  \\ 
$16$	& $\ostar{2.0012^n}$  \\ 
$17$	& $\ostar{2.0007^n}$  \\ 
$18$	& $\ostar{2.0004^n}$  \\ 
$19$	& $\ostar{2.0002^n}$  \\ 
$20$	& $\ostar{2.0001^n}$  \\ \hline

\end{tabular}
\caption{The time complexity of TTBM for values of $k$ from $3$ to $20$}
\label{tab:kvalues}
\end{table}
\begin{lemma}
\label{lemma:BMWC}The problem \pbtt{} can be solved in $\ostar{(2+\epsilon)^n}$ time and polynomial space, where $\epsilon>0$ can be arbitrarily small. 
\end{lemma}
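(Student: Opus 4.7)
The plan is to obtain the lemma as a direct consequence of the machinery already set up for \texttt{TTBM}, combined with an appropriate choice of the parameter $k$. More precisely, I would package together three ingredients: (i) the correctness of the merging operations, (ii) the worst-case running time bound from Proposition~\ref{propos:merge}, and (iii) a limiting argument on $k$.

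First, I would establish that \texttt{TTBM} correctly returns an optimal sequence. The branching skeleton is that of TTBR1, which is correct by Property~\ref{dec1} (it enumerates, for each placement of the longest job, all feasible decompositions). The only modification is that whenever the procedures \texttt{LEFT\_MERGE} and \texttt{RIGHT\_MERGE} are invoked, two nodes that share an identical residual subproblem (i.e., the same set of jobs still to be scheduled starting at the same time $t$) are replaced by a single node carrying the cheaper of the two fixed prefixes/suffixes. Since the common residual subproblem has a regular objective that depends only on the jobset and its starting time, the total tardiness of any completion is additive across the fixed and free parts; thus retaining the cheaper of the two fixed parts cannot discard an optimal schedule. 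This is exactly what Propositions~\ref{propos:LMLinkLeft} and~\ref{propos:rg} guarantee, and it is the place where I would be careful to check that no optimal schedule is lost through a chain of merges.

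Second, I would invoke Lemmas~\ref{lemma:LMWC} and~\ref{lemma:RMWC} which assert that instances with $LPT=EDD$ are worst-case for both merging procedures, so the recurrence written in Proposition~\ref{propos:merge} is indeed an upper bound on the running time on every instance. From that proposition, for every fixed integer $k \geq 2$ with $k \leq n/2$, the running time of \texttt{TTBM}$(P,k)$ is $\ostar{c(k)^n}$ where $c(k)$ is the largest root of $f_k(x)=1-\frac{2}{x}-\frac{5k-1}{x^{k+1}}$. As $k\to\infty$ the term $\frac{5k-1}{x^{k+1}}$ evaluated at any $x>2$ tends to $0$, hence $c(k)\to 2$. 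Formally, given any $\epsilon>0$, I would pick $k$ large enough so that $c(k) \leq 2+\epsilon$ (Table~\ref{tab:kvalues} already exhibits explicit values, e.g.\ $c(20)\leq 2.0001$); this $k$ is a constant independent of $n$, so the polynomial overhead absorbed in $\ostar{\cdot}$ remains polynomial in $n$.

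Finally, the space bound follows from the depth-first strategy: the active frontier $Q$ holds only $\bigo{n}$ nodes per level of the branching tree, each auxiliary call to TTBR2 or to the \texttt{BEST} routine runs in constant time and polynomial space (Proposition~\ref{propos:ttbr2} and Lemmas~\ref{lemma:leftmerge},~\ref{lemma:rightmerge}), and the merging bookkeeping stores only the lists $\mathcal{L}_\sigma, \mathcal{R}_\sigma$ of bounded size $k-1$ or $k$ attached to the $\bigo{n}$ ancestors on the current path. I do not expect any real obstacle here; the only delicate point of the whole argument is making sure the correctness of merging is invariant under recursion (i.e., that a node whose $\mathcal{L}_\sigma$ or $\mathcal{R}_\sigma$ is already non-empty is still merged correctly in subsequent calls), but this has been covered by Propositions~\ref{propos:LMLinkLeft} and~\ref{propos:rg}.
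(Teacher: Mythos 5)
Your proposal is correct and follows essentially the same route as the paper: it combines Lemmas~\ref{lemma:LMWC} and~\ref{lemma:RMWC} (establishing that $LPT=EDD$ is the worst case for both merging procedures, so the recurrence of Proposition~\ref{propos:merge} bounds every instance) with the complexity bound of Proposition~\ref{propos:merge}, and then fixes $k$ large enough for the given $\epsilon$. The additional remarks you make on the correctness of merging (via Propositions~\ref{propos:LMLinkLeft} and~\ref{propos:rg}) and on the polynomial space of the depth-first exploration are consistent with, and slightly more explicit than, the paper's own treatment.
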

\begin{proof}
Lemma \ref{lemma:LMWC} and lemma \ref{lemma:RMWC} proved that $LPT=EDD$ is the worst-case scenario for left merging and right merging. Since $k\leq \frac{n}{2}$, the time reduction obtained from left merging and right merging, when both are incorporated into TTBM, can be combined. Thus, lemma \ref{lemma:LMWC} and lemma \ref{lemma:RMWC} together prove that instances with $LPT=EDD$ are the worst-case instances for TTBM. Therefore, the current lemma is proved according to Proposition \ref{propos:merge}.
\end{proof}

\FloatBarrier
\section{Experimental results}
The whole mechanism of \techbm{} has been implemented and tested on instances generated in the same way as in \cite{szwarc2001algorithmic}. 
200 instances are generated randomly for each problem size using the same generation scheme as in the paper of  \cite{szwarc2001algorithmic}. Processing times are integers generated from an uniform distribution in the range $[1, 100]$ and due dates $d_i$ are integers from a uniform distribution in the range $[p_iu, p_iv]$ where $u=1-T-R/2$ and $v=1-T+R/2$. Each due date is set to zero whenever its generated value is negative. Twenty combinations $(R,T)$ are considered where $R\in\{0.2, 0.4, 0.6,0.8,1\}$, and $T\in\{0.2, 0.4, 0.6,0.8\}$. 
Ten instances are generated for each combination and the combination $(R=0.2, T=0.6)$ yields the hardest instances as reported in the literature (see \cite{szwarc1999solution}) and confirmed by our experiments. The tests are performed on a HP Z400 work station with 3.07GHz CPU and 8GB RAM.
 
Before comparing \techbm{} to the state-of-the-art algorithm in \cite{szwarc2001algorithmic}, we first describe the latter one more clearly. {That  algorithm, named BR-SPM}, is based on the branching structure of BR, with the following three extra {features} integrated.

\begin{enumerate}
    \item \textit{Split}, which decomposes a problem according to precedence relations; 
    \item \textit{PosElim}, which eliminates non-promising branching positions before each branching;
    \item \textit{Memorization}, which avoids solving a subproblem more than once by saving its solution to a database (the ``memory'') and retrieve it whenever the same subproblem appears again. 
\end{enumerate}

We implemented BR-SPM in the following way: on a given problem $P$, BR-SPM first {tries} to find the solution in the \textit{memory}; if failed, it applies \textit{Split} to decompose $P$ into subproblems then solve them recursively; if \textit{Split} does not decompose, \textit{PosElim} is called which returns the list of positions on which the longest job can be branched on. The branching then occurs and the resulting subproblems are then solved recursively. Each time we solve a subproblem by branching, the solution sequence is saved to the memory for further query. Our implementation of BR-SPM successfully solves instances with up to 900 jobs in size, knowing that the original program was limited to instances with up to 500 jobs due to memory size limit.

We now provide our experimental results. In order to verify the concept of \textit{merging} without {extra features}, we first 
 compare \techbr{} with \techbm{} on the hardest subset of instances of each size.
{Table~\ref{tab:brbm1}, depicting (minimum, average and maximum) CPU time, 
average number of merging hits and total number of explored nodes, shows that the merge mechanism strongly accelerates the solution. 
However \techbm{} is still limited to $50$ jobs in size only.}

\begin{table}[!ht]\centering\resizebox{0.8\textwidth}{!}{\begin{tabular}{|c|c|c|c|c|c|} \hline 
& TMin& TAvg& TMax& \#Merge& \#Nodes  \\ \hline 
BR & 52.0 & 1039.0 & 3127.0  & 0 & 1094033204\\ \hline 
BM & 3.0 & 67.6 & 319.0  & 11277311  & 47143367  \\ \hline
\end{tabular}}
\centering
\caption{ {Results for instances of size 40}}
\label{tab:brbm1}
\end{table}

To improve the performances, we enable \textit{Split} and \textit{PosElim}. The resulting algorithms are called BR-SP and BM-SP. 
Now both algorithms can handle instances with up to 300 jobs (see Table~\ref{tab:brbm2}). Surprisingly, however, even with a considerable number of {merged nodes}, BM-SP turns out to be slower than BR-SP.

\begin{table}[!ht]\centering\resizebox{0.8\textwidth}{!}{\begin{tabular}{|c|c|c|c|c|c|} \hline 
& TMin& TAvg& TMax&  \#Merge&  \#Nodes  \\ \hline 
{BR-SP} & 504.0 & 3000.8 & 7580.0 &  0 & 634569859 \\ \hline 
{BM-SP} & 521.0 & 3097.9 & 7730.0 &  608986 &  508710322 \\ \hline 
\end{tabular}}\caption{Results for instances of size 300}\label{tab:brbm2}\end{table}
Auxiliary tests show that \textit{Split} and \textit{PosElim} negatively affect the merge mechanism. Solving a small problem by $Split$, which sometimes finds directly the solution sequence according to precedence relations, may be faster than merging two nodes. 
\textit{PosElim} is also powerful as the average number of branching positions at each node after its application is 
approximately 2, i.e. most positions are already eliminated before merging. This implies that the search tree explored by BR-SP may be even smaller than a binary tree, as there are also many nodes that are not counted: they only have a single child node. These observations show that it is not straightforward to combine {directly} the current merge scheme with existing solving techniques. The theoretical effectiveness and running time guarantee of merging stays valid, however we need to find a new way to apply it in practice. This is more developed in chapter \ref{ch4}.

\FloatBarrier
\section{Additional Results}~\label{sec:bmadditional}
In this section, two additional results are described. The first one is the complexity analysis of the algorithm of \cite{szwarc2001algorithmic}. The second one is TTBM with right-merge removed, but with Property \ref{Elim3} considered.

\subsection{The complexity of the algorithm of \cite{szwarc2001algorithmic}}\label{sec:comp_bb2001}
The current state-of-the-art algorithm described by \cite{szwarc2001algorithmic}, noted hereafter as BB2001, is a branch and bound algorithm having a similar structure as that of TTBR1. The main difference is that in BB2001, in addition to the decomposition rule given in Property \ref{dec1} (as in TTBR1), another decomposition rule, based on Property \ref{dec2}, is applied simultaneously when branching on a node. We provide in Proposition \ref{propos:bb2001}  the time complexity analysis of BB2001.

\begin{proper}\label{dec2}~{\citep{della1998new}}
Let job~$k$ in LPT sequence correspond to job~$[1]$ in EDD sequence.
Then, job~$k$ can be set only in positions $h\leq (n-k+1)$ and
the jobs preceding job~$k$ are uniquely determined as
$B_k(h)$, where $B_k(h)\subseteq \{k+1,k+2,\dots,n\}$ and $\forall i\in B_k(h), j\in \{n,n-1,\dots,k+1\}\smallsetminus B_k(h)$, $d_i\leq d_j$
\end{proper}

\begin{propos}\label{propos:bb2001}
Algorithm BB2001 runs in $\ostar{2.4143^n}$ time and polynomial space in the worst case. 
\end{propos}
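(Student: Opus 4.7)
The plan is to mirror the structure of Proposition \ref{propos:ttbr2}, since the target complexity $\ostar{2.4143^n} = \ostar{(1+\sqrt{2})^n}$ coincides exactly with that of TTBR2. The central task is to show that the simultaneous use of Properties \ref{dec1} and \ref{dec2} produces, in the worst case, a pair-elimination effect on the eligible branching positions analogous to the one Property \ref{Elim3} provides for TTBR2.

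First I would characterize the branching scheme of BB2001 precisely. At each node, Property \ref{dec1} constrains job $1$ (the longest in LPT order, say placed at EDD position $[k_1]$) to positions $h \geq k_1$, while Property \ref{dec2} constrains the job in EDD position $[1]$ (sitting at LPT position $k_2$) to positions $h \leq n - k_2 + 1$. The combined use therefore generates a family of subproblems governed by a forward decomposition from the left and a backward decomposition from the right, whose interaction restricts the set of positions that must actually be explored. I would then prove a structural lemma stating that for any pair of adjacent eligible positions $(i, i+1)$, the combined constraints eliminate at least one. This is the key step and the main obstacle, because it plays the role of Property \ref{Elim3} in the TTBR2 analysis: one must examine the possible configurations of $(k_1, k_2)$ and show that forcing the longest job into both positions $i$ and $i+1$ would contradict the backward placement constraint imposed by Property \ref{dec2}.

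Given that pair-elimination lemma, the worst-case recurrence matches the one in Proposition \ref{propos:ttbr2}. Namely, only every second position contributes a branch, so for odd $n$ we get
\begin{equation*}
T(n) \leq 2T(n-1) + 2T(n-3) + \ldots + 2T(4) + 2T(2) + \bigo{p(n)},
\end{equation*}
with the analogous expression for even $n$. Substituting $n$ by $(n-2)$, writing the corresponding expression for $T(n-2)$, and subtracting to telescope the tail, one obtains the compact recurrence
\begin{equation*}
T(n) \leq 2T(n-1) + T(n-2) + \bigo{p(n)}.
\end{equation*}

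The characteristic equation $x^2 = 2x + 1$ has largest positive root $1 + \sqrt{2}$, hence $T(n) = \ostar{(1+\sqrt{2})^n} = \ostar{2.4143^n}$. The space requirement is polynomial because BB2001, like TTBR1 and TTBR2, explores its search tree in depth-first order and the combined decomposition rules only affect branching decisions, not the storage of intermediate subproblems. The main obstacle, as already mentioned, is the structural pair-elimination lemma: if it cannot be established uniformly then the analysis would degrade to the $\ostar{3^n}$ bound of TTBR1, so a careful case distinction on the relative LPT/EDD positions $(k_1, k_2)$ of the two ``pivotal'' jobs — in particular covering the case $k_1 = k_2$, where the two rules reduce to the same job — is indispensable.
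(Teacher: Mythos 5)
Your proposal lands on the right final recurrence but misses the actual content of the proof. The key issue is that BB2001 does not branch on a single job: at each node it performs a \emph{double} branching, assigning a compatible pair of positions $(i,j)$, $i<j$, to the earliest-due-date job (via Property~\ref{dec2}) and to the longest job (via Property~\ref{dec1}) simultaneously. Each such branch creates \emph{three} subproblems, of sizes $i-1$, $j-i-1$ and $n-j$, and the number of branches is quadratic in $n$. Your recurrence $T(n)\leq 2T(n-1)+2T(n-3)+\cdots$ models a single-job branching with two subproblems per branch, i.e.\ it only covers the degenerate case in which the two decomposition rules collapse onto the same job. The paper's proof spends its effort precisely on the non-degenerate case: using the parity restriction from Property~\ref{Elim3} it sets up the double sum $T(n)=\sum_{i\text{ odd}}\sum_{j>i,\,j\text{ even}}\bigl(T(i-1)+T(j-i-1)+T(n-j)\bigr)+\bigo{p(n)}$, telescopes it to $T(n)=5T(n-2)-T(n-4)$, and obtains $\ostar{2.1890^n}$ — strictly \emph{better} than $\ostar{2.4143^n}$. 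Only then can one conclude that the worst case is the overlapping case, which reduces to Proposition~\ref{propos:ttbr2}. Without analysing the three-way recurrence you cannot rule out that the quadratic number of triple decompositions pushes the complexity above $(1+\sqrt{2})^n$, so your argument as written does not establish the claim.

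A second, smaller misdirection: the ``pair-elimination lemma'' you identify as the main obstacle is not something to be derived from the interaction of Properties~\ref{dec1} and~\ref{dec2}. Adjacent-position elimination is exactly Property~\ref{Elim3} (a consequence of the elimination conditions of Property~\ref{Elim}), which the paper already takes as a known result and which applies to the positions of the longest job independently of the second decomposition rule. So the step you flag as indispensable is already available for free, while the step that actually requires work — bounding the genuine double-branching recurrence — is absent from your proposal.
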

\begin{proof}
Before branching on a node, BB2001 first computes the possible positions for the longest job and the job with smallest due date. Then a new branch is created by assigning a pair of compatible positions to these two jobs. We consider two cases as follows.

Firstly, consider the case where job $1=[n]$. The two decomposition rules become identical and if this condition is also verified in all subproblems, then the time complexity is $\ostar{2.4143^n}$ as proved in Proposition \ref{propos:ttbr2}.

In the case where $1\neq[n]$, the worst case occurs when $1=[2]$ and $[n]=2$, since in this case we have a maximum number of available branching positions: job $[n]$ can be branched in position $i\in\{1,...,n-1\}$ and job $1$ can be branched in position $j\in\{2,...,n\}$, with $i<j$ for each branching. Moreover, we recall that Property \ref{Elim3} remains valid.

Three subproblems (left, middle and right) are created on each double branching (zero-sized problems are counted). For the sake of simplicity, we note $T(l,m,r)=T(l)+T(m)+T(r)$. The following recurrence holds.

\begin{align}
T(n) &=\sum_{\substack{i=1\\ i\ is\ odd}}^{n-1}\sum_{\substack{j=i+1\\j\ is\ even}}^n(T(i-1,j-i-1,n-j))+\bigo{p(n)}\\
          &=T(0,0,n-2)+T(0,2,n-4)+T(0,4,n-6)+...+T(0,n-2,0)+\\
&\phantom{ffffffffffffff}T(2,0,n-4)+T(2,2,n-6)+...+T(2,n-4,0)+\\
&\phantom{ffffffffffffffffffffffffffffffffffffffffffffffffff}...\\
&\phantom{fffffffffffffffffffffffffffff}T(n-4,0,2)+T(n-4,2,0)+\\
&\phantom{ffffffffffffffffffffffffffffffffffffffff}T(n-2,0,0)+\\
&\phantom{fffffffffffffffffffffffffffffffffffffffffffff}\bigo{p(n)}\\
&=3*(T(n-2)+2T(n-4)+4T(n-6)+...+\frac{n}{2}T(0))+\bigo{p(n)} \\
\end{align}
By applying a similar process of simplification as in the proof of Proposition \ref{propos:ttbr2}, the following result is finally derived:
\begin{equation}
T(n)=5T(n-2)-T(n-4).
\end{equation}
Correspondingly, we have $T(n)=\ostar{\sqrt{\frac{5+\sqrt{21}}{2}}^n}=\ostar{2.1890^n}$.
Therefore the worst case occurs when the two decomposition rules overlap, and the resulting time complexity is the same as TTBR2, namely $\ostar{2.4143^n}$. The space complexity of BB2001 is also polynomial since depth-first exploration is adopted.

\end{proof}

\subsection{Algorithm TTBM-L}
An intermediate algorithm has been derived based on TTBR1, with Left-Merge enabled, Right-Merge disabled, and Property \ref{Elim3} included. The algorithm is named as TTBM-L and it is stated in Algorithm~\ref{algo:ttbml}. This is basically the algorithm applied on the example problem in section \ref{sec:leftex} without considering Properties \ref{Elim} and \ref{Elim3}.

Each time a node is opened, the sub-branches required for left-merging operations are generated, the subproblems of size less than $k$ are solved and the procedure {\ttfamily LEFT\_MERGE} is called. Then, the algorithm proceeds recursively by extracting the next node  from $Q$ with a depth-first strategy and terminates when $Q$ is empty. 

\begin{algorithm}[tb]
{\footnotesize
\caption{Total Tardiness Branch and Merge (Left Only) (TTBM-L)}\label{algo:ttbml}
\begin{algorithmic}[1]
\Require $P:\{1,...,n\}$: input problem of size $n$ \par $k\geq 2$: an integer constant 
\Ensure $seqOpt$: an optimal sequence of jobs
\Function{TTBM-L}{$P$,$k$}
\State $Q \gets P$
\State $seqOpt \gets$ a random sequence of jobs
\While{$Q\neq \emptyset$}
    \State $P^*\gets $ extract next problem from $Q$ (depth-first order)
    \If{(the size of $P^*<k$)} Solve $P^*$ by calling TTBR1 \EndIf
    \If{all jobs $\{1,...,n\}$ are fixed in $P^*$}
        \State $seqCurrent \gets$ the solution defined by $P^*$
        \State $seqOpt \gets$ best solution between $seqOpt$ and $seqCurrent$
    \Else
        \State $Q \gets Q\cup \mathtt{LEFT\_MERGE}(P^*)$
        \For{$i =k+1,...,n$} 
            \State Create child node $P_i$ like in TTBR1 
            \If{$P_i$ is not eliminated by Property~\ref{Elim}} $Q \gets Q\cup P_i$ \EndIf
        \EndFor
    \EndIf
\EndWhile
\State  \Return $seqOpt$
\EndFunction
\end{algorithmic}
}
\end{algorithm}

{\begin{propos}\label{propos:ttbml}
Algorithm TTBM-L runs in $\ostar{(2.247+\epsilon)^n}$ time and polynomial space, where $\epsilon \to 0$ when $k\to \infty$. 
\end{propos}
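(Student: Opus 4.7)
My plan is to follow the proof structure of Proposition~\ref{propos:merge}, adapted to the absence of the RIGHT\_MERGE procedure and to the presence of Property~\ref{Elim3}. First, by an argument analogous to Lemma~\ref{lemma:LMWC}, I would show that instances with $LPT=EDD$ constitute the worst-case scenario: any deviation either eliminates branches outright by Property~\ref{dec1} (a reduction stronger than what LEFT\_MERGE alone provides) or alters the positions pruned by Property~\ref{Elim3} without allowing the set of surviving positions to exceed the maximum alternating cardinality. From that point on the analysis can be restricted to $LPT=EDD$.

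At a worst-case node $P^*$, LEFT\_MERGE returns a single subproblem of size $(n-1)$, contributing one term $T(n-1)$. For positions $k+1,k+2,\ldots,n$, TTBM-L proceeds by ordinary branching but discards every position eliminated by Property~\ref{Elim3}. Since that property ensures that at least one of every pair of adjacent positions is cut, the kept positions form an alternating subset of cardinality at most $\lceil (n-k)/2\rceil$. The pattern that maximises the recurrence retains position $n$, namely $\{k+2,k+4,\ldots,n\}$ when $n-k$ is even (the other parity is symmetric and leads to the same asymptotic bound). Each kept position $i=k+2j$ contributes $T(i-1)+T(n-i)=T(k+2j-1)+T(n-k-2j)$, yielding
\begin{equation*}
T(n)=T(n-1)+\sum_{j=1}^{(n-k)/2}\bigl[T(k+2j-1)+T(n-k-2j)\bigr]+\bigo{p(n)}.
\end{equation*}

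Next I would apply the shift-by-two trick used in Propositions~\ref{propos:ttbr2} and~\ref{propos:merge}: writing the analogous expression for $T(n-2)$ and subtracting. The two alternating sums differ only by a single term at each end---$T(n-1)$ is the extra element of the first sum and $T(n-k-2)$ of the second---while the LEFT\_MERGE contribution changes by $T(n-1)-T(n-3)$. Combining these differences gives
\begin{equation*}
T(n)=2T(n-1)+T(n-2)-T(n-3)+T(n-k-2)+\bigo{p(n)},
\end{equation*}
whose solution is $T(n)=\ostar{c_k^n}$, where $c_k$ is the largest real root of $f_k(x)=x^{k+2}-2x^{k+1}-x^k+x^{k-1}-1$.

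Finally, as $k\to\infty$ the constant term $-1$ becomes negligible compared with the leading monomials, so $f_k$ converges (uniformly on compact sets away from zero, after renormalising) to $g(x)=x^3-2x^2-x+1$, whose largest real root is $c_\infty\approx 2.2469796\ldots<2.247$. Hence, for any $\epsilon>0$, one may choose $k$ large enough to ensure $c_k\le 2.247+\epsilon$, giving $T(n)=\ostar{(2.247+\epsilon)^n}$. Polynomial space is preserved because TTBM-L inherits the depth-first exploration of TTBR1, and the auxiliary invocations of TTBR2 on subproblems of size at most $k$ (a constant) use only polynomial memory. The hardest step will be to justify rigorously that the alternating pattern keeping position $n$ is indeed the worst case over all admissible configurations left by Property~\ref{Elim3}, especially when compounded with the pre-populated child list $\mathcal{L}_\sigma$ arising from successive LEFT\_MERGE calls at ancestor nodes; this parallels Lemma~\ref{lemma:LMWC} but has to be re-done in the presence of Property~\ref{Elim3}.
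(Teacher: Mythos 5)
Your proposal is correct and follows essentially the same route as the paper's own proof: a single $T(n-1)$ term from \texttt{LEFT\_MERGE}, an alternating family of surviving positions from Property~\ref{Elim3}, the shift-by-two subtraction in the style of Proposition~\ref{propos:ttbr2}, and the limiting cubic $x^3-2x^2-x+1$ with largest root $\approx 2.2469796$. The only differences are immaterial parity conventions (the paper takes $n$ odd and $k$ even, yielding $T(n-k-1)$ where you obtain $T(n-k-2)$), and you are if anything more explicit than the paper about the step it leaves implicit, namely that the alternating pattern retaining position $n$ is the worst admissible configuration.
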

\begin{proof}
Starting from Algorithm \ref{algo:ttbml}, we can derive that for a given problem $P$ of size $n$, the $(k-1)$ first children nodes $P_2$ to $P_k$ are merged with children nodes of $P_1$. Consequently, among these nodes, only node $P_1$ remains as a child node of $P$. For the other $(n-k)$ children nodes, Property \ref{Elim} is applied eliminating by the way one node over two. The worst-case is achieved when $n$ is odd and $k$ is even and we have the following recurrence:
$$T(n)=T(n-1)+(T(n-k-1)+T(k)) + (T(n-k-3)+T(k+2)) +...$$ 
       $$+(T(2)+T(n-3))+T(n-1)+\bigo{p(n)}$$
which can be reformulated as
$$T(n)=2T(n-1)+T(n-3)+...+T(n-k+1)+2T(n-k-1)+...+2T(2)+\bigo{p(n)}$$
Following the same approach used in the proof of Proposition \ref{propos:ttbr2}, we plug $T(n-2)$ into the formula and we have
$$T(n)=2T(n-1)+T(n-2)-T(n-3)+T(n-k-1)+\bigo{p(n)}-\bigo{p(n-2)}$$
The solution of this recurrence is $T(n)=\ostar{c^n}$ with $c$ the largest root of $$1=\frac{2}{x}+\frac{1}{x^2}-\frac{1}{x^3}+\frac{1}{x^{k+1}}$$ 
When $k$ is large enough, the last term in the equation can be ignored, leading to a value of $c$ which tends towards $2.247$ as $k$ increases. 
\end{proof}}

\FloatBarrier
\section{Chapter summary}

This chapter focuses on the design of exact branching algorithms for the single machine total tardiness problem. By exploiting some inherent properties of the problem, we first proposed two \techbr{} algorithms: TTBR1 and TTBR2. The former runs in $\ostar{3^n}$ time, while the latter achieves a better time complexity in $\ostar{2.4143^n}$. 
The space requirement is polynomial in both cases.  Furthermore, a technique called merging, is presented and applied onto TTBR1 in order to improve its performance. The final achievement is a new algorithm (TTBM) with time complexity converging towards $\ostar{2^n}$ and requiring polynomial space. The same technique can be tediously  to improve the performance of TTBR2, but the resulting algorithm achieves the same asymptotic time complexity as TTBM, and thus it was omitted.  To the best of authors' knowledge, TTBM is the polynomial space algorithm that has the best worst-case time complexity for solving the $1||\sum T_j$ problem.

Beyond the new established complexity results, the main contribution of the paper is the \techbm{} algorithm. The basic idea is very simple, and it consists of 
speeding up branching algorithms by avoiding to solve identical problems. The same goal is traditionally pursued by means of \textit{Memorization} \citep{fomin2010exact}, where the 
solution of already solved subproblems are stored and then queried when an identical subproblem appears. This is at the cost of an exponential space requirement. In contrast, 
\techbm{} also discards identical subproblems but by appropriately merging, in polynomial time and space, nodes involving the solution of common subproblems. When applied 
systematically in the search tree, this technique enables to achieve a good worst-case time bound. 
On the computational side, it should be noticed that the  merging operation can be relaxed to obtain more efficiency in practice. Instead of comparing nodes at specific positions and solving in $\ostar{2.4143^k}$ subproblems at merged nodes, as described in \techbm{}, we may just compare active nodes with already branched nodes and cut the dominated ones, keeping a polynomial space usage. This can also be seen as memorization but with a fixed size memory used to store already explored nodes. This leads to a weaker worst-case time bound but early works of \cite{tkindt2004revisiting} have shown that this can lead to substantially good practical results, at least on some scheduling problems.

As a further development of this work, our aim is twofold. First, we aim at applying the \techbm{} algorithm to other combinatorial optimization problems in order to
establish its potential generalization to other problems. Second, we want to explore further the practical efficiency of this algorithm on the single machine total tardiness problem, with a different implementation, in a similar way as done by \cite{szwarc2001algorithmic} and \cite{tkindt2004revisiting}. The first aim has not yet been achieved at this point (the end of this thesis) because a direct application of the \techbm{} algorithm requires the target problem to verify certain properties like Property \ref{dec1}. However, much work has been performed on the second direction, which is about the practical efficiency of the algorithm. The Chapter \ref{ch4} is dedicated to this topic.

The main work in this chapter has been performed together with Michele Garraffa, Federico Della Croce from \textit{Politechnico di Torino} (Italy) and Vincent T'Kindt. 
The \techbr{} algorithms TTBR1 and TTBR2 have been reported in  the \textit{MISTA} 2015 conference \citep{della2015smtt} and an Italian conference \textit{AIRO} 2015 \citep{della2015airo}. 
The main results on \techbm{} have been reported at the international conference on \textit{Project Management and Scheduling} (PMS 2016) \citep{shang2016bm} and the French  conference \textit{ROADEF} 2017 \citep{shang2017roadef}. The additional results have been reported at the  \textit{International Symposium on Parameterized and Exact Computation} (IPEC 2017) \citep{shang2017ipec}. A journal paper submitted to \textit{Theoretical Computer Science} is currently under review and the paper can be accessed on HAL \citep{garraffa2017bm}. 
\chapter{The Memorization Paradigm: \textit{Branch \& Memorize} Algorithms for the Efficient Solution of Sequencing Problems}\label{ch4}

\section{Introduction}\label{intro}
The algorithm \techbm{} presented in chapter \ref{ch3} is proved to be not efficient enough in practice. We therefore turned our interest to another technique called \techmemo, which was adopted by \cite{szwarc2001algorithmic} to efficiently solve the \pbtt{} problem. It was also used, though not in a typical way, by \cite{tkindt2004revisiting} to solve three sequencing problems, notably the \pbrisumc, the \pbdtilde{} and the \pbfsumc{} problems.  In this chapter we study, experiment, compare and analyze the efficiency of this method on the four above mentioned problems.

\techmemo{} as an algorithm design technique, allows to speed up algorithms at the price of more space usage. {Typically in branching algorithms, on lower branching levels, isomorphic subproblems may appear exponentially many times and the idea of Memorization is to avoid repetitive solutions as they correspond to identical subproblems. 
The method was first applied on the \textit{Maximum Independent Set} problem by \cite{robson1986algorithms}. By exploiting graph theoretic properties and by applying \techmemo{} to avoid solving identical subproblems, Robson proposed an algorithm with a worst-case time complexity in $\bigo{1.2109^n}$. It has remained the exact exponential algorithm with the smallest worst-case time complexity until 2013, when it was improved by the $\bigo{1.1996^n}$ algorithm of \cite{xiao2017exact}. \techmemo{} is sometimes used to speed up branching algorithms \citep{chandran2005refined,fomin2005some,fomin2010exact} in the context of EEA (\textit{Exact Exponential Algorithms}), where the objective is to conceive exact algorithms that can provide a best possible worst-case running time guarantee.}

{Despite of the fact that a typical \techmemo{} algorithm memorizes solutions of subproblems that appear repeatedly, we prefer to interpret the idea in a more general way.} 
What we call the \textit{Memorization Paradigm} can be formulated as ``Memorize and learn from what have been done so far, to improve the next decisions''. 
In the literature, various algorithms can be classified as procedures embedding memorization techniques, though the implementation could be quite different depending on the problem structure and the information to store. For instance, \textit{Tabu Search} \citep{glover1989tabu,glover1990tabu} is a metaheuristic, which memorizes recently visited solutions in order to avoid returning back to these solutions again during the search. SAT solvers deduce and then memorize conflict clauses during the tree search in order to perform non-chronological backtracking (\textit{Conflict Driven Clause Learning}) \citep{biere2009conflict,zhang2001efficient}. Similar ideas also appear in \textit{Artificial Intelligence} area  as \textit{Intelligent Backtracking} or \textit{Intelligent Back-jumping}.


From a theoretical point of view, the drawback relies on the memory consumption of \techmemo{} which can be exponential. 
This drawback turns out to limit the quantity of memorized information like in \textit{Tabu Search} or SAT solvers. 
In this chapter we instantiate the \textit{Memorization Paradigm} in a way similar to what is done in the field of EEA, i.e. we set up a \techmemo{} framework for search tree based exact algorithms but with a control on the memory usage. We had the intuition that a \techmemo{} with limited memory could already dramatically accelerate  the solution  in practice. By embedding a simple \techmemo{} technique into their \techbb{} algorithm,  \cite{szwarc2001algorithmic} solve the single machine total tardiness problem on instances with up to 500 jobs in size. Other works presenting standard memorization techniques applied to sequencing problems have been published by \cite{tkindt2004revisiting} where the benefit of such technique is well shown.

As introduced in section \ref{sss:branching}, branching algorithms are based on the idea of enumerating all possibilities via a search tree created by a branching mechanism. 
For each decision variable, the algorithm \textit{branches} on all possible values, each time creating a new subproblem (a node in the search tree) of a reduced size. The algorithm continues recursively and returns the global optimal solution. The basic structure being simple, the critical question is how to prune the search tree so as to avoid exploring unpromising nodes.  Dominance conditions are commonly used to cut nodes: at a node, if it is proved that a more promising node exists or can be easily found, then the current one can be abandoned. This is also the case for \techbb{}, in which at each node, the bounding procedure provides an optimistic estimation of the solution quality of that node. If the estimation value is not better than the currently best solution found, in other words, the current node is dominated by the incumbent solution, then the node is cut without being further developed. Just like the bounding procedure in \techbb, \techmemo{} can be seen as another procedure which can help in pruning the search tree.  In branching algorithms, especially on lower branching levels, isomorphic subproblems may appear exponentially many times and  \techmemo{} can be used to avoid solving identical problems multiple times. 

Memorization, apparently, has not yet been systematically considered when designing search tree based algorithms, as the bounding procedure in \techbb. It is at least rare in sequencing problems, to the authors' knowledge. The aim of this work is to promote a systematic integration of \techmemo{} into search tree based algorithms in order to better prune the search tree. In the following sections, we first describe a general framework of \techmemo{} (section \ref{frame}), followed by some guidelines on the implementation (section \ref{impl}). Then, we apply the framework to four scheduling problems including \pbrisumc{} (section \ref{ri}), \pbdtilde{} (section \ref{dtilde}), \pbfsumc{}  (section \ref{fsumc}) and \pbtt{} (section \ref{tt}). Finally, we conclude this work in section \ref{con}.


\section{A general framework for \techmemo{} in search trees}\label{frame}
For a given minimization problem, the application of \techmemo{} depends on several components of the search tree based algorithm such as the branching scheme, the search strategies and also the characteristics of the problem. 
In this section, we consider possible scenarios that may appear for sequencing problems. Then, we present the possible schemes of \techmemo{} and how to choose the right scheme depending on the scenario.

Even though the general idea of \techmemo{} can be generalized and applied to any combinatorial optimization problems, we focus on sequencing problems in the context of the thesis. Consider a generic sequencing problem where $n$ jobs $J=\{1,...,n\}$ are to be scheduled. 
Each job $i$ is defined by a set of features like a processing time $p_i$, a due date $d_i$, etc, which depends on the problem under consideration. Some resources are available for the execution of jobs and an ordering of jobs must be found to minimize some cost function, usually depending on the jobs completion times. We adopt an intuitive way to represent the content of a node or a subproblem: as an example $123\{4,...,n\}$ represents a subproblem in which jobs $\{1,2,3\}$ are already fixed by branching, to the first three positions of the sequence, while the jobs to be scheduled after are $\{4,...,n\}$.


No matter the branching scheme, at any iteration of the algorithm, by \an{} we denote the nodes that are created but not yet developed, and by \en{} the nodes that have already been branched on (children nodes have been created). We also adopt the notion of \textit{decomposable problems} defined by \cite{tkindt2004revisiting}. Typically for single machine scheduling problems, this often implies that the completion time of the prefixed job sequence of a node is constant no matter of the order of jobs inside (it is defined as the sum of processing times of the jobs in that sequence).

\begin{definition}\label{def:dec}
Let $\{1,...,i\}\{i+1,...,n\}$ be a problem to be solved. It is \textit{decomposable} if and only if the optimal solution of the subproblem $\{1,...,i\}$ (resp. $\{i+1,...,n\}$) can be computed independently from $\{i+1,...,n\}$ (resp. $\{1,...,i\}$), i.e. without knowing the optimal sequence of $\{i+1,...,n\}$ (resp. $\{1,...,i\}$).
\end{definition}


\subsection{Branching schemes}\label{sec:bs}
In common search tree based algorithms for scheduling (sequencing) problems, the branching operation consists in assigning a job to a specific position in the sequence. A \textit{Branching Scheme} defines, at a node, how to choose this job and the positions to occupy.
We consider three classic branching schemes, namely \fb, \bb{} and \db.

When applying \fb{} at a given node, each eligible free job is assigned to the first free position. For example, the nodes at the first level of the search tree correspond to the following subproblems: $1\{2,...,n\}$, $2\{1,3,...,n\}$, $...$, $n\{1,...,n-1\}$.

When applying \bb{} at a given node, each eligible free job is assigned to the last free position. For example the nodes at the first level of the search tree correspond to the following subproblems: $\{2,...,n\}1$, $\{1,3,...,n\}2$, $...$, $\{1,...,n-1\}n$. This scheme is symmetric with \fb, hence for the sake of simplicity we only discuss \fb{}  and add extra remarks on \bb{} whenever necessary.

When applying \db{} at a given node, the job that is being considered to branch is called a \textit{decomposition job}. When a \textit{decomposition job} is assigned to a position, two subproblems are generated implied by the free positions before and after the \textit{decomposition job}. Certainly one may determine the jobs that should be scheduled before and after this position by enumerating all 2-partitions of jobs as the \textit{Divide \& Conquer} technique introduced by \cite{fomin2010exact}, but here we restrict our study to the situation where the two subproblems can be uniquely determined in polynomial time making use of some specific problem properties. 
As an example, the nodes at the first level of the search tree could contain $\{2,3,4\}1\{5,...,n\}$, if job $1$ is the \textit{decomposition job} which is assigned to position 4 and generates two subproblems corresponding to jobsets $\{2,3,4\}$ and $\{5,...,n\}$,
respectively. 
This situation occurs, for instance, to
the \pbtt{} problem as already introduced in chapter \ref{ch3}.

\subsection{Search strategies}
During the execution of search tree based algorithms, when two or more nodes are active, a strategy is needed to determine the next node to branch on. The classic search strategies are \df, \best{} and \wf.

\textit{Depth first} is the most common strategy: the node to explore is an arbitrary active node at the lowest search tree level. The advantage of this strategy is that it only requires polynomial space.

\textit{Breadth first} selects an active node with the highest search tree level. This leads to an exponential space usage since the search tree is explored level by level.

\textit{Best first} 
 chooses the node to explore according to its lower bound. The space usage in the worst case is therefore also super-polynomial like in \wf.

It seems conventional that when constructing a search tree based algorithm, the \df{} is adopted. However, this choice is strongly questionable according to \cite{tkindt2004revisiting}.

\subsection{Memorization schemes} 
The \textit{memorization} presented by \cite{robson1986algorithms} stores the optimal solution of each subproblem of
a predetermined limited size  and reuses that solution whenever such subproblem appears again during the tree search. However different memorization approaches can be used. The differences rely on the choice of the information to store and the way in which the stored information is used. We discuss below three different memorization schemes that are helpful to efficiently solve some sequencing problems. 

Taking account of the branching schemes introduced in section \ref{sec:bs}, any node of the search tree can be defined by $\sigma_1 S_1\sigma_2S_2...\sigma_kS_k$, the ${\sigma_j}'s$ being partial sequences of jobs and the ${S_j}'s$ being subproblems which remain to be scheduled. For the sake of simplicity, we explain the \techmemo{} schemes in the case of \fb, i.e. $k=1$, and a node corresponds to a problem $\sigma S$.


\subsubsection{Solution memorization}\label{solmemo}

\begin{figure}[!ht]
\centering
    \includegraphics[width=0.7\columnwidth]{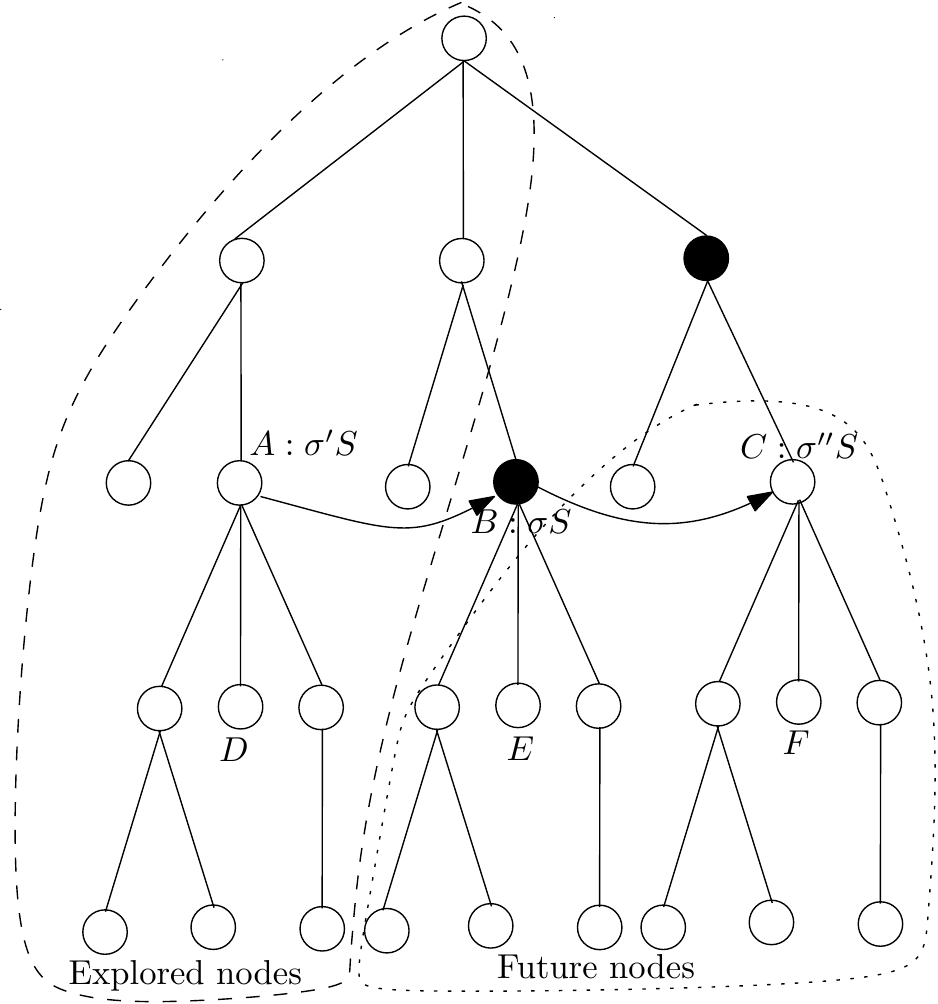}
    \caption{Solution Memorization}\label{memo1}
\end{figure}
Consider the situation illustrated in Figure \ref{memo1}, where active nodes are colored in black. Node $B$ is the current node,
while $\sigma$, $\sigma'$ and $\sigma''$ are different permutations of the same jobset. In other words, nodes $A$, $B$ and $C$ may contain the same subproblem to solve, implied by $S$. In that case, if $A$ has already been solved (consider for instance a depth-first search) and the optimal sequence of $S$ has been memorized, then it can be used directly to solve nodes $B$ and $C$ and it is no longer necessary to branch on these nodes. 
{Note that, in order to successfully perform memorization, we must guarantee that the solution of $S$ memorized at node $A$ is optimal. Depending on the branching algorithm implementation, this may not be obvious: for instance in \techbb{} algorithms, the leaf node corresponding to the optimal solution of node $A$ may be missed if one of its ascendant node is cut due to a dominance condition. Looking at Figure \ref{memo1}, assume that node $D$ should have led to the optimal solution of  problem $S$ but has been cut by a dominance condition. Applying \fmemo{} may then lead to memorize another solution $\beta$ to $S$, which is not optimal with respect to $S$. Troubles may appear if the global optimal solution to the original problem (associated to the root node) is, for instance, given by node $E$. Solution memorization may imply not exploring node $B$ and directly replacing $S$ by the ``best'' solution found from node $A$. As a consequence, the global optimal solution is missed. This situation occurs whenever the dominance condition which has pruned node $D$ would not have pruned node $E$: in the remainder, this kind of conditions are refined to as \textit{context dependent dominance conditions} since they depend on the context of each node (typically, the initial partial sequence $\sigma$, $\sigma'$ and $\sigma''$). By opposition, a \textit{context independent dominance condition} would have pruned node $D$, $E$ and $F$. 
A direct way to fix this is to disable dominance conditions whenever \fmemo{} is applied.}

{However, if these context dependent conditions are playing a very important role in the algorithm then this may slow down the algorithm even if \textit{solution memorization} works. 
Another approach to manage context dependent dominance conditions is to extend the memorization from ``solutions'' to ``lower bounds'' when the branching algorithm involves a bounding mechanism. In that version of \techmemo, we assume that all dominance conditions are kept in the algorithm. 
When node $A$ is created, a lower bound is computed, which represents the best solution value we may expect from the subtree of $A$. This lower bound is based on the cost function value of the sequence $\sigma'$ which is already fixed, and an evaluation on the unsolved part $S$. 
When branching down the subtree of $A$, jobs in $S$ are fixed gradually, hence the evaluation on the remaining unscheduled jobs also becomes more and more precise. 
When all leaf nodes of the subtree of $A$ are explored, this value finally becomes tighter (higher) than the initial value computed at node $A$. Since the objective function value of $\sigma'$ is known, we can then deduce the lower bound value corresponding to $S$ when scheduled after $\sigma'$, and memorize it. Now when node $B$ is opened, instead of computing its lower bound, we can get it by finding the lower bound of $S$ directly from the memory and then add the objective function value of $\sigma$. In this way, the lower bound we get is tighter, and node $B$ is more likely to be cut. Moreover, the lower bound computation at node $B$, which may be time costly, is saved. Notice that, for nodes cut by context dependent dominance conditions, their lower bound values still need to be computed and considered (hence introduces an extra cost). Lower bound memorization can be a good alternative to \fmemo{} with context dependent dominance conditions turned off as long as these conditions are efficient in pruning the search tree.}

{Note that the memorization of lower bounds is compatible with the memorization of optimal solutions: whenever in a subtree no nodes are cut by context dependent dominance conditions and the global upper bound is updated by some nodes from this subtree, the optimal solution of this subtree is memorized. Otherwise, the lower bound is memorized. 
We denote the described memorization technique including the memorization of optimal solutions and the memorization of lower bounds as \fmemo{} since  both are related to the memorization of the ``best solution'' of the problem associated to a node.} 

\subsubsection{Passive node memorization}
At any node $\sigma S$, another information that can be memorized is the partial sequence $\sigma$.  Unlike \fmemo{} where the memorized sequences can be used to solve a node, \pbmemo{} is only used to cut nodes. 

\begin{figure}[!ht]
\centering
    \includegraphics[width=0.6\columnwidth]{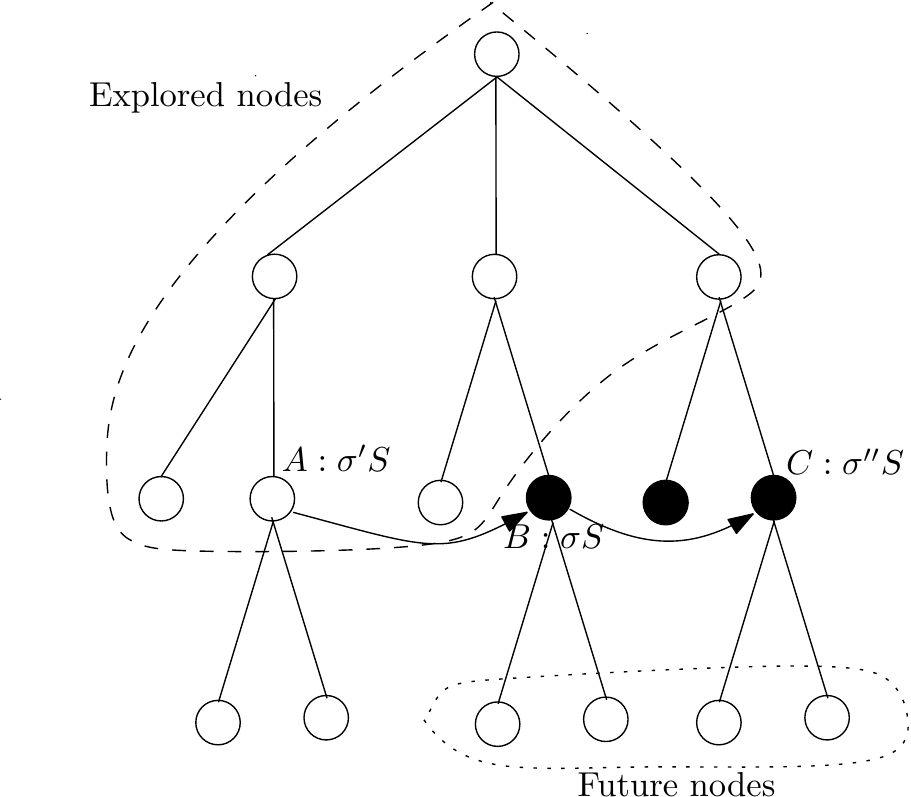}
    \caption{Passive node memorization)}\label{memo2}
\end{figure}

Consider the branching situation depicted by Figure \ref{memo2}. Again active nodes are black-colored and $B$ is the current node. Assume a node $A$ exists among explored node, with $\sigma'$ being a different permutation of the same jobset used in $\sigma$. If the partial sequence $\sigma'$ has been memorized then two situations may occur. If $\sigma'$ dominates $\sigma$ then $B$ can be cut since it cannot lead to a solution better than $A$. If no such $\sigma'$ dominating $\sigma$ is available, then $\sigma$ can be memorized in order to 
possibly prune a future node like $C$. Note that \fmemo{} and \pbmemo{} may possibly intersect. Consider the previous example and nodes $A$, $B$ and $C$. If the optimal solution of subproblem $S$ has been obtained from the exploration of node $A$, then at node $B$ and $C$ both \fmemo{} and \pbmemo{} imply not to branch on these nodes if $\sigma'$ dominates $\sigma$ and $\sigma''$.

The dominance test between sequences can be implemented as a function $check(\sigma,\sigma')$ which returns \texttt{1} if $\sigma'$ dominates $\sigma$, as introduced by \cite{tkindt2004revisiting}. The check must be done on two different sequences of the same jobset, having the same starting time and its implementation is problem dependent. Since the memorized sequence results from branching decisions, we call it \pbmemo.

Any node, ready to be branched on, must be compared to explored and/or to active nodes depending on the search strategy. Additionally, it may be necessary to perform the check twice: 
first once the node is created, then at the time of branching. Memorizing the partial sequence, when the node is created, ensures that the best sequence is kept before any exploration of nodes. Then, rechecking the dominance when branching on a node enables that node to be cut if a dominant partial sequence has been found meanwhile.  

In the following, we introduce  Property \ref{proper:con} which relates the lower bounding mechanism of search tree based algorithms to the $check$ function. {When this property is answered, the current node only needs to be
compared to explored nodes instead of all nodes when best first is chosen as the search strategy, as detailed in Section \ref{sec:dg}.}

\begin{proper}(Concordance property)\label{proper:con}
Let LB(A) be the lower bound value computed at node $A$. The search tree based algorithm satisfies the concordance property if and only if, for any node $A=\sigma S$ and $B=\pi S$, $LB(A)<LB(B) \Leftrightarrow check(\pi,\sigma)=1$.
\end{proper}

\subsubsection{Predictive node memorization}\label{sec:prenodememo}
\textit{Predictive node memorization} relies on the same idea as \pbmemo, but with additional operations. As illustrated in Figure \ref{memo3}, at a given node $B=\sigma S$, we first check, like in \pbmemo{}, if the current node can be cut by $\sigma'$ memorized at node $A$. If not, instead of directly memorizing $\sigma$, we search for an improving sequence $\pi$. Notice that, by the way, the improving sequence necessarily belongs to a part of the search tree not yet explored when dealing with the node $\sigma S$. There may be many ways to compute $\pi$. For instance, we may perform some local search on $\sigma$, searching for a neighbor sequence $\pi$ that dominates $\sigma$. Alternatively, we may focus on a short subsequence of $\sigma$ and solve it to optimality (in a brute-force way, for instance). The latter idea appears in the work of  \cite{jouglet2004branch} as \textit{Dominance Rules Relying on Scheduled Jobs}. 
We may also make use of an exact algorithm to optimize a part of $\sigma$ to get $\sigma'$, as far as this algorithm is fast. Notice that this idea is strongly related to the merging mechanism introduced in chapter \ref{ch3} (see also \cite{garraffa2017bm}), which is designed to provide good worst-case time complexities. 
If such a sequence $\pi$ can be built, then the current node $\sigma S$ is cut and node $\pi S$ is memorized.
\begin{figure}[!ht]
\centering
    \includegraphics[width=0.6\columnwidth]{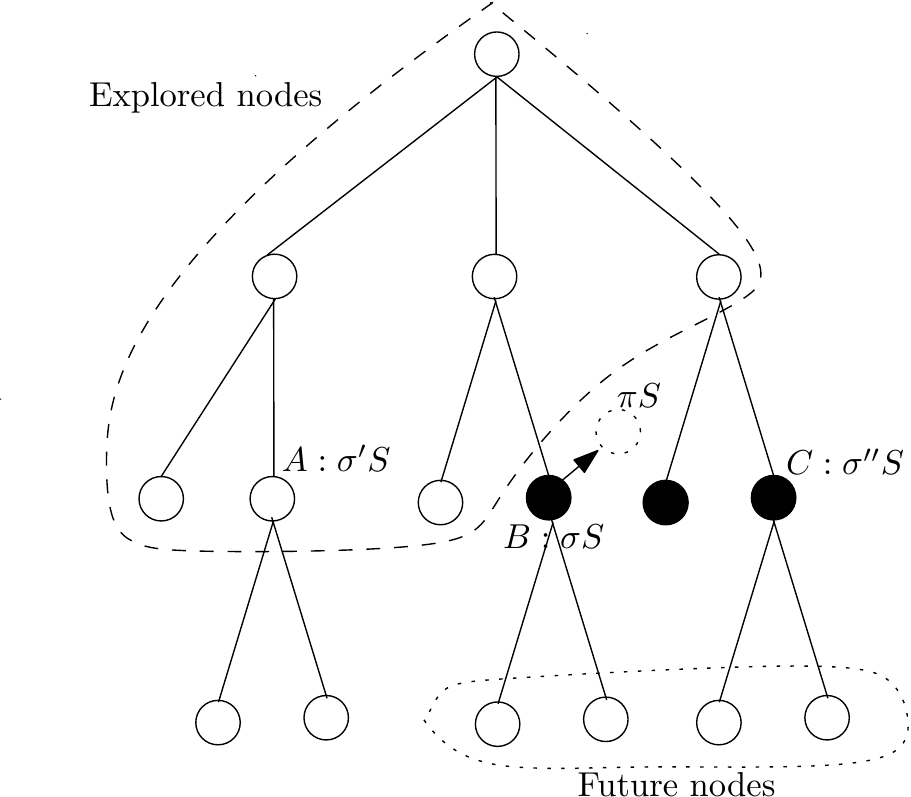}
    \caption{Predictive node memorization}\label{memo3}
\end{figure}
Note that node $\pi S$ has not yet been encountered in the search tree when dealing with node $\sigma S$ (consider, $\pi=\sigma''$). So, it is important when applying \abmemo{} to remember that $\pi S$ still needs to be branched on. 
Also, the extra cost of generating $\pi$ must be limited in order to avoid excessive CPU time consumption.

\subsection{Decision guidelines}\label{sec:dg}
In this section we provide some guidelines on how to choose the appropriate memorization scheme according to the branching scheme and the search strategy. The main results are summarized in the decision tree in Figure \ref{fig:dt}.

\begin{figure}
\centering \includegraphics[width=\columnwidth]{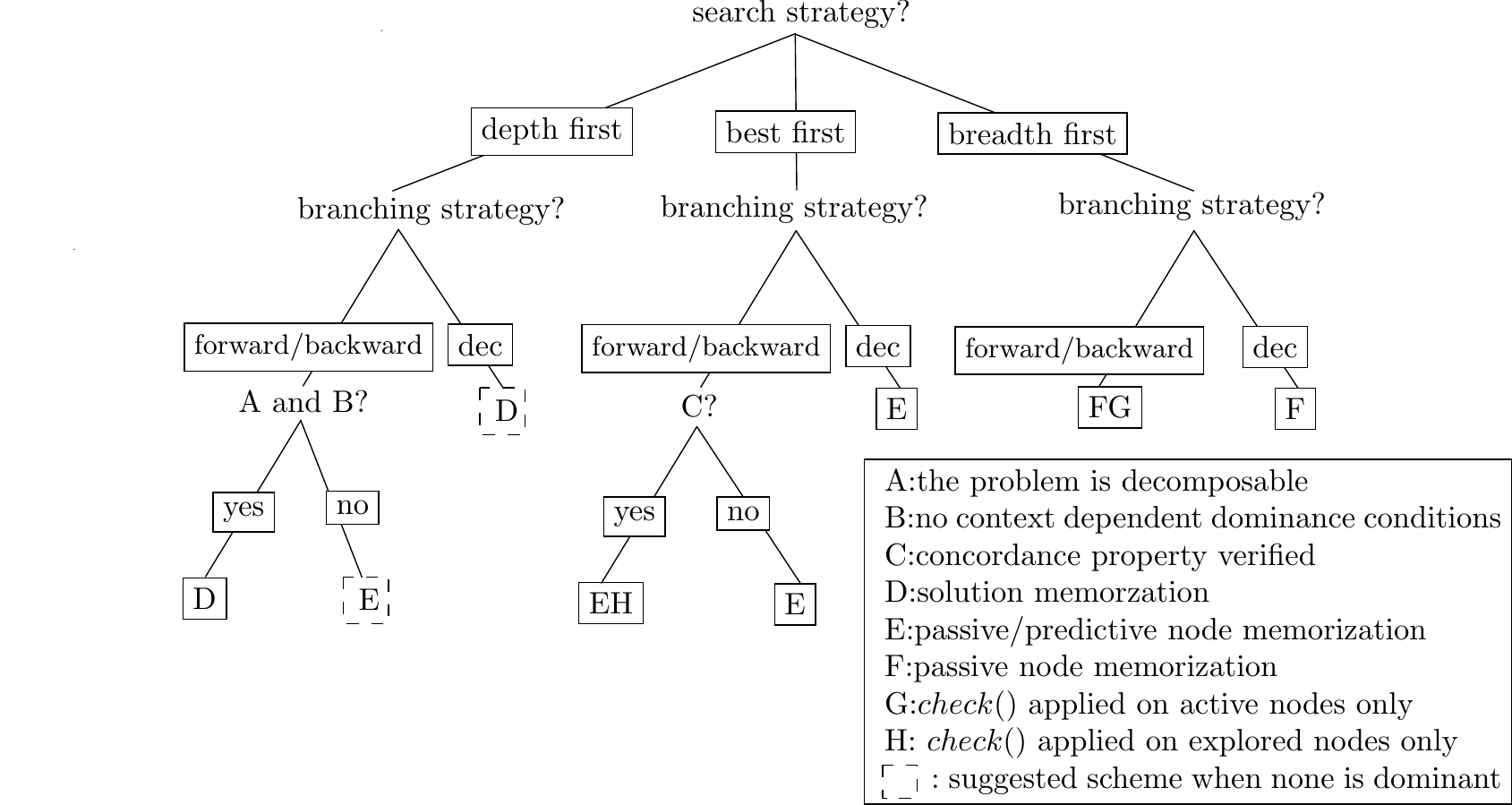}
\caption{Decision tree for choosing the memorization scheme}\label{fig:dt}
\end{figure}

\subsubsection{Forward branching and depth first search strategy}\label{fbdf}
In \fb, any node of the search tree can be defined as $\sigma S$. When \textit{depth first} is used as the search strategy we can state the following property.

\begin{proper}\label{proper:fbdf}
With forward branching and depth first, if the problem is decomposable and  \fmemo{}  memorizes optimal solutions, then \fmemo{} dominates both \pbmemo{} and \abmemo. 
\end{proper}
\begin{proof}
Any node deletion that can be achieved by \pbmemo{} and \abmemo{} can also be achieved by \fmemo, but not conversely. 
Consider nodes $A=\sigma S$ and $B=\pi S$ with $\sigma$ and $\pi$ two permutations of the same jobset. 
As the problem is decomposable, solving subproblem $S$ at node $A$ is equivalent to solving it at node $B$. Without loss of generality, we assume that 
$A$ appears before $B$ during the solution. In \pbmemo{} if $check(\pi,\sigma)=1$, i.e. sequence $\sigma$ dominates $\pi$, then $B$ can be pruned. 
However, also in \fmemo{}, node $B$ can be pruned since the optimal solution of jobset $S$ has already been memorized from node $A$.

Now consider the case where $check(\sigma,\pi)=1$, i.e. sequence $\pi$ dominates $\sigma$. This implies that with \pbmemo, node $B$ will not be pruned. However, as explained above, with \fmemo, node $B$ is pruned. With \abmemo, the conclusion is the same since we have no guarantee that starting from node $\pi S$ another node $\alpha S$ dominating $\pi S$ can be generated. Besides, even if such $\alpha S$ is generated and $\pi S$ is pruned, the same issue occurs to node $\alpha S$ when it is generated.
\end{proof}

If the problem is not decomposable, or \textit{context dependent dominance conditions} are used in the algorithm, then \fmemo{} memorizes lower bounds, and then which memorization scheme is dominant can not be determined. However, in practice, \pbmemo{} may be preferred to \fmemo. Notably, as the problem is not decomposable, then it may be necessary to solve the  subproblem consisting of jobset $S$ both at nodes $A$ and $B$. However, with \pbmemo, node $B$ may be pruned whenever $\pi$ is dominated by $\sigma$. 

\subsubsection{Forward branching and best first search strategy}
We can state the following property.
\begin{proper}\label{prop:fbbf}
With forward branching and best first strategy, \fmemo{} does not apply. \textit{Passive node memorization} or \abmemo{} can be applied only to \en{} if the concordance property (Property \ref{con}) is answered.
\end{proper}
\begin{proof}
To apply \fmemo{} at a given node, the subproblem concerning $S$ must be solved first in order to memorize its optimal solution. This is not compatible with \best{} search strategy. In fact, if \best{} reaches a leaf node, then also the optimal solution is reached and no sequence has been stored before. 

When \pbmemo{} and \abmemo{} are applied, the search, at a given node, of a dominant subsequence needs only to be done in the set of explored nodes whenever the {concordance} property holds. 
As the \best{} search strategy always consider for branching the node with the lowest lower bound value, the {concordance} property implies that no active node can dominate it.
\end{proof} 
When the {concordance} property does not hold, then node memorization techniques are required to consider both explored and active nodes for node pruning. 

Besides, no dominance can be deduced a priori between
\abmemo{} and \pbmemo{}. It depends on how the search for an improving subsequence is applied in \abmemo. Generally speaking, both memorization schemes should be considered and compared to find the best one.

\subsubsection{Forward branching and breadth first search strategy}
With forward branching and breadth first we can state the following property. 
\begin{proper}\label{prop:breadth_fmemo}
With forward branching and breadth first strategy, \fmemo{} does not apply. 
\textit{Passive node memorization} should be chosen and should be applied to active nodes.
\end{proper}
\begin{proof}
Under this configuration, \fmemo{} is useless since leaf nodes are reached only at the end of the search tree.
\textit{Passive node memorization} can be applied to \an{} only. An active node $A$ is selected for branching when all the nodes at the same level have been created, hence all other active nodes dominated by $A$ are discarded. If in turn $A$ it is dominated by another node, then it is pruned. 
There is no need to consider explored nodes since explored nodes on higher levels have less fixed jobs, therefore they are not comparable with the current node. 
Also, \textit{predictive node memorization} can not do better than \pbmemo{} since \pbmemo{} already keeps the best node at each level. 
\end{proof}


\subsubsection{Decomposition branching and depth first search strategy}\label{dbdf}
With \db, at each level of the search tree a decomposition job can be put on any free position by the branching operation. 

Under this configuration, no dominance can be deduced among the memorization schemes. In fact, we can imagine situations 
where either \fmemo{} or \pbmemo{} or \abmemo{} is dominant. Consider nodes $A=\sigma S_1 j_1 S_2$ and $B=\pi S_1 j_2 S_3$ with $A$ being explored before $B$. In both nodes, the current subproblem concerns scheduling jobset $S_1$ after $\sigma$ or $\pi$. Suppose $\sigma$ and $\pi$ contain different jobs but have the same completion time, which means that the subproblem defined by $S_1$ is identical in $A$ and $B$. Then, the optimal sequence for $S_1$ found when solving $A$ can be reused on $B$ by \fmemo, while \pbmemo{} cannot handle this case since $\sigma$ and $\pi$ contain different jobs hence, are incomparable. \textit{Predictive node memorization} may or may not cut $B$ depending whether a dominant prefix can be generated or not.

On the other hand, we may also imagine the case where $A=\sigma S_1 j_1 S_2$ and $B=\pi S_3 j_2 S_4$. Suppose $\sigma$ and $\pi$ are different permutations of the same jobset. If $check(\pi,\sigma)=1$, then node $B$ can be cut by \pbmemo{} or \abmemo, while this is not the case for \fmemo{} because subproblems $S_1$ and $S_3$ do not consist of the same jobs.

In practice, even though every memorization scheme could be dominant in some cases, the memory limitation does not allow to apply all of them and our experience suggests to prefer \fmemo. This is due to the special structure of nodes $\sigma_1 S_1...\sigma_k S_k$, which makes the prefixed jobs much spread out {(they are separated by $S_i$)}, and prevents the application of successful \pbmemo{} and \abmemo. Moreover, the case with nodes $\sigma_1 S\sigma_2$ and $\pi_1 S\pi_2$, where $\sigma_1$ and $\pi_1$ have the same completion time but contain different jobs, may occur pretty often for large instances if the jobs processing times do not present a large variance.

\subsubsection{Decomposition branching and best first search strategy}\label{dbbf}

\begin{proper}\label{prop:dec_best}
With decomposition branching and best first strategy,  \fmemo{} does not apply. \textit{Passive node  memorization} and \abmemo{} must only be applied to explored nodes whenever the concordance property holds  and the $check$ function comparing two nodes $\sigma_1S_1...\sigma_k S_k$ and $\sigma_1'{S_1}'...{\sigma_{k'}}' {S_{k'}}'$ only works on $\sigma_1$ and ${\sigma_1}'$. Otherwise, \pbmemo{} and \abmemo{} must be applied to explored and active nodes.
\end{proper}
\begin{proof}
Similar to that of Property \ref{prop:fbbf}.
\end{proof}

\subsubsection{Decomposition branching and breadth first search strategy}\label{dbwf}
 \begin{proper}\label{prop:dbwf}
With decomposition branching and breadth first strategy, \fmemo{} does not apply. Whether \textit{node memorization} should be applied to active nodes only depends on the definition of the $check$ function.
\end{proper}
\begin{proof}
This configuration discourages \fmemo{} for the same reason as in Property \ref{prop:breadth_fmemo}.
If the $check$ function is defined in a way such that the explored nodes are not comparable to active nodes, then \pbmemo{} and \abmemo{} should be applied to \an{} only, otherwise they should be applied to all nodes.
\end{proof}


\FloatBarrier
\section{Implementation guidelines}\label{impl}
In this section we discuss efficient implementations of the memorization schemes, providing, when necessary, choices specific to the sequencing problems tackled in the remainder. The key point is to have a fast access to memorized partial solutions. Henceforth, we implement a database as a hashtable which contains all the memorized solutions. 
By well choosing the hash function, a hashtable supports querying in $\bigo{1}$ time to find the corresponding elements given a hash key.

For \fmemo{}, at a given node, the database is queried with $\langle t_0, S \rangle$, where $t_0$ is the starting time of the subproblem and $S$ is the related jobset. The returned result should be $\langle \pi, opt(\pi | t_0)\rangle$ which is the optimal sequence associated to $S$ when starting at time $t_0$, and its corresponding objective function value. So, $\langle \pi, opt(\pi | t_0)\rangle$ defines the elements which are memorized in the database. We define the hash key $h$ as a combination of $t_0$ and $|S|$: seeing $h$ as a set of bits, $t_0$ occupies the higher bits in $h$ while $|S|$ occupies the lower bits. The aim is to have a unique hash key for each given pair $\langle t_0, S \rangle$, even if this is not necessarily bijective: i.e., two elements in the database with the same hash key may correspond to different pairs $\langle t_0, S \rangle$. 
As a consequence, when a list of elements is returned for a pair $\langle t_0, S \rangle$, 
it is also necessary to verify that the returned sequence is a sequence of jobset $S$. This takes $\bigo{|S|}$ operations for each returned sequence. We may also include the sum of job id's of $S$ into $h$ in order to have a more exact key, but this correspondingly increases the time needed to construct the key, without preventing from checking whether a returned sequence $\pi$ is a permutation of jobset $S$ or not.

For \textit{passive} and \textit{predictive node memorization}, implementation decisions are more dependent on the problem and on the check function used to compare two partial sequences $\sigma$ and $\pi$ of the same jobset. 
For any such $\sigma$ and $\pi$, a general definition of $check()$ could be:
\begin{equation}\label{eq:check}
check(\pi,\sigma)=
\begin{cases}
1,\ if\ C_{max}(\sigma)\leq \max(C_{max}(\pi); E_{min}(\pi))\ and\ opt(\sigma | t_0)\leq opt(\pi| t_0)\\
0,\ otherwise
\end{cases}
\end{equation}

with $C_{max}$ referring to the makespan of a partial sequence, and $E_{min}(\pi)$ referring to the earliest starting time of the jobs scheduled after $\pi$. {It is not difficult to see that if $check(\pi,  \sigma)=1$ then node $\sigma S$  dominates node $\pi S$. Indeed, for any regular  objective function to minimize, with respect to the fixed jobs, $opt(\sigma | t_0)\leq opt(\pi| t_0)$ ensures that $\sigma$ yields a smaller cost than $\pi$. Moreover, $C_{max}(\sigma)\leq \max(C_{max}(\pi); E_{min}(\pi))$ guarantees that the starting time of  jobset $S$ at node $\sigma S$ is not higher than  in  node $\pi S$. Therefore, $\sigma S$ dominates $\pi S$.}

Consequently, a database element is a tuple $\langle \sigma,  C_{max}(\sigma), E_{min}(\sigma), opt(\sigma|t_0), ExpAct\rangle$ with $ExpAct$ being a flag indicating whether this element corresponds to an explored or an active node. Notice that $t_0$ is not included since it appears in the hash key used for querying. Also, when the problem is decomposable, the $check$ function reduces to:
$$
check(\pi,\sigma)=
\begin{cases}
1,\ if\ opt(\sigma| t_0)\leq opt(\pi| t_0)\\
0,\ otherwise
\end{cases}
$$
where only $\langle \sigma,  opt(\sigma|t_0), ExpAct\rangle$ need to be stored.
For node memorization techniques the hash key, at a given node, is computed in a way similar to solution memorization. Consider,
for example, \fb{}: let $\sigma_1 S_1$ be the current node. As the dominance of another node is checked on $\sigma_1$, the database is queried with $\langle0, S_{\sigma_1} \rangle$ with $S_{\sigma_1}$ referring to the set of jobs in $\sigma_1$. Then, only $S_{\sigma_1}$ needs to be binary encoded into the hash value.

With respect to the database management, notice that when an element is added, in node memorization techniques, then the elements dominated by the added one are removed. Besides, due to memory limitation on the computer used for testing, we may need to clean the database when it is full on some instances. More precisely, in our experiments, the RAM is of 8Gb and hence the database size is also limited to 8Gb.

A cleaning strategy is needed to remove unpromising elements, i.e. those that are expected not to be used for pruning the search tree. As it is not clear which elements are unpromising,  several strategies have been tested. We have implemented the following ones during our experimentations.

\subsubsubsection{FIFO: First In First Out}
This is one of the most common database cleaning strategy: when the memory is full, we first remove the first added elements. An extra structure is needed to record the order of elements according to the time when they are added. When the database if full and a long sequence is waiting to be inserted, it may be necessary to remove more than one elements in order to free enough space.

\subsubsubsection{BEFO: Biggest Entry First Out}
This cleaning strategy suggests to remove from the database the biggest elements (longest sequences) in order to free enough continuous memory for storing new elements. 
For \fmemo{} it means removing nodes at higher levels in the search tree. An intuition of the impact of this cleaning strategy on \fmemo{} can be sketched from Figure \ref{fig:db1} which presents the number of sequences memorized per size for an instance of the \pbtt{} scheduling problem with 800 jobs. It can be seen that sequences with ``large number of jobs'' (let's say more than 500 jobs) are not often used to prune nodes, and even if some large nodes could have been useful for node pruning, we may still expect that the solution of its subproblems generated by one or several branching can be found from the memory.

\begin{figure}[!ht]
    \centering
    \includegraphics[width=0.7\textwidth]{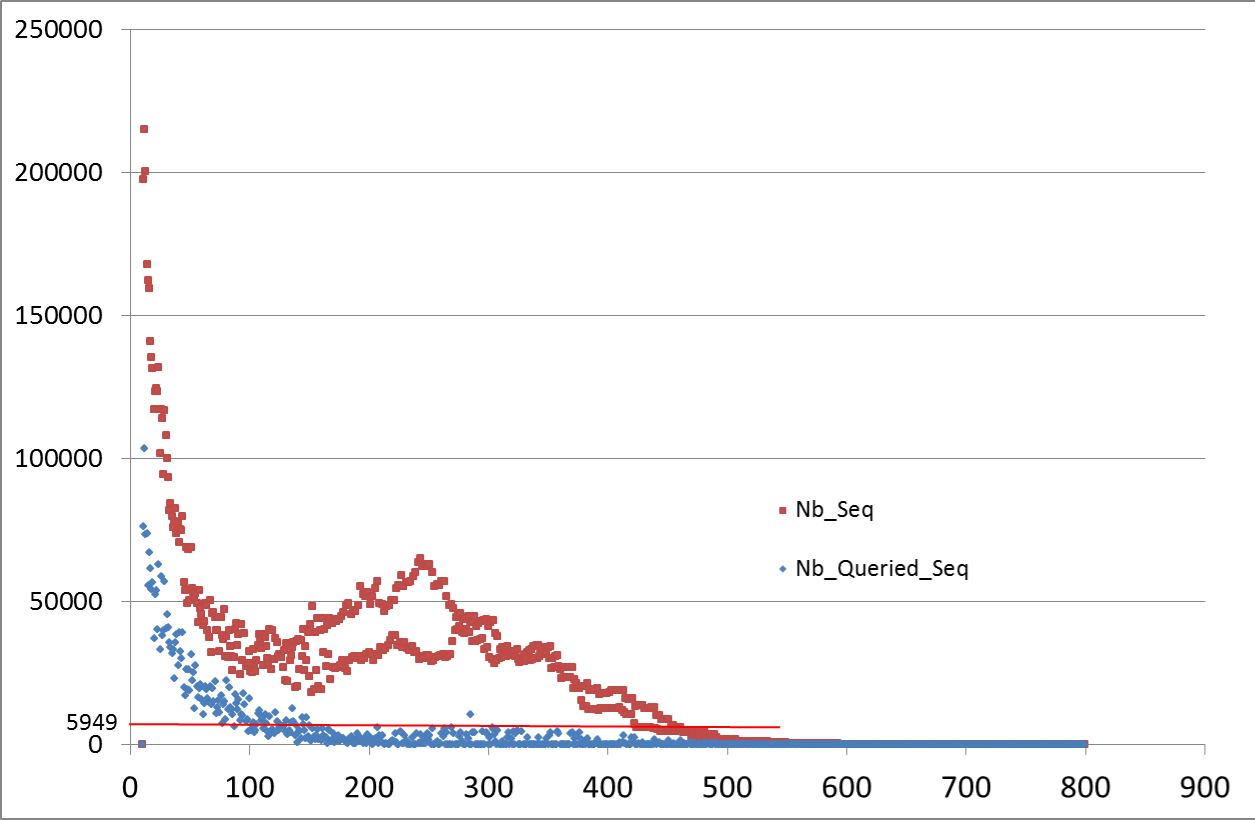}
    \begin{flushleft}
    \footnotesize
    Nb\_Seq: number of sequences of a given size, stored in the memory.\\
    Nb\_Queried\_Seq: number of sequences of a given size, that are used to avoid solving twice identical problems.
    \end{flushleft}
    \caption{Number of solutions and useful solutions in memory for an instance of \pbtt{} with 800 jobs}
    \label{fig:db1}
\end{figure}

However, for \textit{passive and predictive node memorization}, the strategy means removing nodes at lower levels of the search tree. These
nodes refer to subproblems with many jobs already fixed (and memorized) and few jobs to schedule. It may be possible that the extra cost of memorizing a long fixed partial sequence is not inferior to solving the corresponding small subproblem directly without memorization. Since this is not obvious from a theoretic point of view, some preliminary experiments were performed in order to investigate whether it is better also to remove nodes at higher levels of the search tree in \textit{node memorization}. Computational testing confirms that removing longest elements is always preferred, at least on problems \pbrisumc, \pbdtilde{} and \pbfsumc.

At each cleaning, we also tend to clean up a large amount of space in order to decrease the time cost induced by frequent cleaning operations.

\subsubsubsection{LUFO: Least Used First Out}
Figure \ref{fig:db1} also suggests another cleaning strategy since a lot of sequences are never used to prune nodes in the search tree. These sequences can be removed from the database to save space. To implement the LUFO cleaning strategy we 
keep an usage counter for each database element. The counter is incremented by $1$ each time the element is queried and used to prune a node in the search tree, and it is decremented by $1$ when a cleaning operation is performed. Elements whose counter is zero are removed by the cleaning operation. Note that in \textit{node memorization}, when a database element is replaced by a new one, the latter should inherit the counter value of the old one. This is because that the counter value reflects the usefulness of a solution and the counter value of a newly added solution should not be smaller than the counter values of solutions that are dominated by the new one.

Preliminary results, not reported here, show that FIFO strategy is not efficient for the considered scheduling problems. BEFO strategy works better than FIFO, but its efficiency is not high enough to make a difference in the computational results. LUFO strategy is proved to be surprisingly efficient.

\section{Application to the \pbrisumc, \pbdtilde{} and \pbfsumc{} problems}
In order to experiment the effectiveness of \textit{Memorization} on  scheduling problems, we first test it on three problems that were considered by \cite{tkindt2004revisiting}. In that work the authors used memory to apply the so-called \textit{DP property} over nodes in order to prune the search tree. According to the memorization framework as defined in this chapter, what they have done is \pbmemo{} with a database cleaning strategy which replaces the shortest stored sequence by the new one when the database is full. 
The aim of that paper was also on choosing the most suitable search strategy when trying to solve these problems efficiently. In this section, for each of these three problems we apply the previously defined framework of \techmemo{} with various considerations and discuss the  obtained results. 
For each problem, we compare several \techbb{} algorithms which are named according to their features: Depth-, Best- and Breadth- refer to \techbb{} algorithms with the corresponding search strategies and no memorization included. Depth\_X, Best\_X and Breadth\_X refer to a \techbb{} algorithm with corresponding search strategies and \textit{memorization} X used, with $X=S$ representing \fmemo, $X=Pa$ representing \pbmemo{} and $X=Pr$ the \abmemo. 
For \abmemo, we use \textit{k-perm} heuristic to search for new sequences, as described in section \ref{sec:prenodememo}.

\textit{k-perm} heuristic also refers to a ``dominance condition relying on scheduled jobs'' as introduced by \cite{jouglet2004branch}. At a given node $\sigma S$, 
assume that $\sigma=\sigma^0\sigma^k$ with $\sigma^k$ the subsequence of the $k$ last jobs in $\sigma$, $k$ being an input parameter. The \textit{k-perm} heuristic consists in enumerating all permutations of jobs in $\sigma^k$ to obtain sequence $\sigma^\ell$. Then, the first found sequence $\sigma^0\sigma^\ell$ dominating $\sigma^0\sigma^k$, if it exists, is used to prune node $\sigma S$. 
The dominating sequence can be memorized. The notion of dominance between sequences is the one used to define the $check$ function in \textit{node memorization}. Preliminary tests suggest us to choose $k=5$ in our implementations in order to have the most efficient \abmemo{} \textit{scheme}.

Notice that \textit{k-perm} search is not performed when \textit{breadth first} strategy is used, since the memorization applied on active nodes already covers the effect of \textit{k-perm}. 

The algorithms proposed by \cite{tkindt2004revisiting} in 2004 are also tested on the same dataset and they are named as Depth\_Pa\_04, Best\_Pa\_04 and Breadth\_Pa\_04, respectively.  {Compared to our algorithms Depth\_Pa, Best\_Pa and Breadth\_Pa, the main differences are that the RAM usage is limited to  450M for Depth\_Pa\_04, Best\_Pa\_04 and Breadth\_Pa\_04, in order to obtain similar results to that  reported in 2004. Also, in our algorithms 
LUFO is chosen as the database cleaning strategy.}

The test results on instances of certain sizes are marked as $OOT$ (out of time) if any of the instances is not solved after 5 hours. Analogously, with the application of \textit{Memorization}, memory problems may occur and the limit on RAM usage may be reached, reported as $OOM$ (out of memory). Note that according to our experiments, even when memory cleaning strategies are applied, $OOM$ may still occur due to the fragmentation of the memory after a number of cleanings. Also note that LUFO is chosen as the cleaning strategy according to preliminary experimentations. 

All tests have been done on a HP Z400 work station with 3.07GHz CPU and 8GB RAM. The datasets of all tests are open for public access from \url{http://www.vincent-tkindt.net/index.php/resources}.

\subsection{Application to the \pbrisumc{} problem}\label{ri}
The \pbrisumc{} problem asks to schedule $n$ jobs on one machine to minimize the sum of completion times. Each job $i$ has a processing time $p_i$ and a release date $r_i$ before which the job cannot be processed. The problem is NP-hard in the strong sense and it has been widely studied in the literature with both exact and heuristic algorithms considered. 
The referential computational results so far are done by \cite{tkindt2004revisiting}, in which with \textit{forward branching} and \textit{best first} and the application of a so-called \textit{DP Property} the algorithm is able to solve instances with up to 130 jobs. A mixed integer programming approach  is also reported by \cite{kooli2014mixed}, which enables to solve instances with up to 140 jobs. However only 5 instances are generated in their experiments for each set of parameters. This makes their result less convincing due to the fact that the hardness of instances varies a lot even when generated with the same parameters, as observed during our study. 
Consequently, we consider in this section that the \techbb{} algorithm provided by \cite{tkindt2004revisiting} is at least as efficient as the approach used by \cite{kooli2014mixed}. 
In contrast, 30 instances are generated by \cite{tkindt2004revisiting} for each set of parameters , which leads to 300 instances for each size. 

The work of \cite{tkindt2004revisiting} uses the \techbb{} algorithm of \cite{chu1992branch} as a basis, and so \fb{} is adopted as the branching strategy. 
With respect to search strategies, \textit{depth first}, \textit{best first} and \textit{breadth first} were all tested by \cite{tkindt2004revisiting}, aiming to explore the impact of different search strategies on the efficiency of the algorithm. The lower bounds and dominance conditions of \cite{chu1992branch} are kept. A so called \textit{DP Property}, added as a new feature in the algorithm of \cite{tkindt2004revisiting}, is actually equivalent to \pbmemo{} in our terminology. The $check()$ function is based on a dominance condition given by \cite{chu1992branch} and it was defined by \cite{tkindt2004revisiting} as follows:

\begin{equation}\label{check:1ri}    
check(\pi,\sigma)=
\begin{cases}
1,\ if\ opt(\sigma | 0)\leq opt(\pi| 0)\ and\ \\
\hspace{0.5cm} opt(\sigma | 0)+|\Omega|*E_{min}(\sigma)\leq opt(\pi | 0)+|\Omega|*E_{min}(\pi)\ \\
0,\ otherwise
\end{cases}
\end{equation}

with $\Omega$  the jobs that remain to be scheduled after sequence $\sigma$ and $\pi$. We also have $E_{min}(\sigma)=\max(C_{}(\sigma), \min_{\substack{r\in \Omega}}r_i)$, with $C_{}(\sigma)$ the completion time of $\sigma$. The item stored into memory is a tuple $\langle \sigma,  C_{}(\sigma), opt(\sigma|0), ExpAct\rangle$ and $E_{min}(\sigma)$ can be computed when needed. Note that this definition of $check$ is an adaption of the general  Equation \ref{eq:check} and if the $check$ in  Equation \ref{eq:check} return 1 then this $check$ also returns 1.

\subsubsection{Application of the memorization framework and improved results}
The problem is not \textit{decomposable} due to the existence of release dates. Therefore, with the choice of \fb, \textit{node memorization} should be chosen according to the decision tree in Figure \ref{fig:dt}. 
{The lower bound used in the algorithm is based on the SRPT (Shortest Remaining Processing Time) rule. Together with the $check()$ function defined in Equation \ref{check:1ri}, it is not clear whether the concordance property is answered. 
Hence, when \pbmemo{} is applied upon \textit{best first}, all nodes need to be considered for the comparisons, while when it is applied with \textit{breadth first}, only active nodes need to be considered.} 
Therefore the choices made by \cite{tkindt2004revisiting} with respect to memorization are kept. The $check()$ function also remains the same, as defined in Equation \ref{check:1ri}.




Here we refresh the computational results of \cite{tkindt2004revisiting} on new randomly generated input and also add  results for \abmemo{} and \fmemo. The input is generated following the way described by \cite{chu1992branch}, i.e. the processing times are generated uniformly from  $[1,100]$ and release dates are generated between 0 and $50.5\cdot n\cdot r$ with $r$ belonging to $\{0.2, 0.4, 0.6, 0.8, 1.0, 1.25, 1.50, 1.75, 2.0, 3.0\}$. 30 instances are generated for each value of $r$, hence leading to 300 instances for each size $n$ from 70 to 140. 

We first ran the algorithms of \cite{tkindt2004revisiting} (Depth\_Pa\_04, Best\_Pa\_04 and Breadth\_Pa\_04) on these newly generated instances. 
All these three algorithms are able to solve instances with up to 110 jobs. On the running time, Best\_Pa\_04 is the fastest one, with a maximum running time of 27 seconds for instances with 110 jobs. This value becomes 40 seconds for Depth\_Pa\_04 and 1082 seconds for Breadth\_Pa\_04. It was reported by \cite{tkindt2004revisiting} that the algorithms with best first and breadth first can solve respectively instances with 130 jobs and 120 jobs which is different from what we obtain. 
This reveals that the newly generated instances are harder than that of \cite{tkindt2004revisiting}, knowing that the computational power of our test environment is much better than  in 2004. Moreover, we  also observed that for Depth\_Pa\_04, the hardest instance of 110 jobs is solved in 40 seconds but the hardest instance of 90 jobs is solved in 1691 seconds. This is why we think that the hardness of instances vary a lot when generated randomly.

Results related to \textit{node memorization} are put in Table \ref{tab:risumc}. 
For all the three search strategies, \pbmemo{} enables to solve much larger instances with respect to the versions without memorization. This is sufficient to prove the power of memorization on this problem.

Depth\_Pa, Best\_Pa and Breadth\_Pa are all more powerful than their counterparts of 2004, i.e. Depth\_Pa\_04, Best\_Pa\_04 and Breadth\_Pa\_04. This is especially visible on \textit{depth first}, where Depth\_Pa\_04 solves instances with up to 110 jobs while Depth\_Pa solves instances with up to 130 jobs. This makes Depth\_Pa  the global best algorithm and shows that when more physical memory is available and a larger database with an appropriate cleaning strategy is set, the memorization can be further boosted and the gain can be important.

The impact of \textit{k-perm} search on this problem is very limited: \abmemo{} basically leads to the same result as \pbmemo. 
{In addition, we also tested \textit{solution memorization} on this problem since no theoretical dominance between the memorization schemes can be established for this problem. Since context dependent dominance conditions are enabled in the algorithm, we first disabled them in order to obtain the optimal solution of each node. But this turned out to be very inefficient. Therefore, we also implemented the memorization of lower bounds, as described in section \ref{solmemo}. 
However, the resulting algorithm can only solve instances with up to 80 jobs, hence not competitive compared to \textit{node memorization}, as predicted according to the decision tree in Figure \ref{fig:dt}.}

It is also worth to be mentioned that the database cleaning strategy LUFO enables a faster solution of large instances. As an example, we found an instance with 140 jobs is solved in 1.6 hours by Depth\_Pa with LUFO, while it needs 14 hours to be solved when the cleaning strategy of \cite{tkindt2004revisiting} is kept instead. However, due to the hardness of another instance with 140 jobs, the algorithm Depth\_Pa is finally out of time.

\begin{table}[!ht]
    \centering
\resizebox{\textwidth}{!}
{
    \begin{tabular}{|c|c|c|c|c|c|c|c|c|c|}
    \hline
    &n &70&80&90&100&110&120&130&140 \\ \hline
\multirow{4}{*}{Depth-}&Navg&141247.8&1778751.2&OOT&&&&&                                                                        \\ 
&Nmax&17491232&276190737&&&&&&                                                                       \\ 
&Tavg&1.8&22.4&&&&&&                                                                                     \\ 
&Tmax&217&3238&&&&&&                                                                                 \\ \hline
\multirow{4}{*}{Depth\_Pa}&Navg&2583.4&5756.2&18639.9&26827.4&48502.9&174545.5&192409.4&OOT                        \\ 
&Nmax&147229&314707&2253897&644151&1281097&16575522&7742714&    \\ 
&Tavg&0.0&0.0&0.3&0.7&1.3&7.1&9.1&                                          \\ 
&Tmax&2&7&64&27&41&754&295&                                    \\ \hline
\multirow{4}{*}{Depth\_Pr}&Navg&1771.1&4455.1&12625.7&19621.7&30380.4&117865.6&128277.5&OOT           \\
&Nmax&                          82765	&267416	&1455743	&588429	&1096520	&11126694&5132228&   \\
&Tavg&0.0&0.0&0.3&0.5&0.9&4.7&6.6&\\
&Tmax&                          1	&7	&46	&28	&39	&488&252&                                                                          \\\hline
Best-&&OOT&&&&&&&                                                                   \\\hline
\multirow{4}{*}{Best\_Pa}&Navg&1230.5&3299.4&5235.1&9494.8&13658.5&38574.5&43986.9&OOT \\
&Nmax&36826&256534&292929&216293&228848&2675337&1449900&                                  \\
&Tavg&0.0&0.2&0.2&0.4&0.6&15.3&11.8&                                                                    \\
&Tmax&0&46&38&27&25&3595&1630&                                                            \\\hline
\multirow{4}{*}{Best\_Pr}&Navg&1229.6&3298.2&5229.0&9490.7&13545.7&38560.1&43989.8&OOT           \\
&Nmax&36826	&256529	&292927	&216037	&228832	&2674776&1449872&\\.
&Tavg&0.0&0.2&0.2&0.4&0.7&15.4&11.9&                                                                           \\
&Tmax&1	&47	&39	&28	&25	&3579&1636&                                                                         \\\hline
Breadth-&&OOT&&&&&&&                                                                   \\\hline
\multirow{4}{*}{Breadth\_Pa}&Navg&1947.7&6745.0&9893.8&21308.5&27383.1&OOT&&                         \\
&Nmax&90494&709607&733980&575430&1209481&&&                                               \\
&Tavg&0.0&4.6&3.4&5.3&5.7&&&                                                                        \\
&Tmax&9&1319&897&483&935&&&    \\\hline
    \end{tabular}}
    \caption{Results of new algorithms on the \pbrisumc{} problem}
    \label{tab:risumc}
\end{table}

\FloatBarrier
\subsection{Application to the \pbdtilde{} problem}\label{dtilde}
The \pbdtilde{} problem asks to schedule $n$ jobs on a single machine. Each job $i$ has a processing time $p_i$, a weight $w_i$ and a deadline $\tilde{d_i}$ which has to be answered. The objective is to minimize the total weighted completion time $\sum w_iC_i$. The problem is NP-hard in the strong sense and has been solved by \techbb{} algorithms \citep{posner1985minimizing, potts1983algorithm}, with the algorithm of \cite{posner1985minimizing} being slightly superior. The basic algorithm described by \cite{tkindt2004revisiting} is a combination of algorithms of \cite{posner1985minimizing, potts1983algorithm} by incorporating the lower bound and the dominance condition of \cite{posner1985minimizing} into the \techbb{} algorithm of \cite{potts1983algorithm}. 
With respect to search strategies, all the three strategies, i.e. \df, \best{} and \wf{} were considered by \cite{tkindt2004revisiting} and \bb{} is adopted as the branching scheme as in the algorithms of \cite{posner1985minimizing, potts1983algorithm}. Similarly to what is done on the \pbrisumc{} problem, the \textit{DP Property} is also considered by \cite{tkindt2004revisiting}, which is actually \pbmemo. The $check()$ function is defined as follows, where $\Omega$ is the set of jobs to be scheduled before $\sigma$ and $\pi$. 

\begin{equation}\label{check:dtilde}    
check(\pi,\sigma)=
\begin{cases}
1,\ if\ opt(\sigma|\sum_{i\in \Omega}p_i)\leq opt(\pi|\sum_{i\in \Omega}p_i) \\
0,\ otherwise
\end{cases}
\end{equation}

The items stored in the database are $\langle \sigma, opt(\sigma|\sum_{i\in \Omega}p_i), ExpAct\rangle$. In 2003,  \cite{pan2003improved} proposed another \techbb{} algorithm with reported experiments showing that it can solve to optimality all instances with up to 90 jobs in size. As the testing protocol is identical to the one used in 2004 by \cite{tkindt2004revisiting}, we can conclude that the algorithm of Pan is outperformed by the best one proposed by \cite{tkindt2004revisiting} (which is reported as being able to solve instances with up to {130} jobs in size).

\subsubsection{Application of the memorization framework and improved results}

This problem is \textit{decomposable} according to Definition \ref{def:dec}. 
From the decision tree in Figure \ref{fig:dt} we can derive that with the depth first search strategy, {\textit{solution memorization} should be considered, even though its superiority over \textit{node memorization} depends on the presence of \textit{context dependent dominance conditions} in the algorithm}. In the paper of \cite{tkindt2004revisiting}, \textit{node memorization} was implemented with that strategy. Consequently, in this section we compare four \techbb{} algorithms: the three versions of \cite{tkindt2004revisiting}, i.e. \textit{node memorization} applied to the three search strategies and a version based on depth first with \textit{solution memorization}. 

The concordance property is answered (see Proposition \ref{propo:condtilde}) and hence the \pbmemo{} only considers explored node when the search strategy is \textit{best first}, and only active nodes need to be considered in \textit{breadth first}. For \fmemo, the items stored into the memory are $\langle \pi, opt(\pi|0) \rangle$. 
For \textit{node memorization}, the $check()$ function and the items stored remain the same as in the work of \cite{tkindt2004revisiting}, as described in the previous section. 

{About \fmemo, context dependent dominance conditions  are enabled in the algorithm. Their removal has been experimentally proved to lead to an inefficient algorithm. Therefore, lower bounds are memorized during the \fmemo, as described in section \ref{solmemo}. 
}
\begin{propos}\label{propo:condtilde}
With the $check()$ function defined in Equation \ref{check:dtilde}, our algorithms verify the concordance property (Property \ref{proper:con}).
\end{propos}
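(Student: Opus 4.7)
The plan is to exploit the fact that, with backward branching, a node has the shape $\Omega\sigma$ where $\Omega$ is the set of jobs still to be scheduled (occupying the first $|\Omega|$ positions, i.e.\ starting at time $0$) and $\sigma$ is the partial sequence already fixed at the end, whose jobs are processed starting at time $t_\Omega := \sum_{i\in\Omega}p_i$. Consider the two nodes $A = \Omega\sigma$ and $B = \Omega\pi$, where $\sigma$ and $\pi$ are permutations of the same jobset of fixed jobs. Since the $1|\tilde{d_i}|\sum w_iC_i$ problem is decomposable in the sense of Definition~\ref{def:dec}, the total cost of any completion of such a node splits cleanly into a ``prefix cost'' that depends only on how $\Omega$ is scheduled on $[0,t_\Omega]$ and a ``suffix cost'' that depends only on the fixed tail.

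The first step I would carry out is to write the lower bound $LB(\cdot)$ as this sum explicitly. Let $\underline{LB}(\Omega,0)$ denote the value returned by the bounding procedure (Posner's bound, as used in the algorithm) on the jobset $\Omega$ starting from time $0$. Then one has
\[
LB(A) = \underline{LB}(\Omega,0) + opt\bigl(\sigma \mid t_\Omega\bigr), \qquad LB(B) = \underline{LB}(\Omega,0) + opt\bigl(\pi \mid t_\Omega\bigr),
\]
because the fixed tails $\sigma$ and $\pi$ are completely determined sequences whose contributions $opt(\sigma\mid t_\Omega)$ and $opt(\pi\mid t_\Omega)$ to $\sum w_i C_i$ are known exactly (no bound is needed for them), while the unscheduled part $\Omega$ is identical in the two nodes and starts at the same time $0$, so the bounding procedure produces the same value in both cases. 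Subtracting gives
\[
LB(A) - LB(B) \;=\; opt(\sigma\mid t_\Omega) \;-\; opt(\pi\mid t_\Omega),
\]
which directly yields $LB(A) < LB(B) \iff opt(\sigma\mid t_\Omega) < opt(\pi\mid t_\Omega)$, and by the definition in Equation~\ref{check:dtilde} this is precisely the condition $check(\pi,\sigma)=1$ (modulo the usual tie-breaking convention between $<$ and $\leq$, which is handled by breaking ties arbitrarily in a fixed way).

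The main obstacle, and therefore the step to carry out most carefully, is justifying the first displayed equation: namely, that the lower bound used by the algorithm of~\cite{posner1985minimizing,potts1983algorithm} indeed depends only on $\Omega$ and on the starting time of its jobs, and not on the identity of the tail $\sigma$. This requires inspecting Posner's bound and verifying that it is computed by relaxing the problem restricted to $\Omega$ (for instance, a preemptive relaxation on $[0,t_\Omega]$ combined with the exact cost of the fixed tail), so that the $\sigma$-dependent part of $LB(\cdot)$ reduces exactly to the additive term $opt(\sigma\mid t_\Omega)$. Once this is checked, the proposition follows in one line from the subtraction above.
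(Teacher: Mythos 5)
Your proposal is correct and follows essentially the same route as the paper's own proof: both observe that the unscheduled jobset is identical in the two nodes and starts at the same time, so the bounding procedure contributes the same value to $LB(S\sigma)$ and $LB(S\pi)$, and the comparison therefore reduces to comparing $opt(\sigma\mid\cdot)$ with $opt(\pi\mid\cdot)$, which is exactly the $check()$ condition of Equation~\ref{check:dtilde}. Your version is somewhat more explicit — writing the bound as an additive decomposition and flagging that one must verify Posner's bound depends only on the unscheduled part and its start time — whereas the paper simply asserts this; the content is the same.
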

\begin{proof}
Consider two nodes $S\sigma $ and $S\pi$. First notice that the subproblem to solve in both nodes are the same, which consists in scheduling  jobs from $S$ starting from time $0$. 
The lower bound used in the algorithm (see \cite{posner1985minimizing,potts1983algorithm}) returned on the subproblems on $S$ are the same for the two nodes. Therefore, if $check(\pi,  \sigma)=1$, which means $opt(\sigma|\sum_{i\in \Omega}p_i)\leq opt(\pi|\sum_{i\in \Omega}p_i)$, then $LB(S\sigma)\leq LB(S\pi)$.

With the same reasoning, if $LB(S\sigma)\leq LB(S\pi)$, it can be deduced that the relation $opt(\sigma|\sum_{i\in \Omega}p_i)\leq opt(\pi|\sum_{i\in \Omega}p_i)$ must hold, and hence $check(\pi,  \sigma)=1$.
\end{proof}

Following the test plan described by \cite{potts1983algorithm}, for each job $i$, its processing time $p_i$ is an integer generated randomly from the uniform distribution $[1,100]$ and its weight $w_i$ is generated uniformly from $[1,10]$. The total processing time $P=\sum_{i=1}^np_i$ is then computed and for each job $i$ an integer deadline $d_i$ is generated from the uniform distribution $[ P( L - R/2), P( L +R/2)]$, with $L$ increase from $0.6$ to $1.0 $ in steps of $0.1$ and $R$ increases from $0.2$ to $1.6$ in steps of $0.2$. In order to avoid generating infeasible instances, a $(L,R)$ pair is only used when $L + R / 2 > 1$, hence only 20 $(L,R)$ pairs are actually used, for each of which 10 feasible instances are generated,  yielding a total of 200 instances for each value of $n$ from 40 to 140. 
We first present the results of \pbmemo{} algorithms (Depth\_Pa\_04, Best\_Pa\_04 and Breadth\_Pa\_04) from  \cite{tkindt2004revisiting}. 
Both Depth\_Pa\_04 and Best\_Pa\_04 are stated by \cite{tkindt2004revisiting} to solve instances with up to 110 jobs. However, they are only capable of solving instances with 70 jobs on the newly generated instances, with a maximum solution time 11 seconds and 285 seconds, respectively. Breadth\_Pa\_04 was reported to be able to solve instances with up to 130 jobs in 2004 but this falls down to 100 jobs in our tests with a maximum solution time of 36 seconds. 
This difference is not negligible and it reveals the fact that the newly generated instances seem much harder than those generated by \cite{tkindt2004revisiting}.

The results of the new algorithms are presented in Table \ref{tab:dtilde}. On \df, without \textit{memorization} the program is ``out of time'' on instances with 50 jobs, while both \fmemo{} and \pbmemo{} enable to solve instances with up to 100 jobs, with \pbmemo{} running faster. With the activation of \textit{k-perm} search, Depth\_Pr enables to solve 30 more jobs than Depth\_Pa. This strongly proves the power of all the three memorization schemes. It also worth to be noticed that Depth\_Pa solves instances with 30 more jobs with respect to Depth\_Pa\_04, knowing that the only differences between these two algorithm are that the database size in Depth\_Pa is larger 
and the database cleaning strategy is different. 

For \best, the same phenomenon can be observed, that is, Best\_Pr is more efficient than Best\_Pa, which is better than Best- and Best\_Pa\_04. Best\_Pr can also solve instances with up to 130, and faster than Depth\_Pr.

Breadth\_Pa is the most powerful algorithm among all. It is surprising to see that without \textit{memorization} Breadth- cannot even solve all instances of 40 jobs, while with \pbmemo{} instances of 130 jobs are all solved in an average solution time of 27 seconds. Again, as for the \pbrisumc{} problem, LUFO allows to accelerate the solution but it did not enable to solve larger instances.

\begin{table}[!ht]
    \centering
    \resizebox{\textwidth}{!}
{
    \begin{tabular}{|c|c|c|c|c|c|c|c|c|c|c|c|c|}
    \hline
    &n &40&50&60&70&80&90&100&110&120&130&140 \\ \hline
\multirow{4}{*}{Depth-}&Navg&116827.7&OOT&&&&&&&&&\\
&Nmax&14536979&&&&&&&&&&\\
&Tavg&1&&&&&&&&&&\\
&Tmax&74&&&&&&&&&&\\ \hline
\multirow{4}{*}{Depth\_S}&Navg&772.0		&2718.0	&6706.0	&28463.0	&114970.0		&139382.0	&563209.0   &OOT&&&\\
                          &Nmax&17699	&60462	&137207	&1660593&6180097	&2803714&	12335703   &&&&\\
                          &Tavg&0.4		&0.5.0		&1.0		&3.0		&15.0			&12.0		&113.0   &&&&\\
                          &Tmax&1		&2		&6		&275	&1544		&474	&	5346   &&&&\\ \hline
\multirow{4}{*}{Depth\_Pa}&Navg&559.1&2091.3&5240.1&20068.3&75727.4&139429.8&376206.9&OOT&&&\\
&Nmax&11963&83075&94189&1004546&1960891&4321070&5549747&&&&\\
&Tavg&0.4&0.4&0.5&0.9&2.9&5.5&17.0&&&&\\
&Tmax&1&1&2&39&157&312&515&&&&\\ \hline

\multirow{4}{*}{Depth\_Pr}&Navg&326.4&901.7&2184.1&6825.3&20429.0&32531.0&90375.3&266689.8&574824.4&1397463.6&OOT\\
&Nmax&3431	&17447	&28677	&187425	&665376	&768802 &1781123&14713483&11236833&103699138&\\
&Tavg&0.4&0.4&0.4&0.6&1.0&1.0&3.8&14.4&37.9&108.2&\\
&Tmax&0	&1	&1	&5	&30	&21 &51&901&1255&8732&\\ \hline
Best-&&OOT&&&&&&&&&&\\ \hline
\multirow{4}{*}{Best\_Pa}&Navg&334.6&879.3&1859.0&7159.9&16581.8&27259.5&60349.0&OOT&&&\\
&Nmax&3800&20889&25574&440623&547165&1252600&798372&&&&\\
&Tavg&0.4&0.4&0.4&0.7&1.3&1.8&4.3&&&&\\
&Tmax&0&1&1&30&73&130&91&&&&\\ \hline
\multirow{4}{*}{Best\_Pr}&Navg&292.0&708.9&1486.9&4312.7&10301.7&14642.9&31891.2&145203.1&239837.4&330474.1&{OOT}\\
&Nmax&                         2435	&11762	&18051	&120259	&319068	&276507&332022&10659343&7578570&6712266&\\
&Tavg&0.4&0.4&0.4&0.5&0.8&1.0&1.9&45.6&20.0&26.7&\\
&Tmax&                         0		&1		&1		&4		&23		&13&25&5008&1137&716&\\ \hline
Breadth-&&OOM&&&&&&&&&&\\ \hline
\multirow{4}{*}{Breadth\_Pa}&Navg&348.6&940.9&1833.1&6533.4&14964.4&23725.0&53309.2&102633.5&239512.5&329902.3&OOT\\
&Nmax&4701&16952&24559&437697&453506&868876&789310&5975094&7577492&6702080&\\
&Tavg&0.0&0.0&0.0&0.2&0.4&0.8&2.0&10.1&20.0&26.7&\\
&Tmax&0&0&0&9&15&31&36&1353&1135&718&\\ \hline
    \end{tabular}
    }
    \caption{Results of the new algoritihms on the \pbdtilde{} problem}
    \label{tab:dtilde}
\end{table}

\FloatBarrier
\subsection{Application to the \pbfsumc{} problem}\label{fsumc}
The \pbfsumc{} problem asks to schedule $n$ jobs are to be scheduled on two machines $M_1$ and $M_2$. Each job $i$ needs first to be processed on $M_1$ for $p_{1,i}$ time units then be processed on $M_2$ for $p_{2,i}$ time. The objective is to minimize the sum of completion times of jobs. 
We restrict to the set of permutation schedules in which there always exist an optimal solution. A permutation schedule is a schedule in which the jobs sequences 
on the two machines are the same. 
The problem is NP-hard in the strong sense. 
Up to 2016, the best exact algorithm was the \techbb{} algorithm proposed by \cite{tkindt2004revisiting} and based on the \techbb{} algorithm of \cite{della2002improved}. Recently,  \cite{detienne2016two} proposed a new and very efficient \techbb{} algorithm capable of solving instances with up to 100 jobs in size. This is definitely the state-of-the-art exact method for solving the \pbfsumc{} problem. However, in order to evaluate the impact of using \techmemo{} in a \techbb{} algorithm we make use of the algorithms described by \cite{tkindt2004revisiting} since their code was directly available to us.

The adopted branching scheme in this algorithm is \fb{} and all the three search strategies were considered. The \textit{DP Property} is also considered by \cite{tkindt2004revisiting}, which is actually \pbmemo. The $check()$ function is based on a result reported by \cite{della2002improved} and is defined as follows:

\begin{equation}\label{check:fsumc}    
check(\pi,\sigma)=
\begin{cases}
1,\ if\ opt(\sigma|0)\leq opt(\pi|0)\ and \ |\Omega|*(C_{2}(\sigma)-C_{2}(\pi))\leq opt(\pi)-opt(\sigma) \\
0,\ otherwise
\end{cases}
\end{equation}

 where $\Omega$ is the set of jobs to be scheduled after $\sigma$ and $\pi$, $C_2(\cdot)$ is the completion time of a given sequence on the second machine. The items stored into the database are $\langle \sigma, C_2(\sigma), opt(\sigma|0), ExpAct\rangle$.

\subsubsection{Application of the memorization framework and improved results}
This problem is not \textit{decomposable} since 
given a partial solution of the form $\sigma S$ with $\sigma$ a fixed sequence, the optimal solution of subproblem $S$ depends on the order of jobs in $\sigma$. 
From the decision tree in Figure \ref{fig:dt} we can derive that with the depth first search strategy,  \textit{solution memorization} should be considered, even though its superiority over \textit{node memorization} depends on the presence of \textit{context dependent dominance conditions}. \textit{Node memorization} was implemented with that strategy by \cite{tkindt2004revisiting}. Consequently, in this section we compare four \techbb{} algorithms: the three versions of \cite{tkindt2004revisiting}, i.e. \textit{node memorization} applied to the three search strategies and a version based on depth first with \textit{solution memorization}. 

{With the $check()$ function defined in Equation \ref{check:fsumc} and the lower bound (a Lagrangian Relaxation based lower bound) used in the algorithm,   the concordance property is not answered. We performed experiments to look for the case where for two nodes $\sigma S$ and $\pi S$, $check(\pi, \sigma)=1$ but $LB(\pi)<LB(\sigma)$ and we found it. Therefore, the concordance property is not verified and both active and explored nodes need to be considered for \textit{best first} strategy. For \textit{breadth first} strategy, only active nodes need to be considered.}

{For \fmemo{}, since context dependent dominance conditions  are enabled in the algorithm, and they are important for a fast solution of the problem, lower bounds are memorized during the \fmemo, as described in section \ref{solmemo}. 
The items stored into the memory are $\langle \pi, t_1, t_2, C_2(\pi), opt(\pi|(t_1,t_2)) \rangle$  where $t_1$ is the actual starting time of $\pi$ on the first machine and $t_2$ is the actual starting time of $\pi$ on the second machine. Besides, $opt(\pi|(t_1,t_2))$ is the sum of completion times of jobs in $\pi$, when  $\pi$ starts at time $t_1$ on the first machine and time $t_2$ on the second machine.  For \textit{node memorization}, the $check()$ function and the stored item remain the same as in the work of \cite{tkindt2004revisiting}, as described in the previous section.}

30 instances are generated for each size $n$ from 10 to 40, with the processing
times generated randomly from an uniform distribution in $[1, 100]$. Again, we ran the algorithms of \cite{tkindt2004revisiting} on these newly generated instances.  
Depth\_Pa\_04 is able to solve instances of 40 jobs which is 5 jobs more than reported by \cite{tkindt2004revisiting}, with a maximum solution time about 3.4 hours. Both Best\_Pa\_04 and Breadth\_Pa\_04 solve instances with up to 35 jobs, as  reported by \cite{tkindt2004revisiting}, with maximum solution times of 43 seconds and 806 seconds, respectively.

Other results are given in Table \ref{tab:fsumc}.
Depth- is able to solve instances with 35 jobs. Best- is able to solve instances with 30 jobs and Breadth- can only solve up to 25 jobs. 
With \pbmemo{} enabled, Depth\_Pa solves instances with 5 more jobs than Depth-. Best\_Pa and Breadth\_Pa solve instances with 10 more jobs than the versions without \techmemo. 
With respect to algorithms X\_Pa\_04, algorithms X\_Pa use a larger database, more precisely, the maximum number of solutions that can be stored is set to 6000000 instead of 350000. This enables Best\_Pa to solve 5 more jobs than Best\_Pa\_04. However, Depth\_Pa and Breadth\_Pa are not able to  solve larger instances with respect to Depth\_Pa\_04 and Breadth\_Pa\_04, 
even though they solve instances faster. Notice that the \pbfsumc{} problem is a really hard problem, certainly more difficult than the two other problems previously tackled.

Also, the LUFO strategy is adopted for database cleaning but it did not enable to solve larger instances without having an ``Out of Time'' problem.

\textit{Predictive node memorization} is not more efficient than \pbmemo: in fact no nodes are cut by undertaking a \textit{k-perm} search. The result is hence even slightly slower due to the time consumed  by the call to the  \textit{k-perm} heuristic. 
Depth\_S solve instances with less nodes generated compared to Depth-. However, its efficiency is even less than Depth-, due to the processing of lower bound memorization.

From a global point of view, the power of memorization is also  illustrated on this problem, since we always have benefits in using it. As a perspective for this problem it could be interesting to evaluate the contribution of \techmemo{} when embedded into the state-of-the-art algorithm of \cite{detienne2016two}.


\begin{table}[!ht]
    \centering
    \resizebox{\textwidth}{!}
{
    \begin{tabular}{|c|c|c|c|c|c|c|c|c|c|}
    \hline
    &n&10&15&20&25&30&35&40&45\\ \hline
\multirow{4}{*}{Depth-}&Navg&23.7&255.6&4137.7&21460.4&317102.0&3615780.0&{OOT}&\\
&Nmax&84&2367&83863&311742&3097479&53187978&&\\
&Tavg&0.0&0.0&0.1&0.8&26.0&423.0&&\\
&Tmax&0&0&2&17&248&6128&&\\ \hline
\multirow{4}{*}{Depth\_S}&Navg&24.0&228.0&3561.0&19733.0&294355.0&3425633.0 &OOT&\\
&Nmax&					  84	&1735	&68070	&273146	&2712580&49360565&&\\
&Tavg&						0.0&0.0&0.1&1.0&29.0&497.0                     &&\\
&Tmax&						0	&0	&2	&15	&248	&6933                &&\\ \hline
\multirow{4}{*}{Depth\_Pa}&Navg&22.8&187.2&1573.0&8205.0&61337.0&337194.0&1894037.2&OOT\\
&Nmax&80&1083&17114&48459&291750&1568506&15472612&\\
&Tavg&0.0&0.0&0.0&0.1&4.1&35.0&328.3&\\
&Tmax&0&0&0&2&21&163&3627&\\ \hline
\multirow{4}{*}{Depth\_Pr}&Navg&22.8&187.2&1573.0&8205.0&61361.3&337194.0&1894037.0&OOT\\
&Nmax&80&1083&17114&48459&291016&1568506&15472612&\\
&Tavg&0.0&0.0&0.0&0.1&4.1&32.8&332.8&\\
&Tmax&0&0&0&2&23&173&3664&\\ \hline
Best-&Navg&23.7&249.3&3993.1&21717.7&291131.9&OOM&&\\
&Nmax&84&2253&83863&311742&2451152&&&\\
&Tavg&0.0&0.0&0.1&0.7&19.1&&&\\
&Tmax&0&0&2&17&197&&&\\\hline
\multirow{4}{*}{Best\_Pa}&Navg&20.9&139.5&957.3&4780.7&28957.0&112229.8&495186.5&OOM\\
&Nmax&72&624&6646&21022&152797&426641&3617824&\\
&Tavg&0.0&0.0&0.0&0.0&1.2&7.8&80.6&\\
&Tmax&0&0&0&1&4&43&1253&\\\hline
\multirow{4}{*}{Best\_Pr}&Navg&20.9&139.5&957.3&4780.7&28957.0&112229.8&495186.5&OOM\\
&Nmax&                         72&624&6646&21022&152797&426641&3617824&\\
&Tavg&                        0.0&0.0&0.0&0.0&1.4&8.3&83.1&\\
&Tmax&                         0&0&0&1&5&45&1283&\\\hline
Breadth-&Navg&23.9&266.1&5181.8&39303.6&OOT&&&\\
&Nmax&84&2360&83863&311742&&&&\\
&Tavg&0.0&0.0&0.1&1.6&&&&\\
&Tmax&0&0&2&17&&&&\\\hline
\multirow{4}{*}{Breadth\_Pa}&Navg&21.0&148.8&1369.5&8889.1&115219.2&345109.6&OOT&\\
&Nmax&72&692&9927&63485&2242263&2357023&&\\
&Tavg&0.0&0.0&0.0&0.2&26.1&54.2&&\\
&Tmax&0&0&0&3&711&665&&\\\hline
    \end{tabular}
    }
    \caption{Results of new algorithms on the \pbfsumc{} problem}
    \label{tab:fsumc}
\end{table}

\FloatBarrier

\section{Application to the \pbtt{} problem}\label{tt}

In this section, we report the results of the application of \techmemo{} on solving the \pbtt{} problem, already tackled in chapter \ref{ch3} by \techbm{} on the theoretical aspect.  
We first recall some main properties of the problem, then determine parameters for \techmemo{} and finally report the computational results.




\subsection{Preliminaries}
The \pbtt{} problem asks to schedule a set of $n$ jobs $N=\{1,2, \dots, n\}$ on a single machine to minimize the sum of tardiness. 
The current state-of-the-art exact method in practice is a \techbb{} algorithm (named as BB2001 in this chapter) which solves to optimality problems with up to~$500$ jobs in size~\citep{szwarc2001algorithmic}. The main properties of the problem have already been introduced in section \ref{sec:bm:intro}. Some of them are reminded below for the ease of reference. 

Let $(1,2,\dots,n)$ be a~LPT (Longest Processing Time first) sequence and $([1],[2],\dots,[n])$ be an~EDD (Earliest Due Date first) sequence of all jobs. 

We first introduce two important decomposition properties, which are the same as  Property \ref{dec1ch4} and Property \ref{dec2ch4},  respectively, from chapter \ref{ch3}.

\begin{dec}\label{dec1ch4}~{\citep{lawler1977pseudopolynomial}}
(Lawler's decomposition)
Let job~$1$ in LPT sequence correspond to job~$[k]$ in EDD sequence.
Then, job~$1$ can be set only in positions $h\geq k$ and
the jobs preceding and following job~$1$ are uniquely determined as
$B_1(h) = \{[1],[2],\dots,[k-1],[k+1],\dots,[h]\}$ and $A_1(h) =
\{[h+1],\dots,[n]\}$. 
\end{dec}




\begin{dec}\label{dec2ch4}~{\citep{szwarc1999solution}}
Let job~$k$ in LPT sequence correspond to job~$[1]$ in EDD sequence.
Then, job~$k$ can be set only in positions $h\leq (n-k+1)$ and
the jobs preceding job~$k$ are uniquely determined as
$B_k(h)$, where $B_k(h)\subseteq \{k+1,k+2,\dots,n\}$ and $\forall i\in B_k(h), j\in \{n,n-1,\dots,k+1\}\smallsetminus B_k(h)$, $d_i\leq d_j$
\end{dec}

The two above decomposition rules can be applied simultaneously to derive a decomposing branching scheme called \textit{Double Decomposition} \citep{szwarc2001algorithmic}. At any node, let $S_i$ be a set of jobs to schedule. Note that some other jobs may have already been fixed on positions before or after $S_i$, implying a structure like $\sigma_1S_1\sigma_2S_2...\sigma_iS_i...\sigma_kS_k$ over all positions, but a node only focuses on the solution of one subproblem, induced by one subset of jobs ($S_i$ here). 
With depth first, which is the search strategy  retained in the \techbb{}  BB2001, the \textit{Double Decomposition} is always applied on $S_1$. This works as follows. 
First find the longest job $\ell$ and the earliest due date job $e$ in $S_1$. Then apply Decomposition \ref{dec1ch4} (resp. Decomposition \ref{dec2ch4}) to get the lists $L_\ell$ (resp. $L_e$) of positions, on which $\ell$ (resp. $e$) can be branched on. As an example, suppose $L_e=\{1,2\}$ and $L_\ell=\{5,6\}$. Then, a double branching can be done by fixing job $e$ on position $1$ and fixing job $\ell$ on position $5$, decomposing the jobset $S_i$ to three subsets (subproblems): the jobs before jobs job $e$, which is $\emptyset$; the jobs between $e$ and $\ell$; and finally the jobs after $\ell$. In the same way, the other 3 branching can be performed by fixing jobs $e$ and $\ell$ in all compatible position pairs: $(1,6)$, $(2,5)$ and $(2,6)$.

When branching from a node, another particular decomposition may occur as described in Property \ref{split}. Assume that a given subset of jobs $S$ is decomposed into two disjoint subsets $B$ and $A$, with $B\cup A=S$ and all jobs of $B$ are scheduled before that of $A$ in an optimal schedule of $S$: $(B,A)$ is then called an optimal block sequence and Property \ref{split} states when does such decomposition occur. In that case Decomposition \ref{dec1ch4} and Decomposition \ref{dec2ch4} are not applied but two child nodes are rather created each one corresponding to one block of jobs ($A$ or $B$), following Property \ref{split} (also called the \textit{split} property).

Let $E_j$ and $L_j$ be the earliest and latest completion times of job $j$, that is if $B_j$ (resp. $A_j$) is the currently known jobset that  precedes (resp. follow) job $j$, then $E_j=p(B_j)+p_j$, and $L_j = p(N\smallsetminus A_j)$.

\begin{proper}\label{split}~{\citep{szwarc1999solution}}
(\textit{Split})\\
($B,A$) is an optimal block sequence if $\max_{i\in B}L_i\leq \min_{j\in A}E_j$. 
\end{proper}

The value of $E_i$ and $L_i$ of each job $i$ can be obtained by applying Emmons' conditions \citep{emmons1969one} following the $\bigo{n^2}$ procedure provided by \cite{szwarc1999solution}. 

An initial version of \fmemo{} has been already implemented in BB2001, even though it was called \textit{Intelligent Backtracking} by the authors. Remarkably, lower bounds are not used in this \techbb{} algorithms due to 
the ``Algorithmic Paradox'' (Paradox \ref{para1}) found by  \cite{szwarc2001algorithmic}.  This one shows that the power of \techmemo{} largely surpasses the power of the lower bounding procedures in the algorithm.

\begin{para}\label{para1}
``...deleting a lower bound drastically improves the performance of the algorithm...''
\end{para}

Paradox \ref{para1} is simply due to the fact that a lot of identical subproblems occur during the exploration of the search tree. The computation time required by lower bounding procedures to cut these identical problems are much higher than simply solving that subproblems once, memorizing the solution and reusing it whenever the subproblem appears again. Besides, pruning nodes by the lower bound may negatively affect memorization since the nodes that are cut cannot be memorized.

The BB2001 algorithm uses a depth first strategy and for each node to branch on, the following procedure is applied:
\begin{enumerate}
\item Search the solution of the current problem, defined by a set of jobs and a starting time of the schedule, in ``memory'', and return the solution if found; otherwise go to 2.
\item Use Property \ref{split} to split the problem and solve each new subproblem recursively starting from step 1. If no split can be done, go to step 3.
\item Combine Decompositions \ref{dec1ch4} and \ref{dec2ch4} to branch on the longest job and the smallest-due-date job to every candidate positions. For each branching case, solve subproblems recursively, then store in memory the best solution among all branching cases and return it.
\end{enumerate}

The complexity of this algorithm has been analyzed in section \ref{sec:comp_bb2001} as an additional result. Note that due to Paradox \ref{para1}, all lower bounding procedures are removed, which makes the \techbb{} algorithm a simple branching algorithm. Notice that \fmemo{} can be implemented in BB2001 as suggested in section \ref{impl}. In BB2001, when the database of stored solutions is full, no cleaning strategy is used and no more partial solutions can be stored. The memory limit of this database in BB2001 is not mentioned by \cite{szwarc2001algorithmic}.

\subsection{Application of the memorization framework and improved results}
We take the reference algorithm BB2001 as a basis, in which \db{} and \fmemo{} are already chosen. 
The \db{} has been proved to be very powerful, and there is no evidence that other branching schemes like \fb{} or \textit{backward branching} can lead to a better algorithm (see \cite{szwarc2001algorithmic}). The problem is \textit{decomposable} according to Definition \ref{def:dec}. The main discussion relies on the relevancy of considering \textit{node memorization} instead of \fmemo. As already mentioned in section \ref{dbdf}, it is not obvious to implement \textit{node memorization}, for a decomposing branching scheme, which could outperform the \fmemo. Here a node is structured as $\sigma_1S_1...\sigma_kS_k$ with the ${\sigma_i}'s$ being the partial sequences to memorize in \textit{node memorization}. Assume we have two nodes $\sigma_1S_1...\sigma_kS_k$ and $\pi_1S_1'...\pi_\ell S_\ell'$, it is not apparent to find $\sigma_i$ and $\pi_j$, $i\in\{1,..,k\}$, $j\in\{1,..,\ell\}$, such that $\sigma_i$ and $\pi_j$ are of same jobs and have the same starting time. Moreover it seems complicated to design an efficient $check()$ function deciding which of these two nodes is dominating the other. We found no way to implement \textit{node memorization} which could hopefully lead to better results than those obtained with \fmemo. Consequently, \fmemo{} only is considered and, as sketched in sections \ref{dbbf} and \ref{dbwf}, there is no interest in considering \textit{best first} or \textit{breadth first} search strategies.

Henceforth, the choices done by \cite{szwarc2001algorithmic} with respect to memorization were good choices. In the remainder we investigate limitations of the memorization technique as implemented by \cite{szwarc2001algorithmic} and propose improvements which significantly augment the efficiency of the algorithm.




Our algorithm is based on BB2001, with two main changes.

Since the memory usage was declared as a bottleneck of BB2001, we firstly retest BB2001 on our machine: a HP Z400 work station with 3.07GHz CPU and 8GB RAM. 
200 instances are generated randomly for each problem size using the same generation scheme as in the paper of  \cite{potts1982decomposition}. Processing times are integers generated from an uniform distribution in the range $[1, 100]$ and due dates $d_i$ are integers from a uniform distribution in the range $[p_iu, p_iv]$ where $u=1-T-R/2$ and $v=1-T+R/2$. Each due date is set to zero whenever its generated value is negative. Twenty combinations $(R,T)$ are considered where $R\in\{0.2, 0.4, 0.6,0.8,1\}$, and $T\in\{0.2, 0.4, 0.6,0.8\}$. 
Ten instances are generated for each combination and the combination $(R=0.2, T=0.6)$ yields the hardest instances as reported in the literature (see \cite{szwarc1999solution}) and confirmed by our experiments. Again, all data can be found from \url{http://www.vincent-tkindt.net/index.php/resources}.

Table \ref{tab:tt} presents the results we obtain when comparing different algorithms. For each version we compute the average and maximum CPU time $T_{avg}$ and $T_{max}$ in seconds for each problem size. The average and maximum number of explored nodes $N_{avg}$ and $N_{max}$ are also computed. The time limit for the solution of each instance is set to 4 hours, and the program is considered as OOT (Out of Time) if it reaches the time limit. Also, when memorization is enabled without a database cleaning strategy, the physical memory may be saturated by the program, in which case the program is indicated as OOM (Out of Memory).
 
Our implementation of BB2001 solves instances with up to 900 jobs in size as reported in Table \ref{tab:tt}, with an average solution time of 764s and a maximum solution time of 9403s for 900-job instances, knowing that the original program, as tested in 2001 was limited to instances with up to 500 jobs due to memory size limit. Their tests were done on a Sun Ultra-Enterprise Station with a reduced CPU frequency (\textless 450MHz) and a RAM size not stated. It is anyway interesting to see that with just the computer hardware evolution, \techmemo{} is augmented to solve instances with 400 jobs more.
 
BB2001 is out of time (\textgreater 4h) for instances with 1000 jobs, and the memory size seems no longer to be the bottleneck. The first improvement we propose presume on the vein of Paradox \ref{para1}.

\begin{para}\label{para2}
Removing \textit{Split} procedure (Property \ref{split}) from BB2001 drastically accelerate the solution.
\end{para}

The effect of Paradox \ref{para2} is astonishing. The resulting algorithm \textit{NoSplit} solves instances with 700 jobs with an average solution time 20 times faster: from 192 seconds to 9 seconds (see Table \ref{tab:tt}). In fact, \textit{Split} is performed based on precedence relations between jobs, induced by the computation of the ${E_j}'s$ and ${L_j}'s$. The computation of these precedence relations is time consuming in practice. Moreover, as already claimed, many identical problems appear in the search tree and the \textit{Split} procedure in BB2001 is run each time. 
When \textit{Split} is removed, identical problems are solved needing more time when first met, but then never solved twice thanks to \fmemo{}. However, the disadvantage is also clear: more solutions are added to the database and hence the database is filled faster than when \textit{Split} is enabled. This is why \textit{NoSplit} encounters memory problems on instances with 800 jobs. 
This was not considered by \cite{szwarc2001algorithmic} because \textit{Split} is a very strong component of the algorithm and the computer memory at that time also discourages this tentative.
$S_i,SDD_2$

At this point, we have a better understanding of the power of \fmemo{} on this problem and we become curious on the effectiveness of memorized solutions. In other words, what are the stored solutions that are effectively used? 
To answer this question, we test cleaning strategies as defined in section \ref{impl}, to remove useless solutions when the database memory is ``full''. The most efficient strategy is proved to be LUFO by preliminary experiments not reported here. Embedding such a memory cleaning strategy is our second contribution to BB2001 algorithm.

\bigskip
In Table \ref{tab:tt}, the final implementation of the memorization mechanism within the \techbb{} algorithm for the \pbtt{} problem is referred to as NoSplit\_LUFO.
All 200 instances with 1200 jobs are solved, with an average solution time of 192 seconds, while BB2001 is limited to instances with 900 jobs. 

\begin{table}[!ht]
    \centering
    \resizebox{\textwidth}{!}
{
\begin{tabular}{|c|c|c|c|c|c|c|c|c|c|c|c|c|}\hline
&n &300&400&500&600&700&800&900&1000&1100&1200&1300 \\ \hline
\multirow{4}{*}{Depth-}&Navg&	46046201&	OOT&	&	&	&&&&&&\\
&Nmax&	2249342615&	&	&&&&&&&&	\\
&Tavg&	155&	&	&	&&&&&&&\\
&Tmax&	6499&	&	&	&&&&&&&\\ \hline
\multirow{4}{*}{BB2001}&Navg&	61501&	136452&	290205&	560389&	880268&	1534960&	    2053522&	OOT&&&	\\
&Nmax&	663268&	1884993&	3585456&	5784871&	9802077&	18199764&	19352429&	&&&\\
&Tavg&	2&	9&	31&	85&	192&	469&	763&	&&&\\
&Tmax&	33&	193&	580&	1263&	2963&	6817&	9403&	&&&\\ \hline
\multirow{4}{*}{NoSplit}&Navg&	202970&	457918&	985235&	1934818&	3053648&	OOM&	&	&&&\\
&Nmax&	2156144&	6027604&	13028651&	20285112&	33977553&	&	&&&&\\
&Tavg&	0&	0&	2&	4&	9&	&	&&&&\\
&Tmax&	4&	13&	34&	59&	114&	&&&&&	\\ \hline
\multirow{4}{*}{NoSplit\_LUFO}&Navg&	202970&	457918&	985235&	1934818&	3086620&	5408511&	    7697810&	12578211&	19100285&	28223766&	OOT\\
&Nmax&	2156144&	6027604&	13028651&	20285112&	36853477&	60151076&	88909109&	139698961&	332937242&	420974965&\\	
&Tavg&	0&	0&	2&	5&	9&	20&	31&	61&	112&	192&	\\
&Tmax&	4&	13&	34&	61&	136&	275&	429&	832&	2504&	3763&	\\ \hline
\end{tabular}
}
    \caption{Results for the \pbtt{} problem}
    \label{tab:tt}
\end{table}
 
The experiments presented so far have shown that correctly tuning the memorization mechanism, notably by considering a cleaning strategy and studying interference with other components of the algorithm may lead to serious changes of its efficiency. However, the striking point of these experiments relates on the comparison between the version of BB2001 without the memorization mechanism (algorithm Depth-) and NoSplit\_LUFO. Table \ref{tab:tt} highlights the major contribution of memorization: Depth- being limited to instances with up to 300 jobs while NoSplit\_LUFO is capable of solving all instances with 1200 jobs. It is evident that memorization is a very powerful mechanism.

\FloatBarrier
\section{Chapter summary}
In this chapter we focus on the application of \techmemo{} within branching algorithms for the efficient solution of sequencing problems. A framework of \techmemo{} is established with several memorization schemes defined. Advices are provided to choose the best memorization approach according to the branching scheme and the search strategy of the algorithm. Some details on the efficient implementation of \techmemo{} are also discussed. 

The application of the framework has been done on four scheduling problems. Even if the impact of \techmemo{} depends on the problem, for all the tackled problems it was beneficial to use it. Table \ref{tab:con} provides a summary of the conclusions obtained.
\begin{table}[!ht]
    \centering
    \begin{tabular}{|c|c|c|c|c|}\hline
    \multirow{2}{*}{Problem} & \multicolumn{2}{c|}{Largest instances solved} & \multirow{2}{*}{\multiline{Features of the best algorithm\\ with memorization}} & \multirow{2}{*}{ \multiline{Best in \\literature?}} \\\cline{2-3}
    &  \multiline{Without \\memorization} & \multiline{With \\memorization} & & \\\hline
    \pbrisumc & 80 jobs & 130 jobs & \multiline{depth first+ \\\abmemo} & yes \\\hline
    \pbdtilde & 40 jobs & 130 jobs & \multiline{breadth first+ \\\pbmemo} & yes \\\hline
    \pbfsumc& 30 jobs & 40 jobs & \multiline{best first+ \\\pbmemo} & no \\\hline
    \pbtt & 300 jobs & 1200 jobs & \multiline{depth first+ \\\fmemo} & yes \\\hline
    \end{tabular}
    \caption{Conclusions on the tested problems}\label{tab:con}
\end{table}

Fundamentally, what we call the \textit{Memorization Paradigm} relies on a simple but potentially very efficient idea: avoid solving multiple times the same subproblems by memorizing what has already been done or what can be done in the rest of the solution process. The contribution of this paradigm strongly relies on the branching scheme which may induce more or less redundancy in the exploration of the solution space. 
Noteworthy, the four scheduling problems dealt with in this chapter mainly serve as applications illustrating how memorization can be done in an efficient way. But, it is also clear that it can be  applied to other hard combinatorial optimization problems, by the way making this contribution interesting beyond scheduling theory. To our opinion, the memorization paradigm should be embedded into any branching algorithm, so creating a new class of branching algorithms called \techbmemo{} algorithms. They may have a theoretical interest by offering the possibility of reducing the worst-case time complexity with respect to \techbb{} algorithms. And they also have an interest from an experimental viewpoint, as illustrated in this chapter.


The main work in this chapter has been performed together with Vincent T'Kindt and Federico Della Croce (\textit{Politechnico di Torino}, Italy). The computational results on the \pbtt{} problem have been reported at the international conference on \textit{Industrial Engineering and Systems Management} (IESM 2017) \citep{shang2017iesm}, being selected as the first prize of the \textit{Best Student Paper Award}. It has also been communicated in the \textit{ROADEF} 2017 conference \citep{shang2017roadef} and the joint working days (26th/27th, September 2017) of the french GOThA and Bermudes groups. 
The complete results have been submitted to the \textit{European Journal of Operational Research} and it can be found on HAL \citep{shang2017memohal}. 
\chapter*{Conclusion}\label{con}
\markright{\MakeUppercase{Conclusion}}

Three main contributions are worked out during this thesis. The first one, presented in chapter \ref{ch2}, is about a \techdp{} algorithm which solves the \pbft{} problem in $\ostar{3^n}$ time and space. The algorithm is based on a local dominance condition, consisting of two inequalities of the completion time on the second and the third machine. For any two partial solution sequences satisfying the condition, the dominated one is eliminated, therefore no longer considered during the later construction of the optimal global solution. The algorithm can be seen as a generalization of the conventional DP solving TSP-like problems, with the difference that in conventional DP the partial solutions of an optimal solution is optimal with respect to their corresponding subproblems, while this is not the case for  flowshop problems. 
The practical efficiency of the algorithm has been evaluated. Without much surprise, the algorithm is proved to be impractical due to its exponential space complexity. This is a common drawback of \techdp{} algorithms over all subsets. Its running time on  randomly generated instances is also far from competitive compared to state-of-the-art \techbb{} algorithms. Actually, the algorithm is too much designed for the worst-case scenario that it is not powerful on solving common instances. 
The algorithm is easily generalized to solve other flowshop problems including $F2||f$ and $F3||f$, and single machine problems like $1|r_i|f$, with $f\in\{f_{max}, f_i\}$. Since it seems pretty difficult to derive even faster EEAs for \pbft, we prove a complexity lower bound of this problem, based on ETH, as an additional result. It shows that unless ETH fails, the \pbft{} problem cannot be solved in subexponential time. 
Despite our extra research effort, we do not have much perspective on the search of a faster EEA for the \pbft{} problem. Perhaps a wise choice is to walk out a little from the context of traditional complexity theory and try to include extra parameters into the complexity measure of algorithms, that is, try to derive parameterized algorithms.

The second contribution, presented in chapter \ref{ch3}, is a search tree method called \techbm{} which solves  the \pbtt{} problem with the time complexity converging to $\ostar{2^n}$ and in polynomial space. The work started by proposing a \techbr{} algorithm based on the well known Lawler's property, which decomposes the problem to two subproblems each time the job with the longest processing time is fixed to a position. This yields a structure with some similarity to \techdc, which can be represented as a search tree. We then analyzed the search tree carefully over the worst-case instance and found that many identical subproblems appear in different part of the tree, that is, they are generated repetitively due to Lawler's property. After a careful analysis, an operation called merging is derived. The idea is to merge all identical nodes to one whenever possible. More concretely, at each branching, a part of the search tree is developed, keeping the space usage polynomial, and all identical nodes are identified and merged to one. 
Notice that we are not just cutting dominated nodes but also put the resulting node of merging in a specific position in the tree, so that subsequent mergings become possible. 
The algorithm goes in depth-first hence uses only polynomial space. By a careful analysis of the running time, we show that its time complexity converges to $\ostar{2^n}$, depending on a constant parameter $k$ which controls the extent of merging. 

The algorithm \techbm{} is implemented and tested on randomly generated instances. The effectiveness of merging is proved since \techbm{} runs faster than \techbr. However, surprisingly, when we add other well known problem properties, notably another rule that decomposes the problem and some rules eliminating the positions to consider when branching on the longest job, \techbm{} does not show any superiority over the state-of-the-art \techbb{} algorithm. Further analysis and preliminary experimentation revealed that the added properties have a negative effect on the efficiency of merging. In fact, for non-worst-case instances, most branching positions are already eliminated by some rules and only very few identical nodes can still be identified and merged. The corresponding gain is not larger than the extra cost caused by the merging operation itself. Nevertheless, the \techbm{} framework provides an algorithmic complexity view of the solution of the \pbtt{} problem and it is generalizable to other problems benefiting similar decomposition properties, even though it seems difficult to find another scheduling problem having such property. On the improvement of practical efficiency, the work continued and finally led to the results in chapter \ref{ch4}.

Chapter \ref{ch4} presents our third contribution, which is an extended work of chapter \ref{ch3}. When trying to improve the practical efficiency of \techbm, we realized that a better way to implement the idea is to use \techmemo, that is, instead of merging nodes in a structural way, we simply memorize the optimal solution of subproblems when they are solved at the first time, and reuse that solution whenever the same problem appears again. This is actually what has been done in the state-of-the-art \techbb{} algorithm of \cite{szwarc2001algorithmic}. When the \textit{memory} is suitably implemented, the method is more efficient than \techbm{} even though more memory needs to be used. \techmemo{} has been implemented and tested on the \pbtt{} problem. By the finding of a new  \textit{paradox} and the implementation of a memory cleaning strategy, the method succeeded to solve instances with 300 more jobs with respect to the state-of-the-art algorithm. Then the same treatment is extended to another three problems (\pbrisumc, \pbdtilde{} and \pbfsumc) addressed by \cite{tkindt2004revisiting}. The four problems all together show the power of the \techmemo{} paradigm when applied on sequencing problems. We call it \techbmemo{} to promote a systematic consideration of \techmemo{} as an essential building block in branching algorithm like \techbb. The method can certainly be used on other problems as well.

The memory cleaning strategy implemented in our \techmemo{} has a surprising effect on the algorithm's efficiency. In fact, the most items contained in the \textit{memory} are not useful at all. By cleaning these items out, the memory is used much more efficiently, which leads to an augmentation of the performance. However, it is not always easy to decide which solutions should be kept in the memory and which ones should be eliminated. As a perspective, we consider to use some \textit{Machine Learning} techniques to make this decision. This is a so-called classification problem. Each item, corresponding to a subproblem and its optimal solution, is a \textit{feature} that should be suitably represented to  allow possible characteristic extraction. 
An AI system can then be trained to judge whether a given item is worth being memorized or whether an already memorized item should be removed. 
Since lots of different instances are needed for a good training, each feature should also encode its corresponding instance, since an item that is useful for one instance is not necessarily equally valuable for other instances. 

Another important point that we would have realized is to consider the fact that the \textit{memory} used for solving one instance may be very useful for solving other instances. At the end of the solution of an instance, the \textit{memory} contains a large number of already solved small problems. These results remains valid for later instances. In other words, without \techmemo, there might be some subproblems that are solved repetitively over different instances. 

\vspace{1.5cm}
At the end of this research, it is interesting and also important to review all these three results and to explore the inner link between the techniques: \techdp, \techbm{} and \techmemo (or \techbmemo). First of all, it worth being noticed that the conventional \techdp{} (of TSP) across all subsets can also be expressed as a search tree. Consider the search tree of the brute-force algorithm, in which the level $\ell$ of the tree is composed of all partial solutions of length $\ell$. The number of nodes at level $\ell$ is hence ${n \choose \ell}!$. Among all these partial solutions, many consist of the same jobs (or operations, to be generic) but with a different ordering. For all partial solutions that are different orderings of the same jobs, \techdp{} allows to keep only the dominant one. This is very similar to the merging operation in \techbm{} that solves the \pbtt{} problem. Note that the \pbtt{} problem can also be solved by \techdp, but by exploiting specific properties, \techbm{} can perform the job better and only in polynomial space. 

The \techdp{} algorithm can often be implemented in two ways: top-down or bottom-up along the search tree. The top-down way is a recursive implementation and the bottom-up way is actually \techmemo. Therefore, \techdp{} and \techmemo{} are also tightly related. Note that the ideas behind these two techniques are not identical: \techdp{} lays on the fact that partial solutions of an optimal solution are themselves optimal for the implied subproblems, while the idea of \techmemo{} is to avoid solving a subproblem more than once, though sometimes these two ideas may coincide on a concrete problem. 
The link between \techbm{} and \techmemo{} (or \techbmemo) is pretty clear, since the latter is finally adopted in practice as a more efficient way to implement the idea of \techbm. 

All the three techniques, are superior with respect to the brute-force algorithm. This has been achieved, very reasonably, by avoiding solving identical subproblems or by eliminating dominated partial solutions. However, as introduced in section \ref{sec:ch1:algo}, there exists other algebraic techniques, such as \techie{} and \textit{Subset Convolution} which work differently. Using these techniques to tackle scheduling problems is an interesting direction to explore, though some attempt of the author has been fruitless.

\vspace{1.5cm}
In the end, we would like to insist that \textit{Scheduling} problems are central problems in Operation Research. It is essential to gain more knowledge about the hardness and the complexity of these problems. We hope that our works during this thesis have some positive influence on this (the publications of our works are summarized in Table \ref{tab:pub}). On the practical side, we also believe that EEAs have the potential to become practical, at least for moderate input size. It seems that more and more results are popping up, on the design of algorithms with better complexity for scheduling problems. This is more visible from a point of view of \textit{Parameterized Complexity Theory}. We believe that such research will bring a new angle for solving scheduling problems and more new interesting techniques and theories will appear as a result.

\begin{table}[!ht]
    \centering
    \begin{tabular}{|c|c|c|c|}\hline
         Chapter & Publications   \\ \hline
          \ref{ch2} & \multiline{J. of Scheduling\citep{shang2017jos}\\
          MISTA'15a\citep{shang2015f3cmax}\\
          ROADEF'16\citep{shang2016roadef}}       \\\hline
        \ref{ch3}& \multiline{TCS (under review \citep{garraffa2017bm})\\
        IPEC'17\citep{shang2017ipec}\\
        PMS'16\citep{shang2016bm}\\
        MISTA'15b\citep{della2015smtt}\\
        ROADEF'17\citep{shang2017roadef}\\
        AIRO'15\citep{della2015airo}} \\\hline
         \ref{ch4} & 
         \multiline{EJOR (under review \citep{shang2017memohal})\\
         IESM'17\citep{shang2017iesm} (Best Student Paper Award, 1st place)\\
         ROADEF'17\citep{shang2017roadef}}\\\hline
    \end{tabular}
    \caption{Summary of publications}
    \label{tab:pub}
\end{table}

\newpage
\bibliographystyle{abbrvnat}
\bibliography{biblio}
\markright{\MakeUppercase{Bibliography}}



\newpage
\thispagestyle{empty}
\newpage $\ $
\newpage
\thispagestyle{empty}
\section*{Résumé :}
Cette thèse synthétise les travaux de recherches réalisés pendant les études doctorales de l'auteur. L'objectif de ces travaux est de proposer des algorithmes exacts qui ont une meilleure complexité, temporelle ou spatiale, dans le pire des cas pour des problèmes d'ordonnancement qui sont $\mathcal{NP}$-difficiles. En plus, on s'intéresse aussi à évaluer leurs performances en pratique.

Trois contributions principales sont rapportées. La première concerne un algorithme du type \techdp{} qui résout le  problème \pbft{}  en $\ostar{3^n}$ en temps et en espace. L'algorithme est généralisé facilement à d'autres problèmes du type Flowshop, y compris les problèmes $F2||f$ et $F3||f$, et aux problèmes d'ordonnancement à une seule machine tels que les problèmes $1|r_i|f$, avec $f\in\{f_{max}, f_i\}$. 

La seconde contribution porte sur l'élaboration d'une méthode arborescente appelée \techbm{} pour  résoudre le problème  \pbtt{} en $\ostar{(2+\epsilon)^n}$ en temps avec $\epsilon>0$ arbitrairement petit et en espace polynomial. Le travail se base sur l'observation que de nombreux sous-problèmes identiques apparaissent répétitivement pendant la résolution du problème global. A partir de ça, une opération appelée \textit{merge} est proposée, qui fusionne les sous-problèmes (les noeuds dans l'arbre de recherche) identiques autant que possible. Cette méthode doit pouvoir être généralisée à d'autres problèmes.

Le but de la troisième contribution est d'améliorer les performances en pratique des algorithmes exacts procédant par parcours d'un arbre de recherche. D'abord nous avons aperçu qu'une meilleure façon d'implémenter l'idée de  \techbm{} est d'utiliser une technique appelée \techmemo. Avec la découverte d'un nouveau \textit{paradoxe} et la mis en place d'une stratégie de nettoyage de mémoire,  notre algorithme a résolu les instances qui ont 300 tâches de plus par rapport à l'algorithme de référence pour le problème \pbtt. 
Avec ce succès, nous avons testé \techmemo{} sur trois autres problèmes d'ordonnancement notés \pbrisumc, \pbdtilde{} et \pbfsumc, précédemment traités par \cite{tkindt2004revisiting}. Les résultats finaux des quatre problèmes ont montré la puissance de  \techmemo{} appliquée aux problèmes d'ordonnancement. Nous nommons ce paradigme \techbmemo{} afin de  promouvoir la considération systématique de l'intégration de \techmemo{} dans les algorithmes de branchement comme \techbb, en tant qu'un composant essentiel. 
La méthode peut aussi être généralisée pour résoudre d'autres problèmes qui ne sont pas forcément des problèmes d'ordonnancement.

\section*{Mots clés :}
Algorithmes exponentiels, ordonnancement, brancher et réduire, branch and merge, mémorisation, branch and memorize, programmation dynamique, flowshop, somme de retards

\newpage
\section*{Abstract :}
This thesis summarizes the author's PhD research works on the design of exact algorithms that provide a worst-case (time or space)
guarantee for \nph{} scheduling problems. Both theoretical and practical aspects are considered with three main results reported. 

The first one is about a \techdp{} algorithm which solves the \pbft{} problem in $\ostar{3^n}$ time and space. 
The algorithm is easily generalized to other flowshop problems including $F2||f$ and $F3||f$, and single machine scheduling problems like $1|r_i|f$, with $f\in\{f_{max}, f_i\}$. 

The second contribution is about a search tree method called \techbm{} which solves the \pbtt{} problem with the time complexity converging to $\ostar{2^n}$ and in polynomial space. The work is based on the observation that many identical subproblems appear during the solution of the input problem. An operation called \textit{merge} is then derived, which merges all identical nodes to one whenever possible and hence yields a better complexity.

Our third contribution
 aims to improve the practical efficiency of exact search tree algorithms solving scheduling problems. First we realized that a better way to implement the idea of  \techbm{}  is to use a technique called \techmemo. By the finding of a new  \textit{algorithmic paradox} and the implementation of a memory cleaning strategy, the method succeeded to solve instances with 300 more jobs with respect to the state-of-the-art algorithm for the \pbtt{} problem. Then the treatment is extended to another three problems \pbrisumc, \pbdtilde{} and \pbfsumc{} previously addressed by \cite{tkindt2004revisiting}. The results of the four problems all together show the power of the \techmemo{} paradigm when applied on sequencing problems. We name it \techbmemo{} to promote a systematic consideration of \techmemo{} as an essential building block in branching algorithms like \techbb. The method can surely also be used to solve other problems, which are not necessarily scheduling problems.

\section*{Keywords :}
Exact exponential algorithms, scheduling, branch and reduce, branch and merge, memorization, branch and memorize, dynamic programming, flowshop, total tardiness

\end{document}